\tikzstyle{box}=[shape=rectangle, text height=1.5ex, text depth=0.25ex, yshift=0.5mm, fill=white, draw=black, minimum height=5mm, yshift=-0.5mm, minimum width=5mm, font={\small}]
\tikzstyle{Z dot}=[inner sep=0mm, minimum size=2mm, shape=circle, draw=black, fill={rgb,255: red,221; green,255; blue,221}]
\tikzstyle{Z phase dot}=[minimum size=5mm, font={\footnotesize\boldmath}, shape=rectangle, rounded corners=2mm, inner sep=0.2mm, outer sep=-2mm, scale=0.8, tikzit shape=circle, draw=black, fill={rgb,255: red,221; green,255; blue,221}, tikzit draw=blue]
\tikzstyle{X dot}=[Z dot, shape=circle, draw=black, fill={rgb,255: red,255; green,136; blue,136}]
\tikzstyle{X phase dot}=[Z phase dot, tikzit shape=circle, tikzit draw=blue, fill={rgb,255: red,255; green,136; blue,136}, font={\footnotesize\boldmath}]
\tikzstyle{hadamard}=[fill=yellow, draw=black, shape=rectangle, inner sep=0.6mm, minimum height=1.5mm, minimum width=1.5mm]
\tikzstyle{vertex}=[inner sep=0mm, minimum size=1mm, shape=circle, draw=black, fill=black]
\tikzstyle{vertex set}=[inner sep=0mm, minimum size=1mm, shape=circle, draw=black, fill=white, font={\footnotesize\boldmath}]
\tikzstyle{hadamard edge}=[-, dashed, dash pattern=on 2pt off 0.5pt, thick, draw={rgb,255: red,68; green,136; blue,255}]
\tikzstyle{brace edge}=[-, tikzit draw=blue, decorate, decoration={brace,amplitude=1mm,raise=-1mm}]
\tikzstyle{diredge}=[->]
\tikzstyle{dashed edge}=[-, dashed, dash pattern=on 2pt off 0.5pt, draw={rgb,255: red,64; green,64; blue,64}]
\newtheorem*{rep@theorem}{\rep@title}
\newcommand{\newreptheorem}[2]{%
	\newenvironment{rep#1}[1]{%
		\def\rep@title{#2 \ref{##1}}%
		\begin{rep@theorem}}%
		{\end{rep@theorem}}}
\theoremstyle{plain}
\newtheorem{theorem}{Theorem}
\newtheorem{lemma}[theorem]{Lemma}
\newtheorem{proposition}[theorem]{Proposition}
\newtheorem{corollary}[theorem]{Corollary}
\newtheorem{conjecture}[theorem]{Conjecture}
\newtheorem{problem}[theorem]{Problem} 
\newtheorem{fact}[theorem]{Fact}
\theoremstyle{definition}
\newtheorem{definition}[theorem]{Definition}
\newtheorem{observation}[theorem]{Observation}
\newclass{\Rel}{Rel}
\newclass{\Samp}{Samp}
\newclass{\Inter}{Inter}
\newclass{\SampCliff}{SampCliff}
\newclass{\SampQNC}{SampQNC}
\newclass{\DLOGTIME}{DLOGTIME}
\newclass{\CliffordClass}{Clifford}
\newclass{\BPAC}{BPAC}
\newclass{\BPTC}{BPTC}
\newclass{\ZPAC}{ZPAC}
\newclass{\RPAC}{RAC}
\newlang{\ldagpar}{LDAGParity}
\newlang{\cnotmult}{CNOTMult}
\newlang{\promisecnotmult}{CNOTMult^*}
\providecommand{\cL}{\ComplexityFont{L}}
\newcommand{\parityL}{\parity \cL}
\newcommand{\oracle}{\mathcal O}
\newcommand{\coracle}{\mathcal R}
\newlang{\CliffSim}{CliffSim[2]}
\newlang{\HLF}{HLF}
\newcommand{\CNOT}{\operatorname{CNOT}}
\newcommand{\CXX}{\operatorname{CXX}}
\newcommand{\CZ}{\operatorname{CZ}}
\newcommand{\CSIGN}{\operatorname{CZ}}
\newcommand{\RZ}{R_z}
\newcommand{\RX}{R_x}
\newcommand{\SWAP}{\operatorname{SWAP}}
\renewcommand{\(}{\left(}
\renewcommand{\)}{\right)}
\newcommand{\supp}{\operatorname{supp}}
\newcommand{\rank}{\operatorname{rank}}
\newcommand{\diag}{\operatorname{diag}}
\newcommand{\GL}{\mathrm{GL}}
\newcommand{\F}{\mathbb F}
\newcommand{\Paulis}{\ensuremath\mathcal{P}}
\newcommand{\Clifford}{\ensuremath\mathcal{C}}
\newcommand{\PStr}[1]{\ensuremath\mathcal P_{#1} / \mathcal Z_{#1}}
\newcommand{\normalin}{\trianglelefteq}
\newcommand{\pfont}[1]{\texttt{#1}}
\begin{document}

\title{Interactive shallow Clifford circuits: quantum advantage against $\NC^1$ and beyond}
\author{
Daniel Grier\thanks{This work was also completed at MIT with support from an NSF Graduate Research Fellowship under Grant No.\ 1122374.} \\ University of Waterloo \\ \texttt{dgrier@uwaterloo.ca} \and
Luke Schaeffer \\ University of Waterloo \\ \texttt{lrschaeffer@gmail.com}
}
\date{}
\maketitle

%!TEX root = ../samp_clifford.tex

\begin{abstract}
	
Recent work of Bravyi et al.\ and follow-up work by Bene Watts et al.\ demonstrates a quantum advantage for shallow circuits: constant-depth quantum circuits can perform a task which constant-depth classical (i.e., $\AC^0$) circuits cannot. Their results have the advantage that the quantum circuit is fairly practical, and their proofs are free of hardness assumptions (e.g., factoring is classically hard, etc.).   Unfortunately, constant-depth classical circuits are too weak to yield a convincing real-world demonstration of quantum advantage. We attempt to hold on to the advantages of the above results, while increasing the power of the classical model. 
	
Our main result is a two-round interactive task which is solved by a constant-depth quantum circuit (using only Clifford gates, between neighboring qubits of a 2D grid, with Pauli measurements), but such that any classical solution would necessarily solve $\parityL$-hard problems. This implies a more powerful class of constant-depth classical circuits (e.g., $\AC^{0}[p]$ for any prime $p$) unconditionally cannot perform the task.  Furthermore, under standard complexity-theoretic conjectures, log-depth circuits and log-space Turing machines cannot perform the task either. 
	
Using the same techniques, we prove hardness results for weaker complexity classes under more restrictive circuit topologies. Specifically, we give $\QNC^0$ interactive tasks on $2 \times n$ and $1 \times n$ grids which require classical simulations of power $\NC^1$ and $\AC^{0}[6]$, respectively. Moreover, these hardness results are robust to a small constant fraction of error in the classical simulation.

We use ideas and techniques from the theory of branching programs, quantum contextuality, measurement-based quantum computation, and Kilian randomization.
\end{abstract}

\clearpage
\tableofcontents
\clearpage

%!TEX root = ../samp_clifford.tex

\section{Introduction}
\label{sec:introduction}

A long-standing goal of quantum computing research is to definitively establish that there are problems quantum computers can solve which classical computers cannot.  Many candidate problems have been proposed, from decision problems such as factoring \cite{Sho97} to sampling problems such as BosonSampling \cite{aaronsonarkhipov:2013}, IQP circuit sampling \cite{bjs:2010_iqp}, or random circuit sampling \cite{boixo:2018_rcs}.
% For example, there is a polynomial time quantum algorithm for factoring , but sub-exponential time\footnote{Here, we define sub-exponential to be $\cap_{\epsilon > 0} \DTIME(2^{n^\epsilon})$.  To be clear, the number field sieve \emph{is} a $2^{o(n)}$ classical factoring algorithm, which is sometimes referred to as sub-exponential \cite{TODO}.} classical solution is known.  

Unfortunately, each proposal suffers from the same unpleasant tradeoff---as the proposal becomes more practical, it also relies on more nonstandard complexity assumptions.  For example, factoring has a long history and many believe it to require super-polynomial classical time.  Nevertheless, a convincing demonstration of quantum supremacy using Shor's factoring algorithm would require thousands of qubits, well beyond what is currently feasible.  In the same vein, to show sampling hardness based on long-accepted complexity assumptions (e.g., the non-collapse of the polynomial hierarchy) requires the quantum device to have little to no error, also eliminating the possibility of actually executing one of these sampling protocols in the lab.  Once errors are incorporated into the model, the hardness result becomes dependent on relatively new and untested conjectures.  We view the selection of conjectures as an extremely important process in establishing quantum supremacy. Indeed, plausible candidates for quantum advantage such as recommendation systems based on low-rank matrix completion \cite{kerenidisprakash:2017} have been refuted later by clever classical algorithms \cite{tang:2019}.  

This raises an obvious question:  can we avoid the above tradeoff when designing a protocol for quantum advantage?  A breakthrough result of Bravyi, Gosset, and K\"{o}nig gives a positive answer to this question \emph{provided} you are willing to restrict to quantum/classical circuits of constant depth \cite{bgk:2018}. They introduce the ``2D Hidden Linear Function Problem'' ($\HLF$), which can be implemented by a quantum circuit of constant depth, using only classically-controlled Clifford gates between adjacent qubits on a grid.  However, the same function \emph{cannot} be computed by any constant-depth classical circuit comprised of bounded fan-in gates.  To emphasize, this impossibility result is unconditional and does not rely on any assumptions or conjectures.  

Although constant-depth bounded fan-in classical circuits (i.e., $\NC^{0}$ circuits) vs.\ constant-depth bounded fan-in quantum circuits (i.e., $\QNC^0$ circuits) is a fair comparison, $\NC^{0}$ is an extremely weak class of circuits, leaving lots of room for improvement. Indeed, there have been several follow-up papers which have strengthened the result by considering average-case versions of the problem \cite{coudronstarkvidick:2018, legall:2019, bwkst:2019}, expanding the class of classical circuits \cite{bwkst:2019}, and adding noise \cite{bgkt:2019}.  Of particular relevance to this work is the result of Bene Watts, Kothari, Schaeffer, and Tal \cite{bwkst:2019} which shows that even classical circuits with \emph{unbounded} fan-in AND and OR gates (i.e., the circuit class $\AC^{0}$) cannot cannot solve $\HLF$.
 
The goal of this paper is to continue to expand the power of the classical models of computation which cannot simulate low-depth quantum computation.  We introduce two new problems solvable by constant-depth classically-controlled Clifford circuits on a grid and prove the problems are hard for complexity classes beyond $\AC^{0}$.  In the first result, the classical model must be able to compute polynomial-size circuits of log depth with bounded fan-in gates (i.e., the class $\NC^1$).  In the second result, the classical model must be able to compute polynomial-size circuits of $\CNOT$ gates (i.e., the class $\parityL$).   The $\NC^{1}$ result may appear strictly weaker than the $\parityL$ result since $\NC^{1} \subseteq \cL \subseteq \parityL$ (as uniform decision classes), but the problem has two distinct advantages:  the classical circuit is allowed to make some errors, and the quantum circuit is embedded on a very narrow grid.

Before discussing these problems in detail, let us first discuss the types of problem we are considering.  First, we adopt the relational view of quantum circuit simulation initiated by Bravyi et al.\ for classically-controlled Clifford circuits \cite{bgk:2018}. That is, a valid classical simulation of the quantum circuit can return \emph{any} measurement outcome a genuine quantum device may output with nonzero probability.\footnote{This is contrasted with sampling problems, e.g., BosonSampling and random circuit sampling, in which the task is to be close to the output distribution of the quantum device.} Unfortunately, we believe\footnote{For more details see \Cref{conj:nc1_vs_qnc0} in \Cref{sec:model}.} that proving a separation against $\NC^1$ under this model will require new nontrivial techniques in circuit complexity.  Instead, we introduce \emph{interactivity} into our model as a way to circumvent these challenges.

In a non-interactive protocol, there is one round where the quantum device or classical simulator receives an input and then returns an output. In an interactive protocol, this may be followed by more rounds of input and output, where each round may depend on the previous rounds. Specifically, all of the problems in this paper follow the same \emph{two}-round interactive protocol:
\begin{enumerate}[itemsep = 0pt]
\item  The quantum device (or classical simulator) is given measurement bases for all but constantly many qubits of a constant-depth Clifford circuit. It performs (or simulates) the circuit, measures the qubits, and returns the outcomes. 
\item In the second round, the device is given measurement bases for the remaining qubits and reports the answer.  
\end{enumerate}
We argue that this interactive protocol is realistic for near-term quantum devices while simultaneously requiring more powerful classical circuits to simulate the protocol.

In terms of practicality, all we are asking is for a tiny portion of the qubits to be measured apart from the rest.  We expect this to be feasible for any near-term quantum device.  One possible objection is that the qubits of the device could have a limited lifespan and might fail if there is too much delay between the rounds of interaction. For example, if the second round input is determined by a lengthy computation on the first round output, then it might delay the second round long enough that the qubits decohere.  Happily, no such processing is required in our interactive protocol.  In fact, our hardness reductions \emph{ignore} the first round output. It is only important that the simulator/device commits to this output, but the particular string does not matter. 

Now let us consider the power of the classical simulator.  By the main theorems, the classical simulator must be able to solve problems in the complexity class $\NC^1$.  This immediately gives the unconditional result that there is no $\AC^0[p]$ circuit ($\AC^0$ circuits with $\MOD_p$ gates\footnote{The $\MOD_k$ gate outputs $1$ if the sum of the input bits is $0$ modulo $k$.}) which can simulate the quantum circuit since $\AC^0[p] \subsetneq \NC^1$ for all prime $p \ge 2$.  Since $\AC^0[p] \supsetneq \AC^0$, this result strictly improves upon the separation of Bene Watts et al., albeit at the cost of this 2-round interactivity.  Of course, assuming standard conjectures in complexity theory, we get much stronger results. The following inclusions are believed to be strict
$$
\AC^0[p] \subset \TC^0 \subseteq \NC^1 \subseteq \cL \subseteq \parityL,
$$
so, for example, there is no $\TC^{0}$ circuit for the $\NC^{1}$-hard task, and no $\cL$ simulator for the $\parityL$-hard task. 

The importance of interactivity in our model is tied to the no-cloning theorem in quantum mechanics.  In particular, we will use the fact that the classical simulator has some \emph{copyable} state at the end of the first round of interaction, whereas a quantum device does not.  After the second round of interaction, we can \emph{rewind} the classical simulator to the earlier state, an idea common in security proofs for (interactive) cryptographic protocols.  This is the difference between classical and quantum devices that makes our result non-blackbox. Intuitively, rewinding gives the classical simulator more power which the quantum device cannot match; the quantum device cannot rewind, and if it resets to the beginning, it has an exponentially small chance of measuring the same first-round outcome. This key observation allows us to prove the main hardness results. 

\subsection{Results}

The purpose of this section is to state the main results of this paper as cleanly as possible by glossing over some of the more tedious details of the model and implementation.  

\begin{theorem}[informal]
\label{thm:informal_main}
For all $n$ and $m$, we define a 2-round interactive task $\mathcal T_{m,n}$ which can be passed by a constant-depth classically-controlled Clifford circuit over an $m \times n$ grid with gates between neighboring qubits.  Suppose a classical machine solves this task for all $n$, and $m$ in one of the three regimes below. Let $\coracle$ be the oracle for its responses. 
\begin{itemize}[itemsep = 0pt]
\item If $m = 1$, then $\AC^{0}[6] \subseteq (\BPAC^{0})^{\coracle}$.
\item If $m = 2$, then $\NC^{1} \subseteq (\BPAC^{0})^{\coracle}$.
\item If $m = \poly(n)$, then $\parityL \subseteq (\BPAC^{0})^{\coracle}$.
\end{itemize}
Here, $\BPAC^0$ is the class of problems solved by random $\AC^0$ circuits with bounded error.\footnote{In fact, all of the reductions are in the slightly smaller class $\ZPAC^0 = \RPAC^0 \cap \co\RPAC^0 \subseteq \BPAC^0$.  Equivalently, this is the class of problems solved by random $\AC^0$ circuits such that the circuit outputs the correct answer (``yes'' or ``no'') with probability at least 1/2 and outputs ``do not know'' with probability at most 1/2.}
\end{theorem}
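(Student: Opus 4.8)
The plan is to prove each of the three regimes by exhibiting a $\BPAC^{0}$ reduction (in fact a $\ZPAC^0$ one) from a problem complete for the target class to the interactive task, where the reduction is allowed oracle access to the simulator's responses $\coracle$. For $m=\poly(n)$ I would reduce from the $\parityL$-complete problem $\cnotmult$ (equivalently $\ldagpar$): evaluating a designated entry of a product of $\F_2$-elementary/transvection matrices, i.e.\ of a $\CNOT$ circuit. For $m=2$ I would reduce from the $\NC^{1}$-complete word problem over $S_{5}$ (Barrington's theorem, equivalently the evaluation of width-$5$ permutation branching programs). For $m=1$ I would reduce from the word problem over $S_{3}$; since computing a sum modulo $6$ reduces to two such instances (over the $\Z_2$ and $\Z_3$ subgroups of $S_3$), this problem is hard for $\AC^{0}[6]$ under $\AC^{0}$-Turing reductions, which is exactly the form a $\BPAC^{0}$ reduction can use. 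In every case the instance is ``an iterated product in a group $\mathcal G$, applied to an input, read out at one coordinate,'' and completeness holds under $\AC^0$ (indeed $\NC^0$) reductions, which is all the outer reduction can afford.

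The quantum implementation is a measurement-based computation on the $m\times n$ grid cluster state, prepared in constant depth by nearest-neighbour $\CZ$ gates. Scanning columns left to right, each column of $m$ qubits carries one logical register, and measuring the bulk qubits in Clifford bases (Pauli and $\pi/2$-rotated) realizes, column by column, an element of the $m$-qubit Clifford group modulo Paulis, so composing $n$ columns implements an arbitrary iterated product in that group. The key point is that this group is precisely rich enough for the corresponding class: the single-qubit Clifford group mod Paulis is $S_{3}$; the two-qubit Clifford group mod Paulis is $\mathrm{Sp}(4,\F_2)\cong S_{6}\supseteq S_{5}$, so it supports width-$5$ branching programs via Barrington; and the $n$-qubit Clifford group mod Paulis is $\mathrm{Sp}(2n,\F_2)$, which contains all of $\GL_n(\F_2)$, i.e.\ every $\CNOT$ circuit. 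The input coordinate is fixed by the measurement bases of the input column, the output coordinate by the Pauli basis of one of the $O(1)$ qubits held back for round~2, and the remaining $n-O(1)$ qubits are measured in round~1; so the honest quantum device passes $\mathcal T_{m,n}$.

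The crux — and the reason for interactivity — is the Pauli byproduct. The bulk outcomes $s$ reported in round~1 leave the round-2 register not in the ideal logical state but in $P_{s}$ times it, where $P_{s}$ is a Pauli whose propagation through the logical gates is itself an iterated $\F_2$-linear product with instance-dependent matrices — i.e.\ exactly as hard as the problem we want to solve — so a single round-2 response returns $(\text{answer})\oplus(\text{correction}(s,\text{instance}))$ and $\AC^{0}$ cannot strip the correction off. The plan to circumvent this has two ingredients. First, Kilian-randomize the branching program before encoding it into the round-1 bases: the outer reduction samples the masking group elements (which is why it lands in $\BPAC^{0}$, not $\AC^{0}$), so the simulator sees an instance with no exploitable structure and must answer ``honestly,'' and moreover the randomness that $s$ injects is absorbed into the mask so that the residual correction becomes a \emph{known, challenge-independent} shift. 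Second, exploit that the classical simulator — unlike a quantum device, by no-cloning — holds a copyable post-round-1 state: rewind it and query round~2 under several carefully chosen measurement-basis challenges while keeping the committed $s$ fixed (consistency of $\coracle$ as a function forces these responses to share the same correction). The challenges are designed so that the byproduct appears identically in a control response and in the answer-bearing response(s), so that a single $\AC^{0}$ XOR of $O(1)$ returned bits, together with the known masks, recovers the answer; repeating with fresh masks gives the robustness to a constant error fraction and keeps the reduction in $\ZPAC^{0}\subseteq\BPAC^{0}$.

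The main obstacle I expect is this last point: arranging the cluster-state layout and the round-2 basis challenges so that the accumulated byproduct is genuinely a common (or known-mask-shifted) summand of every response, despite the fact that in MBQC byproduct propagation is entangled with the logical computation — morally, turning the stabilizer/contextuality constraints that tie $s$ to the round-2 outcomes into a \emph{linear} gadget that $\AC^{0}$ can invert once the masks are known. Getting the Clifford gadget for each $\mathcal G$ to interface cleanly with both the Kilian masking and the rewinding, and then checking the error analysis, is the remaining work; once per-instance extraction is $\AC^{0}$, closure of $\BPAC^{0}$ under $\AC^{0}$-Turing reductions finishes all three cases, and the stated unconditional separations ($\AC^{0}[p]\not\subseteq\BPAC^{0}$, $\AC^0[6]\not\subseteq\AC^0[p]$) and conditional ones ($\TC^{0}\ne\NC^{1}$, $\cL\ne\parityL$) follow immediately.
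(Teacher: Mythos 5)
Your scaffolding matches the paper's (MBQC on a grid cluster state, identifying $\Clifford_1/\Paulis_1 \cong \mathrm{S}_3$, $\Clifford_2/\Paulis_2 \cong \mathrm{S}_6$, $\CNOT$ circuits for $\parityL$, Kilian masking, and rewinding), but the central extraction step is a genuine gap, and the mechanism you sketch would not work. First, the Pauli byproduct is never ``known'' to the reduction: it is a function of the round-1 outcomes whose computation is itself $\NC^1$-complete (resp.\ $\parityL$-hard), and Kilian randomization of the \emph{instance} does not convert it into a known, challenge-independent shift -- the masks randomize the group elements, not the simulator's byproduct. Second, and more fundamentally, the simulator is only required to return outcomes occurring with nonzero probability, so it may answer adversarially; since the two candidate final states (e.g.\ $\ket{00}$ vs.\ $\ket{++}$ modulo Paulis) are non-orthogonal, \emph{every} single round-2 measurement has an outcome consistent with both, and the adversary can always return such an outcome. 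Consequently no scheme of the form ``XOR a control response against an answer-bearing response'' can be forced to reveal the answer: the responses to different rewound challenges are answers to different measurements and need not share any common correction, nor be samples. The paper's resolution is qualitatively different: it uses contextual measurement sets (the magic square for $m\le 2$, the magic pentagram for the $\parityL$ case) so that the simulator is \emph{provably} inconsistent somewhere, which certifies one Pauli string that does \emph{not} stabilize the hidden state; and since even that revealed non-stabilizer can be chosen adversarially (e.g.\ always $\pfont{YI}$, $\pfont{IY}$, or $\pfont{YY}$, which distinguishes nothing), it layers on further random self-reductions -- conjugating by a random stabilizer of the reference state (Theorem \ref{thm:nc1_self_reduce}) so the revealed non-stabilizer becomes uniformly random, and for $\parityL$ the ``smuggling'' trick $B(g_1,\ldots,g_n,f,g_n,\ldots,g_1)$ with $f\in H_3^{\oplus}$ (Theorem \ref{thm:parityL_more_randomness}) -- before $O(\log n)$ repetitions pin down the state.

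A second, $\parityL$-specific obstruction to your plan: you propose Kilian-masking the $\CNOT$ branching program with arbitrary group elements, but sampling a pair $(h,h^{-1})$ in $\GL(m,2)$ (or inverting $h$) is itself $\parityL$-hard, so the outer $\BPAC^0$ reduction cannot afford it; moreover, masking by general elements produces a highly entangled $m$-qubit state that the constantly many held-back qubits and local Pauli measurements cannot identify. This is exactly why the paper restricts the Kilian randomization to the abelian normal subgroup $H_m=\langle \CZ,\RZ\rangle_m/\Paulis_m$ (whose canonical decompositions multiply and conjugate through $\CNOT$s locally, hence in $\AC^0$), and why it must switch from the magic square to the magic pentagram: every magic square intersects the line $\{\pfont{ZI},\pfont{IZ},\pfont{ZZ}\}$, which is fixed under conjugation by $H_m$, so an adversary could forever return useless Paulis. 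In short, your outline identifies the right hard groups and the right use of rewinding, and you correctly flag the extraction step as the open obstacle, but the claimed linear ``known-mask XOR'' extraction is precisely the step that fails against a relational (adversarial) simulator; the contextuality-plus-self-reduction machinery is not an implementation detail but the core of the proof.
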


This theorem leads to two quantum/classical separations, one unconditional and one dependent on complexity theoretic assumptions.

\begin{corollary}
There is an interactive task that $\QNC^0$ circuits can solve that $\AC^0[p]$ circuits cannot.
\end{corollary}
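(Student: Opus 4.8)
The plan is to read the corollary straight off Theorem~\ref{thm:informal_main} together with the classical lower bound of Razborov and Smolensky. Fix a prime $p$, and take the interactive task $\mathcal T_{2,n}$ on the $2 \times n$ grid: by the theorem it is passed by a constant-depth classically-controlled Clifford circuit with neighboring-qubit gates and Pauli measurements, which is precisely a $\QNC^0$ protocol. So it suffices to show that no family of $\AC^0[p]$ circuits solves $\mathcal T_{2,n}$ for all $n$.

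Suppose toward a contradiction that some such family does, and let $\coracle$ be the oracle for its responses, so $\coracle$ is computed by $\AC^0[p]$ circuits. The $m = 2$ case of Theorem~\ref{thm:informal_main} gives $\NC^1 \subseteq (\BPAC^0)^{\coracle}$. Now substitute the constant-depth $\AC^0[p]$ subcircuit for $\coracle$ in place of each oracle gate in a $(\BPAC^0)^{\coracle}$ circuit: the result is a bounded-error randomized constant-depth circuit over $\{\wedge,\vee,\neg,\MOD_p\}$, i.e.\ a $\BPAC^0[p]$ circuit (and, using the $\ZPAC^0$ form of the reduction noted in the theorem, even a $\ZPAC^0[p]$ circuit). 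Hence $\NC^1 \subseteq \BPAC^0[p]$. But for any prime $q \neq p$ we have $\MOD_q \in \TC^0 \subseteq \NC^1$ while $\MOD_q \notin \BPAC^0[p]$ — this is Smolensky's theorem, whose conclusion is a correlation bound (polynomial-size $\AC^0[p]$ circuits agree with $\MOD_q$ on only a $\tfrac12 + o(1)$ fraction of inputs), so bounded-error randomness does not help. This is the desired contradiction. One could equally run the argument with the $m = 1$ task and its $\AC^0[6]$-hardness, since $\AC^0[6]$ contains both $\MOD_2$ and $\MOD_3$ and is therefore not contained in $\AC^0[p]$ for any prime $p$ (at least one of $\MOD_2,\MOD_3$ lies outside $\AC^0[p]$).

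There is no real obstacle in the corollary itself; all of the difficulty lies in establishing Theorem~\ref{thm:informal_main}. The only two points that want a line of justification are that (i) the reduction witnessing $\NC^1 \subseteq (\BPAC^0)^{\coracle}$ composes with an $\AC^0[p]$ oracle without escaping $\BPAC^0[p]$, which holds because both the reduction and the oracle circuits have constant depth, so their composition does too; and (ii) the $\AC^0[p]$ lower bound survives bounded-error randomization, which it does because Smolensky's method produces an average-case approximation bound rather than a merely worst-case separation.
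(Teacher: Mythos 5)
Your proposal is correct and follows essentially the same route the paper intends: instantiate Theorem~\ref{thm:informal_main} at $m=2$ (or $m=1$), substitute the hypothesized $\AC^0[p]$ implementation for the oracle gates to get $\NC^1 \subseteq \BPAC^0[p]$, and contradict Razborov--Smolensky. The only cosmetic difference is how the randomness is discharged: you invoke Smolensky's average-case correlation bound directly (equivalently one could fix the random bits by averaging), whereas the paper's analogous arguments for $\TC^0$ and $\cL$ pass to non-uniform classes with advice (Ajtai--Ben-Or, Bennett--Gill); both are standard and sound, and your observation that the $m=1$, $\AC^0[6]$-hard task also suffices is likewise valid.
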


\begin{corollary}
Assuming $\cL/\poly \not\supseteq \parityL/\poly$, there is an interactive task solved by $\QNC^0$ circuits but not logarithmic-space Turing machines.\footnote{The statement of this corollary is just one possible consequence of \Cref{thm:informal_main}.  We can weaken the assumption at the expense of weakening the classical hardness.  For example, assuming $\NC^1/\poly \not \supseteq \parityL/\poly$, we have a separation between $\QNC^0$ and $\NC^1$.}
\end{corollary}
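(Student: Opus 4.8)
The plan is to derive the corollary directly from \Cref{thm:informal_main} by instantiating it in the $m=\poly(n)$ regime and then chasing complexity-class inclusions; essentially all of the real content is in the theorem itself. Fix the task $\mathcal T_{m,n}$ with $m=\poly(n)$. By the theorem it is passed by a constant-depth classically-controlled Clifford circuit, which in particular is a $\QNC^0$ circuit (the relevant uniformity and 2D-topology conditions are part of the theorem, so there is nothing new to check on the quantum side). It therefore suffices to show, under the hypothesis $\cL/\poly \not\supseteq \parityL/\poly$, that no logarithmic-space Turing machine passes this task.

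Suppose toward a contradiction that a logarithmic-space Turing machine $M$ passes the task. First I would pin down $M$'s response oracle $\coracle$. On inputs of length $N = \poly(n)$, $M$ has an $O(\log N) = O(\log n)$-bit work tape, so it carries only logarithmically many bits of state between the two rounds; hence the map taking the first- and second-round inputs to $M$'s final answer is computable by an ordinary log-space machine (resimulate round one to reconstruct the carried-over state, then simulate round two). So $\coracle \in \cL$, and \Cref{thm:informal_main} gives $\parityL \subseteq (\BPAC^0)^{\coracle}$.

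It remains to show $(\BPAC^0)^{\coracle} \subseteq \cL/\poly$ whenever $\coracle \in \cL$. Substituting the log-space procedure for $\coracle$ into each oracle gate of the constant-depth $\AC^0$ oracle circuit, and recomputing the oracle inputs on demand rather than storing them (the standard argument that log-space computations compose, $\cL^{\cL}=\cL$), lands one in randomized log space, which is contained in $\cL/\poly$ by Adleman's trick: amplify the error below $2^{-n}$ with polynomially many independent repetitions and a log-space majority vote, then union-bound over the $2^n$ inputs to fix one good random string as $\poly(n)$ bits of advice. Hence $\parityL \subseteq \cL/\poly$, which bootstraps to $\parityL/\poly \subseteq \cL/\poly$ and contradicts the hypothesis. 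Re-running the argument verbatim with $\NC^1$ in place of $\cL$ -- substituting the $O(\log n)$-depth circuit for $\coracle$ into the constant-depth $\AC^0$ circuit keeps the depth $O(\log n)$, and bounded-error randomized $\NC^1$ lies in $\NC^1/\poly$ -- yields the $\QNC^0$-versus-$\NC^1$ separation under the footnoted assumption $\NC^1/\poly \not\supseteq \parityL/\poly$.

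Since \Cref{thm:informal_main} does the heavy lifting, I do not expect a real obstacle here. The only points that want care are (i) fixing a precise model for a space-bounded machine running the two-round protocol so that its response map is genuinely in $\cL$, and (ii) the routine but mildly fiddly chain from $(\BPAC^0)^{\coracle}$ with $\coracle \in \cL$ down through randomized log space to $\cL/\poly$ -- i.e., the composition identity $\cL^{\cL}=\cL$ together with derandomization-by-advice. I would invoke both as standard facts with citations rather than reprove them.
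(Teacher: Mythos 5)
Your proposal is correct and follows essentially the same route the paper sketches after Theorem \ref{thm:parityL_main_result}: an $\cL$ solver yields an $\cL$ rewind oracle (the content of Proposition \ref{prop:oracleconversion}, which you re-derive by resimulation), so $\parityL \subseteq (\BPAC^{0})^{\coracle}$ collapses into randomized log space, which is derandomized with polynomial advice to contradict $\cL/\poly \not\supseteq \parityL/\poly$. The only cosmetic difference is that you inline Adleman-style derandomization where the paper cites Bennett and Gill's $\BPL/\poly = \cL/\poly$ (and Ajtai--Ben-Or for the $\NC^1$/$\TC^0$ variants).
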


The three subresults of the main theorem are contained in \cref{sec:nc1,sec:parityL} in \cref{thm:ac0mod6_main,thm:nc1_main_result,thm:parityL_main_result}.  The $\NC^1$-hardness result is conceptually easier than the $\parityL$-hardness result, but has the added benefit of allowing the classical simulator to err with some small probability.  In \Cref{sec:model}, we show that the complexity of these types of problems is upper bounded by $\parityL$.  

\begin{theorem}[informal, proof in \Cref{sec:upper_bounds}]
The problem of returning a valid output in round 1 of task $\mathcal T_{m,n}$ is in $\parityL$.  The problem of returning a valid output in round 2 of task $\mathcal T_{m,n}$ conditioned on some round 1 output is in $\parityL$.
\end{theorem}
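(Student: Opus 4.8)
The plan is to reduce the problem of producing valid round-1 and round-2 outputs to solving a system of linear equations over $\F_2$, which is the canonical $\parityL$-complete problem. The key observation is that we are dealing with \emph{Clifford} circuits under \emph{Pauli} measurements, so the entire computation is governed by the stabilizer formalism: the state after the constant-depth Clifford circuit is a stabilizer state described by $n$ generators, each a Pauli operator recorded as a bit-vector in $\F_2^{2mn}$ plus a sign bit. Since the circuit has constant depth with nearest-neighbor gates on the grid, each generator has constant weight initially, and conjugating through the circuit keeps the description computable locally; assembling the full stabilizer tableau is an $\NC^0$ (hence $\parityL$) operation. Measuring a given Pauli observable on a stabilizer state is the standard update rule: either the observable commutes with all stabilizers (in which case its $\pm 1$ outcome is forced and is an $\F_2$-linear function — a parity — of the current sign bits, computable by Gaussian elimination), or it anticommutes with some generators (in which case the outcome is an unconstrained free bit and the tableau is updated by row operations). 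So ``return a valid round-1 output'' amounts to: run this measurement-update procedure over the round-1 qubits, where the deterministic outcomes are parities of the problem data and the random outcomes may be set arbitrarily (e.g.\ to $0$) — all within $\parityL$.

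**For round 2**, I would argue the same way but starting from a \emph{conditioned} tableau. Given a fixed round-1 output string $y$, the post-measurement stabilizer state is again a stabilizer state: its tableau is obtained from the original by the deterministic row-reduction/replacement rule, with the signs of the replaced generators set according to the bits of $y$. Constructing this conditioned tableau is once more a matter of Gaussian elimination over $\F_2$ on a $\poly(mn)$-sized matrix, hence in $\parityL$. Then producing a valid round-2 output is exactly the round-1 argument applied to the remaining (constantly many) qubits on top of the conditioned state: forced outcomes are parities of the conditioned sign bits and the round-1 data, free outcomes can be fixed arbitrarily. The only subtlety is well-definedness: I should note that \emph{some} valid round-2 output always exists (the stabilizer measurement rule never fails), so the conditioning is on a nonempty set and the reduction is total — but the task only asks for membership of the decision/search problem in $\parityL$, and $\parityL$ is closed under the relevant operations (Gaussian elimination, computing matrix rank and null spaces, all over $\F_2$) so this goes through.

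**The main obstacle** I expect is bookkeeping rather than conceptual: making precise that the stabilizer tableau of a \emph{constant-depth} circuit can be written down in (uniform) $\parityL$ — or even $\AC^0$ — so that the heavy lifting is isolated in the Gaussian-elimination step. One has to be careful that the gates are \emph{classically controlled}, so the sign updates pick up linear contributions from the control bits; these remain $\F_2$-affine in the inputs, so the reduction to $\parityL$ survives, but this must be checked. A second point of care is choosing the right $\parityL$-complete problem and the right notion of reduction: since the task is a relation (a search problem with many valid outputs), I would phrase the upper bound as ``there is a $\parityL$ function, or a $\parityL$-uniform family, that on input the problem data outputs one valid response,'' and observe that solving linear systems and extracting a particular solution from the null space are all doable with $\parityL$ (equivalently, $\oplus\cL$) resources. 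The remaining details — the explicit Pauli conjugation rules through Clifford gates, the measurement update formulas, and the verification that each step is an $\F_2$-linear-algebra computation — are routine and I would defer them to the full proof in \Cref{sec:upper_bounds}.
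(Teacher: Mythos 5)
Your proposal is correct and takes essentially the same approach as the paper: both cast round 1 and round 2 as (postselected) multi-qubit stabilizer measurements and carry out the rank computations and linear-system solves over $\F_2$ with $\parityL$ resources (formally an $\cL$ machine with a $\parityL$ oracle), setting the random outcomes arbitrarily and reading off the forced ones from the conditioned tableau. One small caveat: a forced outcome is not literally a parity of the generators' sign bits, since multiplying Pauli generators accumulates $\pm i$ phases that must be tracked modulo $4$, but as in the paper this bookkeeping is a logspace scan over the qubit positions and stays within $\parityL$.
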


\subsection{Proof Outline}

All of our hardness results are based on the same general framework:
\begin{enumerate}
	\item Simulating a sequence of Clifford gates is computationally hard, in the sense that the final quantum state contains the solution to a hard decision problem.
	\item Using measurement-based quantum computation, we can collapse a sequence of Clifford gates to a constant-depth circuit, but we introduce Pauli errors on the final quantum state. 
	\item We get an interactive task by measuring the qubits in two rounds, and show how the classical simulator can recover information about the quantum state by rewinding.  
	\item Use random self-reducibility to repeat the process and accumulate enough information about the state to solve the decision problem. 
\end{enumerate} 

We build our hardness results on the difficulty of simulating sequences of classically-controlled Clifford gates. That is, there is a fixed sequence of Clifford gates and a classical input which tells us whether or not to apply each gate in the sequence, much like a branching program. The problem is to simulate (i.e., output a string of measurement outcomes with nonzero probability) the sequence on a fixed initial state, and the complexity depends on the number of qubits. In particular, we show that simulating a sequence of Clifford gates on two qubits is $\NC^{1}$-complete, by first characterizing the group of two-qubit Clifford gates and applying a result of Barrington and Th{\'e}rien \cite{BarringtonTherien}. We also show that simulating a particularly nice sequence of $\CNOT$ gates on $n$-qubits is $\parityL$-hard, by a simple reduction from known $\parityL$-hard problems \cite{damm:1990}. 

However, the simulation problems outlined above are unsuitable because the natural quantum implementation (i.e., literally executing a sequence of classically-controlled Clifford gates, then measuring) is \emph{not} constant depth.\footnote{Interestingly, it is known that any Clifford circuit has a constant depth implementation (using $\tilde{\Theta}(n^2)$ ancillas) using unbounded fan-in parity gates (or equivalently, fan-out gates). This is an easy consequence of Moore and Nilsson \cite{MooreNilsson:2002}, where they show how to implement $\CNOT$ circuits in constant depth (actually $\Theta(\log n)$-depth, but only because they insist on expanding the unbounded fan-in gates as trees of constant fan-in gates), and the Aaronson-Gottesman decomposition of Clifford circuits into constantly many layers of single qubit gates and $\CNOT$ circuits.} We use an observation of Raussendorf, Browne, and Briegel \cite{rbb:2003_mbc} that their procedure for measurement-based quantum computation (MBQC) works without adaptivity for Clifford circuits. Ordinarily, MBQC introduces errors in the form of Pauli operators (they call them \emph{byproduct operators}), which must be fixed adaptively, with a layer of adaptivity (or more) for each layer of the original circuit.\footnote{Such adaptivity is unacceptable for near-term quantum devices with short coherence times.} When all the gates are Clifford, however, the Pauli errors can be pushed (via conjugation relations) to the end. In principle, the final Pauli errors can be computed (in $\parityL$), but since this is too expensive for our hardness reductions, the Pauli error is effectively unknown to us.

%Interactivity is a critical component of our result. Recall that we want to find some task which a constant-depth quantum circuit can solve, but such that a classical simulator for the task solves hard problems. One way to prove this would be to show that \emph{any} oracle for the task, quantum or classical, can be used to solve computationally hard problems. Alas, we conjecture the exact opposite: a constant-depth quantum circuit cannot perform any computation that would help solve $\NC^{1}$-complete problems, let alone $\parityL$-complete problems. Our hardness proof must therefore use the fact that the simulator is classical at some point. We achieve this by making the task interactive, which gives us the following key difference. In between rounds of the protocol, one can look inside the device, quantum or classical, at its internal state. Quantum state famously cannot be copied or cloned, but classical state can, meaning that, for classical devices only, one can save the internal state and restore it at a later time. This immediately gives any classical simulator the additional power to \emph{rewind}, in our case from the end of the interaction to beginning of the second round. With this power, we can measure (in simulation) the same $O(1)$-qubit state many times, and even in different bases. 

The task becomes interactive when we split the MBQC procedure into two measurement rounds: the first round performs a circuit, setting up a quantum state, then the second round measures that state. When the simulator is classical, we gain the power to measure the second-round state multiple times (by copying the internal state). 
The power to measure a state repeatedly, under multiple different bases, would seem to be sufficient to learn the state by tomography. However, recall that we define simulation to be returning measurement outcomes that are \emph{possible} (i.e., occur with nonzero probability), rather than sampling measurement outcomes. In particular, the classical simulator may be designed adversarially to thwart our attempt to learn the state. Standard tomography results depend on measurement outcomes being random, not adversarial, so they break under this model. Instead, we use contextual measurements (i.e., from the magic square game or magic pentagram game \cite{peres_magic_square, mermin_magic_square}) to force the classical simulator to reveal a Pauli string which does not stabilize the state.\footnote{If the measurement outcomes were truly random (instead of being adversarially chosen), we could use conventional tomography to skip this step. However, defining the task as an interactive sampling problem would give a strictly weaker result since any kind of error-free sampling would suffice for relational simulation.}

Although we learn a non-stabilizing Pauli operation for the state, it could be chosen adversarially. In particular, if we need to decide whether a state is, e.g., $\ket{00}$ or $\ket{++}$, then there are many choices of Pauli operators which do not distinguish the two (e.g., $\pfont{YI}$).  In the final step, we randomize the input such that the non-stabilizing Pauli is also random.  This self-reducibility property allows us to collect all non-stabilizing Paulis and deduce the state.  This finishes the reduction since the state contains the solution to the hard problem.

\subsection{Related Work}

It is clear that the original work of Bravyi, Gosset, and K\"{o}nig which separates $\QNC^0$ and $\NC^0$ is highly relevant to the results of this paper.  In fact, these authors, with the addition of Tomamichel, have a follow-up result which also shares many similarities with our own \cite{bgkt:2019}.  There, the authors give a new problem that is solvable by a one-dimensional (i.e., the qubits are arranged in a line) constant-depth quantum circuit, but is impossible for classical constant-depth circuits. They then show that the problem (and others like it) can be transformed to one solvable by \emph{noisy} quantum circuits with high probability, albeit on a 3D array of qubits.  The former result can be broken into two logical steps:  create a one-dimensional constant-depth quantum circuit which can ``play'' the magic square game between arbitrary pairs of input qubits; then, show that any bounded fan-in circuit for this problem requires super-constant depth. In fact, it is this invocation of the magic square game that inspired our own use of nonlocal games and contextual measurements in this paper.

Our result also has a similar flavor to the paper of Shepherd and Bremner \cite{bremnershepherd:2009_interactive} in which they consider interactive protocols for verifying quantum advantage.  They design a task that an IQP circuit can pass that a $\BPP$ machine cannot. In some sense, their protocol is preferable to ours because it verifies a quantum advantage against arbitrary classical polynomial-time computation, rather than against \emph{low-depth} classical computation.  However, their protocol suffers from the fact that it relies on several assumptions, including a conjecture specific to their problem about obfuscating matroids.  Furthermore, their quantum circuit is still harder to implement than our own, requiring long-range Hamiltonians.

%!TEX root = ../samp_clifford.tex

\section{Background}
\label{sec:background}

This section serves to introduce the Clifford group, measurement-based quantum computation, as well as the many low-depth circuit classes we reference throughout this paper.  Readers familiar with the Clifford group are still advised to read the relevant section below, as it primarily focuses on a nonstandard notion of Clifford operations modulo Pauli operators.  Readers unfamiliar with the Clifford group can get a more gentle introduction in \Cref{app:clifford}.

\subsection{The Clifford group and its quotients}
\label{sec:background_clifford}

Let $\pfont{I}$, $\pfont{X}$, $\pfont{Y}$, and $\pfont{Z}$ be the four standard Pauli matrices.  We write the \emph{$m$-qubit Pauli group} as $\Paulis_m := \{ \pm 1, \pm i \} \times \{ \pfont{I}, \pfont{X}, \pfont{Y}, \pfont{Z} \}^{\otimes m}$.  We call the $\{ \pm 1, \pm i \}$ component the \emph{phase}, and the $\{ \pfont{I}, \pfont{X}, \pfont{Y}, \pfont{Z}\}^{\otimes m}$ component the \emph{Pauli string}.  We name the group of phases $\mathcal{Z}_m := \{ \pm 1, \pm i \} \times \pfont{I}^{\otimes m}$, and note that $\mathcal{Z}_m$ is a normal subgroup of $\mathcal{P}_m$.  This means the quotient $\PStr{m}$ is well-defined.  Each element of $\PStr{m}$ is a coset $\{+ P, - P, +i P , -iP \}$ for some $P \in \{ \pfont{I}, \pfont{X}, \pfont{Y}, \pfont{Z}\}^{\otimes m}$, but we identify each such element with $P$, its \emph{positive} representative.

We write the \emph{$m$-qubit Clifford group} as
$$
\mathcal{C}_m := \{ U \in \mathrm{U}(2^{m}) : U \Paulis_m U^{\dag} = \Paulis_m \}.
$$
Since conjugation of a Pauli by a Clifford operation is so common, we define the notation $\bullet \colon \Clifford_m \times \Paulis_m \to \Paulis_m$ where $U \bullet P := U P U^{\dag}$ for any $U \in \Clifford_m$ and $P \in \Paulis_m$.  By construction, $\Paulis_m$ is a normal subgroup of $\mathcal{C}_m$, so 
$$
\mathcal{Z}_m \normalin \Paulis_m \normalin \Clifford_m
$$

We will build Clifford circuits from the familiar $\CNOT$, CSIGN ($\CZ$), Hadamard ($H$), and Phase ($\RZ = \RZ(\pi/4) = (\begin{smallmatrix} 1 & 0 \\ 0 & i \end{smallmatrix})$ gates.  A \emph{Clifford state} or \emph{stabilizer state} is any quantum state of the form $U \ket{0}^{\otimes m}$, where $U \in \mathrm{U}(2^m)$ is Clifford. We often define a Clifford state is by its \emph{stabilizer group}:
$$
\mathrm{Stab}_{\ket{\psi}} := \{ P \in \Paulis_m : P \ket{\psi} = \ket{\psi} \}.
$$ 

Due to the limitations of measurement based quantum computation (which we describe in more detail in \Cref{sec:mbqc}), our results will usually implement Clifford operations up to a Pauli correction. That is, instead of implementing $C \in \Clifford_m$, we implement $C P$ for some $P \in \Paulis_m$. 
%In principle, we can compute $P$ from the measurement outcomes, and for a fixed circuit it is always an affine function of the measurements. However, in all the cases we are trying to prove hardness, the problem of computing $P$ is just as hard as the problem we are trying to reduce to, so we are unable to compute $P$ in the reduction. 
We must be content to perform any Clifford operation in the same coset as the one we asked for, and we must talk about Clifford operations, Clifford states, and Pauli stabilizers \emph{modulo} $\Paulis_m$. Naturally, this leads us to work with several quotient groups. 

Since $\Paulis_m$ is normal in $\Clifford_m$, the quotient group $\Clifford_m / \Paulis_m$ is well-defined. When we assert that two Clifford operations $C_1, C_2 \in \Clifford_m$ are \emph{equivalent modulo Paulis}, or write $C_1 \equiv C_2 \pmod{\Paulis_m}$, we really mean that the cosets $C_1 \Paulis_m$ and $C_2 \Paulis_m$ are equal, or equivalently, $C_1 C_2^{-1} \in \Paulis_m$. Another way to characterize equivalent Clifford operations is by their action (by conjugation) on the Paulis modulo phase. 

\begin{lemma}
\label{lem:pauli_mod}
For all $C_1, C_2 \in C_m$,  $C_1 \equiv C_2 \pmod{\Paulis_m}$ iff $C_1 \bullet Q \equiv C_2 \bullet Q \pmod{\mathcal Z_m}$ for all $Q \in \Paulis_m$.
\end{lemma}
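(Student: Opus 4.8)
The plan is to prove the two implications separately; the forward direction is a one‑line conjugation computation, and the reverse direction reduces to Schur's lemma applied to the single‑qubit Pauli generators.

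\textbf{Forward direction.} Suppose $C_1 \equiv C_2 \pmod{\Paulis_m}$, so $C_1 = C_2 P$ for some $P \in \Paulis_m$. For an arbitrary $Q \in \Paulis_m$ I would write
$$
C_1 \bullet Q = (C_2 P)\, Q\, (C_2 P)^{\dagger} = C_2 \bullet (P \bullet Q).
$$
Any two elements of $\Paulis_m$ commute or anticommute, so $P \bullet Q = \pm Q$; since conjugation by $C_2$ fixes scalars this gives $C_1 \bullet Q = \pm (C_2 \bullet Q) \equiv C_2 \bullet Q \pmod{\mathcal Z_m}$. Nothing subtle happens here.

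\textbf{Reverse direction.} Put $D := C_2^{-1} C_1 \in \Clifford_m$; the goal is to show $D \in \Paulis_m$. First I would translate the hypothesis: writing $C_1 \bullet Q = \omega_Q\,(C_2 \bullet Q)$ with $\omega_Q \in \mathcal Z_m$ and conjugating through $C_2^{-1}$ shows $D \bullet Q = \omega_Q Q$ for every $Q \in \Paulis_m$, i.e.\ $D$ fixes every Pauli string up to phase. It suffices to understand $D$ on the generators $\pfont{X}_1,\dots,\pfont{X}_m,\pfont{Z}_1,\dots,\pfont{Z}_m$, and since $(D \bullet \pfont{X}_j)^2 = D\, \pfont{X}_j^2\, D^{\dagger} = \pfont{I}^{\otimes m}$ the phase there must be a sign: $D \bullet \pfont{X}_j = (-1)^{a_j}\pfont{X}_j$ and $D \bullet \pfont{Z}_j = (-1)^{b_j}\pfont{Z}_j$ for bits $a_j,b_j$. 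Now set $P := \prod_{j} \pfont{X}_j^{\,b_j}\pfont{Z}_j^{\,a_j} \in \Paulis_m$ and check directly (using $\pfont{Z}\pfont{X}\pfont{Z} = -\pfont{X}$ and $\pfont{X}\pfont{Z}\pfont{X} = -\pfont{Z}$) that $P$ conjugates each $\pfont{X}_j$ and $\pfont{Z}_j$ to exactly what $D$ does. Hence $P^{-1}D$ commutes with every single‑qubit $\pfont{X}_j$ and $\pfont{Z}_j$, therefore with all of $M_{2^m}(\mathbb C)$, so by Schur's lemma $P^{-1}D = \omega\, \pfont{I}^{\otimes m}$ for a scalar $\omega$; as $P^{-1}D$ is Clifford this forces $P^{-1}D \in \mathcal Z_m$, and so $D \in \Paulis_m$, i.e.\ $C_1 \equiv C_2 \pmod{\Paulis_m}$.

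The only delicate point — and the one place I would slow down — is the last step: concluding that a scalar Clifford operation lies in $\mathcal Z_m$ rather than being an arbitrary global phase $e^{i\theta}\pfont{I}^{\otimes m}$. This is exactly where one uses that Clifford operations are identified up to global phase throughout the paper; with that convention the claim is immediate. Everything else is routine bookkeeping, so I do not expect a genuine obstacle here.
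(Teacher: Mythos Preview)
Your proof is correct but takes a different route from the paper. The paper's reverse direction is a counting argument via the tableau representation: any Clifford operation with trivial action on $\Paulis_m / \mathcal{Z}_m$ has identity $2m \times 2m$ symplectic matrix in its tableau, so is determined by its $2m$ sign bits, giving at most $4^m$ such operations; since the $4^m$ Pauli operations (up to phase) already realise this many, these are exactly the kernel. Your approach is constructive instead: you build an explicit Pauli $P$ matching $D$'s action on the single-qubit generators and then invoke Schur's lemma to dispose of the remaining scalar. Your argument is more self-contained --- it avoids the tableau machinery entirely --- while the paper's is shorter once that machinery is already in place. The delicate point you flag (the residual global phase) is handled implicitly in the paper by working with the finite Clifford group throughout (the tableau representation is faithful on $\Clifford_m / \mathcal{Z}_m$), so both arguments ultimately resolve it by the same convention.
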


We prove this lemma in \Cref{app:clifford}.  Next, we want an analogous notion of equivalence for Clifford states modulo $\Paulis_m$. We say $\ket{\psi_1} \equiv \ket{\psi_2} \pmod{\Paulis_m}$ if there exists $P$ such that $\ket{\psi_2} = P \ket{\psi_1}$. It is not hard to check that this is an equivalence relation, and that equivalence is preserved under equivalent Clifford operations. 
\begin{lemma}
Two Clifford states $\ket{\psi_1}$ and $\ket{\psi_2}$ are equivalent if and only if their stabilizer groups contain all the same Pauli operations up to sign. That is, if $P \in \mathrm{Stab}_{\ket{\psi_1}}$ then $\pm P \in \mathrm{Stab}_{\ket{\psi_2}}$. 
\end{lemma}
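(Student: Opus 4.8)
The plan is to reduce everything to the combinatorics of stabilizer groups inside $\Paulis_m$. For the forward direction I would assume $\ket{\psi_2} = P\ket{\psi_1}$ with $P \in \Paulis_m$ and conjugate: if $Q \in \mathrm{Stab}_{\ket{\psi_1}}$ then $(P \bullet Q)\ket{\psi_2} = PQP^{\dagger}P\ket{\psi_1} = P\ket{\psi_1} = \ket{\psi_2}$, so $P \bullet Q \in \mathrm{Stab}_{\ket{\psi_2}}$; since any two Paulis commute or anticommute, $P \bullet Q = \pm Q$, whence $\pm Q \in \mathrm{Stab}_{\ket{\psi_2}}$. (Running the same argument with $P^{\dagger}$ shows the two stabilizer groups in fact contain exactly the same Pauli strings up to sign.) This half is routine.

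For the converse, write $S_i := \mathrm{Stab}_{\ket{\psi_i}}$. Each $S_i$ is a maximal abelian subgroup of $\Paulis_m$ of order $2^m$ with $-\pfont I \notin S_i$, and each of its elements has phase $\pm 1$ (an element $i^k Q$ with $Q$ a positive-representative string has eigenvalues $\pm i^k$, which include $1$ only for even $k$), so the projection $\Paulis_m \to \PStr{m}$ is injective on $S_i$. The hypothesis then forces the images $\overline{S_1}, \overline{S_2} \subseteq \PStr{m}$ to be equal --- one contains the other and both have size $2^m$ --- and I would call the common image $S$; it is a maximal isotropic (Lagrangian) subspace of $\PStr{m} \cong \F_2^{2m}$ under the commutation symplectic form. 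Hence each $S_i$ is recovered from $S$ together with a sign function $s_i \colon S \to \{\pm 1\}$ picking out which sign of each string belongs to $S_i$, and, $S_i$ being a subgroup, $s_i(gh) = s_i(g)\, s_i(h)\, \beta(g,h)$, where $gh = \beta(g,h)\,(g \star h)$ on positive representatives and $\beta(g,h) \in \{\pm 1\}$ because commuting Paulis multiply to a Hermitian Pauli.

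The final step is to produce $P \in \Paulis_m$ with $P S_1 P^{\dagger} = S_2$. Conjugation by $P$ multiplies the sign function of $S_1$ by the commutation character $\chi_P(g) = \pm 1$ (according as $P$ commutes or anticommutes with $g$), so we need $\chi_P|_S = s_1 s_2 =: \delta$. The crux is that $\delta \colon S \to \{\pm1\}$ is a genuine homomorphism: $s_1$ and $s_2$ satisfy the same twisted identity with the same $\beta$, so the cocycle cancels in the ratio, $\delta(gh) = \delta(g)\delta(h)\beta(g,h)^2 = \delta(g)\delta(h)$. Since the symplectic form is nondegenerate, $P \mapsto \chi_P$ is an isomorphism $\PStr{m} \to \mathrm{Hom}(\PStr{m}, \{\pm1\})$, and restriction to $S$ is onto $\mathrm{Hom}(S, \{\pm1\})$, so some $P$ satisfies $\chi_P|_S = \delta$; for this $P$, $P\ket{\psi_1}$ is stabilized by $P S_1 P^{\dagger} = S_2$, hence equals $\ket{\psi_2}$ up to an overall phase, which is immaterial (or absorbed into $P$), giving $\ket{\psi_1} \equiv \ket{\psi_2} \pmod{\Paulis_m}$. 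I expect the main obstacle to be exactly this last paragraph: checking that $\delta$ is an honest homomorphism --- which is where the hypothesis that the two stabilizer groups share all Pauli strings is genuinely used, since sign assignments on different Lagrangians need not differ by a character --- together with the standard facts that commutation characters exhaust $\mathrm{Hom}(\PStr{m},\{\pm1\})$ and extend from $S$. A minor nuisance is reconciling the unavoidable global phase with the paper's exact-equality definition of $\equiv$ for states.
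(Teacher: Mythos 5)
Your proposal is correct, and it is strictly more complete than what the paper records. The forward direction is exactly the paper's argument: conjugating a stabilizer $Q$ of $\ket{\psi_1}$ by the Pauli $P$ relating the two states gives $P\bullet Q=\pm Q\in\mathrm{Stab}_{\ket{\psi_2}}$. The paper stops there, even though the lemma is stated as an equivalence; the converse is left implicit. Your converse argument is sound: the hypothesis forces the two stabilizer groups to project to the same Lagrangian $S\subseteq\PStr{m}$, the two sign functions satisfy the same $\beta$-twisted multiplicativity so their ratio $\delta$ is an honest character of $S$, nondegeneracy of the commutation form plus extension of functionals from $S$ produces a Pauli $P$ with $\chi_P|_S=\delta$, and then $PS_1P^{\dag}=S_2$ stabilizes $P\ket{\psi_1}$, whose joint $+1$ eigenspace is one-dimensional (the projector $2^{-m}\sum_{Q\in S_2}Q$ has trace $1$, using $|S_2|=2^m$ and $-\pfont{I}\notin S_2$), so $P\ket{\psi_1}\propto\ket{\psi_2}$. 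A shorter route to the same conclusion is to conjugate by a Clifford $U$ with $\ket{\psi_1}=U\ket{0^m}$: then $U^{\dag}S_2U$ consists of $\pm Z$-strings, so $U^{\dag}\ket{\psi_2}$ is a computational basis state $\ket{x}=X^{x}\ket{0^m}$ up to phase, and $P=U X^{x}U^{\dag}$ works; but your character-theoretic argument buys a self-contained, purely group-level proof that does not presuppose the standard form of the stabilizer of $\ket{0^m}$. Your caveat about the global phase is legitimate as a matter of pedantry — with the paper's definition of $\Clifford_m$ (which contains all global phases) and its exact-equality definition of $\equiv$, equivalence can only be concluded up to an overall phase, and a phase outside $\{\pm1,\pm i\}$ cannot be absorbed into $P$ — but the paper works modulo phases throughout and clearly intends the up-to-phase reading, so this does not affect correctness in context.
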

\begin{proof}
The calculation 
$$
(C \bullet P) C \ket{\psi} = C P C^{\dag} C \ket{\psi} = C P \ket{\psi} = C \ket{\psi}
$$
shows that $C \bullet P$ is a stabilizer of $C \ket{\psi}$ if $P$ is a stabilizer of $\ket{\psi}$, where $C \in \Clifford_m$ and $P \in \Paulis_m$. So, if $\ket{\psi_2} = Q \ket{\psi_1}$ and $P$ is a stabilizer of $\ket{\psi_1}$ then $Q \bullet P = Q P Q^{\dag} = \pm P$ is a stabilizer of $\ket{\psi_2}$. 
\end{proof}

%Note that we are abusing notation here, since $U_1, \ldots, U_n$ are in $\Clifford_2 / \Paulis_2$, and therefore both the product $U_1 \cdots U_n$ and the ``state'' $\ket{\psi}$ are only specified modulo the Pauli group. We can only hope to learn the stabilizers of $\ket{\psi}$ up to sign, for instance. 

Finally, let us discuss Pauli measurement of some state $\ket{\psi}$. Given a Pauli string $P \in \Paulis_m / \mathcal{Z}_m$, the associated measurement randomly projects $\ket{\psi}$ onto one of the two eigenspaces of $P$ (provided the projection is nonempty) and reports the corresponding eigenvalue, $+1$ or $-1$.  For the vast majority of our applications, the property of Pauli measurement we use is that any Pauli string appearing in the stabilizer group has a deterministic measurement outcome, and all other Pauli measurements have inherently random outcomes. Thus, we will bend terminology and say $P \in \Paulis_2 / \mathcal{Z}_2$ is a \emph{stabilizer} of $\ket{\psi}$ if $P$ has a deterministic outcome on $\ket{\psi}$. Any other element of $\Paulis_2 / \mathcal{Z}_2$ is a \emph{non-stabilizer}, and will have a uniformly random outcomes. We will also abuse notation and write, e.g., $\ket{\psi} = U \ket{++}$ for $U \in \Clifford_2 / \Paulis_2$, even though $\ket{\psi}$ is not a state, since $U$ is a coset of unitaries. 

These notions of equivalence will be important throughout the paper, so let us recap. Clifford operations $C_1, C_2$ are equivalent modulo Paulis if $C_1 C_2^{-1}$ is a Pauli operation, and equivalent Clifford operations induce the same permutation of $\Paulis_m / \mathcal{Z}_m$. Clifford states $\ket{\psi_1}$ and $\ket{\psi_2}$ are equivalent if they have the same stabilizer group up to signs, i.e., for any Pauli $P$ in $\mathrm{Stab}_{\ket{\psi_1}}$, $\pm P \in \mathrm{Stab}_{\ket{\psi_2}}$. 

\subsection{Measurement-based computation}
\label{sec:mbqc}
One of the main techniques used throughout this paper is a model of computation developed by Raussendorf, Browne, and Briegel known as \emph{one-way quantum computation} \cite{rb:2001_1wqc, rbb:2003_mbc}.  For reasons that will soon be clear, this is also sometimes called \emph{measurement-based computation on cluster states}.  At a high level, measurement-based computation allows for the simulation of any quantum computation by performing a sequence of adaptive single-qubit measurements on a certain highly-entangled initial state.

First, let us describe the initial state.  Let $G = (V,E)$ be an undirected graph with $V = [n]$ and $E \subseteq [n]^2$.  Define the \emph{graph state} for $G$ as 
$$
\ket{G} \equiv \sum_{x \in \{0,1\}^n} \prod_{(u,v) \in E} (-1)^{x_u x_v} \ket{x}.
$$
Notice that any graph state can be constructed from the all zeros state by applying a Hadamard gate to each qubit, and then applying a $\CSIGN$ gate between each pair of qubits representing an edge.  Thus, the any graph state can be constructed in depth at most the maximum degree of the graph plus one (by edge coloring arguments).  A \emph{cluster state} is a special case of a graph state where the graph is a 2D grid.  Importantly, the cluster state can be constructed in constant depth, using at most 4 layers of $\CSIGN$.

Measurement-based computation consists of a sequence of measurement operations that result in a gate-by-gate simulation of a quantum circuit.   For a given gate $U$ and an initial state $\ket{\psi}$, there is a measurement procedure that applies $U$ to $\ket{\psi}$ by consuming a cluster state $\ket{G}$:
\begin{enumerate} [itemsep=0pt]
\item Prepare the state $\ket{\psi} \otimes \ket{G}$.
\item Apply $\CSIGN$ between $\ket{\psi}$ and the leftmost grid points of $\ket{G}$.
\item Measure each input qubit of $\ket{\psi}$ and all but the rightmost grid points of $\ket{G}$ in either the $Z$-basis or the $\{\ket{0} \pm e^{i\theta} \ket{1} \}$-basis.
\end{enumerate}
Assuming an appropriate choice of measurement basis in the last step, the unmeasured qubits will be in the state $P U \ket{\psi}$ where $P$ is some Pauli string that depends on the measurement outcomes. To apply multiple gates, we simple string together multiple instances of the above procedure.  Fortunately, each application of the $\CSIGN$ gate commutes with all previous measurements, so we can view the simulation of the entire circuit as a sequence of single-qubit measurements on one sufficiently large cluster state.  Unfortunately, there are Pauli errors between each of the gates.  Therefore, in general one must \emph{adaptively} apply the measurements for one gate to compensate for the Pauli errors made in the application of the previous gate.

In this paper, we will focus only on quantum circuits which are composed of Clifford gates.  Therefore, no such adaptivity is needed since all Pauli errors can be pushed to the end.  We will also use that fact that every Clifford operation can be applied using only $X$, $Y$, and $Z$-basis measurements.  This yields the following theorem:

\begin{theorem}[Raussendorf, Browne, and Briegel \cite{rbb:2003_mbc}]
\label{thm:rbb_mbqc}
For any $m$-qubit Clifford circuit $C$ with $n$ local gates, there exists a pattern of $X$, $Y$, and $Z$ single-qubit measurements on an $O(m) \times O(n)$ cluster state such that the unmeasured rightmost qubits of the cluster are in the state $P C \ket{+}^m$, where Pauli $P$ is a function of the measurement outcomes.
\end{theorem}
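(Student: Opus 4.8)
The plan is to follow the one-way computation construction of Raussendorf, Browne, and Briegel~\cite{rbb:2003_mbc} essentially verbatim, verifying the two features we actually need: that the measurement pattern can be made entirely non-adaptive, and that the leftover Pauli $P$ is a function of the recorded outcomes. First I would use the fact that the Clifford group is generated by $H$, the phase gate $\RZ$, and $\CZ$ to assume $C$ is presented as a sequence $U_1,\dots,U_n$ of gates from this set (replacing each original local gate by an $O(1)$-length word over the generators costs only a constant factor in $n$). I would then lay out an $O(m)\times O(n)$ grid in which each logical qubit runs along a horizontal track, the leftmost column carrying the input $\ket{+}^m$, and implement each $U_i$ by a constant-width gadget: a short chain of $X$-basis measurements realizes the identity ``wire'' and $H$, an $X/Y$-basis gadget realizes $\RZ$, and $\CZ$ is realized for free by the vertical grid edge joining the two relevant tracks, while $Z$-basis measurements delete the grid vertices not used by any gadget so that idle tracks pass through untouched; SWAP gadgets ($\SWAP = \CNOT\cdot\CNOT\cdot\CNOT$) route the inputs of a two-qubit gate onto adjacent tracks if needed, at another constant factor. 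Since every $\CSIGN$ in the cluster-state preparation commutes past every single-qubit measurement (as observed in \Cref{sec:mbqc}), the whole object is a measurement pattern on one $O(m)\times O(n)$ cluster state.

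Next I would track the byproduct operators. The standard MBQC identity says that measuring a chain qubit in the basis $\{\ket{0}\pm e^{i\theta}\ket{1}\}$ with outcome $s\in\{0,1\}$ teleports the logical qubit one step rightward while applying a fixed rotation by angle $\theta$ composed with a Hadamard, up to a Pauli byproduct of the form $X^{s}Z^{t}$ with $t$ inherited from the rest of the pattern; the gadgets above use only $\theta\in\{0,\pi/2\}$, i.e.\ only $X$ and $Y$ measurements, and $\CZ$ introduces no byproduct. Composing gadgets, if the accumulated byproduct after $U_i$ is a Pauli $P^{(i)}$, then after the gadget for $U_{i+1}$ it becomes $(U_{i+1}\bullet P^{(i)})\,P^{(i+1)}_{\mathrm{new}}$, which is again a Pauli because $U_{i+1}$ is Clifford --- precisely the conjugation action $\bullet$ of \Cref{sec:background_clifford}. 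Hence a single Pauli $P$ reaches the rightmost column, and unwinding the composition rules shows $P$ is an $\F_2$-affine function of the outcome bits, leaving the rightmost qubits in $P\,C\ket{+}^m$. The one point where the Clifford hypothesis is essential is non-adaptivity: in general one must replace the $i$-th measurement angle $\theta$ by $(-1)^a\theta$, with $a$ the incoming $X$-byproduct bit, and absorb incoming $Z$-byproducts, which is why generic MBQC patterns are adaptive. But for $\theta=0$ this correction is vacuous, and for $\theta=\pi/2$ the basis $\{\ket{0}\pm e^{-i\pi/2}\ket{1}\} = \{\ket{0}\mp i\ket{1}\}$ is just the $\theta=\pi/2$ basis with its two outcomes relabeled (and a $Z$-byproduct has the same relabeling effect); so the physical measurement is always the same $X$, $Y$, or $Z$ measurement, and the byproduct-dependent ``corrections'' reduce to classical relabelings of the recorded bits, already subsumed in the statement that $P$ is a function of the outcomes.

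The main obstacle is the bookkeeping in the second step: writing down the byproduct-composition rule for each gadget, checking that these rules glue consistently across gadget boundaries, and confirming the sign-flip/relabeling argument handles both $X$- and $Z$-byproducts in every gadget --- including the free $\CZ$, where an incoming $X$-byproduct spreads onto the partner track via $\CZ\cdot X_1 = X_1 Z_2\cdot\CZ$ and must be tracked there. This is essentially the content of~\cite{rbb:2003_mbc}: routine but fiddly. By comparison, the grid layout and the reduction to $\{H,\RZ,\CZ\}$ are straightforward.
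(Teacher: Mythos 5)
Your proposal is correct and follows essentially the same route as the paper, which proves this statement by the standard Raussendorf--Browne--Briegel argument: gate-by-gate MBQC gadgets on a cluster state (with $\CSIGN$s commuting past earlier measurements), with the key observation that for Clifford gates the Pauli byproduct operators can be conjugated through to the end, so the $X$/$Y$/$Z$ measurement pattern is non-adaptive and the final Pauli $P$ is determined by the recorded outcomes. The only cosmetic difference is that where you verify the gadget bookkeeping by hand, the paper defers the detailed gadget verification to its appendix (for its size-optimized variants) using the ZX-calculus rather than explicit byproduct-propagation identities.
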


In \Cref{app:mbqc_gadgets}, we show how to modify the constructions of Raussendorf, Browne, and Briegel to minimize the dimensions of the grid. Although such constructions only save a constant fraction of qubits, we feel as though it is desirable to get the simplest possible circuit. One of the easiest ways to verify and explain these constructions is to appeal to the ZX-calculus, a graphical language for reasoning about linear maps between qubits.  We refer the reader to Coecke and Duncan \cite{cd:2011_zx} as their paper is quite clear and thorough, but we also introduce (in \Cref{app:zx}) the bare minimum necessary for our results. 

\subsection{Types of Problems}
\label{sec:models_of_computation}
Traditionally, a complexity class is a collection of problems solved by some computational device. This section serves to introduce the various kinds of problems, all of which either appear in this work or relevant literature.  We introduce the models of computation we will use to solve these problems in the next section.

\begin{description}
	\item[Decision problems:] A decision problem is a subset of strings $L \subseteq \{ 0, 1 \}^{*}$, and a computational device solves that problem if it \emph{accepts} precisely the strings in $L$ and rejects all others. In other words, the machine must give a standard yes-or-no answer for all classical inputs. By default, complexity classes are sets of decision problems. For example, $\NC^{1}$ is the collection of languages accepted by a uniform family of $\NC^{1}$ circuits $\{ C_n \}_{n \geq 0}$ where $C_n$ has $n$ input bits and one output bit, and the circuit accepts if and only if the output bit is $1$.
	\item[Relation problems:] A relation problem is defined by a relation $R \subseteq \{ 0, 1 \}^{*} \times \{ 0, 1 \}^{*}$ on strings. A computational device which takes an input string and returns a string solves a relation problem if the input-output pair satisfy the relation. In the interest of clarity, classes of relation problems will begin with $\Rel$.\footnote{Somewhat confusingly, relation problems are sometimes referred to as search or function problems in the literature.  For example, function polynomial time ($\FP$) refers to relation problems where a polynomial-time Turing machine can output any string satisfying the relation.} For example, $\Rel \NC^{1}$ is the class of relation problems such that there exists a uniform family of $\NC^{1}$ circuits which, for any input string, output some string such that the pair satisfies the relation. 
	\item[Sampling problems:] A sampling problem is like a relation problem, but each input defines a distribution over output strings, and the computational device is required to output strings at random from this distribution (sometimes \emph{exactly}, but by default with multiplicative error). Some computational devices are inherently deterministic and thus require a stream of random bits to be provided with the input. Classes of sampling problems will begin with $\Samp$. For example, $\Samp \QNC^{0}$ is the collection of sampling problems where the goal is to sample from distributions obtained by measurements on a uniform family of constant-depth quantum circuits. 
	\item[Interactive problems:] An interactive problem is a generalization from a single round of interaction, e.g., the device gets an input and returns an output, to multiple rounds.
An interactive problem is distinct from a series of relation problems (or sampling problems) in two ways: the computational device is allowed to keep some state from one round to the next (by explicitly passing bits/qubits for circuits, or by not erasing the work tape of a Turing machine between rounds), and the device is evaluated based on the entire transcript (i.e., the sequence of inputs and outputs). 
\end{description}

\subsection{Low-Depth Complexity Classes}

Let us start by defining some standard polynomial-size circuit families. For each of the families below, we adopt the convention that $\mathcal{C}^{i}$ denotes the subfamily of circuits in $\mathcal{C}$ with depth $O(\log^{i} n)$.

\begin{itemize}[itemsep = 0pt]
\item $\NC$: classical circuits of bounded fan-in AND, OR, and NOT gates.
\item $\AC$: classical circuits of unbounded fan-in AND, OR, and NOT gates.
\item $\TC$: classical circuits of unbounded fan-in MAJORITY and NOT gates.
\item $\QNC$: quantum circuits of $\CNOT$ gates and arbitrary single-qubit gates.
\item $\CliffordClass$: quantum circuits of $\CNOT$, $H$, and $\RZ$ gates.
\end{itemize}

The interactive quantum circuits in this paper belong to a restricted set of quantum circuits close to $\CliffordClass$ except each gate may be controlled by a classical input bit. Formally, a \emph{classically-controlled Clifford circuit} consists of a sequence Clifford gates, some of which are controlled by an individual bit of the classical input.  Controlled gates are applied if the input bit is $1$, act as the identity otherwise.  For example, see \Cref{fig:2_round_circuit}.

Finally, let us turn our attention to two important small-space Turing machine complexity classes:  $\cL$ and $\parityL$ (pronounced ``parity $\cL$"). Each class is defined through a log-space Turing machine. Formally, a log-space Turing machine has two tapes:  a read-only tape of length $n$ containing the input, and a read-write workspace tape of length at most $O(\log n)$.  A language is in $\cL$ if there is a deterministic log-space Turing machine which accepts every word in the language and rejects every word not in the language.  A language is in $\parityL$ if there exists a non-deterministic log-space Turing machine such that there are an odd number of accepting paths for every word in the language, and an even number of accepting paths for every word not in the language.

We can also define $\cL$ and $\parityL$ in terms of circuit families. A result of Cobham \cite{cobham:1966} says that non-uniform\footnote{Notice that classical and quantum circuits have a fixed number of inputs, so they can only solve computational problems on inputs of fixed length. When we say $\mathcal{C}$ circuits solve a problem with unbounded size inputs, we mean that there is a collection of circuits in $\mathcal{C}$, $\{ C_n \}_{n \geq 0}$, one for each input length, and the circuit for the appropriate length is applied to the input. In principle, each input length could be solved by a wildly different circuit, i.e., the class we have defined is \emph{non-uniform}. This can be used to solve undecidable problems, which makes it awkward to compare such classes to \emph{uniform} classes defined by a single device (e.g., a Turing machine) which acts on all input lengths. For this reason, we primarily work with \emph{uniform} circuit classes, where the family $\{ C_n \}_{n \geq 0}$ is generated by a Turing machine.  When we require non-uniformity, we give the device classical advice.  This is denoted by the suffix $/\poly$ on the class.} log space (i.e., $\cL/\poly$) is equivalent to polynomial-width branching programs. Combined with the result \cite{lange:2000} that log space is equivalent to reversible log space, we may assume the layers of the branching programs are permutations---essentially a classically-controlled network of swaps. Thus, log space Turing machines are equivalent to uniform classically-controlled circuits of swap gates. Similarly, computing the product of a network of $\CNOT$ gates is complete for $\parityL$ \cite{damm:1990}, so we may think of $\parityL$ as defined by a classically-controlled network of $\CNOT$ gates.

For reference, we have the following relationships between the (uniform) versions of the (decision) complexity classes:
$$
\NC^0 \subset \AC^0 \subset \AC^0[2] \subset \TC^0 \subseteq \NC^1 \subseteq \cL \subseteq \parityL = \CliffordClass \subseteq \AC^1[2]
$$
Note that we write $\mathcal{C}[G]$ for the circuit class or complexity class $\mathcal{C}$ augmented with access to the gate $G$. In the special case of unbounded fan-in MOD gates (i.e., $\MOD_k$ outputs $1$ if the sum of the input bits is $0$ modulo $k$), we may write just the modulus, e.g., $\AC^{0}[2] = \AC^{0}[\MOD_2]$.

%!TEX root = ../samp_clifford.tex

\section{Model}
\label{sec:model}

In this section, we discuss the precise interactive model we use for our hardness of simulation results. 
Before we discuss these protocols, let us explain why we need interaction, and why blackbox reductions such as those used in previous separations \cite{bgk:2018, bwkst:2019} may not give us, e.g., $\NC^{1}$-hardness results for simulating constant-depth quantum circuits. 

\subsection{Ruling out blackbox reductions}

Although complexity theorists know how to prove specific problems are not in $\NC^{0}$ (via light cones, for instance), $\AC^{0}$ (e.g., using the switching lemma), or even $\AC^{0}[2]$ (see Razborov-Smolensky \cite{razborov:1987, smolensky:1987}), proving unconditional circuit lower bounds for even slightly larger classes (e.g., $\TC^{0}$) is beyond our current tools. The kind of hardness we hope to prove (i.e., $\NC^{1}$-hardness or $\parityL$-hardness) should therefore be \emph{conditional} on complexity theoretic assumptions. For example, if we could prove by a reduction that a constant-depth quantum circuit solves an $\NC^{1}$-complete problem, then we could say $\TC^{0} \subsetneq \QNC^{0}$ \emph{conditioned} on the (widely assumed) conjecture that $\TC^{0} \subsetneq \NC^{1}$. 

However, we think it is unlikely that (blackbox) oracle access to $\QNC^{0}$ circuits can help solve $\NC^{1}$-hard problems. To formalize this, let $\QNC^{0}_f = \QNC^{0}[\mathsf{fanout}]$ be the family of $\QNC^{0}$ circuits augmented with unbounded fan-out gates. That is, fan-out is the classical reversible gate which XORs a single control bit into any number of target bits (it can be constructed from a deep but straightforward network of $\CNOT$ gates from the control to each target).  We make the following precise conjecture.
\begin{conjecture}
\label{conj:nc1_vs_qnc0} $\NC^{1} \not \subseteq \QNC^{0}_f$.
\end{conjecture}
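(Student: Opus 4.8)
The honest starting point is that this is stated as a conjecture, not a theorem, and I do not expect a short proof: the containment $\TC^0 \subseteq \QNC^0_f$ — which follows from H{\o}yer and \v{S}palek's result that constant-depth quantum circuits with fan-out compute threshold and $\MOD$ functions — means that $\NC^1 \not\subseteq \QNC^0_f$ would immediately yield $\TC^0 \subsetneq \NC^1$, a long-open separation. So any genuine attack on the conjecture must also settle that classical question; what follows is the shape I would expect an eventual proof to take, not something executable with current tools.

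The plan is as follows. First, reduce to a single $\NC^1$-complete target — the most convenient being evaluation of a balanced fan-in-two Boolean formula, or, via Barrington's theorem (the same tool used elsewhere in this paper), the word problem over a fixed non-solvable group such as $S_5$; it then suffices to show this one problem is not computed, even with bounded error, by any uniform $\QNC^0_f$ family. Second, and this is the crux, I would need a lower-bound method that survives the presence of fan-out gates. The reason the usual $\QNC^0$ toolkit fails is precisely that a single fan-out layer couples every output qubit to every input, so there are no light cones to exploit. I would look for either (i) an algebraic invariant in the spirit of Razborov--Smolensky and Barrington--Th\'erien, showing that the non-solvable group structure of $S_5$ cannot be realized by the ``Clifford-plus-shallow-phase'' action a fan-out circuit performs on amplitudes, or (ii) a structural normal form placing every $\QNC^0_f$ circuit inside a quantum analogue of $\TC^0$ for which an $\NC^1$ lower bound is separately available. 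Third, I would have to make whichever argument emerges robust to the arbitrary single-qubit gates the model allows and to the final measurement/post-selection, which is where approximate-degree or polynomial-method estimates typically become delicate.

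The main obstacle is step two: no technique is currently known that proves a super-constant depth lower bound against quantum circuits with fan-out for any problem in $\NC^1$, and by the $\TC^0 \subseteq \QNC^0_f$ containment above such a technique is at least as strong as a proof that $\TC^0 \subsetneq \NC^1$. This is exactly why the present paper does not attempt the conjecture directly but instead moves to a two-round interactive task, where a non-black-box rewinding argument separates classical simulators from quantum devices and thereby sidesteps the circuit-lower-bound barrier entirely. In short, the realistic ``proof plan'' is: first obtain (or await) a breakthrough separating $\TC^0$ from $\NC^1$, then adapt it to tolerate fan-out and arbitrary single-qubit gates.
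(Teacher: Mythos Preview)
Your assessment is correct and matches the paper's own treatment: this is stated as a conjecture, not a theorem, and the paper offers no proof. The paper's justification is essentially the same as yours---it notes that $\TC^{0} \subseteq \QNC^{0}_f$ via H{\o}yer--\v{S}palek (and Takahashi--Tani), so the conjecture strengthens $\TC^{0} \subsetneq \NC^{1}$, and remarks that the H{\o}yer--\v{S}palek construction relies on commuting gates whereas $\NC^{1}$ captures products over non-abelian groups (Barrington--Th\'erien). Your observation that the paper sidesteps the barrier by moving to an interactive, non-blackbox setting is exactly the point of \Cref{cor:no_black_box} and the surrounding discussion.
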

We can think of this as an extension of the conjecture that $\TC^{0}$ does not contain $\NC^{1}$, because a result of H{\o}yer and \v{S}palek \cite{hoyerspalek:2005} and Takahashi and Tani \cite{tt:2016} shows how to construct majority gates in $\QNC^{0}_f$, and therefore $\TC^{0}$ circuits can be implemented directly in $\QNC^{0}_f$. We make the conjecture on the basis that the H{\o}yer and \v{S}palek result has not been extended to $\NC^{1}$ in the intervening 15 years, perhaps because their work depends on executing sequences of \emph{commuting} gates, but $\NC^{1}$ can compute products over non-abelian groups \cite{BarringtonTherien}. 

The relevant consequence of the conjecture is that an oracle for some relation problem in $\Rel\QNC^{0}$ cannot help solve $\NC^{1}$-hard problems. 
\begin{corollary}
\label{cor:no_black_box}
Suppose $A$ is a relation problem in $\Rel\QNC^{0}$. Under the conjecture, $(\TC^{0})^{A}$ does not contain $\NC^{1}$.  
\end{corollary}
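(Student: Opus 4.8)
The plan is to show that if $(\TC^0)^A \supseteq \NC^1$ for some relation problem $A \in \Rel\QNC^0$, then we could actually compute every language in $\NC^1$ inside $\QNC^0_f$, contradicting Conjecture \ref{conj:nc1_vs_qnc0}. The key idea is that $\QNC^0_f$ is powerful enough to (i) absorb an oracle query to $A$ and (ii) absorb the $\TC^0$ wrapper around it.

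First I would handle the oracle call. Since $A \in \Rel\QNC^0$, there is a uniform family of constant-depth quantum circuits that, on any input, produce (by measurement) some string satisfying the relation $A$. A $\QNC^0_f$ circuit can run such a circuit directly on its own qubits and measure, obtaining a valid response string for $A$ — this is where the relational (rather than sampling) nature matters: any valid answer is fine, and a single honest execution of the $\QNC^0$ circuit gives one with probability $1$. So a $\QNC^0_f$ machine can simulate one oracle query to $A$ in constant depth. (If the $\TC^0$ reduction makes polynomially many adaptive queries, we would need to be slightly more careful — but the standard framework here is non-adaptive, or the queries can be fanned out in parallel; in any case the $\QNC^0$ subcircuits for each query act on disjoint registers and compose in constant depth.)

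Next I would absorb the $\TC^0$ layer. By the result of H{\o}yer–{\v S}palek \cite{hoyerspalek:2005} and Takahashi–Tani \cite{tt:2016} cited in the text, $\QNC^0_f$ can implement majority gates, hence any $\TC^0$ circuit, in constant depth. Composing: given an $\NC^1$ language $L$ that is solved by some $(\TC^0)^A$ circuit, replace the oracle gates by the $\Rel\QNC^0$ subcircuits from the previous paragraph and replace the $\TC^0$ gates by their $\QNC^0_f$ implementations. The composition is still constant depth — constantly many constant-depth layers — and it correctly decides $L$. Hence $\NC^1 \subseteq \QNC^0_f$, contradicting the conjecture. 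Therefore no such $A$ exists, i.e., $(\TC^0)^A \not\supseteq \NC^1$ for every $A \in \Rel\QNC^0$.

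The main obstacle I expect is making the simulation of the oracle queries fully rigorous under adaptivity: a $\TC^0$ circuit with an oracle may query $A$ on inputs that themselves depend on earlier answers, and a genuinely constant-depth device cannot serialize such queries. The clean way around this is to observe that the relevant model here uses non-adaptive queries (or that depth-$O(1)$ with $O(1)$ adaptive rounds still collapses into $\QNC^0_f$), and — importantly for the intended reading of the corollary — that even the non-adaptive/parallel case already rules out a blackbox reduction of the kind used in \cite{bgk:2018, bwkst:2019}, which is exactly the point being made. So the careful statement to nail down is the composition lemma "$(\TC^0)^{\Rel\QNC^0} \subseteq \QNC^0_f$" (with the appropriate query discipline), after which the corollary is immediate from Conjecture \ref{conj:nc1_vs_qnc0}.
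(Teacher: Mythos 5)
Your proposal is correct and follows essentially the same route as the paper: assume $\NC^1 \subseteq (\TC^0)^A$, replace oracle gates by the $\Rel\QNC^0$ circuits and $\TC^0$ gates by their $\QNC^0_f$ implementations (H{\o}yer--\v{S}palek / Takahashi--Tani), and conclude $\NC^1 \subseteq \QNC^0_f$, contradicting the conjecture. Your extra care about adaptivity is fine but not an issue here, since a constant-depth oracle circuit has only constantly many layers of oracle gates, so the composition stays constant depth exactly as the paper's terse proof assumes.
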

\begin{proof}
Suppose for a contradiction that $(\TC^{0})^{A}$ solves some $\NC^{1}$-hard problem, and therefore $\NC^{1}$ is contained in $(\TC^{0})^{A}$. The $(\TC^{0})^{A}$ circuit can be translated to a $\QNC^{0}_f$ circuit, since majority gates can be constructed, fan-out gates are assumed, and $\QNC^{0}$ gates are part of $\QNC^{0}_f$ by definition. It follows that 
$$\NC^{1} \subseteq (\TC^{0})^{A} \subseteq \QNC^{0}_f,$$
contradicting the conjecture. 
\end{proof}

The key point is that this conjecture and the corollary above rule out a \emph{black box} hardness reduction. We want a task in $A \in \Rel\QNC^{0}$ such that a classical implementation of $A$ solves $\NC^{1}$-hard problems, but the conjecture implies a quantum implementation of $A$ does not solve $\NC^{1}$-hard problems. The only way to have both results is to open up the oracle (i.e., black box) and have the proof depend on the classical internals somehow. For example, the sampling problems separating quantum and classical machines (e.g., Aaronson-Arkhipov \cite{aaronsonarkhipov:2013}) use the fact that the classical machine may be assumed WLOG to be deterministic, with the random bits fed into it as a string. Taking out the randomness allows us to estimate probability amplitudes via approximate counting, where no similar idea is possible with a quantum machine.

However, for \emph{relation} problems, we found no such technique to separate classical and quantum devices.\footnote{One might be tempted to switch to sampling instead of relation problems.  Not only would those arguments suffer from the same limitation on black box reductions, we show in \Cref{app:samp_vs_search} that the sampling and relation versions of Clifford simulation problems such as $\HLF$ are equivalent under low-depth reductions.} Instead, we switch to \emph{interactive} problems. Notice that this does not overcome the black box argument above; we do not expect an oracle solving the interactive task to assist in solving $\NC^{1}$-hard problems. What interactivity gives is an easy way to extract more power (and thus solve hard problems) from a classical simulation of the task than from an actual quantum device performing the same task.

\subsection{The interactive model}

Let us imagine two parties:  a prover (who is supposed to give answers that are consistent with a low-depth quantum computer), and a challenger.  The low-depth quantum computer starts the protocol with the graph state $\ket{G}$ where $G$ is a subgraph of a 2D grid.  Let $\ket{\psi_0} = \ket{G}$. The $i$th round of the protocol consists of the following:
\begin{itemize} [itemsep = 0pt]
\item The challenger chooses a set of non-overlapping, local, Clifford gates on $\ket{\psi_{i-1}}$.
\item The prover returns an outcome consistent with first applying the set of gates and then measuring those qubits in the $Z$-basis.  Let $\ket{\psi_i}$ be the state of the unmeasured qubits consistent with the measurement results.
\end{itemize}
It is worth stressing here that the measurement results of the prover do not need to be uniformly chosen from the possible outcomes.  There simply must be some positive probability that a quantum computer faithfully executing the protocol returns those answers.

We now ask what the computational complexity is for passing such a protocol.  As a first observation, notice that if the protocol has polynomially many rounds, then the challenger can force the prover to simulate measurement-based computation on cluster states, and thus simulate an arbitrary Clifford circuit. The prover would then necessarily need to have the power of $\parityL$.  At the other extreme, if we only ask for a single-round of measurements, then this a relation problem, which we argued previously is unlikely to lead to a separation.  Therefore, this paper will focus on these interactive protocols that have exactly two rounds as shown in \Cref{fig:2_round_circuit}.

\DeclareExpandableDocumentCommand{\ccontrol}{O{}m}{|[phase,inner sep=2.2pt,#1]| {} \cw}
\def\cctrl#1{\ccontrol{}	\vcw{#1}}

\begin{figure}[h!]
\begin{center}
\begin{quantikz}[thin lines]
& \cw\qwbundle{} & \cw & \cw & \cw & \cw  & \cctrl{2}\gategroup[wires=3, steps=4, style={dashed, rounded corners, inner xsep=2pt}, label style={label position=below,anchor= north,yshift=-0.2cm}, background]{Round 2} &  \ \ldots\ \cw & \cctrl{2} & \cw \\
& \cw\qwbundle{} & \cctrl{2}\gategroup[wires=3, steps=4, style={dashed, rounded corners, inner xsep=2pt}, label style={label position=below,anchor=north,yshift=-0.2cm}, background]{Round 1}  &  \ \ldots\ \cw & \cctrl{2} & \cw & \cw & \cw & \cw & \cw \\
\lstick[wires=2]{$\ket{G}$}  & \qw\qwbundle{} & \qw & \qw & \qw & \qw & \gate{D_1} &  \ \ldots\ \qw & \gate{D_{\ell}} & \meter{} \\ 
& \qw\qwbundle{} & \gate{C_1} &  \ \ldots\ \qw & \gate{C_k} & \meter{} & & & &
\end{quantikz}
\vspace{-10pt}
\end{center}
\caption{Constant-depth classically-controlled Clifford circuit for the two round protocol for gates $C_1, \ldots, C_k$ in the first round, and gates $D_1, \ldots, D_\ell$ in the second round.}
\label{fig:2_round_circuit}
\end{figure}
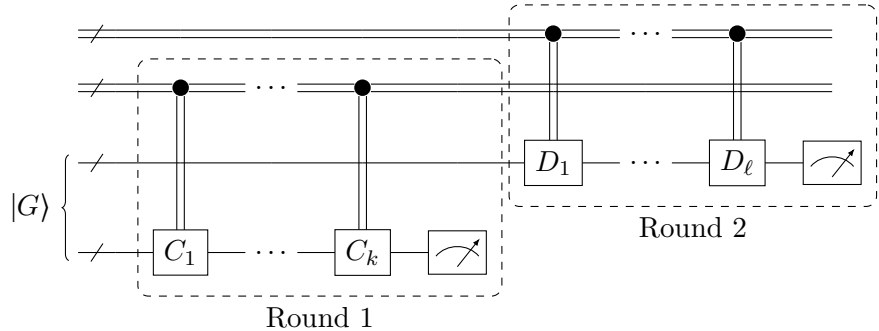

We model the interactive protocol with a quantum computer by an oracle $\oracle$.  The oracle takes the set of gates for the first round and outputs a consistent measurement result, and then the oracle takes a set of gates for the second round and outputs measurement results consistent with both the first round measurements and the given set of gates.  Once the oracle returns an answer for the second phase, you cannot ask it a different question for the second round (justified by the fact that quantum measurements are collapsing).

Now consider a classical circuit which passes the two-round measurement protocol.  After the second round of the protocol, we can \emph{rewind} the classical machine back to the beginning of the second round.  We model this interaction with a rewind oracle $\coracle$.  There are two ways to query the oracle:
\begin{enumerate}[itemsep = 0pt]
\item Given a set of gates for the first round, the oracle returns consistent measurement results for the first round, or
\item Given a set of gates for the second round and a set of gates/measurement results in the first round, the oracle returns a consistent measurement result for the second round. \emph{Such oracle queries are only guaranteed to be correct when the oracle has previously returned those first-round measurements on that particular set of first-round gates.}
\end{enumerate}

Under almost any kind of classical circuit or classical Turing machine, the rewind oracle can be implemented by the same computational device as the original oracle. 
\begin{proposition}
\label{prop:oracleconversion}
Suppose $\oracle$ is an oracle for a $2$-round interactive task. If $\oracle$ is implemented by any of the following circuits or Turing machines $\AC^{0}, \AC^{0}[p], \TC^{0}, \NC^{1}, \cL, $ or $\parityL$, then the rewind oracle, $\coracle$, is implemented by the same class of circuits or Turing machines. 
\end{proposition}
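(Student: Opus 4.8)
The plan is to implement $\coracle$ by simply \emph{re-running} the first round of $\oracle$ from scratch, which is possible precisely because all the models in the list are deterministic and their intermediate state is copyable. Since $\AC^0$, $\AC^0[p]$, $\TC^0$, $\NC^1$, $\cL$, and $\parityL$ are deterministic, $\oracle$ on a first-round gate set $g_1$ produces a \emph{fixed} first-round outcome string $r_1^\ast = r_1^\ast(g_1)$ together with a fixed inter-round state $s^\ast = s^\ast(g_1)$: for the circuit models, the bits carried on the wires between the two rounds; for the Turing-machine models, the $O(\log n)$-bit contents of the work tape when round $1$ halts. Define $\coracle$ as follows. On a type-1 query $g_1$, run the round-$1$ part of $\oracle$ on $g_1$ and return its output. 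On a type-2 query $(g_1, r_1, g_2)$, \emph{discard} $r_1$, re-run the round-$1$ part of $\oracle$ on $g_1$ to regenerate $s^\ast(g_1)$, then run the round-$2$ part of $\oracle$ on input $g_2$ with inter-round state $s^\ast(g_1)$, and return its output.

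For correctness, recall that a type-2 answer of $\coracle$ is only required to be valid when $\oracle$ has actually returned $r_1$ on the first-round gates $g_1$, i.e.\ when $r_1 = r_1^\ast(g_1)$. In that case the state $s^\ast(g_1)$ we regenerate is exactly the state $\oracle$ held entering round $2$, so the string we output is exactly the one $\oracle$ would have produced on the transcript $(g_1, r_1, g_2)$; since $\oracle$ is by hypothesis a valid oracle for the task, this is a consistent round-$2$ outcome. When $r_1 \neq r_1^\ast(g_1)$ no guarantee is demanded, and indeed our construction ignores $r_1$, so nothing more is needed.

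It remains to verify that each class is closed under the composition used in the type-2 branch: run the round-$1$ device, forward its inter-round state to the round-$2$ device, run that. For the explicit circuit classes this is immediate --- concatenate the round-$1$ and round-$2$ circuits, feeding the state wires of the former into the latter alongside $g_2$: the depth stays $O(1)+O(1)=O(1)$ for $\AC^0$, $\AC^0[p]$, $\TC^0$ and $O(\log n)+O(\log n)=O(\log n)$ for $\NC^1$, sizes add and remain polynomial, and uniformity is preserved since the generator for the composed family just runs the two given generators and wires their outputs together. For $\cL$ and $\parityL$ we use the circuit characterizations from \Cref{sec:models_of_computation}: polynomial-width permutation branching programs, i.e.\ classically-controlled $\SWAP$ networks, for $\cL$, and classically-controlled $\CNOT$ networks for $\parityL$. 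In this presentation the two rounds are again literally concatenated as reversible networks, the inter-round state is just the polynomially many wires, and total width and length stay polynomial; equivalently, one may invoke the known closure of $\cL$ under logspace reductions and of $\parityL$ under $\parityL$ Turing reductions.

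The only place that needs real care is the $\parityL$ case (and, to a lesser degree, the functional version of $\cL$): modelled as a nondeterministic logspace Turing machine with a persistent work tape, it is not \emph{a priori} obvious how a ``re-run'' interacts with nondeterminism, and naively composing two logspace computations is problematic when the intermediate string is polynomially long rather than $O(\log n)$ bits. The resolution is exactly the one above --- work with the $\CNOT$-network (respectively branching-program) presentation, where the inter-round state is an explicit polynomial-size register and composition is trivial --- so the bulk of the writeup is just confirming that this presentation faithfully captures the ``persistent work tape between rounds'' model and that the concatenated network still lies in the class. Everything else is bookkeeping.
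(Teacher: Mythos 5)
Your proposal is correct. The paper actually states \Cref{prop:oracleconversion} without proof, treating it as immediate, and your argument—determinism of each model gives a fixed first-round output and copyable inter-round state, so $\coracle$ is obtained by re-running round 1 and composing with round 2, with closure under this composition checked via circuit concatenation and, for $\cL$ and $\parityL$, via the swap-network/branching-program and $\CNOT$-network presentations (or closure under the corresponding Turing reductions)—is exactly the routine verification the authors leave implicit, including the key observation that a type-2 answer need only be correct when the supplied $r_1$ equals the deterministically regenerated $r_1^\ast(g_1)$.
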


\begin{problem}[2-Round Clifford Simulation (\CliffSim)]
\label{prob:generic_cliff_sim}
Let \emph{2-Round Clifford Simulation} be the task of passing the interactive protocol above in the special case of 2 rounds. Later, we will further specialize this problem to be more specific about 
\begin{itemize}[itemsep = 0pt]
\item the geometry of the starting graph state,
\item the precise encoding of the challenges, and
\item the acceptable rate of error (if any).
\end{itemize}
See \Cref{prob:nc1} and \Cref{prob:parityL}.
\end{problem}

%!TEX root = ../samp_clifford.tex

\section{\texorpdfstring{$\NC^1$}{NC1}-hardness}
\label{sec:nc1}

Recall the Clifford simulation problem (\Cref{prob:generic_cliff_sim}). We specialize it to a problem on a width-$2$ grid graph state $\ket{\mathcal H_n}$ which has the following brickwork pattern:
\begin{center}
	\begin{tikzpicture}
	\begin{pgfonlayer}{nodelayer}
		\node [style=vertex] (0) at (0, 0) {};
		\node [style=vertex] (1) at (0, -1) {};
		\node [style=vertex] (2) at (1, 0) {};
		\node [style=vertex] (3) at (1, -1) {};
		\node [style=vertex] (4) at (2, 0) {};
		\node [style=vertex] (5) at (2, -1) {};
		\node [style=vertex] (8) at (3, 0) {};
		\node [style=vertex] (9) at (3, -1) {};
		\node [style=none] (12) at (4.75, -0.5) {$\ldots$};
		\node [style=vertex] (13) at (8.5, 0) {};
		\node [style=vertex] (14) at (8.5, -1) {};
		\node [style=vertex] (15) at (9.5, 0) {};
		\node [style=vertex] (16) at (9.5, -1) {};
		\node [style=vertex] (20) at (10.5, 0) {};
		\node [style=vertex] (21) at (10.5, -1) {};
		\node [style=vertex] (22) at (6.5, 0) {};
		\node [style=vertex] (23) at (6.5, -1) {};
		\node [style=vertex] (24) at (7.5, 0) {};
		\node [style=vertex] (25) at (7.5, -1) {};
		\node [style=none] (26) at (4, 0) {};
		\node [style=none] (27) at (4, -1) {};
		\node [style=none] (28) at (5.5, 0) {};
		\node [style=none] (29) at (5.5, -1) {};
	\end{pgfonlayer}
	\begin{pgfonlayer}{edgelayer}
		\draw (0) to (2);
		\draw (2) to (4);
		\draw (2) to (3);
		\draw (3) to (1);
		\draw (3) to (5);
		\draw (4) to (8);
		\draw (5) to (9);
		\draw (8) to (9);
		\draw (15) to (13);
		\draw (13) to (14);
		\draw (14) to (16);
		\draw (15) to (20);
		\draw (16) to (21);
		\draw (22) to (24);
		\draw (22) to (23);
		\draw (23) to (25);
		\draw (24) to (13);
		\draw (25) to (14);
		\draw (8) to (26.center);
		\draw (9) to (27.center);
		\draw (28.center) to (22);
		\draw (29.center) to (23);
	\end{pgfonlayer}
\end{tikzpicture}
\end{center}
The details of the state $\ket{\mathcal H_n}$ and why it suffices for measurement-based computation are given in \Cref{sec:two_qubit_mbqc}.\footnote{For completeness, we note that a simple width-2 grid would also suffice in place of the graph $\mathcal H_n$.  However, the grid requires 18 qubits per 2-qubit gate, whereas the construction we give only requires 16.}

\begin{problem}[Narrow Cluster Clifford Simulation]
\label{prob:nc1}
Let $A \in \{0,1\}^{2 \times (8 n + 1)}$ and $B \in \{0,1\}^{2 \times 9}$ be binary matrices.  Let \emph{Narrow Cluster Clifford Simulation} be the problem of passing the $\CliffSim$ protocol with initial state $\ket{\mathcal H_n}$ and the following two rounds of challenges:
\begin{itemize} [itemsep = 0pt]
\item Round 1 Challenges:  prover measures qubit $(i,j)$ in $X$-basis if $A_{i,j}=0$; otherwise, prover measures qubit $(i,j)$ in $Y$-basis.
\item Round 2 Challenges:  prover measures qubit $(i , j+ 8n + 1)$ in $X$-basis if $B_{i,j}=0$; otherwise, prover measures qubit $(i,j + 8 n + 1)$ in $Y$-basis.
\end{itemize}
\end{problem}

Our main result for this theorem will be as follows.
\begin{theorem}
\label{thm:nc1_main_result}
Let $\coracle$ be the rewind oracle for the $2$-round Clifford simulation described above (\Cref{prob:nc1}). Then $$\NC^{1} \subseteq (\BPAC^{0})^{\coracle}.$$  
\end{theorem}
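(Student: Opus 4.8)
The plan is to reduce an $\NC^1$-complete problem to the task of passing the protocol of \Cref{prob:nc1} using the rewind oracle $\coracle$, showing the whole reduction sits inside $\ZPAC^0 \subseteq \BPAC^0$. The starting point is the observation, promised in the proof outline, that simulating a sequence of classically-controlled two-qubit Clifford gates is $\NC^1$-complete: one characterizes the two-qubit Clifford group modulo Paulis as a finite group and invokes Barrington–Th\'erien \cite{BarringtonTherien} to see that iterated multiplication in it is $\NC^1$-hard (and $\NC^1$ suffices because it is a fixed finite group). So it is enough to show how, given a classically-controlled sequence $C_1,\dots,C_k \in \Clifford_2$, a $\ZPAC^0$ machine with access to $\coracle$ can determine (enough about) the final state $C_k \cdots C_1 \ket{++}$ modulo Paulis to read off the answer to the $\NC^1$-complete decision problem.

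The key steps, in order, are as follows. \textbf{(1) Encode the Clifford sequence as round-1 challenges.} Using \Cref{thm:rbb_mbqc} and the width-2 MBQC gadgets of \Cref{sec:two_qubit_mbqc}, translate the desired sequence of two-qubit Clifford gates into a pattern of $X/Y$ single-qubit measurements on the first $8n+1$ columns of $\ket{\mathcal H_n}$; this translation is computable in $\AC^0$ since each gadget depends only locally on the input bits controlling each $C_j$. Feeding this pattern as the round-1 challenge $A$ to $\coracle$, the prover commits to a round-1 outcome; the (unknown) state of the two rightmost logical qubits is then $P\,C_k\cdots C_1\ket{++}$ for some unknown Pauli $P$ depending on those outcomes. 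Per the introduction, we deliberately \emph{ignore} the round-1 output string — only the commitment matters. \textbf{(2) Extract a non-stabilizing Pauli via a contextuality gadget.} We want to learn which Pauli strings stabilize the state, but the prover is adversarial, so ordinary tomography fails. Instead, in round 2 we pose challenges corresponding to the measurements of a magic-square/magic-pentagram-style context on the two output qubits; because the product of the observables in a context is a fixed sign but the state cannot simultaneously satisfy all of them, \emph{any} consistent set of round-2 outcomes the prover returns must be inconsistent with the true stabilizer of the state along at least one observable — i.e., it reveals a Pauli $Q \in \Paulis_2/\mathcal Z_2$ that is \emph{not} a stabilizer of $C_k\cdots C_1\ket{++}$ (up to the known conjugation by the circuit). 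Crucially, this works with a \emph{single} round-2 query, so the rewind oracle's restriction (round-2 answers are only reliable for the committed round-1 transcript) is respected. \textbf{(3) Randomize to make the revealed Pauli random (self-reducibility).} A single adversarially-chosen non-stabilizer $Q$ may not separate the two candidate final states. Here we use Kilian-style randomization: prepend to the Clifford sequence a random two-qubit Clifford $R$ (encoded as extra round-1 MBQC columns — this is why $n$ is a free parameter) and append $R^{-1}$-type corrections as part of the second round's challenge structure, so that the state whose non-stabilizer we learn is a uniformly random Clifford image of the true state. Then the non-stabilizing Pauli we extract is (close to) uniform over non-stabilizers, and $O(1)$ independent repetitions — each a fresh run of steps (1)–(2) with independent randomness, hence parallelizable in $\AC^0$ — suffice to pin down the stabilizer group of the final state with constant probability, and by majority/repetition with probability $\ge 1/2$, outputting ``don't know'' otherwise (this is what lands us in $\ZPAC^0$). \textbf{(4) Read off the answer.} The stabilizer group of $C_k\cdots C_1\ket{++}$ determines the state up to a Pauli, which is exactly the information the $\NC^1$-complete iterated-Clifford-product problem asks for; decoding it is an $\AC^0$ post-processing step.

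The main obstacle I expect is step (2)–(3): making the contextuality gadget and the Kilian randomization interact correctly under the constraint that the round-2 challenge is a \emph{fixed, non-adaptive} pattern of $X/Y$ Pauli measurements on only $9$ designated columns, and that the rewind oracle gives no guarantees if we deviate from the committed round-1 transcript. One has to check that a single context's worth of Pauli measurements can be realized within the allotted $2\times 9$ block of $\ket{\mathcal H_n}$ (absorbing the needed basis changes into the MBQC gadget rather than into round 2), that the prover's adversarial freedom genuinely cannot avoid revealing a non-stabilizer, and that after randomization the distribution of revealed non-stabilizers has enough support to be informative after $O(1)$ trials. The $\BPAC^0$ (indeed $\ZPAC^0$) bound then follows because every moving part — encoding the gadgets, generating the random Clifford, issuing the $O(1)$ parallel queries, and decoding — is constant-depth, with the only randomness coming from the self-reduction and the only error coming from an unlucky draw of randomizers.
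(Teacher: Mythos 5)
Your overall architecture (MBQC encoding in round 1, contextuality to extract a non-stabilizer, randomized self-reduction, repeat and decode) matches the paper, but two of your central steps do not work as stated. First, in step (2) you claim a \emph{single} round-2 query to one magic-square context already forces the prover to reveal a non-stabilizer, ``because the state cannot simultaneously satisfy all of them.'' This is false for a single context: each row or column of the square in \Cref{fig:magic_square} is a set of \emph{commuting} Paulis whose outcomes an honest quantum device (and hence an honest classical simulator) can report with perfect consistency, so one query reveals nothing adversarially useful. The Mermin--Peres contradiction only appears when you compare outcomes \emph{across} all six contexts, which is exactly why the paper's Algorithm~$A$ rewinds and asks six different round-2 challenges against the same committed round-1 transcript (\Cref{thm:nonstab_from_magic}); you have also inverted the meaning of the rewind restriction, which permits many round-2 queries on one committed transcript rather than limiting you to one. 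Indeed, if a single non-adaptive query sufficed, the reduction would be black-box relative to a $\Rel\QNC^0$ oracle, and under \Cref{conj:nc1_vs_qnc0} this is ruled out by \Cref{cor:no_black_box} --- so any correct proof \emph{must} exploit multiple rewound second rounds.

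Second, your step (3) randomization --- prepending a single random $R \in \Clifford_2/\Paulis_2$ and undoing it afterwards --- is precisely the na\"ive self-reduction the paper shows fails: an adversarial prover whose answer function is covariant, e.g.\ returning $R \bullet P$ for the same fixed non-stabilizer $P$ it would have returned without randomization, yields the identical Pauli after you undo $R$, so repetitions give no new information and the conclusion ``the extracted Pauli is (close to) uniform over non-stabilizers'' does not follow. The paper instead Kilian-randomizes the \emph{entire} gate list (so the prover's view is a uniformly random sequence with the prescribed product, independent of the instance; \Cref{thm:symmetrize}), and even then the induced distribution may be concentrated on a single Pauli such as $\pfont{YY}$, which is why a second self-reduction is needed --- multiplying the last gate by a uniformly random Clifford stabilizing $\ket{++}$ (\Cref{thm:nc1_self_reduce}) --- to make the output uniform over the $12$ non-stabilizers before coupon-collecting them. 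With those two repairs (six rewound contexts per call, and the two-layer randomization), the rest of your outline --- $\AC^0$ encoding of the MBQC challenges, realizing each row/column measurement inside the $2\times 9$ block via a single-qubit Clifford plus Bell measurement, and reading the answer off the stabilizer group under the $\pfont{II}$-vs-$H\otimes H$ promise --- does line up with the paper's proof.
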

As a consequence, if there is, e.g., a $\TC^{0}$ implementation of $\oracle$, then by \Cref{prop:oracleconversion} there is also a $\TC^{0}$ circuit for $\coracle$, implying $\NC^{1} \subseteq (\BPAC^{0})^{\TC^{0}} = \BPTC^{0}$. Now give both sides polynomial advice, i.e., $\NC^{1} / \poly \subseteq \BPTC^{0} / \poly$, and recall a result of Ajtai and Ben-Or \cite{AB84} that $\BPTC^{0} / \poly = \TC^{0} / \poly$. This gives $\NC^{1} / \poly \subseteq \TC^{0} / \poly$, contradicting a standard complexity conjecture that the two (non-uniform) circuit classes are distinct. We take this as evidence that there is no $\TC^{0}$ implementation of $\oracle$ for this task.

The high level outline of the proof is as follows. First, we establish that it is $\NC^{1}$-complete to compute the product of a sequence of $2$-qubit Clifford gates, even up to unknown Pauli corrections. \Cref{prob:nc1} is precisely the task of simulating (via MBQC) a sequence of $2$-qubit Clifford gates (modulo Paulis) on the state $\ket{++}$. In round two, we can use rewinding to measure the state in several bases to perform a kind of tomography. In particular, through the use of non-contextuality, we must learn at least one Pauli string which is not a stabilizer of the state. Finally, by randomizing the reduction we can obfuscate the state and learn a \emph{random} non-stabilizer Pauli string each time, and repetition allows us to learn the state.

\subsection{Hardness and 2-Qubit Clifford Gates}

Recall that the Clifford gates form a discrete group under composition, and the two-qubit Clifford gates are a finite subgroup. Computing the product of a sequence of gates is therefore a special case of the group product problem considered by Barrington and Th\'{e}rien \cite{BarringtonTherien}, so we will use their results to prove hardness as soon as we identify $\Clifford_2$, the group of two-qubit Clifford gates. 

It turns out that there are exactly $11520$ two-qubit Clifford gates (i.e., $|\Clifford_2| = 11520$) in $\Clifford_2 \subseteq \mathrm{SU}(2)$. Among these are the $16$ (again, up to phase) elements of the Pauli group, $\Paulis_2$. As discussed in \Cref{sec:background}, we can only implement a sequence of these Clifford operations modulo the Pauli group, so we are actually interested in the group $\Clifford_2 / \Paulis_2$ of order $720$. Recall from \Cref{sec:background} that the Clifford operations modulo Paulis can be represented by $4 \times 4$ symplectic matrices over $\F_2$. It is known \cite{symmetricInSymplectic} that the symmetric group $S_{4k+2}$ is contained in the symplectic group of $4k \times 4k$ matrices over $\F_2$, and by counting we see that it must be the whole group when $k = 1$, so we have the isomorphism $\Clifford_2 / \Paulis_2 \cong \mathrm{S}_6$. However, to keep this paper as self-contained as possible, we give a proof below, with an explicit description of the isomorphism. 
\begin{lemma}
The group of two-qubit Clifford gates, up to Pauli corrections, is isomorphic to $\mathrm{S}_6$. That is, $\Clifford_2 / \Paulis_2 \cong \mathrm{S}_6$. \Cref{fig:s6_clifford_iso} shows how a permutation of six vertices induces a permutation of the edge labels (in $\Paulis_2 / \mathcal{Z}_2$) and thus specifies an element of $\Clifford_2 / \Paulis_2$.  
\end{lemma}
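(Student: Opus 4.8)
The plan is to factor the claimed isomorphism as $\Clifford_2/\Paulis_2 \cong \mathrm{Sp}_4(\F_2) \cong \mathrm{S}_6$ and make each half explicit. For the first half, conjugation gives an action of $\Clifford_2$ on $\PStr 2$, and since $P^2 \in \mathcal Z_2$ for every Pauli string, $\PStr 2$ is an elementary abelian $2$-group, i.e.\ a $4$-dimensional $\F_2$-vector space $V$, on which the commutator pairing $PQP^{-1}Q^{-1} \in \mathcal Z_2$ descends to a nondegenerate alternating form $\langle\cdot,\cdot\rangle$. Conjugation by a Clifford operation is a group automorphism of $\PStr 2$ preserving this form, hence lies in $\mathrm{Sp}_4(\F_2)$, and by \Cref{lem:pauli_mod} the induced map $\Clifford_2/\Paulis_2 \to \mathrm{Sp}_4(\F_2)$ is injective. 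As recalled above, $|\Clifford_2| = 11520$ and $|\PStr 2| = 16$, so $|\Clifford_2/\Paulis_2| = 720 = |\mathrm{Sp}_4(\F_2)|$, and the injection is therefore an isomorphism $\Clifford_2/\Paulis_2 \cong \mathrm{Sp}_4(\F_2)$.

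For the second half I would exhibit the exceptional isomorphism $\mathrm{Sp}_4(\F_2) \cong \mathrm S_6$ through the $15$ nonzero vectors of $V$, which are exactly the $15$ nontrivial elements of $\PStr 2$. The content of \Cref{fig:s6_clifford_iso} is a bijection $\phi$ between these $15$ Pauli strings and the $15$ two-element subsets (``edges'') of a $6$-element vertex set, with the property that $\langle\phi(A),\phi(B)\rangle = |A\cap B| \bmod 2$ for all edges $A \neq B$ --- equivalently, $\phi(A)$ and $\phi(B)$ anticommute iff the edges $A,B$ share exactly one vertex. This is a finite verification. Given it, any $\sigma \in \mathrm S_6$ permutes edges and hence, via $\phi$, permutes the nonzero vectors of $V$; I would argue the resulting permutation $\tilde\sigma$ of $V$ (fixing $0$) is $\F_2$-linear by checking that the Pauli product is combinatorial, namely $\phi(A)\phi(B) = \phi(f(A,B))$ where $f(A,B) = A \triangle B$ when $|A\cap B| = 1$ and $f(A,B) = \overline{A\cup B}$ (the complementary edge) when $A \cap B = \emptyset$ --- both $\mathrm S_6$-equivariant operations. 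Thus $\tilde\sigma$ respects addition and preserves $\langle\cdot,\cdot\rangle$, so $\tilde\sigma \in \mathrm{Sp}_4(\F_2)$, and $\sigma \mapsto \tilde\sigma$ is a homomorphism $\mathrm S_6 \to \mathrm{Sp}_4(\F_2)$. It is injective: its kernel is a normal subgroup of $\mathrm S_6$ containing no transposition (adjacent transpositions act nontrivially, as one reads off from $\phi$), hence trivial; and it is surjective since $|\mathrm S_6| = 720 = |\mathrm{Sp}_4(\F_2)|$.

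Composing the two isomorphisms gives $\Clifford_2/\Paulis_2 \cong \mathrm S_6$, and unwinding the construction shows the coset of Cliffords corresponding to $\sigma \in \mathrm S_6$ is precisely the one whose conjugation action sends each $\phi(A)$ to $\phi(\sigma A)$ --- exactly the recipe encoded by \Cref{fig:s6_clifford_iso}. The main work (rather than a genuine obstacle) is the explicit check that the labeling $\phi$ intertwines anticommutation with edge-incidence and that the Pauli product agrees with the combinatorial operation $f$; once that table is fixed, the linearity argument and the two cardinality counts are routine. A lighter alternative that sidesteps the linearity step is to verify directly that the five adjacent transpositions of $\mathrm S_6$ are realized by explicit two-qubit Clifford circuits and that the images satisfy the Coxeter relations of $\mathrm S_6$, then apply the same injectivity-plus-counting argument; but the $\phi$-equivariance route is cleaner to state and matches the figure.
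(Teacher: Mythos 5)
Your proposal is correct in substance, but it follows the route the paper explicitly sets aside in the sentence preceding the lemma (``to keep this paper as self-contained as possible\dots''), so it is a genuinely different proof from the one the paper gives. The paper's proof never introduces $\mathrm{Sp}_4(\F_2)$: it embeds $\Clifford_2/\Paulis_2$ into the permutations of the $15$ nontrivial elements of $\PStr{2}$ via conjugation, observes that (anti)commutation is preserved, verifies that the six sets $M_1,\dots,M_6$ are the \emph{only} pairwise-anticommuting quintuples, concludes that every Clifford permutes these six ``vertices'' of \Cref{fig:s6_clifford_iso}, and finishes with injectivity (via \Cref{lem:pauli_mod} plus the fact that each Pauli string labels exactly one edge) and the count $|\Clifford_2/\Paulis_2|=720=|\mathrm{S}_6|$. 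You instead factor through the symplectic group: conjugation is an $\F_2$-linear form-preserving action on $\PStr{2}\cong\F_2^4$, giving $\Clifford_2/\Paulis_2\cong\mathrm{Sp}_4(\F_2)$ by injectivity and counting, and then you make the exceptional isomorphism $\mathrm{S}_6\cong\mathrm{Sp}_4(\F_2)$ explicit through the edge labeling, proving linearity of the $\mathrm{S}_6$-action from the combinatorial product rule $\phi(A)\phi(B)=\phi(A\triangle B)$ or $\phi(\overline{A\cup B})$. The trade-off: the paper's finite check is the uniqueness of the six quintuples and it needs no group orders beyond $|\mathrm{S}_6|$; your version needs two finite checks (the incidence/anticommutation dictionary and the product rule) plus the additional fact $|\mathrm{Sp}_4(\F_2)|=2^4(2^2-1)(2^4-1)=720$, but in exchange it exposes the linear structure and the standard exceptional isomorphism, which the paper only cites. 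Both arguments lean on \Cref{lem:pauli_mod} and the same figure, and both close with a $720=720$ count.

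One small repair is needed in your injectivity argument for $\mathrm{S}_6\to\mathrm{Sp}_4(\F_2)$: ``a normal subgroup containing no transposition is trivial'' is false, since $A_6$ is a nontrivial normal subgroup containing no transpositions; your observation only rules out the kernel being all of $\mathrm{S}_6$. The fix is immediate: the action of $\mathrm{S}_6$ on $2$-element subsets is faithful (any nontrivial permutation moves some edge, e.g.\ any $3$-cycle does), and since $\phi$ is a bijection the induced permutation of nonzero vectors is then nontrivial, so the kernel is trivial. With that one-line correction the proof goes through.
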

\begin{proof}
Recall from \Cref{sec:background} that Clifford gates act on the Pauli group by conjugation $U \bullet P = U P U^{\dag}$ for any $U \in \Clifford_m$ and $P \in \Paulis_m$. Furthermore, each Clifford operation induces a permutation of $\Paulis_m / \mathcal{Z}_m$, namely $\tilde{\phi}_{U} \colon  \Paulis_m / \mathcal{Z}_m \to \Paulis_m / \mathcal{Z}_m$, where
$$
\tilde{\phi}_{U}(P \mathcal{Z}_m) = (U \bullet P) \mathcal{Z}_m.
$$
Moreover, recall that the kernel of this homomorphism is $\Paulis_m$, so there is an injective homomorphism from $\Clifford_m / \Paulis_m$ into the symmetric group on $\Paulis_m / \mathcal{Z}_m$. Unfortunately, this symmetric group is much too big (even for the special case $m=2$) since $|\Paulis_2 / \mathcal{Z}_2| = 16$.

Clearly not all permutations of $16$ elements are in the image of $\tilde{\phi}$. For instance, every unitary commutes with $\pfont{I}^{\otimes m}$ so $\pfont{I}^{\otimes m} \mathcal{Z}_m$ is fixed by all permutations. Another constraint is that two Pauli operations $P, Q \in \Paulis_m$ either commute or anti-commute, i.e., $[P,Q] = \pm \pfont{I}^{\otimes m}$, and this is preserved by conjugation:
$$
[U \bullet P, U \bullet Q] = U \bullet [P,Q] = [P,Q].
$$
Also note that the commutator does not depend on the sign of $P$ or $Q$, i.e., $[P, Q] = [\alpha P, \beta Q]$ for $\alpha, \beta \in \{\pm 1, \pm i\}$. These commutation constraints drastically limit the set of possible permutations of $\Paulis_m / \mathcal{Z}_m$. To understand the permutation we define the following six sets of pairwise anti-commuting Pauli operations. 
\begin{align*}
M_1 &= \{ \pfont{XI}, \pfont{YI}, \pfont{ZX}, \pfont{ZY}, \pfont{ZZ} \}, & M_2 &= \{ \pfont{XI}, \pfont{ZI}, \pfont{YX}, \pfont{YY}, \pfont{YZ} \}, & M_3 &= \{ \pfont{YI}, \pfont{ZI}, \pfont{XX}, \pfont{XY}, \pfont{XZ} \}, \\
M_4 &= \{ \pfont{IX}, \pfont{IY}, \pfont{XZ}, \pfont{YZ}, \pfont{ZZ} \}, & M_5 &= \{ \pfont{IX}, \pfont{IZ}, \pfont{XY}, \pfont{YY}, \pfont{ZY} \}, & M_6 &= \{ \pfont{IY}, \pfont{IZ}, \pfont{XX}, \pfont{YX}, \pfont{ZX} \}.
\end{align*}
In \Cref{fig:s6_clifford_iso}, each vertex corresponds to an $M_i$.  The edges incident to that vertex are labeled with the Pauli strings of $M_i$. Not only are the elements in each set pairwise anti-commuting, but they are the \emph{only} pairwise anti-commuting subsets of size five. Since any Clifford operation $U$ preserves commutation relations, $\tilde{\phi}_U$ must permute these six sets/vertices. 

In this way, each element of $\Clifford_2 / \Paulis_2$ maps to a permutation of $\Paulis_2 / \mathcal{Z}_2$, which has an associated permutation in $\mathrm{S}_6$. This map injective because of \Cref{lem:pauli_mod} and the fact that each Pauli string occurs on exactly one edge (so we can recover the permutation of $\Paulis_2 / \mathcal{Z}_2$ from a permutation of vertices). Since $|\Clifford_2 / \Paulis_2| = 720 = |\mathrm{S}_6|$, the homomorphism is an isomorphism.
\end{proof}
\begin{figure}
	%!TEX root = ../samp_clifford.tex

\begin{center}
	\begin{tikzpicture}[scale=0.3]
	\coordinate[draw] (C) at (10,0);
	\coordinate[] (D) at (5,8.66);
	\coordinate[] (E) at (-5,8.66);
	\coordinate[] (F) at (-10,0);
	\coordinate[] (A) at (-5,-8.66);
	\coordinate[] (B) at (5,-8.66);
	
	\fill[black]  (A) circle [radius=4pt]; 
	\fill[black]  (B) circle [radius=4pt]; 
	\fill[black]  (C) circle [radius=4pt]; 
	\fill[black]  (D) circle [radius=4pt]; 
	\fill[black]  (E) circle [radius=4pt]; 
	\fill[black]  (F) circle [radius=4pt]; 
	
	\draw (A)--(B)--(C)--(D)--(E)--(F)--(A);
	
	\draw (A) -- node[sloped,above right] {$\pfont{ZI}$} (C);
	\draw (C) -- node[sloped,above right] {$\pfont{XI}$} (E);
	\draw (E) -- node[sloped,above right] {$\pfont{YI}$} (A);
	
	\draw (B) -- node[sloped,above left] {$\pfont{IX}$} (F);
	\draw (D) -- node[sloped,above left] {$\pfont{IZ}$} (F);
	\draw (B) -- node[sloped,above left] {$\pfont{IY}$} (D);
	
	\draw (A) -- node[sloped,above=8pt] {$\pfont{XX}$} (D);
	\draw (B) -- node[sloped,above=8pt] {$\pfont{ZZ}$} (E);
	\draw (C) -- node[sloped,below=8pt] {$\pfont{YY}$} (F);
	
	\draw (A) -- node[sloped,below] {$\pfont{XZ}$} (B);
	\draw (B) -- node[sloped,below] {$\pfont{YZ}$} (C);
	\draw (C) -- node[sloped,above] {$\pfont{YX}$} (D);
	\draw (D) -- node[sloped,above] {$\pfont{ZX}$} (E);
	\draw (E) -- node[sloped,above] {$\pfont{ZY}$} (F);
	\draw (F) -- node[sloped,below] {$\pfont{XY}$} (A);
	\end{tikzpicture}
\end{center}
	\caption{$\mathrm{S}_6 \cong \mathcal C_2 / \Paulis_2$ isomorphism}
	\label{fig:s6_clifford_iso}
\end{figure}
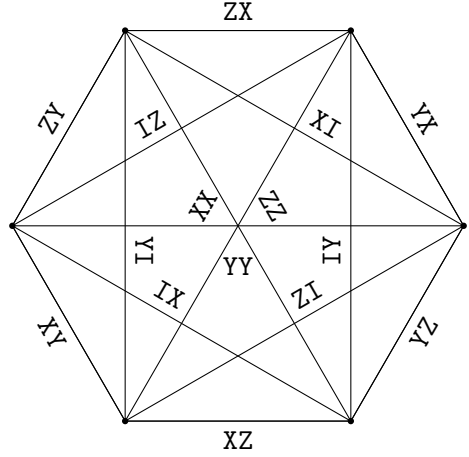

Since $\Clifford_2 / \Paulis_2$ is isomorphic to $\mathrm{S}_6$, there is a normal subgroup within it corresponding to $A_6$. We call the elements of this subgroup \emph{even} since they are isomorphic to \emph{even} permutations within $\mathrm{S}_6$. 

It follows immediately that it is hard to compute the product of many two-qubit Clifford gates.
\begin{corollary}
\label{cor:nc1_hardness}
Given $U_1, \ldots, U_n \in \Clifford_2 / \Paulis_2$, the problem of computing the state $U_1 \cdots U_n \ket{++}$ (modulo Pauli operations) is $\NC^{1}$-complete. The problem remains hard if $U_1, \ldots, U_n$ are promised to be even and if $U_1 \cdots U_n$ is promised to be either $\pfont{II}$ or $H \otimes H \in \Clifford_2 / \Paulis_2$, so $U_1 \cdots U_n \ket{++}$ is either $\ket{++}$ or $\ket{00}$ (modulo Pauli operations).
\end{corollary}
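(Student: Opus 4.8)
The plan is to split the claim into an upper bound ($\NC^{1}$ membership) and an $\NC^{1}$-hardness reduction. For membership, I would use that $\Clifford_2/\Paulis_2$ is a fixed finite group of order $720$, so a single multiplication takes $O(1)$ bits and $O(1)$ depth; multiplying $U_1,\dots,U_n$ pairwise up a balanced binary tree computes $U_1\cdots U_n$ in depth $O(\log n)$ with bounded fan-in, and the coset of states $U_1\cdots U_n\ket{++}$ is then obtained by an $O(1)$-size table lookup on the resulting group element. Hence the problem is in $\Rel \NC^{1}$, and under the stated promise it becomes a one-bit decision problem in $\NC^{1}$.

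For hardness I would reduce from the group word problem via Barrington and Barrington--Th\'erien \cite{BarringtonTherien}. Since $\Clifford_2/\Paulis_2\cong\mathrm{S}_6$ by the isomorphism of \Cref{fig:s6_clifford_iso}, it contains the simple non-abelian group $A_6$, and the following is $\NC^{1}$-complete: given $\sigma_1,\dots,\sigma_n$ drawn from a fixed finite generating set of $A_6$, promised that $\sigma_1\cdots\sigma_n$ is either the identity or a prescribed non-identity target $g$, decide which. Simplicity of $A_6$ is exactly what lets us both keep all factors inside $A_6$ (hence ``even'') and prescribe $g$ freely, through the usual commutator-identity compilation of an $\NC^{1}$ formula into such a word. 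I would take $g$ to be the image of $H\otimes H$; a short computation using $H\bullet\pfont{X}=\pfont{Z}$, $H\bullet\pfont{Z}=\pfont{X}$, $H\bullet\pfont{Y}=-\pfont{Y}$ on each tensor factor shows that $H\otimes H$ permutes the six sets $M_1,\dots,M_6$ as a product of two disjoint transpositions, so it is a non-identity even element and such a $g$ exists.

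Translating the $\sigma_i$ back through the isomorphism then yields even $U_i\in\Clifford_2/\Paulis_2$ with $U_1\cdots U_n$ equal to $\pfont{II}$ or $H\otimes H$ according to the answer of the word problem --- exactly the restricted form asserted in the corollary. Since $\pfont{II}\ket{++}=\ket{++}$ and $(H\otimes H)\ket{++}=\ket{00}$, and since $\ket{++}$ and $\ket{00}$ are inequivalent modulo $\Paulis_2$ (their stabilizer groups are generated by $\pfont{XI},\pfont{IX}$ versus $\pfont{ZI},\pfont{IZ}$, which differ as Pauli strings), any device that outputs the coset $U_1\cdots U_n\ket{++}$ solves the $\NC^{1}$-complete promise problem. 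Together with membership this gives $\NC^{1}$-completeness.

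The main obstacle is extracting exactly the right form of the Barrington--Th\'erien machinery: we need hardness not of the plain word problem over $\mathrm{S}_6$ but of the version in which every factor lies in a fixed generating set of the simple subgroup $A_6$ and the product is promised to be the identity or one prescribed element. This is standard but has to be assembled from the cited work rather than quoted directly. The remaining ingredients --- the $O(\log n)$-depth product tree, the identification of the image of $H\otimes H$ in $\mathrm{S}_6$, and the check that $\ket{++}\not\equiv\ket{00}\pmod{\Paulis_2}$ --- are routine.
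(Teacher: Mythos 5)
Your proposal is correct and follows essentially the same route as the paper: both invoke $\Clifford_2/\Paulis_2 \cong \mathrm{S}_6$, Barrington--Th\'erien hardness of the promise word problem over the non-solvable even subgroup $A_6$, the target $H \otimes H$, and the fact that $\ket{++}$ and $\ket{00}$ are distinguishable modulo $\Paulis_2$. The only cosmetic differences are that you certify evenness of $H \otimes H$ by computing its image in $\mathrm{S}_6$ directly (two disjoint transpositions), whereas the paper writes $H \otimes H = (H \otimes \pfont{I}) \circ \SWAP \circ (H \otimes \pfont{I}) \circ \SWAP$ so each factor appears twice, and that you spell out the $\NC^1$ membership via a balanced product tree, which the paper leaves implicit.
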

\begin{proof}
By definition, the even elements in $\Clifford_2 / \Paulis_2$ form a subgroup isomorphic to $A_6$. Since $A_6$ is not solvable, a main result of Barrington and Th\'{e}rien \cite{BarringtonTherien} says that computing products in the group is $\NC^{1}$-complete. That is, it is $\NC^{1}$-complete to compute the product $U_1 \cdots U_n$. In fact, the problem remains hard if the product is promised to be either $\pfont{I} \otimes \pfont{I}$ or $U$, for any choice of $U$. Let us take $U = H \otimes H \in \Clifford_2 / \Paulis_2$. Note that we can express $H \otimes H$ as 
$$
H \otimes H = (H \otimes \pfont{I}) \circ \SWAP \circ (H \otimes \pfont{I}) \circ \SWAP,
$$
using two $\SWAP$ gates and two $H \otimes \pfont{I}$ gates, so it must be an even gate. 

Since the product of the elements is either $\pfont{I} \otimes \pfont{I}$ or $H \otimes H$, the state will be either $\ket{++}$ or $\ket{00}$, and it follows that distinguishing these states is $\NC^{1}$-complete. 
\end{proof}

\subsection{Tomography and the Magic Square Game}
\label{sec:magicsquaregame}

Recall that in the context of our task, \Cref{prob:nc1}, the first round essentially computes the state $U_1 \cdots U_n \ket{++}$ with Pauli corrections, in the sense that the state is in the last two unmeasured qubits.\footnote{Recall that the Pauli corrections can be computed from the measurement outcomes, but that computation is already $\NC^{1}$-complete (exercise to the reader), so we will not be able to compute the corrections as part of the $\NC^{1}$-hardness reduction.} The obvious next step is to measure the two-qubit state repeatedly (i.e., rewinding and applying a different Clifford circuit each time) in different bases, and infer the state from the measurements. Under a slightly different definition of \Cref{prob:nc1}, this would be a straightforward application of quantum tomography (in our case, Clifford state tomography \cite{montanaro:2007}). Unfortunately, traditional tomography depends on getting random samples, but a classical algorithm solving \Cref{prob:nc1} is allowed to answer challenges with \emph{any} outcome that occurs with non-zero probability. This makes it impossible to infer even one stabilizer of the state. 

For example, suppose we are promised that the state is either $\ket{00}$ or $\ket{++}$. They have disjoint stabilizers ($\pfont{ZI}$, $\pfont{IZ}$, $\pfont{ZZ}$ vs. $\pfont{XI}$, $\pfont{IX}$, $\pfont{XX}$), so learning even one would determine the entire state. However, the states are not orthogonal, so no measurement can perfectly distinguish the states. For any measurement we make, there must be some outcome that could be observed for both $\ket{00}$ and $\ket{++}$, and an adversarial classical algorithm may choose to answer with that outcome for every measurement. 

Nonetheless, it is possible to learn \emph{something} about the state from measurements. We claim that it is not possible to answer all $2$-qubit Clifford measurements in a way which is consistent with all $2$-qubit stabilizer states. In fact, we have something stronger from the theory of quantum contextuality: there is a collection of $2$-qubit Pauli measurements such that it is impossible to give a consistent answer to all of them that does not depend on \emph{context}, i.e., which other commuting Pauli measurements we apply in the same measurement set. We can use this to force the classical simulator to answer inconsistently on some Pauli measurement, and thus learn that the Pauli string is not a stabilizer. 

Before we go further, let us formalize what kind of measurements we perform. For any stabilizer state $\ket{\psi}$ (which are the only states we consider in this paper), and any $P \in \Paulis_m / \mathcal{Z}_m$, the Pauli measurement associated to $P$ projects onto the $+1$ or $-1$ eigenspace of the canonical (i.e., positive) Pauli operation in $P$, and gives a corresponding $+1$ or $-1$ outcome. As previously discussed, we call $P$ a stabilizer of $\ket{\psi}$ if the outcome of this measurement is deterministic. Given pairwise commuting measurements $P_1, \ldots, P_n \in \Paulis_m / \mathcal{Z}_m$, the order of measurement does not affect the outcomes. Furthermore, the outcome of measuring any product (which commutes with all the individual measurements) will be the product of the outcomes for the individual measurements. 

To measure Paulis $P$ and $Q$ we apply a Clifford operation which maps $P \mapsto \pfont{ZI}$, $Q \mapsto \pfont{IZ}$ and then measure both qubits in the $\pfont{Z}$ basis. Although we cannot directly measure Pauli $PQ$, we can infer its measurement outcome by multiplying the two measurement outcomes from $P$ and $Q$.  We will call any set of three pairwise-commuting Pauli strings $\{ P, Q, R \}$ multiplying to $\pm \pfont{II}$ a \emph{Pauli line}.  By the above, a two-qubit Clifford measurement gives us outcomes for $P$, $Q$, and $R$ on some Pauli line $\{ P, Q, R \}$. 

The geometry of Pauli lines on two-qubits is important in what follows, so let us list a few facts. 
\begin{fact}
\label{fact:pauli_lines}
There are 15 Pauli lines, and they correspond to the perfect matchings in \Cref{fig:s6_clifford_iso}. I.e., the three Pauli strings in the line correspond to the labels of the three edges in the matching. Symmetrically, there are 15 Pauli strings and each one is contained in 3 Pauli lines. It is possible to partition the Pauli strings across $5$ non-intersecting Pauli lines, e.g., as 
$$\{ \pfont{XI}, \pfont{IX}, \pfont{XX} \}, \{ \pfont{YI}, \pfont{IY}, \pfont{YY} \}, \{ \pfont{ZI}, \pfont{IZ}, \pfont{ZZ} \}, \{ \pfont{XY}, \pfont{YZ}, \pfont{ZX} \}, \{ \pfont{YX}, \pfont{ZY}, \pfont{XZ} \}.$$
\end{fact}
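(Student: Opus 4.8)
\textbf{Proof proposal for Fact~\ref{fact:pauli_lines}.}

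The plan is to verify the three numerical claims directly from the correspondence between Pauli strings, Pauli lines, and the combinatorial structure of Figure~\ref{fig:s6_clifford_iso}. First I would count Pauli lines. A Pauli line is an unordered triple $\{P,Q,R\}$ of nonidentity two-qubit Pauli strings with $PQR = \pm\pfont{II}$ and all three pairwise commuting; equivalently, since $R$ is forced once $P,Q$ are chosen (as $R = PQ$ up to phase), a line is determined by a pair $\{P,Q\}$ of distinct commuting nonidentity strings whose product is again distinct from $P$ and $Q$. There are $15$ nonidentity strings in $\Paulis_2/\mathcal Z_2 \cong \F_2^4 \setminus \{0\}$, and commutation is given by the symplectic form; a Pauli line is precisely a two-dimensional isotropic (in fact totally isotropic) subspace of $\F_2^4$ with the zero vector removed. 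Counting totally isotropic $2$-planes in the symplectic space $\F_2^4$: each such plane has $3$ nonzero vectors, each nonzero vector lies in a symplectic-perp hyperplane of size $8$ ($7$ nonzero), and inside that perp the vectors commuting in the required way form exactly $3$ lines through a fixed point — so by double counting $15 \cdot 3 / 3 = 15$ lines, and symmetrically each of the $15$ strings lies in exactly $3$ lines. This matches the claim, and I would also note (for the ``perfect matchings'' assertion) that $\mathrm{S}_6$ acts on $15$ points (the strings/edges of Figure~\ref{fig:s6_clifford_iso}) and on $15$ perfect matchings of $K_6$, and these two actions are interchanged by the outer automorphism of $\mathrm{S}_6$; the three edges of a perfect matching carry labels multiplying to $\pm\pfont{II}$ and pairwise commuting, which is exactly a Pauli line. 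I would spell out the bijection just enough to confirm that the edge labels in a matching really do form a Pauli line, checking one or two matchings by hand against the figure.

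For the last sentence — the explicit partition into $5$ non-intersecting lines — I would simply exhibit the five triples and check pairwise commutation and the product condition for each, and check that all $15$ strings appear exactly once. This is the routine verification: e.g.\ $\pfont{XI}$ and $\pfont{IX}$ commute and $\pfont{XI}\cdot\pfont{IX} = \pfont{XX}$; $\pfont{XY}$ and $\pfont{YZ}$ commute (they disagree in both tensor factors, so the two sign flips cancel) and $\pfont{XY}\cdot\pfont{YZ} = \pfont{ZX}$ up to phase; and so on for the remaining lines. Listing the strings used — $\pfont{XI},\pfont{IX},\pfont{XX},\pfont{YI},\pfont{IY},\pfont{YY},\pfont{ZI},\pfont{IZ},\pfont{ZZ},\pfont{XY},\pfont{YZ},\pfont{ZX},\pfont{YX},\pfont{ZY},\pfont{XZ}$ — confirms all $15$ appear with no repeats, so the five lines are pairwise disjoint. (Such a partition is a ``spread'' of the generalized quadrangle $W(3,2)$, which is known to exist; the explicit list makes it self-contained.)

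The main obstacle, such as it is, is bookkeeping rather than mathematics: making the bijection between the $15$ Pauli strings and the $15$ edges of Figure~\ref{fig:s6_clifford_iso} precise enough that ``Pauli lines $\leftrightarrow$ perfect matchings'' becomes a genuine statement rather than a slogan, while keeping the exposition short. I would handle this by reading the edge labels off the figure and exhibiting, for each perfect matching of the six vertices $\{M_1,\dots,M_6\}$, that its three edge labels pairwise commute and multiply to $\pm\pfont{II}$ — and conversely that the three strings of any Pauli line occur as the labels of three vertex-disjoint edges — deferring the structural explanation (outer automorphism of $\mathrm{S}_6$) to a parenthetical remark. Everything else is finite case-checking that can safely be left to the reader or compressed into a single displayed example.
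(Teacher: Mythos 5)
The paper never actually proves Fact~\ref{fact:pauli_lines}: it is asserted as a fact, to be checked by finite inspection against \Cref{fig:s6_clifford_iso}, so your proposal supplies an argument where the paper offers none, and the argument you give is correct. Your route is the natural structural one: identify $\Paulis_2/\mathcal{Z}_2\setminus\{\pfont{II}\}$ with the nonzero vectors of the symplectic space $\F_2^4$, observe that a Pauli line is exactly a totally isotropic $2$-plane with $0$ removed (the third string is forced to be the product of the other two), and count: through each nonzero $v$ there are $|v^{\perp}/\langle v\rangle|-1=3$ such planes, so $15\cdot 3/3=15$ lines, and symmetrically each string lies in $3$ lines. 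This gives both counts at once, which the paper leaves implicit, and the explicit verification of the five-line partition is exactly the routine check the paper intends. The one place your write-up is lighter than it should be is the lines-are-matchings claim: ``check one or two matchings by hand'' does not establish the bijection. You can close this without brute force: the labels on edges meeting a vertex $M_i$ all lie in the pairwise anticommuting set $M_i$, so strings on edges sharing a vertex anticommute; since every nonidentity string anticommutes with exactly $8$ others (in $\F_2^4$, exactly $8$ nonzero vectors pair to $1$ with a fixed nonzero vector) and every edge of $K_6$ meets exactly $8$ other edges, commutation is precisely edge-disjointness. Moreover any three distinct pairwise commuting nonidentity two-qubit strings automatically multiply to $\pm\pfont{II}$, because a totally isotropic $2$-plane in $\F_2^4$ equals its own perp, so the third element of a commuting triple must be the product of the first two. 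Hence Pauli lines are exactly triples of pairwise disjoint edges, i.e.\ perfect matchings, and the $15$-versus-$15$ count makes the correspondence a bijection; the outer-automorphism remark is true but not needed.
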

A particularly nice structure of Pauli lines and Pauli strings is the \emph{magic square game}, given below. 

\begin{definition}
The \emph{magic square game}, independently discovered by Mermin \cite{mermin_magic_square} and Peres \cite{peres_magic_square} defines a $3 \times 3$ grid (see \Cref{fig:magic_square}) of Pauli measurements on two qubits. Each row and column is a Pauli line.
\begin{figure}
\begin{center}
\begin{tabular}{ccc}
$\pfont{XX}$ & $\pfont{YY}$ & $\pfont{ZZ}$ \\
$\pfont{YZ}$ & $\pfont{ZX}$ & $\pfont{XY}$ \\
$\pfont{ZY}$ & $\pfont{XZ}$ & $\pfont{YX}$
\vspace{-10pt}
\end{tabular}
\end{center}
\caption{The magic square game}
\label{fig:magic_square}
\end{figure}
Moreover, the product of each column is $+\pfont{II}$ and the product of each row is $-\pfont{II}$, so the measurement outcomes multiply to $1$ for a column or $-1$ for a row.
\end{definition}

The reader may be familiar with a different grid under the name ``magic square''. There are many choices for the nine Pauli measurements which work (e.g., simply apply any Clifford operation to all Pauli strings, and the new ones are guaranteed to satisfy the same commutation relations) and most references, including Mermin \cite{mermin_magic_square} and Peres \cite{peres_magic_square}, opt for a different magic square. The reason we choose this one is that there is a particularly nice form for row or column measurements. 
\begin{lemma}
\label{lem:magic_single_only}
The first row of the magic square above (\Cref{fig:magic_square}) corresponds to measurement of the qubits in the Bell basis. Measuring any other row or column is done by applying a single qubit Clifford gate to one of the qubits and then measuring in the Bell basis.  
\end{lemma}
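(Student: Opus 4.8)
The plan is to verify the claim directly by exhibiting, for each row and column of the magic square, the single-qubit Clifford operation that carries that Pauli line to the first row. Recall that the first row is $\{\pfont{XX}, \pfont{YY}, \pfont{ZZ}\}$, and measuring these three commuting observables simultaneously is exactly a Bell-basis measurement: the Bell states are the common eigenvectors of $\pfont{XX}$, $\pfont{ZZ}$ (hence of $\pfont{YY} = -\pfont{XX}\cdot\pfont{ZZ}$), so projecting onto the Bell basis and reading off the four sign patterns is the same as measuring the Pauli line $\{\pfont{XX},\pfont{YY},\pfont{ZZ}\}$. This establishes the first sentence.

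For the remaining rows and columns, I would observe that applying a single-qubit Clifford $g$ to (say) the second qubit sends $\pfont{PQ} \mapsto \pfont{P}(g\bullet \pfont{Q})$, and since $g\bullet$ permutes $\{\pfont X,\pfont Y,\pfont Z\}$ arbitrarily (the single-qubit Clifford group modulo Paulis is $\mathrm S_3$ acting on the three Pauli axes), we can realize any line of the form $\{\pfont{X}a, \pfont{Y}b, \pfont{Z}c\}$ where $(a,b,c)$ is a permutation of $(a',b',c')$ for any target. Concretely: column 1 is $\{\pfont{XX}, \pfont{YZ}, \pfont{ZY}\}$, obtained from the first row by the single-qubit gate on qubit 2 that swaps $\pfont Y \leftrightarrow \pfont Z$ and fixes $\pfont X$ (up to sign), namely a Hadamard-like Clifford; column 2 is $\{\pfont{YY}, \pfont{ZX}, \pfont{XZ}\}$ and column 3 is $\{\pfont{ZZ}, \pfont{XY}, \pfont{YX}\}$, each obtained from the first row by a suitable single-qubit permutation of axes on qubit 2 (cyclic shifts $\pfont X\to\pfont Y\to\pfont Z\to\pfont X$). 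For the second and third rows, $\{\pfont{YZ},\pfont{ZX},\pfont{XY}\}$ and $\{\pfont{ZY},\pfont{XZ},\pfont{YX}\}$, one applies a single-qubit Clifford to qubit 1 (a cyclic shift of axes) that maps the first row's first-qubit labels $\pfont X,\pfont Y,\pfont Z$ to $\pfont Y,\pfont Z,\pfont X$ (resp.\ $\pfont Z,\pfont X,\pfont Y$); after this the second-qubit labels already match, so no second gate is needed. In each case one checks the product condition ($+\pfont{II}$ for columns, $-\pfont{II}$ for rows) is preserved since Clifford conjugation commutes with taking products of Paulis, so the transformed triple is genuinely a Pauli line and the measurement is well-defined.

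The only subtlety — and the step I expect to require the most care — is tracking signs: the single-qubit Clifford that "swaps $\pfont Y$ and $\pfont Z$" may introduce a minus sign on one of them, so one must confirm that the sign bookkeeping is consistent with reading off the measurement as "apply $g$ to one qubit, then Bell-measure, then relabel outcomes." But since we work modulo $\mathcal Z_m$ for the Pauli strings and the measurement outcome of a line is only defined up to the global product constraint, these signs are harmless: flipping the sign of a Pauli string $P$ merely negates the reported $\pm 1$ outcome for $P$, which is a fixed relabeling the reduction can absorb. I would present the argument as a short case table (one row per line of the magic square, listing the qubit acted on and the axis permutation), and remark that the explicit form of these measurements is what makes the contextuality argument in the next section concrete.
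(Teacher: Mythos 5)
The paper states this lemma without proof, treating it as a direct verification, and your proposal supplies essentially that verification; the structure (first row $=\{\pfont{XX},\pfont{YY},\pfont{ZZ}\}$ is the Bell-basis measurement, every other line is carried to the first row by a single-qubit Clifford acting as an element of $\mathrm{S}_3$ on the Pauli axes of one qubit, signs only relabel the reported $\pm 1$ outcomes) is sound and is the intended argument. One detail in your case list is off, though: the columns are \emph{not} obtained by cyclic shifts of the second-qubit axes. Matching first-qubit letters forces a transposition on qubit 2 in each case --- column 1 $\{\pfont{XX},\pfont{YZ},\pfont{ZY}\}$ needs $\pfont{Y}\leftrightarrow\pfont{Z}$ (the $(\pfont{Y}+\pfont{Z})/\sqrt{2}$ gate, as you say), column 2 $\{\pfont{YY},\pfont{ZX},\pfont{XZ}\}$ needs $\pfont{X}\leftrightarrow\pfont{Z}$ (Hadamard), and column 3 $\{\pfont{ZZ},\pfont{XY},\pfont{YX}\}$ needs $\pfont{X}\leftrightarrow\pfont{Y}$ (the phase gate $\RZ$) --- while only the second and third rows use the cyclic shifts on qubit 1 that you describe. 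This does not affect the correctness of your general argument, since single-qubit Cliffords modulo Paulis realize all of $\mathrm{S}_3$, but the explicit gate table should be corrected before being written down; your handling of the sign bookkeeping (conjugation preserves the row/column product constraints exactly, and any individual sign flips are known constants absorbed into the outcome relabeling) is exactly the care the statement requires.
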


We have already alluded to how we will use the magic square, but let us state it formally:
\begin{theorem}
\label{thm:nonstab_from_magic}
There is a procedure to make six measurements (on six copies) of an unknown two-qubit quantum state $\ket{\psi}$ and learn, with certainty, some Pauli string which does not stabilize $\ket{\psi}$. 
\end{theorem}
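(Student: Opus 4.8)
The plan is to invoke the Mermin--Peres magic square (\Cref{fig:magic_square}) and the quantum contextuality it encodes. The procedure is: make one measurement for each of the three rows and one for each of the three columns of the square --- six measurements in total, each consuming a fresh copy of $\ket{\psi}$. By \Cref{lem:magic_single_only} each of these is an honest two-qubit Clifford measurement (a single-qubit Clifford, or none, followed by a Bell-basis measurement), so it returns $\pm 1$ outcomes for all three Pauli strings lying in that row or column.

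First I would isolate two elementary facts about such line measurements. (i) If $\{P,Q,R\}$ is a Pauli line with $PQR = \varepsilon\,\pfont{II}$, then the three reported outcomes multiply to $\varepsilon$: we physically read off eigenvalues of the two Paulis that the Clifford sends to $\pfont{ZI}$ and $\pfont{IZ}$, and the third outcome is \emph{defined} via the operator identity $R = \varepsilon\, PQ$, hence equals $\varepsilon$ times the product of the other two. So a row measurement produces outcomes with product $-1$ and a column measurement produces outcomes with product $+1$. (ii) If $P \in \Paulis_2/\mathcal{Z}_2$ stabilizes $\ket{\psi}$, its measurement outcome is deterministic and does not depend on which commuting measurements accompany it --- in particular it is the same whether $P$ is measured inside its row or inside its column. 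Equivalently, if the row-outcome and the column-outcome for $P$ disagree, then $P$ is a non-stabilizer. This is the only place the adversarial (relational) nature of the simulator matters, and it is harmless: all we use is that stabilizers behave deterministically regardless of context.

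The core is then the standard magic-square parity contradiction. Every one of the nine Pauli strings in the grid lies in exactly one row and one column, so after the six measurements each string has a row-outcome and a column-outcome. Suppose for contradiction that for every string $P$ in the grid the two agree; write $v(P)$ for the common value. By fact (i), $\prod_{P \in \text{row } i} v(P) = -1$ for each $i$ and $\prod_{P \in \text{column } j} v(P) = +1$ for each $j$. Multiplying the three row identities gives $\prod_{\text{all nine}} v(P) = -1$; multiplying the three column identities gives $\prod_{\text{all nine}} v(P) = +1$ --- a contradiction. Hence some $P$ in the grid has a row-outcome differing from its column-outcome, and by fact (ii) that $P$ does not stabilize $\ket{\psi}$; outputting it completes the procedure, which used six measurements on six copies as claimed.

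I expect the only genuine subtlety to be fact (ii): making precise that even a simulator free to answer with \emph{any} nonzero-probability outcome is still pinned to a deterministic value on stabilizers, so that a context-dependent answer \emph{certifies} a non-stabilizer. Everything else --- the operator identities behind fact (i) and the parity count --- is routine. It is worth remarking that contextuality is doing essential work here: were the outcomes genuinely random rather than possibly adversarial, ordinary Clifford-state tomography would already pin down $\ket{\psi}$; because they may be adversarial, we can extract only the weaker (but, as the later argument shows, sufficient) fact that some particular Pauli fails to stabilize the state.
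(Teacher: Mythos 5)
Your proposal is correct and follows essentially the same route as the paper: measure the three rows and three columns of the magic square on six copies, note each Pauli string is measured once in a row and once in a column, and use the $(-1)^3$ vs.\ $(+1)^3$ parity contradiction to conclude some string has inconsistent outcomes, which therefore cannot be a stabilizer. Your explicit facts (i) and (ii) are just a slightly more detailed write-up of the same argument the paper gives.
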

\begin{proof}
As one might guess, we measure each copy of the state with a different row or column of the magic square. That is, apply appropriate single qubit gates for the row/column (as in \Cref{lem:magic_single_only}), then measure in the Bell basis. 

Each Pauli string is measured exactly twice: once in a row and once in a column. We can construct two tables of outcomes, one from the row measurements and one from the column measurements. In the column table, the product of all elements is $+1$ (since each column multiplies to $1$) and in the row table, the product of all elements is $-1$. We conclude that the tables must be different, so there exists some Pauli string in the magic square for which we have contradictory measurement outcomes. This Pauli string does not stabilize $\ket{\psi}$, otherwise it would have to measure consistently. 
\end{proof}

It is clear that the above procedure (in \Cref{thm:nonstab_from_magic}) can only output Pauli strings that appear in the magic square. Even if we learn whether each Pauli string of the magic square stabilizes $\ket{\psi}$, it may not be enough to learn $\ket{\psi}$ itself. As discussed, there are several magic squares with the properties we need (e.g., conjugate each Pauli by any $U \in \Clifford_2$), so we can ask about a random magic square. However, we run into the same problem as before: the prover's answers may be adversarial, and it is possible to answer in such a way that we cannot determine $\ket{\psi}$.  Consider the following fact:
\begin{fact}
\label{fact:magic_square_intersect}
Every magic square grid intersects every Pauli line, i.e., there exists some Pauli string in both the line and the grid.
\end{fact}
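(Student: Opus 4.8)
The plan is to reduce to the canonical magic square and then invoke a one-line weight count. The key observation is this: \emph{every Pauli line contains a weight-two Pauli string}, meaning one of the form $A \otimes B$ with both $A, B \in \{\pfont{X}, \pfont{Y}, \pfont{Z}\}$. To see this, write a Pauli line as $\{P, Q, R\}$ with $PQR = \pm \pfont{II}$, so that $R = \pm PQ$; it suffices to find a weight-two string among $P$, $Q$, $PQ$ (note none of the three can be $\pfont{II}$, else the set has fewer than three elements). If $P$ or $Q$ already has weight two we are done, so suppose both are single-qubit Paulis. They cannot act on the same qubit: two \emph{distinct} single-qubit Paulis anticommute, and two \emph{equal} ones would force $R = \pm \pfont{II}$. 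Hence $P = A \otimes \pfont{I}$ and $Q = \pfont{I} \otimes B$ for some $A, B \in \{\pfont{X}, \pfont{Y}, \pfont{Z}\}$, and then $PQ = A \otimes B$ has weight two.

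Next, reading off \Cref{fig:magic_square}, the canonical magic square grid $M_0$ consists of exactly the nine weight-two Pauli strings (those with a non-identity factor on each qubit). Combined with the observation above, $M_0$ meets every Pauli line. For an arbitrary magic square grid $M$: as noted in the discussion preceding the fact, every such grid is a Clifford image $M = U \bullet M_0 := \{ U P U^{\dag} : P \in M_0 \}$ for some $U \in \Clifford_2$. Conjugation by $U$ preserves both commutation relations and products, so it permutes the $15$ Pauli lines among themselves bijectively; thus for any Pauli line $L$, the set $U^{-1} \bullet L$ is again a Pauli line, which meets $M_0$, and applying $U \bullet$ shows that $L$ meets $M$. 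This proves the fact.

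The only point requiring care is the claim that every magic square grid is a Clifford image of $M_0$. If one insists on a purely combinatorial definition of ``magic square grid'' (a $3 \times 3$ array of Pauli strings whose rows and columns are Pauli lines, with rows multiplying to $-\pfont{II}$ and columns to $+\pfont{II}$), then one argues directly: the nine entries are distinct and omit six Pauli strings, the six omitted strings must form the two ``single-qubit triangles'' of \Cref{fig:s6_clifford_iso} (equivalently, the grid is the complete bipartite graph $K_{3,3}$ on the corresponding bipartition of the six vertices of that figure), and then transitivity of $\mathrm{S}_6 \cong \Clifford_2 / \Paulis_2$ on such bipartitions finishes the job. I expect this classification step to be the main (though still routine) obstacle; once the grid is identified with a $K_{3,3}$, the fact also has a slick one-line proof in the graph picture, since a Pauli line is a perfect matching of $K_6$ and three vertices admit no internal perfect matching, so every matching must use a crossing edge.
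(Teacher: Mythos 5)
The paper itself never proves this statement: it is recorded as a bare Fact (like \Cref{fact:pauli_lines}), left as a routine check, so there is no official proof to compare yours against; judged on its own terms, your argument is correct. The core observation is sound: in a commuting triple $\{P,Q,R\}$ with $PQR=\pm\pfont{II}$, two weight-one members would have to act on different qubits, forcing the third to have weight two, so every Pauli line contains a weight-two string; since the grid of \Cref{fig:magic_square} is exactly the nine weight-two strings, and Clifford conjugation permutes the fifteen lines, every grid of the form $U\bullet M_0$ meets every line. That already covers everything the paper actually uses, because the ``other'' magic squares it measures are by construction Clifford conjugates of the canonical one. The step you flag as the remaining obstacle---that \emph{every} combinatorially defined magic square is such a conjugate---is true, and your sketch is the right one, but the two assertions in it do need short arguments: distinctness of the nine entries follows from the sign condition (two lines through two common strings coincide, and a repeated entry forces some row to equal some column as a set, contradicting row product $-\pfont{II}$ versus column product $+\pfont{II}$, the product being order-independent for commuting strings); and ``the complement is the two single-qubit triangles'' follows because the nine edges form a cubic graph on six vertices admitting two distinct one-factorizations (rows and columns), which rules out the triangular prism and forces $K_{3,3}$. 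With that, transitivity of $\mathrm{S}_6\cong\Clifford_2/\Paulis_2$ on $3{+}3$ splits of the vertices finishes the classification, and your closing matching argument---a perfect matching of $K_6$ cannot stay inside a triangle, so it must use a crossing edge of the $K_{3,3}$---is indeed the cleanest way to conclude, making the weight-two count redundant at that point.
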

No matter what magic square we measure, the classical algorithm can arrange for $\pfont{YI}$, $\pfont{IY}$, or $\pfont{YY}$ to be the only Pauli strings with inconsistent answers, and since all of these are non-stabilizers of both $\ket{00}$ and $\ket{++}$ (among other stabilizer states), we cannot deduce $\ket{\psi}$. 

\subsection{Randomization and Self-Reduction}

We have argued that by repeated measurement in round two of \Cref{prob:nc1}, we can force the prover to reveal a non-stabilizer of the two-qubit state $\ket{\psi}$ being measured. Of course, this is not enough to determine the state.\footnote{Even if we were to make all possible Pauli measurements, an adversarial prover need only reveal three non-stabilizers of the state.  By \Cref{fact:magic_square_intersect}, the prover can return non-stabilizers $\pfont{YI}$, $\pfont{IY}$, or $\pfont{YY}$.  However, this information still does not distinguish $\ket{00}$ and $\ket{++}$.} The only way to make progress is to start over in round one with a different instance of the task, constructing a new state $\ket{\psi'}$ which is \emph{related} to $\ket{\psi}$, so that we can carry over what we learn about $\ket{\psi'}$ to $\ket{\psi}$. 

Let us first consider a na\"{i}ve approach on input $U_1, \ldots, U_n$.  By the standard approach, the prover returns a non-stabilizer $P$ of $U_1 \cdots U_n \ket{++}$.  To generate a different non-stabilizer $Q$, we could give the prover the input $U_0, U_1, \ldots, U_n$ for random $U_0 \in \Clifford_2 / \Paulis_2$.  That is, $U_0^{-1} \bullet Q$ is a non-stabilizer $U_1 \cdots U_n \ket{++}$.  If $U_0^{-1} \bullet Q \neq P$, we have learned new information.  Unfortunately, an adversarial prover may anticipate this approach and return $Q = U_0 \bullet P$, so we learn no new information. To force the prover to reveal something new, we need a more thorough randomization procedure, which begins with an idea of Kilian \cite{Kilian}.

\begin{theorem}[Kilian Randomization]
Let $G$ be a group. Given $g_1, \ldots, g_n \in G$, there is a procedure to sample uniformly random $g_1', \ldots, g_n' \in G$  subject to $g_1 \cdots g_n = g_1' \cdots g_n'$.
\end{theorem}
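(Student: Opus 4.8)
The plan is to use a standard "randomize each factor, then patch with the running product" construction. Suppose the group elements are $g_1, \ldots, g_n \in G$. I would draw $h_1, \ldots, h_{n-1} \in G$ uniformly and independently at random, and set $h_0 = h_n = e$ (the identity). Then define the new sequence by conjugating/translating each factor with these random group elements, namely
\[
g_i' := h_{i-1}^{-1} \, g_i \, h_i \qquad \text{for } i = 1, \ldots, n.
\]
The telescoping product is $g_1' \cdots g_n' = h_0^{-1} g_1 h_1 h_1^{-1} g_2 h_2 \cdots h_{n-1}^{-1} g_n h_n = h_0^{-1} (g_1 \cdots g_n) h_n = g_1 \cdots g_n$, since $h_0 = h_n = e$. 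So the constraint $g_1' \cdots g_n' = g_1 \cdots g_n$ is satisfied identically, for \emph{any} choice of the $h_i$.

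**The second thing to check** is that the resulting tuple $(g_1', \ldots, g_n')$ is uniformly distributed over the set of all tuples satisfying the product constraint. I would argue this by a counting/bijection argument: the set $S = \{(x_1, \ldots, x_n) \in G^n : x_1 \cdots x_n = g_1 \cdots g_n\}$ has size exactly $|G|^{n-1}$, since $x_1, \ldots, x_{n-1}$ may be chosen freely and $x_n$ is then determined. The randomization procedure is a map $\Phi : G^{n-1} \to S$, $(h_1, \ldots, h_{n-1}) \mapsto (g_1', \ldots, g_n')$, from a set of size $|G|^{n-1}$ to a set of size $|G|^{n-1}$. It suffices to show $\Phi$ is a bijection: then pushing the uniform distribution on $G^{n-1}$ through $\Phi$ gives the uniform distribution on $S$. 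Injectivity (hence bijectivity, by equal cardinality) is immediate: from the output one recovers $h_1 = g_1^{-1} g_1'$ (using $h_0 = e$), then $h_2 = g_2^{-1} h_1 g_2'$, and inductively $h_i = g_i^{-1} h_{i-1} g_i'$, so the $h_i$ are uniquely determined by $(g_1', \ldots, g_n')$. Thus $\Phi$ is a bijection and the output is uniform over $S$.

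**Finally**, one should note the procedure is efficient in the relevant model: it requires only sampling $n-1$ uniform elements of $G$ and performing $O(n)$ group multiplications, each of which (for the concrete groups in this paper, $\Clifford_2 / \Paulis_2 \cong \mathrm S_6$ and the $\CNOT$ groups) is a constant-size or $\AC^0$-computable operation, so the whole randomization lands in the low-depth classical classes we care about. **The only real subtlety** — and the "main obstacle" is generous phrasing here, as the argument is genuinely routine — is making sure the distribution claim is stated as uniformity over the full constraint set $S$ rather than some smaller orbit; this is exactly what the bijection argument above secures, and it is the property we actually need downstream, since we want the prover's revealed non-stabilizer Pauli to be forced into a genuinely random coset rather than one the adversary can predict from the original input.
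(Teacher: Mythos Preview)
Your construction and proof are correct and essentially identical to the paper's: the paper gives the same procedure (sample $h_1,\ldots,h_{n-1}$ uniformly and set $g_1'=g_1h_1$, $g_i'=h_{i-1}^{-1}g_ih_i$, $g_n'=h_{n-1}^{-1}g_n$), verifies the product constraint by telescoping, and argues uniformity by noting each $g_i'$ is uniform conditioned on the previous ones with $g_n'$ then determined---your bijection/counting argument is just a repackaging of the same fact.
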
 

We will need to extend Kilian's idea slightly below.
\begin{corollary}
\label{cor:kilian}
Let $G$ be a group and let $H \trianglelefteq G$ be a normal subgroup. Given $g_1, \ldots, g_n \in G$, there is a procedure to sample uniformly random $g_1', \ldots, g_n' \in G$ subject to $g_1 \cdots g_n = g_1' \cdots g_n'$ and $g_iH = g_i'H$ for all $i$ as follows:

\begin{algorithmic}
	\Function{Kilian$_{H}$}{$g_1, \ldots, g_n$}
	\State{$h_1, \ldots, h_{n-1} \sim \textrm{Unif}(H)$} 
	\State{$g_1' \gets g_1 h_1$}
	\State{\text{$g_i' \gets h_{i-1}^{-1} g_i h_i$ for $i = 2, \ldots, n-1$}}
	\State{$g_n' \gets h_{n-1}^{-1} g_n$}
	\State{\Return $(g_1', \ldots, g_n')$}
	\EndFunction
\end{algorithmic}

Note that Kilian randomization is the special case $H = G$. 
\end{corollary}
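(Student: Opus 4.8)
The plan is to check the three required properties of the output $(g_1', \dots, g_n')$ in turn: it has the same product as $(g_1, \dots, g_n)$, it lies in the same cosets of $H$, and it is distributed uniformly subject to those two constraints. The product identity is pure telescoping: substituting the definitions,
$g_1' g_2' \cdots g_{n-1}' g_n' = (g_1 h_1)(h_1^{-1} g_2 h_2)\cdots(h_{n-2}^{-1} g_{n-1} h_{n-1})(h_{n-1}^{-1} g_n)$,
and each $h_i$ cancels against its inverse, leaving $g_1 g_2 \cdots g_n$. The coset property uses normality of $H$: for $i=1$ it is immediate since $g_1' = g_1 h_1$ with $h_1 \in H$; for $2 \le i \le n-1$ we have $g_i' = h_{i-1}^{-1} g_i h_i$, so $g_i' H = h_{i-1}^{-1} g_i H = g_i(g_i^{-1} h_{i-1}^{-1} g_i) H = g_i H$, where the last step uses $g_i^{-1} h_{i-1}^{-1} g_i \in H$; the case $i=n$ is the same with $h_n$ omitted.

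For the uniformity claim I would first describe the support of the target distribution. Let $P = g_1 \cdots g_n$ and let $S$ be the set of tuples $(x_1, \dots, x_n) \in G^n$ with $x_i \in g_i H$ for all $i$ and $x_1 \cdots x_n = P$. I claim $|S| = |H|^{n-1}$. Indeed, one may choose $x_1 \in g_1 H, \dots, x_{n-1} \in g_{n-1} H$ freely, in $|H|^{n-1}$ ways, and then $x_n = (x_1 \cdots x_{n-1})^{-1} P$ is forced; using normality to collect the $H$-factors, $x_1 \cdots x_{n-1} \in g_1 \cdots g_{n-1} H$, hence $x_n \in H (g_1 \cdots g_{n-1})^{-1} P = H g_n = g_n H$, so the forced $x_n$ automatically lands in the correct coset. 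Thus $S$ is parametrized bijectively by $(x_1, \dots, x_{n-1})$ ranging over $g_1 H \times \cdots \times g_{n-1} H$, giving $|S| = |H|^{n-1}$.

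It remains to show the map $\Phi \colon (h_1, \dots, h_{n-1}) \mapsto (g_1', \dots, g_n')$ is a bijection from $H^{n-1}$ onto $S$; since the input is drawn uniformly from $H^{n-1}$, its image is then uniform on $S$, which is exactly the desired conclusion. By the first two parts $\Phi$ maps into $S$, and $|H^{n-1}| = |H|^{n-1} = |S|$, so it suffices to show $\Phi$ is injective. But $\Phi$ has an explicit left inverse: given $(g_1', \dots, g_n')$, set $h_1 = g_1^{-1} g_1'$ and inductively $h_i = g_i^{-1} h_{i-1} g_i'$ for $i = 2, \dots, n-1$; one checks by induction, using normality once more, that each recovered $h_i$ lies in $H$, and by construction this recovery undoes $\Phi$. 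Hence $\Phi$ is injective, therefore a bijection onto $S$, completing the proof. The only step requiring genuine thought is the cardinality count $|S| = |H|^{n-1}$, which is precisely where normality of $H$ is used in an essential way; everything else is bookkeeping.
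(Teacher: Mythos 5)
Your proof is correct. The first two claims (telescoping of the product and preservation of cosets via normality) are handled exactly as in the paper. Where you diverge is the uniformity step: the paper argues sequentially — $g_1'$ is uniform on $g_1H$ since $h_1$ is uniform on $H$; each $g_i'$ for $i \le n-1$ is uniform on $g_iH$ conditioned on the earlier outputs, because $h_i$ is fresh and independent; and $g_n'$ is the unique element completing the constraints — whereas you instead identify the full constraint set $S$, count $|S| = |H|^{n-1}$, and show the map $\Phi \colon H^{n-1} \to S$ is injective (via an explicit recovery $h_1 = g_1^{-1}g_1'$, $h_i = g_i^{-1}h_{i-1}g_i'$), hence a bijection, so the uniform input pushes forward to the uniform distribution on $S$. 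Both arguments are sound; yours makes the support of the target distribution completely explicit, which is a nice sanity check, but it leans on finiteness of $H$ for the cardinality comparison (harmless here, since uniform sampling from $H$ presupposes it and the paper's applications are to finite groups), while the paper's conditional argument avoids any counting. Two small remarks: your recovery map is in fact a two-sided inverse defined on all of $S$ (the induction you sketch, together with the product constraint forcing $g_n' = h_{n-1}^{-1}g_n$, gives surjectivity directly), so the cardinality count is not strictly needed; and for injectivity alone you do not need the recovered $h_i$ to lie in $H$, only that the recovery undoes $\Phi$ on its image, so that clause can be dropped or folded into the surjectivity argument.
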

\begin{proof}
It is clear that cancellation gives $g_1 \cdots g_n = g_1' \cdots g_n'$. It is also easy to see that $g_i'H = g_i H$ using the fact that 
$$
g_i' = h_{i-1}^{-1} g_{i} h_i = g_i (g_i^{-1} h_{i-1}^{-1} g_i) h_i
$$
and $g_i^{-1} h_{i-1}^{-1} g_i \in g_i^{-1} H g_i = H$. 

It remains to show that $g_1', \ldots, g_n'$ are uniformly random subject to the constraints. By definition, $h_1$ is a uniformly random element of $H$, so $g_1'$ is a uniformly random element of the coset $g_1 H$. For $i = 2, \ldots, n-1$, we see that $g_i'$ is uniformly random conditioned on $g_1', \ldots, g_{i-1}'$ since $h_i$ is uniformly random and independent of $g_1', \ldots, g_{i-1}'$. Finally, given $g_1', \ldots, g_{n-1}'$, there is a unique choice of $g_n'$ satisfying the constraints, namely $g_n' := (g_{1}' \cdots g_{n-1}')^{-1} g_1 \cdots g_n$. 
\end{proof}

We can use this technique to randomly self-reduce algorithms that take a list of group elements. After applying this randomization step, the na\"ive idea from earlier (i.e., multiplying by a random element and conjugating the result by the inverse) actually works. We will state the theorem in the abstract (with a finite group $(G, \cdot)$ acting on a set $S$ as $\bullet \colon G \times S \to S$ where $(gh) \bullet x = g \bullet (h \bullet x)$), but it is a complicated theorem and it may help to keep an example in mind. For this section, the relevant setting of parameters\footnote{In \Cref{sec:parityL} we use \Cref{thm:symmetrize} in full generality to prove the $\parityL$ result.} is $G = H = F = \Clifford_2 / \Paulis_2$, acting on the set of Pauli strings $S = \Paulis_2 / \mathcal{Z}_2$ by conjugation, $U \bullet P := U P U^{\dag}$. 

\begin{theorem}
	\label{thm:symmetrize}
	Let $A \colon G^{*} \to S$ be a randomized algorithm which takes lists of group elements as input and outputs an element of some set $S$. Suppose $G$ acts on $S$ by $\bullet \colon G \times S \to S$. Let $G$ have subgroups $F, H$ such that $F \leq H \normalin G$. Consider the following randomized algorithm:
	\begin{algorithmic}
		\Function{$B$}{$g_1, \cdots, g_n$}
		\State{$f \sim \textrm{Unif}(F)$}
		\State{$g_1', \ldots, g_n' \gets \textsc{Kilian}_{H}(f g_1, g_2, \ldots, g_n)$}
		\State{\Return $f^{-1} \bullet A(g_1', \ldots, g_n')$.}
		\EndFunction
	\end{algorithmic}
	
	Then the output distribution of $B(g_1, \ldots, g_n)$ is $(g_1 \cdots g_n) \bullet \mathcal{D}(F g_1 \cdots g_n, g_1 H, \ldots, g_n H)$ where $\mathcal{D}(F g_1 \cdots g_n, g_1 H, \ldots, g_n H)$ is the average of $(g_1' \cdots g_n')^{-1} \bullet A(g_1', \ldots, g_n')$ over all $g_1', \ldots, g_n'$ such that $g_1' \cdots g_n' \in Fg_1 \cdots g_n$ and $g_i H = g_i' H$ for all $i$.
\end{theorem}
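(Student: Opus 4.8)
The plan is to track the distribution of the intermediate tuple $(g_1',\ldots,g_n')$ through the two independent sources of randomness inside $B$ (the choice of $f\sim\mathrm{Unif}(F)$ and the randomness internal to the \textsc{Kilian}$_H$ subroutine of \Cref{cor:kilian}), show that this tuple comes out \emph{exactly uniform} over the set of tuples appearing in the definition of $\mathcal D$, and then rewrite the prefactor $f^{-1}$ in terms of the $g_i'$ so that what remains to be averaged is literally $\mathcal D$. Throughout, $g\bullet\mu$ for a distribution $\mu$ on $S$ denotes the pushforward of $\mu$ under $x\mapsto g\bullet x$, and $\mathcal D(\cdots)$ denotes the law of $(g_1'\cdots g_n')^{-1}\bullet A(g_1',\ldots,g_n')$ when $(g_1',\ldots,g_n')$ is drawn uniformly from the indicated set and $A$ is then run with fresh internal randomness.

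First I would fix $f\in F$ and apply \Cref{cor:kilian} to the input $(fg_1,g_2,\ldots,g_n)$: conditioned on $f$, the tuple $(g_1',\ldots,g_n')$ is uniform subject to $g_1'\cdots g_n'=fg_1\cdots g_n$, $g_1'H=fg_1H$, and $g_i'H=g_iH$ for $i\ge 2$. The role of the hypothesis $F\le H\normalin G$ is that $f\in H$ forces $fg_1H=g_1(g_1^{-1}fg_1)H=g_1H$, since $g_1^{-1}fg_1\in H$ by normality; so the coset constraints are just $g_i'H=g_iH$ for all $i$. A short count then shows that for \emph{any} $p\in(g_1\cdots g_n)H$, the number of tuples satisfying $g_1'\cdots g_n'=p$ together with all the coset constraints is exactly $|H|^{n-1}$, independent of $p$: one picks $g_1'\in g_1H,\ldots,g_{n-1}'\in g_{n-1}H$ freely, $g_n'=(g_1'\cdots g_{n-1}')^{-1}p$ is then forced, and since each $g_i'\in g_iH$ and $H$ is normal one checks $(g_1'\cdots g_{n-1}')^{-1}\in(g_1\cdots g_{n-1})^{-1}H$, whence $g_n'\in(g_1\cdots g_{n-1})^{-1}Hp=(g_1\cdots g_{n-1})^{-1}(g_1\cdots g_n)H=g_nH$ automatically (using $p\in(g_1\cdots g_n)H$ and $Hg=gH$). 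So the last coset constraint is free.

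Next I would remove the conditioning on $f$. As $f$ ranges over $F$, the product $fg_1\cdots g_n$ ranges bijectively over $Fg_1\cdots g_n$, and $Fg_1\cdots g_n\subseteq(g_1\cdots g_n)H$ since $F\le H\normalin G$. Combining this with the fiber count, for any fixed valid tuple $(g_1',\ldots,g_n')$ (one with $g_1'\cdots g_n'\in Fg_1\cdots g_n$ and $g_i'H=g_iH$ for all $i$), the probability that $B$ produces it is
$$
\Pr\!\left[f=(g_1'\cdots g_n')(g_1\cdots g_n)^{-1}\right]\cdot\frac{1}{|H|^{n-1}}=\frac{1}{|F|\,|H|^{n-1}},
$$
the same value for every such tuple; hence the marginal law of $(g_1',\ldots,g_n')$ under $B$ is exactly uniform over the set of tuples averaged by $\mathcal D(Fg_1\cdots g_n,g_1H,\ldots,g_nH)$. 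Finally, since $g_1'\cdots g_n'=fg_1\cdots g_n$ always holds, $f^{-1}=(g_1\cdots g_n)(g_1'\cdots g_n')^{-1}$, so the value $B$ returns is $f^{-1}\bullet A(g_1',\ldots,g_n')=(g_1\cdots g_n)\bullet\bigl((g_1'\cdots g_n')^{-1}\bullet A(g_1',\ldots,g_n')\bigr)$; averaging the bracket over the uniform law of $(g_1',\ldots,g_n')$ and the independent randomness of $A$ gives precisely $(g_1\cdots g_n)\bullet\mathcal D(Fg_1\cdots g_n,g_1H,\ldots,g_nH)$.

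The step I expect to be the main obstacle is the bookkeeping in the third paragraph: checking that $g_n'H=g_nH$ imposes no constraint beyond "$g_1',\ldots,g_{n-1}'$ in their cosets and $g_1'\cdots g_n'=p$" — this is exactly what makes every fiber have the same size $|H|^{n-1}$ — and, relatedly, confirming that the informal phrase "the average over all $g_1',\ldots,g_n'$ such that $\cdots$" in the statement of $\mathcal D$ means the uniform distribution on that set (equivalently: sampling $p\in Fg_1\cdots g_n$ uniformly and then a tuple with product $p$ uniformly, the two descriptions coinciding by the constant-fiber-size fact). Everything else is cancellation in $G$ plus the already-established correctness of \Cref{cor:kilian}.
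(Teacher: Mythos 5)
Your proof is correct and follows essentially the same route as the paper's: condition on $f$, invoke \Cref{cor:kilian}, use normality to see the coset constraints are unchanged, and rewrite $f^{-1}\bullet A(\cdots)$ as $(g_1\cdots g_n)\bullet\bigl((g_1'\cdots g_n')^{-1}\bullet A(\cdots)\bigr)$. The only difference is that your fiber count of $|H|^{n-1}$ explicitly justifies the uniformity of $(g_1',\ldots,g_n')$ over the constrained set, a step the paper asserts without the counting argument.
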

\begin{proof}
	Kilian randomization (in the form of \Cref{cor:kilian}) samples a uniformly random $g_1', \ldots, g_n'$ subject to $fg_1 g_2 \cdots g_n = g_1' \cdots g_n'$ and $f g_1 H = g_1 H = g_1' H$, $g_2 H = g_2' H$, \ldots, $g_n H = g_n' H$. Since $f$ is uniformly random in $F$, $f g_1 \cdots g_n$ is a uniformly random element of $F g_1 \cdots g_n$, so we are actually sampling a uniformly random $g_1', \ldots, g_n'$ such that $g_1' \cdots g_n' \in Fg_1 \cdots g_n$ and $g_i' H = g_i H$ for all $i$.	
	Instead of returning the result of $A(g_1', \ldots, g_n')$ directly, Algorithm $B$ outputs, 
	\begin{align*}
	f^{-1} \bullet A(g_1', \ldots, g_n') &= (f^{-1} g_1' \cdots g_n') \bullet (g_1' \cdots g_n')^{-1} \bullet A(g_1', \ldots, g_n') \\
	&= (g_1 \cdots g_n) \bullet (g_1' \cdots g_n')^{-1} \bullet A(g_1', \ldots, g_n') \\
	&= (g_1 \cdots g_n) \bullet P
	\end{align*}
	where $P$ is a sample from $(g_1' \cdots g_n')^{-1} \bullet A(g_1', \ldots, g_n')$ where $g_1', \ldots, g_n'$ are uniformly random subject to $g_i H = g_i'H$ and $g_1' \cdots g_n' \in Fg_1 \cdots g_n$. That is, $P$ is drawn from distribution $\mathcal{D}(F g_1 \cdots g_n, g_1 H, \ldots, g_n H)$. 
\end{proof}

Let us interpret the result in the concrete setting we need for this section: $G = H = F = \Clifford_2 / \Paulis_2$ and $S = \Paulis_2 / \mathcal{Z}_2$ under conjugation, $U \bullet P := U P U^{\dag}$. We imagine Algorithm $A$ as adversarially outputting $P \in \Paulis_2 / \mathcal{Z}_2$ which does not stabilize $\ket{\psi} := U_1 \cdots U_n \ket{++}$. Recall that $P$ stabilizes $U_1 \cdots U_n \ket{++}$ if and only if $(U_1 \cdots U_n)^{-1} \bullet P$ stabilizes $\ket{++}$ because
\begin{align*}
U_1 \cdots U_n \ket{++} &= P U_1 \cdots U_n \ket{++} \\
\ket{++} &= (U_1 \cdots U_n)^{-1} P U_1 \cdots U_n \ket{++} \\
&= ((U_1 \cdots U_n)^{-1} \bullet P) \ket{++}.
\end{align*} 
This fact gives meaning to the expression $(U_1' \cdots U_n')^{-1} \bullet A(U_1', \cdots, U_n')$: it is the non-stabilizer for $\ket{++}$ we get by rolling back the unitaries $U_1' \cdots U_n'$ on whichever non-stabilizer $A$ returns for $U_1' \cdots U_n' \ket{++}$. It makes sense to average these distributions (as in, e.g.,  $\mathcal{D}(F g_1 \cdots g_n, g_1 H, \cdots, g_n H)$), since they are all non-stabilizers of the same state. Since $G = H = F$, all the parameters of this distribution have only one value, $G$. Hence, \Cref{thm:symmetrize} defines only one distribution, $\mathcal{D}_n := \mathcal{D}(F g_1 \cdots g_n, g_1 H, \cdots, g_n H)$, which is an average over all lists of $n$ two-qubit Clifford operations. Hence, in our example, Algorithm $B$ is equivalent to sampling $P$ from $\mathcal{D}_n$ and conjugating it by $U_1 \cdots U_n$ to make it a non-stabilizer of $U_1 \cdots U_n \ket{++}$. 

All that being said, we have not attained our original goal of sampling a \emph{random} non-stabilizer $P \in \Paulis_2 / \mathcal{Z}_2$. If Algorithm $A$ is adversarial, it may be that Algorithm $B$ always returns the same answer, e.g., $(U_1 \cdots U_n) \bullet \pfont{YY}$, because the distribution $\mathcal{D}_n$ may have support on only one element (e.g., $\pfont{YY}$). Information theoretically, this kind of distribution actually gives us \emph{more} information about $U_1 \cdots U_n$ than a random non-stabilizer, but since it is more difficult to analyze, we will instead apply another layer of random self-reduction.  
\begin{theorem}
\label{thm:nc1_self_reduce}
Let $\ket{\psi}$ be an $m$-qubit Clifford state. Suppose Algorithm $B$ takes a list of Clifford operations $U_1, \ldots, U_n \in \Clifford_m / \Paulis_m$ and outputs a Pauli string from a distribution $(U_1 \cdots U_n) \bullet \mathcal{D}_n$ where $\mathcal{D}_n$ is a distribution over Pauli strings depending only on $n$. 

Define Algorithm $C$ as below. 
\begin{algorithmic}
	\Function{$C$}{$U_1, \ldots, U_n$}
	\State{$V \sim \textrm{Unif}(\{ V \in \Clifford_m : V \ket{\psi} = \ket{\psi} \})$}
	\State{\Return $B(U_1, \ldots, U_{n-1}, U_n V)$}
	\EndFunction
\end{algorithmic}
Algorithm $C$ outputs an element of $\Paulis_m / \mathcal{Z}_m$ at random from $(U_1 \cdots U_n) \bullet \mathcal{D}_n'$ where $\mathcal{D}_n'$ is a distribution such that 
$$
\Pr[\text{$P$ stabilizes $\ket{\psi}$} | P \sim \mathcal{D}_n'] = \Pr[\text{$P$ stabilizes $\ket{\psi}$} | P \sim \mathcal{D}_n],
$$
and $\mathcal{D}_n'$ is uniform over the stabilizers of $\ket{\psi}$ and uniform over non-stabilizers of $\ket{\psi}$. 
\end{theorem}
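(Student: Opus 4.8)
The plan is to peel off the inner random self‑reduction, exhibit $\mathcal{D}_n'$ explicitly, verify the probability‑preservation identity by a one‑line conjugation calculation, and reduce the uniformity claim to a transitivity statement about the Clifford stabilizer group of $\ket{\psi}$. Write $K := \{ V \in \Clifford_m : V \ket{\psi} = \ket{\psi} \}$, so the sample $V$ drawn in Algorithm $C$ is uniform on $K$. Since $(U_1 \cdots U_{n-1} U_n V) \bullet P = (U_1 \cdots U_n) \bullet (V \bullet P)$ and $V$ is independent of the internal sample $P \sim \mathcal{D}_n$ drawn by $B$, the output of $C$ is distributed as $(U_1 \cdots U_n) \bullet \mathcal{D}_n'$, where I define $\mathcal{D}_n'$ to be the law of $V \bullet P$ with $V \sim \mathrm{Unif}(K)$ and $P \sim \mathcal{D}_n$ independent; this depends only on $n$ because $K$ depends only on the fixed state $\ket{\psi}$.

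The probability‑preservation bullet is immediate from the fact that conjugation by $V \in K$ preserves whether a Pauli stabilizes $\ket{\psi}$: if $P \ket{\psi} = \pm \ket{\psi}$ then $(V \bullet P)\ket{\psi} = V P V^{\dagger}\ket{\psi} = V P \ket{\psi} = \pm \ket{\psi}$ using $V^{\dagger}\ket{\psi} = \ket{\psi}$, and applying the same to $V^{-1}$ gives the converse. Hence $V \bullet P$ stabilizes $\ket{\psi}$ iff $P$ does, so $\Pr[\,P' \text{ stab.}\ \ket{\psi} \mid P' \sim \mathcal{D}_n'\,] = \Pr[\,P \text{ stab.}\ \ket{\psi} \mid P \sim \mathcal{D}_n\,]$.

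For the uniformity bullet the crux is that \emph{$K$ acts transitively by conjugation on the non‑identity stabilizers of $\ket{\psi}$, and transitively on the non‑stabilizers of $\ket{\psi}$}. Granting this, fix $P_0$ in one of these two classes; by the standard fact that for a transitive action the preimages $\{ V \in K : V \bullet P_0 = Q \}$ are left cosets of $\mathrm{Stab}_K(P_0)$ and hence equinumerous, $V \bullet P_0$ with $V \sim \mathrm{Unif}(K)$ is uniform over the $K$‑orbit of $P_0$, i.e.\ over all non‑identity stabilizers (resp.\ all non‑stabilizers). Averaging these identical uniform distributions over $P_0$ drawn from $\mathcal{D}_n$ conditioned on its class preserves uniformity, so $\mathcal{D}_n'$ restricted to stabilizers and to non‑stabilizers is uniform, with the split between the two classes inherited from $\mathcal{D}_n$ --- exactly the claim. (In our application $\mathcal{D}_n$ is supported on Pauli strings of the magic square, so $\pfont{I}^{\otimes m}$ never occurs; in general $\{\pfont{I}^{\otimes m}\}$ is simply a third singleton $K$‑orbit.)

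The remaining task, which I expect to be the main (if routine) obstacle, is the transitivity lemma. I would reduce to $\ket{\psi} = \ket{0}^{\otimes m}$ by fixing a Clifford $W$ with $\ket{\psi} = W\ket{0}^{\otimes m}$: then $K = W K_0 W^{\dagger}$ for $K_0 := \{ V : V\ket{0}^{\otimes m} = \ket{0}^{\otimes m} \}$, and conjugation by $W$ carries $K_0$‑orbits, stabilizers, and non‑stabilizers of $\ket{0}^{\otimes m}$ to the corresponding objects for $\ket{\psi}$. Writing a Pauli mod phase as $\pfont{X}^{a}\pfont{Z}^{b}$ with $a, b \in \F_2^m$, the non‑identity stabilizers of $\ket{0}^{\otimes m}$ are those with $a = 0 \neq b$ and the non‑stabilizers are those with $a \neq 0$. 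All $\CNOT$ gates lie in $K_0$ and realize the full $\GL_m(\F_2)$ on the exponent $a$ (dually on $b$), which is transitive on $\F_2^m \setminus \{0\}$; this settles the stabilizers and, for a non‑stabilizer, lets us move $a$ to $e_1$. Then $\RZ$ on qubit $1$ (sending $\pfont{X}_1 \mapsto \pfont{X}_1\pfont{Z}_1$ mod phase) and $\CZ$ between qubits $1$ and $j$ (sending $\pfont{X}_1 \mapsto \pfont{X}_1\pfont{Z}_j$) --- both in $K_0$, both fixing the $X$‑part $a = e_1$ --- let us add an arbitrary vector to $b$ and so reach $\pfont{X}_1$; hence any two non‑stabilizers of $\ket{0}^{\otimes m}$ are $K_0$‑conjugate. (Equivalently, this is the familiar statement that the subgroup of the $\F_2$‑symplectic group stabilizing the Lagrangian of $\ket{0}^{\otimes m}$ acts transitively on the nonzero vectors in, and on the vectors outside, that Lagrangian; I would nonetheless keep the elementary generator‑based argument for self‑containedness.)
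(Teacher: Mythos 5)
Your proof is correct, but it reaches the uniformity claim by a different route than the paper. The paper also reduces to $\ket{\psi}=\ket{0^m}$, but then it parametrizes the Clifford stabilizer of $\ket{0^m}$ explicitly by tableaux of the form $\bigl[\begin{smallmatrix} A & B \\ 0 & A^{-T}\end{smallmatrix}\bigr]$ with $A$ invertible and $BA^{T}$ symmetric, and computes directly that for uniform $(A,B)$ the conjugated Pauli $X^{xA}Z^{xB+zA^{-T}}$ is a uniformly random non-identity $Z$-string when $x=0$, and (via the substitution $B=SA^{-T}$ with $S=BA^T$ a uniformly random symmetric matrix) a uniformly random Pauli with nonzero $X$-part when $x\neq 0$. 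You instead prove that the stabilizer group $K_0$ acts transitively by conjugation on the non-identity stabilizers and on the non-stabilizers of $\ket{0^m}$, using only that $\CNOT$, $\RZ$ and $\CZ$ gates fix $\ket{0^m}$ (CNOT circuits realize $\GL_m(\F_2)$ on the $X$-exponent, and $\RZ(1)$, $\CZ(1,j)$ adjust the $Z$-exponent while fixing $X$-part $e_1$), and then invoke orbit--stabilizer to get equidistribution of $V\bullet P_0$ over the orbit for uniform $V$. Your approach is more elementary and self-contained: it sidesteps both the characterization of the stabilizer subgroup's tableau form (which the paper asserts without proof) and the random-symmetric-matrix computation, needing only membership of a few explicit gates plus the fact that conjugation by $K$ preserves stabilizer-ness; the paper's computation, in exchange, yields the explicit parametrization of the subgroup as a byproduct. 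You are also slightly more careful than the paper about the $\pfont{I}^{\otimes m}$ corner case (a fixed point, hence a third singleton orbit), which the paper's proof silently restricts away by taking $z\neq 0$; as you note, this is harmless in the application since $\mathcal{D}_n$ is supported on magic-square Paulis.
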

\begin{proof}
By symmetry, it suffices to prove the result for any Clifford state $\ket{\psi}$; let us set $\ket{\psi} = \ket{0^m}$. Any Clifford operation stabilizing $\ket{0^m}$ fixes the Pauli subgroup $\{ \pfont{I}, \pfont{Z} \}^{m}$, and therefore the tableau\footnote{We refer the reader to \Cref{app:clifford} for details on Clifford tableaux.} must be of the form 
$$
\begin{bmatrix}
A & B \\
0 & A^{-T}
\end{bmatrix}
$$
for arbitrary $A, B \in \{ 0, 1 \}^{m \times m}$ subject to the conditions that $A$ is full rank and $BA^{T}$ is symmetric. Given an arbitrary Pauli string $P = X^{x} Z^{z}$ expressed as bits $x, z \in \{ 0, 1 \}^{m}$ for the $X$ and $Z$ components respectively, the tableau above will map it to the Pauli string $V P V^{\dag} = X^{x'} Z^{z'}$ where 
$$
\begin{bmatrix}
x' & z' 
\end{bmatrix}
= 
\begin{bmatrix} 
x & z 
\end{bmatrix}
\begin{bmatrix}
A & B \\
0 & A^{-T} 
\end{bmatrix}
=
\begin{bmatrix}
xA & xB + zA^{-T}.
\end{bmatrix}
$$

Clearly $\ket{0^m}$ is stabilized by strings with only $Z$ component, i.e., where $x = 0$. If $P$ is a stabilizer then it follows that $x = 0$ and thus $x' = x A = 0$. Clearly a random invertible transformation $A$ (or $A^{-T}$) will map a non-zero $z \neq 0$ to a uniformly random non-zero $z'$, so $V P V^{\dag}$ is a uniformly random stabilizer of $\ket{0^m}$. On the other hand, if $P$ is not a stabilizer of $\ket{0^m}$ then $x \neq 0$, and by the same argument as above, $x' = xA$ is a uniformly random non-zero vector. Then $z A^{-T}$ is \emph{not} uniformly random conditioned on $x'$, but we assert that $x B$ is uniformly random. Recall that $S := BA^{T}$ is a uniformly random symmetric matrix, and by rearranging, $B = S A^{-T}$. Note that if $x$ is non-zero, then $xS$ is a uniformly random vector, and therefore $xB = xSA^{-T}$ is uniformly random because invertible linear transformations preserve the uniform distribution. We conclude that $z'$ is a uniformly random vector, and hence $V P V^{\dag}$ is uniformly random non-stabilizer of $\ket{0^m}$. 
\end{proof}

We now have all the pieces for this section's main result:
\begin{reptheorem}{thm:nc1_main_result}
Let $\coracle$ be the rewind oracle for the $2$-round Clifford simulation described above (\Cref{prob:nc1}). Then 
$$\NC^{1} \subseteq (\BPAC^{0})^{\coracle}.$$  
\end{reptheorem}
\begin{proof}
Our goal is to use the rewind oracle to determine the state $U_1 \cdots U_n \ket{++}$, modulo Pauli operations, given unitaries $U_1, \ldots, U_n$.

First, we construct Algorithm~$A$ from the rewind oracle. Algorithm~$A$ applies the oracle to $U_1, \ldots, U_n$ in the first round, then measures magic square rows and columns (all of them, by rewinding) in the second round. From these measurements, \Cref{thm:nonstab_from_magic} says we can identify a non-stabilizer of the state $\ket{\psi}$ being measured. If there is more than one measurement with inconsistent results, choose one arbitrarily to return. By \Cref{thm:two_qubit_mbqc}, the initial state is $\ket{++}$ and then $U_1, \ldots, U_n$ are applied ($U_n$ first) to the state by measuring appropriately, so the final two qubits are in state $\ket{\psi} := U_1 \cdots U_n \ket{++}$, modulo the Pauli group.

Next, we use Kilian randomization and related ideas in \Cref{thm:symmetrize} and \Cref{thm:nc1_self_reduce} to construct Algorithm~$C$, which makes calls to Algorithm~$A$ and returns a uniformly random Pauli string $P$ not stabilizing $\ket{\psi}$. Although it is not stated explicitly, Algorithm~$C$ makes exactly one call to Algorithm~$A$ (which is in turn making constantly many calls to the rewind oracle), and the reduction can be computed in $\AC^{0}$ since it only has to sample from the group, multiply constantly many in the group, and translate the two-qubit Clifford elements to measurements. 

Finally, we run Algorithm $C$ sufficiently many times (i.e., $O(\log n)$) to collect \emph{all} non-stabilizers of $\ket{\psi}$ with high probability, and thereby deduce $\ket{\psi}$.\footnote{For any state, there are exactly 12 non-stabilizers.  So, if we collect all 12 after the $O(\log n)$ queries to Algorithm~$C$, we know the answer to the $\NC^1$ problem with certainty; otherwise, we can report ``do not know.''  This places the reduction in $\ZPAC^0 \subseteq \BPAC^0$.} This solves an $\NC^{1}$ decision problem in \Cref{cor:nc1_hardness}.
\end{proof}

This is the simplest version of the result. The remainder of this section is devoted to small improvements which we have avoided until now because they complicate the statement and/or proof of the result. 

\subsection{Error Tolerance}

Suppose the classical simulation is faulty and outputs incorrect (possibly adversarial) responses for some fraction of interactions. That is, for uniformly random inputs (i.e., unitaries $U_1, \ldots, U_n$ chosen randomly from $\Clifford_2$ in the first round, and a uniformly random Clifford measurement in the second round), the classical simulation fails the task with probability $\epsilon > 0$. It turns out that for sufficiently small $\epsilon$, we can still solve an $\NC^{1}$-hard problem, given access to a rewind oracle. 
Let us start with a fact about the maximum number of compatible Pauli measurements.
\begin{fact}
\label{lem:pauli_brute_force}
Let $\nu \colon \Paulis_2 / \mathcal{Z}_2 \to \{ +1, -1 \}$ be an assignment of measurement outcomes to the canonical (i.e., positive) Pauli string in each coset. For any such assignment, at least $3$ of the $15$ Pauli line constraints are violated. 
\end{fact}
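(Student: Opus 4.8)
The plan is to reduce the statement to two facts about the \emph{magic squares} of \Cref{sec:magicsquaregame} --- the $3\times 3$ arrays of Pauli strings whose rows and columns are all Pauli lines. For a line $L=\{P,Q,R\}$, write $\widehat P,\widehat Q,\widehat R$ for the positive (Hermitian) representatives and let $\epsilon(L)\in\{\pm 1\}$ be the sign with $\widehat P\widehat Q\widehat R=\epsilon(L)\,\pfont{II}$; this is well defined because $\widehat P,\widehat Q,\widehat R$ commute and $\widehat R^{\,2}=\pfont{II}$. An assignment $\nu$ \emph{violates} $L$ exactly when $\nu(P)\nu(Q)\nu(R)\ne\epsilon(L)$. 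The two facts are: (i) for every magic square $G$ and every $\nu$, the number of lines of $G$ violated by $\nu$ is odd; and (ii) for any one or two Pauli lines there is a magic square containing neither. Granting these, if $\nu$ violated at most two of the fifteen line constraints, I would pick (by (ii)) a magic square avoiding all violated lines; then $\nu$ violates zero of that square's six lines, contradicting (i). Hence at least three lines are violated.

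For (i), fix a magic square with entries $c_{ij}$, rows $R_i=\{c_{i1},c_{i2},c_{i3}\}$, and columns $C_j=\{c_{1j},c_{2j},c_{3j}\}$. Computing the product of the nine signs $\nu(c_{ij})$ by rows and by columns, and using that the product of $\nu$ over the three strings of a Pauli line equals $\epsilon$ of that line times $-1$ if the line is violated,
$$
\Bigl(\prod_{i=1}^{3}\epsilon(R_i)\Bigr)(-1)^{d_R}
= \prod_{i,j}\nu(c_{ij})
= \Bigl(\prod_{j=1}^{3}\epsilon(C_j)\Bigr)(-1)^{d_C},
$$
where $d_R$ and $d_C$ count violated rows and columns. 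Rearranging gives $(-1)^{d_R+d_C}=\prod_{L\in G}\epsilon(L)$, so (i) is equivalent to: every magic square is ``negative'', i.e.\ $\prod_{L\in G}\epsilon(L)=-1$. The square of \Cref{fig:magic_square} has $\epsilon=-1$ on each of its three rows and $\epsilon=+1$ on each of its three columns, so its product is $-1$. To get all magic squares I would show $\prod_{L\in G}\epsilon(L)$ is invariant under the Clifford action $\Clifford_2/\Paulis_2\cong\mathrm S_6$: conjugation by $U$ sends a line $L=\{P,Q,R\}$ to a line $L'$, and since $U\widehat P U^{\dagger}$ is again a Hermitian Pauli it equals $s_P$ times a positive representative for some $s_P\in\{\pm1\}$, whence $\epsilon(L')=s_Ps_Qs_R\,\epsilon(L)$; as each Pauli string of $G$ lies on exactly two of $G$'s six lines, $\prod_{L\in G}\prod_{c\in L}s_c=\prod_c s_c^{2}=1$, so $\prod_{L'\in G'}\epsilon(L')=\prod_{L\in G}\epsilon(L)$. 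Since $\mathrm S_6$ acts transitively on magic squares (see below), all of them are negative.

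For (ii) I would use the correspondence behind \Cref{fig:s6_clifford_iso} and \Cref{fact:pauli_lines}: the $15$ Pauli strings are the edges of $K_6$ and the $15$ Pauli lines are its perfect matchings. The nine strings of a magic square are edge-disjoint in three pairs (the rows) and also in three pairs (the columns), so they carry two distinct $1$-factorizations of the $9$-edge graph they span; of the two cubic graphs on six vertices only $K_{3,3}$ admits more than one $1$-factorization (the prism has just one), so a magic square's strings form a $K_{3,3}$ and thus correspond bijectively to a partition of the six vertices into two triples, of which there are $\binom{6}{3}/2=10$. Hence there are $10$ magic squares, and the Clifford group $\mathrm S_6$ acts transitively on them (it is transitive on $3$-subsets). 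A given perfect matching lies inside the $K_{3,3}$ of a partition iff the partition separates the two endpoints of each of its three edges, which happens for exactly $2^{3}/2=4$ partitions; so each Pauli line lies in exactly $4$ magic squares. Then for any set $S$ of at most two lines, at most $8<10$ magic squares meet $S$, so some magic square avoids $S$, proving (ii). Combining (i) and (ii) as above yields the bound of $3$.

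The main obstacle is the elementary but fiddly combinatorics of magic squares inside the $15$-point geometry --- their count ($10$), the Clifford-transitivity, and the incidence count ($4$ per line) --- which I would extract from the $K_6$ picture as above; once those are in hand the two-step argument is short. If one prefers to avoid this, the entire statement is a finite verification, since there are only $2^{15}$ assignments $\nu$ and $15$ line constraints to check.
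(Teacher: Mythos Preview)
Your proof is correct and takes a genuinely different route from the paper. The paper simply says the fact ``may be verified by brute force enumeration of the $2^{15}=32768$ different assignments,'' and then notes tightness via the all-$+1$ assignment. You instead give a conceptual argument: every magic square has an odd number of violated lines (the Mermin--Peres parity obstruction, made Clifford-invariant and extended to all magic squares via the transitive $\mathrm S_6$-action), and any two Pauli lines can be simultaneously avoided by some magic square (the counting $10$ squares, $4$ per line, so at most $8$ are hit). Both steps are sound; in particular, your $K_6$ picture correctly identifies magic squares with bipartitions of the six vertices into two triples (the nine edges form a cubic graph admitting two $1$-factorizations, forcing $K_{3,3}$ rather than the prism), which gives the counts you need.

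What each approach buys: the paper's brute force is a one-line appeal to a finite check and is unimpeachable, but it is opaque and does not explain \emph{why} three is the right number. Your argument explains the bound structurally and reuses machinery already present in the paper (the magic square, \Cref{fact:pauli_lines}, the $\mathrm S_6$ isomorphism), so it fits the narrative well. It also makes the tightness example transparent: the three ``negative'' lines $\{\pfont{XX},\pfont{YY},\pfont{ZZ}\}$, $\{\pfont{XY},\pfont{YZ},\pfont{ZX}\}$, $\{\pfont{XZ},\pfont{YX},\pfont{ZY}\}$ are exactly the rows of the displayed magic square, and every magic square meets at least one of them (they are the three matchings of one $K_{3,3}$, and any other $K_{3,3}$ shares at least one matching with it by your incidence count). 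The cost is the ``fiddly combinatorics'' you flag; you have handled it correctly, but it is more work than the paper's sentence.
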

\begin{proof}
This may be verified by brute force enumeration of the $2^{15} = 32768$ different assignments. It is tight because, e.g., assigning \emph{all} Pauli strings to $+1$ only violates three lines: $\{ \pfont{XX}, \pfont{YY}, \pfont{ZZ} \}$, $\{ \pfont{XY}, \pfont{YZ}, \pfont{ZX} \}$, and $\{ \pfont{XZ}, \pfont{YX}, \pfont{ZY} \}$. 
\end{proof}

\begin{theorem}
\label{thm:nc1_error}
Let $\coracle$ be the rewind oracle for a $2$-round Clifford simulation which performs the task in \Cref{prob:nc1}, and fails on a uniformly random input with probability less than $\epsilon$ for some $\epsilon < \frac{2}{75}$. Then 
$$\NC^{1} \subseteq (\BPAC^{0})^{\coracle}.$$ 
\end{theorem}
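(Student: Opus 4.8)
\ The plan is to run the noiseless reduction of Theorem~\ref{thm:nc1_main_result} essentially unchanged, wrapping it in an outer layer of repetition and (approximate) thresholding so that the final decision survives an $\epsilon$-fraction of adversarially corrupted oracle answers. As before we want to identify $\ket\psi := U_1\cdots U_n\ket{++}$ (which by Corollary~\ref{cor:nc1_hardness} is either $\ket{00}$ or $\ket{++}$, so deciding which is $\NC^1$-complete), and as before we do this by collecting, across many self-reduced invocations, a Pauli string that fails to stabilize $\ket\psi$; the new content is controlling how a wrong oracle answer can corrupt such a string.

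First I would arrange that every call to $\coracle$ is on a uniformly random instance, so that the $\epsilon$-failure hypothesis applies to it directly. The Kilian symmetrization inside Algorithm~$C$ (Theorems~\ref{thm:symmetrize} and \ref{thm:nc1_self_reduce} with $G=H=F=\Clifford_2/\Paulis_2$) already makes $(U_1',\dots,U_n')$ a uniformly random element of the product group ($\textsc{Kilian}$ makes $U_1',\dots,U_{n-1}'$ uniform and independent and the prepended uniform $f$ makes the product, hence $U_n'$, uniform and independent), so the round-one challenge sent to the oracle is drawn from exactly the distribution in the hypothesis. For round two I would additionally randomize the choice of magic square (conjugate all nine Pauli strings by a uniform $W\in\Clifford_2$) and the Clifford gate used to realize each line measurement, so that each of the constantly many round-two challenges in a single invocation is itself a uniformly random Clifford measurement. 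With this in place, each of the $O(1)$ oracle calls made by one run of Algorithm~$C$ is faulty with probability at most $\epsilon$; a small amount of care is needed because all round-two calls in a run share one round-one call, but this is absorbed by conditioning on the round-one challenge and applying Markov's inequality to the conditional failure rate.

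Next I would replace the six row/column measurements of a single magic square by measurements of all fifteen Pauli lines (by rewinding), precisely so that Fact~\ref{lem:pauli_brute_force} becomes available. If a run makes no faulty call then every reported line outcome is a genuine measurement outcome, and if moreover the prover answered ``globally consistently'' (one sign per Pauli string) then Fact~\ref{lem:pauli_brute_force} would force at least three line-product constraints to be violated, a contradiction; hence some Pauli string is flagged (its three line-outcomes disagree), and since a true stabilizer of $\ket\psi$ has a forced, consistent outcome on every valid line, any flagged string is a genuine non-stabilizer. The two ways a run can misbehave are: (i) it flags no string, which needs at least three faulty calls and so has probability $O(\epsilon)$ by Markov; and (ii) it flags or reports a true stabilizer, which needs a faulty call on one of the few lines meeting the stabilizers of $\ket\psi$, again probability $O(\epsilon)$. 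Feeding the reported Pauli through the self-reduction of Theorem~\ref{thm:nc1_self_reduce} (whose hypothesis---an input-independent output distribution $\mathcal D_n$---still holds after symmetrization) gives that the report is distributed as $(U_1\cdots U_n)\bullet\mathcal D_n'$ with $\mathcal D_n'$ uniform over the twelve non-stabilizers of $\ket{++}$ with probability $\ge 1-O(\epsilon)$ and uniform over its at most three nontrivial stabilizers with probability $\le O(\epsilon)$; conjugating by $U_1\cdots U_n$, the same holds for $\ket\psi$.

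Finally I would run Algorithm~$C$ $\Theta(\log n)$ times and, for each of the fifteen Pauli strings, estimate its empirical frequency to within a constant factor (within the power of $\BPAC^0$). Each of the twelve non-stabilizers of $\ket\psi$ appears with frequency concentrated at $\ge(1-O(\epsilon))/12$, while each stabilizer appears with frequency $\le O(\epsilon)/3$; chasing the constants---the $O(1)$ oracle calls per run, the number of Pauli lines meeting the three stabilizers, and the $\tfrac13$ from the uniformization over stabilizers in Theorem~\ref{thm:nc1_self_reduce}---shows these two bands are separated by a fixed threshold precisely when $\epsilon<\tfrac{2}{75}$. In that regime we recover the exact set of twelve non-stabilizers, hence $\ket\psi$, hence the answer to the $\NC^1$-complete problem; on the low-probability event that the counts are ambiguous we output ``don't know,'' placing the reduction in $\ZPAC^0\subseteq\BPAC^0$. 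The main obstacle is exactly the bookkeeping in the previous two paragraphs: unlike in the noiseless case a single bad round-two answer need not merely waste a run but can promote a true stabilizer to a reported ``non-stabilizer,'' so the proof must track all the constants carefully enough to certify that the per-stabilizer noise floor stays strictly below the $1/12$ signal on each non-stabilizer---and it is this inequality that yields the bound $\epsilon<2/75$.
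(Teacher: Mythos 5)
Your overall architecture (rewind, measure all fifteen Pauli lines, invoke Fact~\ref{lem:pauli_brute_force}, self-reduce via \Cref{thm:symmetrize} and \Cref{thm:nc1_self_reduce}, then estimate frequencies in $\BPAC^0$) matches the paper, but your error-handling mechanism is genuinely different from the paper's, and the difference is exactly where your argument has a gap. The paper does \emph{not} flag individual Pauli strings with inconsistent outcomes: its Algorithm $A$ takes, for each Pauli string, the \emph{majority} of its three line-outcomes, and then outputs a uniformly random \emph{violated line} under that majority assignment. The point of the majority vote is that a single faulty call cannot change the recorded value of any stabilizer, so the all-stabilizer line $\ell$ can only be output when a run contains at least two faulty calls; writing $\delta$ for that probability, Markov over the fifteen marginally-uniform calls gives $\epsilon \ge \tfrac{2}{15}\delta$, and since at least three lines are always violated, $\Pr[\text{output }\ell] \le \delta/3 < \tfrac{1}{15}$ precisely when $\epsilon < \tfrac{2}{75}$. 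The final statistic is then the \emph{deficiency} of $\ell$ relative to the uniform $1/15$, recovered via the class partition of lines ($\{\ell\}$, the $6$ lines incident to $\ell$, the $8$ others) that the stabilizer-preserving self-reduction makes uniform within classes. In other words, $2/75 = \tfrac{2}{15}\cdot\tfrac15$ is an artifact of this specific line-based, majority-vote accounting.

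Your per-string flagging scheme does not enjoy the "one error is harmless" property: a single faulty call on a line through a stabilizer can already flag that stabilizer (and a single faulty answer on $\ell$ itself can flag \emph{all three} stabilizers at once, a case your accounting omits), so the probability of reporting a stabilizer is governed by constants you never pin down: how many of the seven stabilizer-meeting lines the adversary can target within a run, which flagged string your algorithm returns when several are flagged (you never fix a selection rule), and the constant-factor losses from conditioning on the shared round-one challenge and applying Markov. Your claims (i) and (ii) are qualitatively right (flagging nothing needs at least three faulty calls, and flagging a stabilizer needs a faulty call on a line through it), but the theorem's content is quantitative: you must show the stabilizer noise floor (your $O(\epsilon)/3$ band) stays strictly below the non-stabilizer signal (your $(1-O(\epsilon))/12$ band) for \emph{every} $\epsilon<\tfrac{2}{75}$, and the sentence "chasing the constants \ldots shows these two bands are separated \ldots precisely when $\epsilon<\tfrac{2}{75}$" is asserted rather than proved. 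There is no reason the threshold of your scheme should come out to exactly $2/75$; depending on the unresolved choices above it may land below it, in which case the proof fails for part of the stated range. To close the gap you would either have to carry out this worst-case adversarial accounting explicitly (including the multi-stabilizer flagging via $\ell$ and the shared-round-one losses), or adopt the paper's majority-vote-over-lines device, which sidesteps the single-error failure mode entirely.
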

\begin{proof}
We construct an Algorithm $A$ which uses the rewind oracle to return a \emph{Pauli line}, rather than an individual Pauli. In the first round, the algorithm effectively creates the state $\ket{\psi} := U_1 \cdots U_n \ket{++}$ (modulo Paulis), then makes all possible Clifford measurements in the second round, using the rewind oracle. Recall that these measurements give three outcomes for each Pauli string, so we may construct $\nu \colon \Paulis_2 / \mathcal{Z}_2 \to \{ +1, -1 \}$ where $\nu(P)$ is the majority of the three measurement outcomes for $P$. By \Cref{lem:pauli_brute_force}, this assignment of outcomes violates at least three Pauli line constraints, e.g., $PQR = +\pfont{II}$ but $\nu(P) \nu(Q) \nu(R) = -1$. Return one violated Pauli line at random. 

Let $\ell$ be the unique Pauli line such that all Pauli strings stabilize $\ket{\psi}$. If there are no errors, then the outcomes for any $P \in \ell$ are all the same, and the Pauli line constraint for $\ell$ is satisfied, so Algorithm $A$ will output some other line. Moreover, it would require at least two measurement errors involving a stabilizer $P \in \ell$ to change the value of $\nu(P)$, so Algorithm $A$ will not output $\ell$ in case of one error.  

When there are two or more errors, we assume the adversary may arbitrarily change $\nu$ so that the constraint for $\ell$ is violated, in addition to at least two other lines. Hence, the probability of outputting $\ell$ is still at most $\frac{1}{3}$.

Randomly self-reduce Algorithm $A$ (by \Cref{thm:symmetrize} and \Cref{thm:nc1_self_reduce}) to produce an Algorithm~$C$. The self-reductions map stabilizers to stabilizers and non-stabilizers to non-stabilizers, so Algorithm $C$ outputs the line $\ell$ for a particular input with the same average probability as Algorithm~$A$.

Finally, we run Algorithm $C$ many times and count how many times each line is reported. There are three classes of Pauli line:
\begin{align*}
C_1 &= \{ \ell \}, \\
C_2 &= \{ \text{lines incident to $\ell$} \} \backslash \{ \ell \}, \\
C_3 &= \{ \text{lines not incident to $\ell$} \}.
\end{align*}
By the random self-reduction, the probability Algorithm $C$ returns a line is uniform within each class, so the distribution of lines it returns is uniform on three subsets of size $|C_1| = 1$, $|C_2| = 6$, and $|C_3| = 8$. With $O(\log n)$ samples, we recover the distribution to accuracy $O(1/\sqrt{\log n})$, and thus also recover the three uniform subsets \emph{unless} elements from two or more classes occur with the same frequency (to within $O(1/\sqrt{n})$). However, even if \emph{two} of the classes are indistinguishable, the third will give us $\ell$, since either the class is $\ell$ itself, or $\ell$ is the unique Pauli line incident to all lines in the class, or $\ell$ is the unique Pauli line \emph{not} incident to all lines in the class. Thus, the only way $O(n)$ samples fail to recover $\ell$ is if Algorithm $C$ outputs Pauli lines uniformly. We argue this is not possible when $\epsilon < \frac{2}{75}$. 

Let $\delta$ be the probability that Algorithm $C$ makes more than $2$ errors among the $15$ second round measurements, for a uniformly random first round input. Immediately we have $\epsilon > \frac{2}{15} \delta$, so if $\epsilon < \frac{2}{75}$ then $\delta < \frac{1}{5}$. Observe that Algorithm $C$ outputs $\ell$ with probability at most $\frac{\delta}{3}$. Since $\frac{\delta}{3}$ is a constant less than $\frac{1}{15}$, the distribution is noticeably different from uniform with $O(n)$ samples. 
\end{proof}

\subsection{\texorpdfstring{$\AC^{0}[6]$}{AC0[6]}-hardness on a Line}
\label{sec:ac0mod6}

We have tried to make \Cref{prob:nc1} as practical as possible---using a grid of qubits with only nearest neighbor interactions, shrinking the gadgets necessary for two-qubit Clifford gates, and making the grid only two qubits wide. This section examines what is possible on a grid of width one, i.e., a \emph{line} of qubits. 

\begin{problem}[Line Clifford Simulation]
	\label{prob:ac0mod6}
	Let $A, B \in \{0,1\}^{3n+1}$ be binary vectors and let $C \in \Clifford_2$. Let \emph{Line Clifford Simulation} be the problem of passing the $\CliffSim$ protocol with the graph state on a line of $6n+4$ vertices, and the following two rounds of challenges:
	\begin{itemize} [itemsep = 0pt]
		\item Round 1 Challenges: prover measures qubit $i$ from the left in the $X$ or $Y$ basis according to $A_{i}$, and measures qubit $i$ from the right in the $X$ or $Y$ basis according to $B_i$.
		\item Round 2 Challenges: prover applies Clifford operation $C$ to the two middle qubits and measures both in the $X$ basis.
	\end{itemize}
\end{problem}

Given this problem, we prove the following hardness result. 
\begin{theorem}
\label{thm:ac0mod6_main}
Let $\coracle$ be the rewind oracle for the $2$-round Clifford simulation described above (\Cref{prob:ac0mod6}). Then $$\AC^{0}[6] \subseteq (\BPAC^{0})^{\coracle}.$$ 
\end{theorem}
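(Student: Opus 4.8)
The plan is to run the proof of \Cref{thm:nc1_main_result} essentially verbatim, replacing the two‑qubit Clifford group $\Clifford_2/\Paulis_2\cong\mathrm S_6$ everywhere by the \emph{single}‑qubit Clifford group $\Clifford_1/\Paulis_1$, which is isomorphic to $\mathrm S_3$ (the permutations of $\{\pfont X,\pfont Y,\pfont Z\}$; for instance $H$ acts as the transposition $(\pfont X\,\pfont Z)$ and a $\pfont Y$‑basis measurement gadget as the $3$‑cycle $(\pfont X\,\pfont Y\,\pfont Z)$). The reason a line is the right setting is that measurement‑based computation on a linear cluster can only implement single‑qubit Clifford gates modulo Paulis — there is no room for an entangling gate — so after Round~1 of \Cref{prob:ac0mod6} the two middle qubits are left, modulo Paulis, in a state of the form $(V_L\otimes V_R)\ket{\omega}$, where $\ket{\omega}$ is a fixed maximally entangled stabilizer state (coming from the single cluster edge joining the two middle qubits) and $V_L,V_R\in\Clifford_1/\Paulis_1$ are the products of single‑qubit Cliffords encoded by $A$ and $B$ along the two half‑lines. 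I would fix the pattern $B$ so that $V_R$ is trivial, leaving $\ket{\psi}:=(V_L\otimes\pfont I)\ket{\omega}$ with $V_L=U_1\cdots U_n$ an arbitrary element of $\mathrm S_3$; the wire gadget and the qubit budget would be worked out as in \Cref{sec:two_qubit_mbqc} but for a single wire.

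The first key ingredient is the analog of \Cref{cor:nc1_hardness}: computing the product $U_1\cdots U_n$ in $\Clifford_1/\Paulis_1\cong\mathrm S_3$ is $\AC^0[6]$‑hard. Equivalently — since the stabilizer group of $(V\otimes\pfont I)\ket{\omega}$ is $\langle (V\bullet\pfont X)\pfont X,\ (V\bullet\pfont Z)\pfont Z\rangle$, which recovers $V$ — it is $\AC^0[6]$‑hard to distinguish the six states $(V\otimes\pfont I)\ket{\omega}$, and this remains true in the promise version where the product is guaranteed to be $\pfont I$ or $H$ (so $\ket{\psi}$ is $\ket{\omega}$ or $(H\otimes\pfont I)\ket{\omega}$). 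This is where the line is genuinely weaker than the width‑$2$ grid: $\mathrm S_3$ is solvable, so Barrington–Th\'erien \cite{BarringtonTherien} does not yield $\NC^1$‑hardness, but $\mathrm S_3$ has composition factors $\Z/2$ and $\Z/3$ — via the subgroups $\{e,(\pfont X\,\pfont Z)\}$ and $\{e,(\pfont X\,\pfont Y\,\pfont Z)\}$ one already recovers $\MOD_2$ and $\MOD_3$ — and the word/program problem over $\mathrm S_3$ lives at the $\AC^0[6]$ level. \textbf{I expect this to be the main obstacle}: producing the reduction that makes distinguishing these two‑qubit states hard for \emph{all} of $\AC^0[6]$, not merely for $\AC^0[2]$ and $\AC^0[3]$ separately, is the one nontrivial piece of circuit complexity the line case requires, and it is exactly why the bound here is $\AC^0[6]$ rather than $\NC^1$.

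Everything after that transfers. Round~2 of \Cref{prob:ac0mod6} — apply an arbitrary $C\in\Clifford_2$ to the two middle qubits, then measure both in the $\pfont X$ basis — returns the outcomes of the Pauli line $\{C^\dagger\bullet\pfont{XI},\ C^\dagger\bullet\pfont{IX},\ C^\dagger\bullet\pfont{XX}\}$, and as $C$ ranges over $\Clifford_2$ this realizes every Pauli line; so by rewinding we can measure all six rows and columns of the magic square (\Cref{fig:magic_square}) and, by \Cref{thm:nonstab_from_magic}, extract with certainty a Pauli string that does not stabilize $\ket{\psi}$ — necessarily one of the six ``mixed'' non‑stabilizers of $\ket{\psi}$, since the magic square consists of the nine Paulis with no identity tensor factor. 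This is Algorithm~$A$. To turn it into a tomographic subroutine I would apply the self‑reduction machinery of \Cref{thm:symmetrize} and \Cref{thm:nc1_self_reduce}: first Kilian‑randomize the list $U_1,\dots,U_n$ over $\mathrm S_3$ so that $A$'s answer distribution depends only on the product $V_L$; then append a uniformly random ``diagonal'' Clifford $V_0\otimes V_0$, which is realizable on the line by multiplying $V_0$ into the left wire and $V_0$ into the right wire, stabilizes $\ket{\omega}$ (hence leaves $\ket{\psi}$ unchanged), and acts transitively by conjugation on the six mixed non‑stabilizers of $\ket{\psi}$ — this is the step that genuinely uses the second wire. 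The resulting Algorithm~$C$ then returns a uniformly random mixed non‑stabilizer of $\ket{\psi}$, makes $O(1)$ rewind‑oracle calls, and is computable in $\AC^0$ (it only samples from and multiplies constantly many elements of $\mathrm S_3$ and converts them to $X/Y$ measurement patterns). Running $C$ for $O(\log n)$ rounds collects all six mixed non‑stabilizers with high probability, pinning down $\ket{\psi}$ and therefore $V_L$; a run that fails to collect all six reports ``do not know,'' placing the reduction in $\ZPAC^0\subseteq\BPAC^0$ and solving the $\AC^0[6]$‑hard decision problem.
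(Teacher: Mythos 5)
Your high-level architecture (line MBQC gives products of single-qubit Cliffords, $\Clifford_1/\Paulis_1\cong\mathrm S_3$ yields $\MOD_2$ and $\MOD_3$ and hence $\MOD_6$, magic-square rewinding yields a non-stabilizer, then randomize and aggregate) matches the paper, but the step you dismiss as routine is where the argument actually breaks. First, the post-round-1 state is misidentified: the two middle qubits are left, modulo Paulis, in $\CZ\,(U_1\cdots U_n\otimes V_1\cdots V_n)\ket{++}$ --- the unknown single-qubit products act on the logical inputs $\ket{+}\otimes\ket{+}$, and the single cluster edge between the two unmeasured middle qubits contributes a \emph{fixed, known} $\CZ$ applied afterwards, which can be absorbed into the arbitrary round-2 Clifford $C$. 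So the state is effectively the product state $(U_1\cdots U_n\ket{+})\otimes(V_1\cdots V_n\ket{+})$ (this is how the paper treats it), not $(V_L\otimes V_R)\ket{\omega}$ with unknown local Cliffords acting \emph{after} an entangling $\ket{\omega}$; in particular it has exactly one stabilizer inside the magic square and eight candidate non-stabilizer answers, not six. Second, and fatally, your ``diagonal twirl'' $V_0\otimes V_0$ is not available: any hidden randomizing gates must be inserted into the wires in round 1, i.e., before the middle $\CZ$, so to preserve the state they must fix $\ket{+}\otimes\ket{+}$, and the only such local Cliffords mod Paulis are $\{\pfont{I},\theta_{yz}\}\times\{\pfont{I},\theta_{yz}\}$ (plus reversing the line). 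There is no way to apply an unknown-to-the-prover gate after the $\CZ$ in round 1, and conjugating the round-2 measurement by a random known Clifford hides nothing, since the prover sees which measurement is requested. This small group does not act transitively on the eight candidate non-stabilizers, so ``collect all non-stabilizers'' fails against an adversarial prover that concentrates its answers on one orbit.

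That limitation is exactly what the paper's proof is organized around: after Kilian randomization, the reversal symmetry, and the $\theta_{yz}$ twirl at each end, the answer distribution is only guaranteed to be uniform (conjugated by the hidden product) on the orbits $\{\pfont{XX}\}$, $\{\pfont{XY},\pfont{XZ}\}$, $\{\pfont{YX},\pfont{ZX}\}$, $\{\pfont{YY},\pfont{YZ},\pfont{ZY},\pfont{ZZ}\}$, and the stabilizer $\pfont{XX}$ is recovered by a case analysis on the weight $p$ split between the middle two orbits: either collect all strings but one, or take the product of the four strings $\pfont{XY}\cdot\pfont{XZ}\cdot\pfont{YX}\cdot\pfont{ZX}=\pfont{XX}$, or use the two commuting pairs inside $\{\pfont{YY},\pfont{YZ},\pfont{ZY},\pfont{ZZ}\}$ whose products are $\pm\pfont{XX}$. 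Your proposal contains no substitute for this argument. Conversely, the step you flag as the ``main obstacle'' is the easy part, and your framing has the reduction backwards: one does not need a single promise problem hard for all of $\AC^0[6]$; once the tomography determines $\ket{\psi}$ you can evaluate arbitrary words over $\mathrm S_3$, hence simulate $\MOD_2$ and $\MOD_3$ gates (and $\MOD_6$ by CRT) with oracle calls, and an $\AC^0$ circuit using these simulated gates computes anything in $\AC^0[6]$, which is all that $\AC^{0}[6]\subseteq(\BPAC^{0})^{\coracle}$ requires.
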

\begin{proof}
We already have all the tools we need to prove this. As we showed in \Cref{thm:single_qubit_mbqc}, a line of $3n+2$ qubits can implement a sequence of $n$ single-qubit Clifford operations. We see that \Cref{prob:ac0mod6} simulates two of these lines, with the two output qubits adjacent in the middle. Thus, round 1 puts the separable state 
$$\ket{\psi} := \ket{\psi_1} \otimes \ket{\psi_2} = (U_1 \cdots U_n \ket{+}) \otimes (V_1 \cdots V_n \ket{+}),$$
in the two middle qubits (where $U_1, \ldots, U_n, V_1, \ldots, V_n$ are single-qubit Clifford operations), and in round 2 we perform an arbitrary entangling Clifford operation and measurement. 

Next, we observe that the group of single-qubit Clifford operations is $\mathrm{S}_4$, but after modding out the Pauli operations it is $\mathrm{S}_3$. If we can compute the product of a sequence of $\mathrm{S}_3$ permutations, then we can immediately compute $\MOD_2$ and $\MOD_3$ gates, whence we get $\MOD_6$ gates by the Chinese remainder theorem. This implies it is at least $\NC^{0}[\MOD_6] = \NC^{0}[6]$ hard to compute the state $U_1 \cdots U_n \ket{+}$ modulo Pauli operations.

As before, a classical oracle for the task implies a classical rewind oracle. The problem explicitly allows us to apply an arbitrary Clifford operation and measure in the second round, so we can make arbitrary measurements on the two-qubit state $\ket{\psi}$. In particular, we can measure the magic square and learn a two-qubit Pauli measurement which does not stabilize $\ket{\psi}$. That is, from the classical rewind oracle we extract an algorithm, $A$, which returns a non-stabilizer Pauli in $\boxplus := \{ \pfont{XX}, \pfont{XY}, \pfont{XZ}, \pfont{YX}, \pfont{YY}, \pfont{YZ}, \pfont{ZX}, \pfont{ZY}, \pfont{ZZ} \}$. Since $\ket{\psi}$ is a tensor product state, note that it will be stabilized by $\pfont{I} \otimes Q$, $P \otimes \pfont{I}$, and $P \otimes Q$, for Paulis $P, Q \in \{ \pfont{X}, \pfont{Y}, \pfont{Z} \}$. Exactly one of these stabilizers appears in the magic square, so there are eight non-stabilizers that $A$ may return.  

To learn the actual state of $\ket{\psi}$, we make several \emph{randomized} queries to algorithm $A$. First, we randomize using \Cref{thm:symmetrize} with $G = H = F = \Clifford_1 / \Paulis_1 \cong \mathrm{S}_3$, on both halves of the line (i.e., $U_1, \ldots, U_n$ and $V_1, \ldots, V_n$), giving an algorithm $B$ with output distribution $(U_1 \cdots U_n \otimes V_1 \cdots V_n) \bullet \mathcal{D}$ where $\mathcal{D}$ is a distribution over the Pauli strings in the magic square. We slightly upgrade this to algorithm $B'$, which reverses the entire line of qubits with probability $\frac{1}{2}$ before calling $B$, and then swaps back the Paulis in the answer. This has the effect of making $\mathcal{D}$ symmetric, i.e., the probability of returning $\pfont{XY}$ is the same as the probability of returning $\pfont{YX}$. 

Finally, we construct an algorithm $C$ which applies $\theta_{yz} = \frac{Y+Z}{\sqrt{2}}$ (the unique non-identity Clifford operation (up to Paulis) which fixes $\ket{+}$) at the ends of the line (with probability $\frac{1}{2}$, independently for each end). Since $\theta_{yz}\ket{+} = \ket{+}$, the operation does not affect the state, and thus it does not affect the stabilizer Pauli (e.g., $\pfont{X}$ for $\ket{+}$) but permutes the other two Paulis ($\pfont{Y} \longleftrightarrow \pfont{Z}$). It follows that the distribution $\mathcal{D}$ puts the same weight on $P \otimes \pfont{Y}$ and $P \otimes \pfont{Z}$ for any $P$, and likewise puts the same weight on $\pfont{Y} \otimes P$ and $\pfont{Z} \otimes P$. This partitions the Pauli strings of the magic square into four subsets,
$$
\{ \pfont{XX} \}, \{ \pfont{XY}, \pfont{XZ} \}, \{ \pfont{YX}, \pfont{ZX} \}, \{ \pfont{YY}, \pfont{YZ}, \pfont{ZY}, \pfont{ZZ} \},
$$
and $\mathcal{D}$ must be uniform on each subset. Furthermore, $\pfont{XX}$ should have weight $0$, and recall $\pfont{XY}$ and $\pfont{YX}$ have the same weight by reversal/swap symmetry, so $\mathcal{D}$ divides mass $p$ between $\{ \pfont{XY}, \pfont{XZ}, \pfont{YX}, \pfont{ZX} \}$ and divides mass $1-p$ between $\{ \pfont{YY}, \pfont{YZ}, \pfont{ZY}, \pfont{ZZ} \}$ for some $p \in [0,1]$.

Now we show how to recover the stabilizer (of $\ket{\psi}$ in $\boxplus$) directly from calls to algorithm $C$. We describe the algorithm under the assumption that $U_1 \cdots U_n \otimes V_1 \cdots V_n$ is the identity, to avoid repeating ``conjugated by $U_1 \cdots U_n \otimes V_1 \cdots V_n$'' throughout, but we take care to verify that the operations behave correctly under conjugation. 

First, we make constantly many calls to algorithm $C$ and learn the distribution $(U_1 \cdots U_n \otimes V_1 \cdots V_n) \bullet \mathcal{D}$ to constant precision. We know the distribution partitions into three sets, $\{ \pfont{XX} \}$, $\{ \pfont{XY}, \pfont{XZ}, \pfont{YX}, \pfont{ZX} \}$, and $\{ \pfont{YY}, \pfont{YZ}, \pfont{ZY}, \pfont{ZZ} \}$, where the distribution is approximately uniform, although $\pfont{XX}$ should have no weight as it is the stabilizer. Depending on the relative weight of the other two sets ($1-p$ for the first and $p$ for the second) there are three cases:
\begin{itemize}
	\item If $p$ is bounded away from $0$ and $1$, then we can collect non-stabilizer Pauli strings until we have all but one. This must be $\pfont{XX}$, the stabilizer.
	\item If $p$ is sufficiently close to $1$, then we learn $\{ \pfont{XY}, \pfont{XZ}, \pfont{YX}, \pfont{ZX} \}$ in $O(1)$ calls. Let us take the product $\pfont{XX} = \pfont{XY} \cdot \pfont{XZ} \cdot \pfont{YX} \cdot \pfont{ZX}$ as the stabilizer. 
	\item If $p$ is sufficiently close to $0$, then we learn $\{ \pfont{YY}, \pfont{YZ}, \pfont{ZY}, \pfont{ZZ} \}$. Among these, two pairs $\{ \pfont{YY}, \pfont{ZZ} \}$ and $\{ \pfont{YZ}, \pfont{ZY} \}$ commute (which also distinguishes it from the previous case), and the product of either is $\pm \pfont{XX}$.
\end{itemize}
In all cases, conjugating the distribution and samples leads to conjugation of the answer. 

To conclude, we have constructed a randomized algorithm from $\coracle$ which learns a magic square stabilizer of $\ket{\psi}$. Since $\ket{\psi}$ is a tensor product Clifford state, this tells us $\ket{\psi}$ exactly. We showed that we can encode the answer to an $\NC^{0}[2]$-hard or $\NC^{0}[3]$-hard problem in $\ket{\psi}$, so with a few calls to this randomized algorithm we can implement $\MOD_6$ gates, and thus compute anything in $\NC^{0}[6]$. However, we require random $\AC^{0}$ circuits to sample for the randomization step, so we must compute at least the class $\AC^{0}[6]$. 
\end{proof}

%!TEX root = ../samp_clifford.tex

\section{\texorpdfstring{$\parityL$}{Parity-L}-hardness}
\label{sec:parityL}

The purpose of this section is to obtain a $\parityL$ hardness result for the $\CliffSim$ problem.  First, we will necessarily need to modify the $\CliffSim$ protocol from the previous section to make it more computationally challenging.  Since an $\NC^1$ circuit can pass the $\CliffSim$ protocol on any grid of constant width, we will use larger graph states in this section in order to apply more complicated gates.  For now, let us only roughly describe the initial state as some graph state $\ket{\mathcal G_{n,m}}$ where $\mathcal G_{n,m}$ is some subgraph of an $m \times \Theta(mn)$ grid.\footnote{As in the previous section, we can take the state $\ket{\mathcal G_{n,m}}$ to be a full $\Theta(m) \times \Theta(mn)$ grid.  However, this would require us to introduce $Z$-basis measurement challenges into the model, which essentially carve out those connections which we do not want in the grid.} The exact details of $\mathcal G_{n,m}$ will be given in \Cref{sec:initial state}.  The formal statement is as follows:

\begin{problem}[Wide Cluster Clifford Simulation]
\label{prob:parityL}
Let $A \in \{0,1\}^{m \times c_1mn}$ and $B \in \{0,1\}^{m \times c_2}$ be binary matrices.  Let \emph{Wide Cluster Clifford Simulation} be the problem of passing the $\CliffSim$ protocol with initial state $\ket{\mathcal G_{n,m}}$ and the following two rounds of challenges:
\begin{itemize} [itemsep = 0pt]
\item Round 1 Challenges:  prover measures qubit $(i,j)$ in the $X$-basis if $A_{i,j}=0$; otherwise, prover measures qubit $(i,j)$ in the $Y$-basis.
\item Round 2 Challenges:  prover measures qubit $(i , j+c_1mn)$ in the $X$-basis if $B_{i,j}=0$; otherwise, prover measures qubit $(i,j + c_1 mn)$ in the $Y$-basis.
\end{itemize}
\end{problem}

Our main result is $\parityL$-hardness for this problem.

\begin{theorem}
\label{thm:parityL_main_result}
Let $\coracle$ be the rewind oracle for Wide Cluster Clifford Simulation (\Cref{prob:parityL}). Then 
$$\parityL \subseteq (\BPAC^{0})^{\coracle}.$$  
\end{theorem}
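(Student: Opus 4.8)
The plan is to follow the four-step framework from the proof outline, now with $m$-qubit states and the group of $\CNOT$ circuits playing the role that $\Clifford_2/\Paulis_2$ played in \Cref{sec:nc1}. First (hardness of simulation), we reduce a $\parityL$-complete problem to reading off the round-1 output state. Take the canonical $\parityL$-complete problem of evaluating an iterated product of matrices over $\F_2$, i.e.\ computing the output of a $\CNOT$ network \cite{damm:1990}. We realize the network as a Clifford operation $W$ and insert a few Hadamards so that the output state $\ket{\psi} := W H \ket{+}^m$ (modulo Paulis) has a stabilizer group whose membership encodes the hard bit; concretely, whether a single designated Pauli string (e.g.\ $X_i$ on the $i$th output qubit) stabilizes $\ket{\psi}$ is exactly the question of whether a designated entry of the matrix product equals $1$. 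The reduction from the $\parityL$-complete instance to the $\CliffSim$ challenge is only a syntactic rewriting of a circuit/matrix description into a pattern of $X$-versus-$Y$ measurements on the fixed graph $\ket{\mathcal G_{n,m}}$ (whose details are deferred to \Cref{sec:initial state}), so it is in $\AC^{0}$ and performs no arithmetic on the matrices.

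Second (collapse to constant depth), we invoke \Cref{thm:rbb_mbqc} together with the gadget constructions of \Cref{sec:initial state}: the round-1 $X/Y$ measurements implement $WH$ on $\ket{+}^m$ up to an unknown Pauli correction, leaving $\ket{\psi}$ (mod Paulis) on the $m$ unmeasured output qubits. Since $\ket{\mathcal G_{n,m}}$ is a constant-depth cluster state and every challenge is a single-qubit measurement, the quantum side of the task lies in $\Rel\QNC^{0}$, so nothing here exceeds constant depth. Third (extract a non-stabilizer), we use the rewind oracle to measure $\ket{\psi}$ in many pairwise-commuting Pauli contexts in round 2, and apply an $m$-qubit analog of \Cref{thm:nonstab_from_magic} --- a Kochen--Specker / magic-pentagram-type gadget --- to force the classical simulator to answer inconsistently on some Pauli line, so that the offending Pauli string provably does not stabilize $\ket{\psi}$. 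The gadget is chosen with the reduction in mind, so that the forced inconsistency is informative about the designated Pauli.

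Fourth (randomize and iterate), we apply Kilian randomization (\Cref{cor:kilian}), the symmetrization of \Cref{thm:symmetrize} --- now used with a properly nested $F \leq H \normalin G$ rather than the collapsed case $F = H = G$ from \Cref{sec:nc1} --- and the stabilizer-state self-reduction of \Cref{thm:nc1_self_reduce}, taking $G$ to be the $\CNOT$ group acting by conjugation on $\Paulis_m/\mathcal Z_m$. This promotes the adversarial non-stabilizer into a (near-)uniform random non-stabilizer of $\ket{\psi}$ drawn from a distribution depending only on the relevant coset data, and running the self-reduced algorithm $\poly(n)$ times lets us accumulate enough non-stabilizers to decide whether the designated Pauli stabilizes $\ket{\psi}$, hence to solve the $\parityL$-complete problem. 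Because the outer reduction only samples group elements, performs constantly many small-support group operations, and rewrites circuit descriptions, it remains in $\BPAC^{0}$ (indeed $\ZPAC^{0}$, answering ``do not know'' if the samples are inconclusive), with all of the $\parityL$ power coming from the oracle.

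The main obstacle --- and the place where this proof genuinely departs from the $\NC^{1}$ argument --- is carrying out the randomization of step four inside $\AC^{0}$ over the exponentially large group $G$: naive Kilian randomization, and even plain $\F_2$-matrix multiplication, is already $\parityL$-complete, so the randomization must be performed entirely at the level of circuit descriptions (realizing random group elements by concatenating short random $\CNOT$ subcircuits, and expressing each conjugate $h_{i-1}^{-1} c_i h_i$ as a subcircuit), and \Cref{thm:symmetrize} must be instantiated with a subgroup $H$ for which this is possible while still making the output distribution depend only on coset data. A secondary difficulty is the final extraction step: unlike the two-qubit case, where one simply collects all twelve non-stabilizers, here there are exponentially many Pauli strings, so one cannot brute-force the state and must instead exploit the coset and linear-algebraic structure of the self-reduced distribution to pin down the single relevant bit from polynomially many samples.
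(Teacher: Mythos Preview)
Your proposal correctly identifies the framework and, crucially, the real difficulty: Kilian randomization over the $\CNOT$ group cannot be done in $\AC^{0}$ because even multiplying or inverting elements of $\GL(m,2)$ is already $\parityL$-complete. But the proposal does not actually solve this; ``concatenating short random $\CNOT$ subcircuits'' still leaves you needing to compute $h_{i-1}^{-1} g_i h_i$, which is exactly the hard operation, and your suggestion to ``instantiate \Cref{thm:symmetrize} with a subgroup $H$ for which this is possible'' is a restatement of the problem rather than a solution.

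The paper's fix is specific and not suggested by your outline. First, it replaces the ``single matrix entry'' target with the promise problem of \Cref{thm:CNOT_mult_hardness}: the $\CNOT$ product is either the identity or a $3$-cycle on the first three qubits. Second, it does Kilian randomization not over $\CNOT$ circuits but over the \emph{abelian} normal subgroup $H_m = \langle \CZ, \RZ \rangle_m / \Paulis_m$; because $H_m$ is abelian and conjugation of $\CZ,\RZ$ by a single $\CNOT$ is local (\Cref{fig:commutation_relations_CNOT_Rz_CZ}), the pass to canonical form is genuinely in $\AC^{0}$. Third, since $H_m$ fixes $\ket{+}^{\otimes m}$ pointwise up to Paulis, randomizing alone yields nothing; the paper instead feeds the oracle the palindrome $g_1,\ldots,g_n,f,g_n,\ldots,g_1$ for a chosen $f \in H_3^{\oplus}$, so the product collapses to $\pi f \pi^{-1}$ and the sampled distribution becomes $(\pi f \pi^{-1})\bullet \mathcal D$ with $\mathcal D$ independent of $f$ (\Cref{thm:parityL_more_randomness}). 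One then varies $f$ and compares distributions to distinguish $\pi = I$ from $\pi = C_3$ via the case analysis of \Cref{thm:detect_permutation}. Relatedly, the contextuality gadget must be the magic \emph{pentagram} on three qubits (\Cref{thm:nonstab_from_pentagram}), not a square: every two-qubit magic square contains a $\pfont{Z}$-only Pauli that is fixed under $H_m$-conjugation, so an adversary could make the learned non-stabilizer uninformative. None of these ingredients---the $H_m$ subgroup, the palindrome trick, or the pentagram---appears in your proposal, and without them step four does not go through.
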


Once again, this implies strong complexity theoretic evidence that certain highly-parallel and low-depth classical models of computation cannot solve this Clifford simulation problem.  For example, if there is an $\cL$ machine which computes $\oracle$, then there is also an $\cL$ machine which computes $\coracle$, implying $\parityL \subseteq (\BPAC^{0})^{\cL}$. Since $\AC^{0}$ circuits can be computed in $\cL$, the entire right-hand side is in $\BPL$. Giving both classes polynomial advice and using a result of Bennett and Gill \cite{bg} that $\BPL/\poly = \cL/\poly$, we get $\parityL/\poly \subseteq \cL/\poly$. Assuming $\parityL/\poly \not\subseteq \cL/\poly$ (a plausible complexity conjecture), this implies there is no $\cL$ solution to the problem. 

The proof of $\parityL$-hardness in \Cref{thm:parityL_main_result} follows the same general outline as the proof of $\NC^1$-hardness in \Cref{thm:nc1_main_result}.  We show that outputs from a conditional distribution of a $\QNC^0$ circuit can be combined by a logarithmic-space machine to solve a $\parityL$-hard problem.  The $\QNC^0$ circuit will be nothing more than a straightforward implementation of measurement-based computation of some Clifford computation. However, because we must choose a larger and more computationally challenging group of Clifford operations to implement, we will need a few key new ideas, especially with respect to the randomization procedure.

\subsection{Formal statement of problem}

The starting point for \Cref{thm:parityL_main_result} is the problem of computing the product of matrices $A_1, \ldots, A_n \in \GL(m,2)$, which is a well-known $\parityL$-complete problem for $m = \poly(n)$ (see e.g., \cite{damm:1990}).  The problem is known to remain $\parityL$-hard when each matrix represents some $\CNOT(i,j)$ gate between bits $i, j \in [m]$ (see e.g., \Cref{lem:cnotmult_reduction} in \Cref{sec:parityL_complete_problems}).  In fact, we will need yet another modification to this problem for the purpose of randomization.

To see this, let us recall the randomization procedure required for the $\NC^1$-hardness result and show why it is insufficient here: given a product of group elements $g_1 \cdots g_n \in \mathcal C_2$, we construct the product $(h_0 g_1 h_1)(h_1^{-1} g_2 h_2) \ldots (h_{n-1}^{-1}g_n)$ with random $h_i \in \mathcal C_2$.  To randomize this product for $\parityL$-hardness, each $h_i$ term would be an arbitrary element of $\GL(m,2)$.  There are two main issues:
\begin{enumerate}[itemsep = 0pt]
\item Local measurement statistics such as those obtained from the magic square game are insufficient to reconstruct the highly entangled state $(h_0 g_1 h_1)(h_1^{-1} g_2 h_2) \ldots (h_{n-1}^{-1} g_n) \ket{+}^{\otimes m}$.
\item Inverting a matrix $A \in \GL(m,2)$ is $\parityL$-complete, so even implementing the randomization seems to make the reduction too powerful.  The simpler procedure of generating a random pair of matrices $A, A^{-1} \in \GL(m,2)$ is also not known to be in any class below $\parityL$ to the authors' knowledge.
\end{enumerate}

Because of these issues, we choose a different approach---namely, we perform the randomization with a group that is normalized by $\GL(m,2)$. Let us now carefully define the groups we will be concerned with in this section.  The largest group we will need is the group $\langle \CNOT, \CZ, \RZ \rangle_m$, which are those transformations generated by $\CNOT$, $\CZ$, and $\RZ$ on $m$-qubits.  More formally, the group contains those operations obtained by composing finitely many transformations on $m$ qubits from the set 
$$
\{\CNOT(i,j) : i\neq j \in [m] \} \cup \{ \CZ(i,j) : i\neq j \in [m]\} \cup \{ \RZ(i) : i \in [m] \}
$$ 
where $\CNOT(i,j)$ denotes the $\CNOT$ gate from qubit $i$ to qubit $j$, $\CZ(i,j)$ denotes a controlled-$Z$ gate from qubit $i$ to qubit $j$, and $\RZ(i)$ denotes a $\RZ$ gate on qubit $i$.  We will only be concerned with transformations modulo the Pauli operations.  For this reason, we define the group $G_m := \langle \CNOT, \CZ, \RZ \rangle_m / \Paulis_m$.  Notationally, when we refer to an element $g \in G_m$, we are referring to an actual transformation generated by $\CNOT$, $\CZ$, and $\RZ$.   However, when discussing equality,\footnote{These equivalence classes are well defined since $G_m$ normalizes $\Paulis_m$.} we say that $g = h$ for some $h \in G_m$ if the coset $g \Paulis_m$ equals the coset $h \Paulis_m$.

Let $H_m$ be the normal subgroup\footnote{Formally, it may be easier to verify that $\Paulis_m \trianglelefteq \langle \CZ, \RZ \rangle_m \trianglelefteq \langle \CNOT, \CZ, \RZ \rangle_m$.  Then, by the third isomorphism theorem, we have that $H_m \trianglelefteq G_m$ and
$$
\frac{G_m}{H_m} = \frac{\langle \CNOT, \CZ, \RZ \rangle_m / \Paulis_m}{\langle \CZ, \RZ \rangle_m / \Paulis_m} \cong \frac{\langle \CNOT, \CZ, \RZ \rangle_m}{\langle \CZ, \RZ \rangle_m}.
$$
} of $G_m$ consisting of those transformations generated by $\CZ$ and $\RZ$ only.  That is, $H_m := \langle \CZ, \RZ \rangle_m / \Paulis_m$.  The fact that $H_m$ is a normal subgroup of $G_m$ can be checked by straightforward calculation using the identities in \Cref{fig:commutation_relations_CNOT_Rz_CZ}.  

Another useful property of $H_m$ is that it is abelian, so an arbitrary element of $H_m$ can be represented by a layer of $\RZ$ gates followed by a layer of $\CZ$ gates, where no $\CZ(i,j)$ gate appears more than once.  In fact, since $\RZ^2 = \pfont{Z} \equiv \pfont{I}$ modulo Pauli operations, we can assume that there is at most one $\RZ$ gate per qubit as well.  Thus, for each $h \in H_m$, we associate a symmetric $m \times m$ binary matrix $A$ such that
$$
h = \prod_{i < j} \CZ(i,j)^{A_{i,j}} \prod_i \RZ(i)^{A_{i,i}}
$$
where the equality is modulo Pauli operations.  We will call this representation its \emph{canonical decomposition}.

Notationally, we will be somewhat sloppy when writing out products of elements in $G_m$ and $H_m$.  We write $g h$ for $g \in G_m$ and $h \in H_m$ to refer to the product of $g$ and $h$ in the group $G_m$. We will write $\CNOT_m$ for the set consisting of single $\CNOT$ gates on $m$ qubits (and also the identity)---i.e., $\CNOT_m = \{ \CNOT(i,j) : i,j \in [m] \text{ and } i \neq j \} \cup \{I_m\}$.

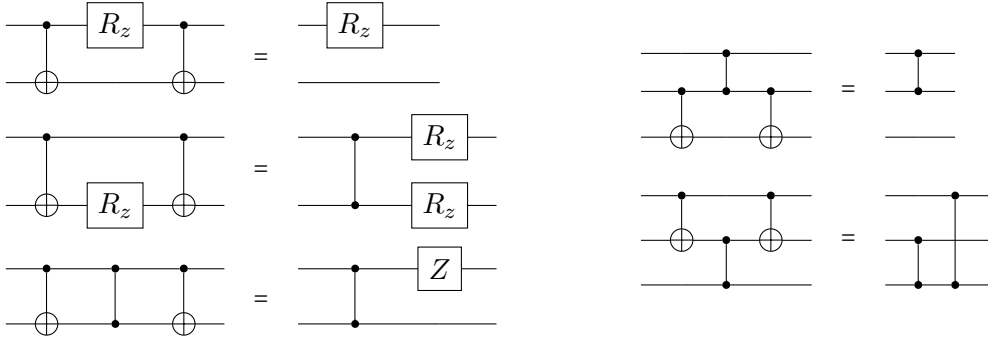
\begin{figure}
\begin{center}
\begin{quantikz}[row sep=1em, column sep=1em, thin lines]
& \ctrl{2} & \gate{\RZ} & \ctrl{2} & \qw &&& \gate{\RZ} & \qw
\\[-0.3cm] &&&&&=&\\[-0.3cm]
& \targ{} & \qw & \targ{} & \qw &&& \qw & \qw
\\[-.5cm] \\
& \ctrl{2} & \qw & \ctrl{2} & \qw &&& \ctrl{2} & \gate{\RZ} & \qw
\\[-0.3cm] &&&&&=&\\[-0.3cm]
& \targ{} & \gate{\RZ} & \targ{} & \qw &&& \ctrl{} & \gate{\RZ}& \qw
\\[-.5cm] \\
& \ctrl{2} & \ctrl{2} & \ctrl{2} & \qw &&& \ctrl{2} & \gate{Z} & \qw
\\[-0.3cm] &&&&&=&\\[-0.3cm]
& \targ{} & \ctrl{} & \targ{} & \qw &&& \ctrl{} & \qw & \qw
\end{quantikz}
\hspace{40pt}
\begin{quantikz}[row sep=1em, column sep=1em, thin lines]
& \qw & \ctrl{1} & \qw & \qw &&& \ctrl{1} & \qw \\
& \ctrl{1} & \ctrl{} & \ctrl{1} & \qw &=&& \ctrl{} & \qw \\
& \targ{} & \qw & \targ{} & \qw &&& \qw & \qw
\\[-.2cm] \\
& \ctrl{1} & \qw & \ctrl{1} & \qw &&& \qw & \ctrl{2} & \qw \\
& \targ{} & \ctrl{1} & \targ{} & \qw &=&& \ctrl{1} & \qw & \qw \\
& \qw & \ctrl{} & \qw & \qw &&& \ctrl{} & \ctrl{} & \qw
\end{quantikz}
\end{center}
\caption{Conjugating $\CZ$ and $\RZ$ by $\CNOT$.}
\label{fig:commutation_relations_CNOT_Rz_CZ}
\end{figure}

Finally, we base our problem on the hardness of evaluating CNOT circuits.  This is captured in the following theorem, the proof of which appears in \Cref{sec:parityL_complete_problems}.

\begin{reptheorem}{thm:CNOT_mult_hardness}
Given $g_1, \ldots, g_n \in \CNOT_m$ and promised that  $g_1 \cdots g_n$ is either a cycle on the first three qubits ($C_3$) or the identity transformation ($I$), it is $\parityL$-hard to decide which.  
\end{reptheorem}

\subsection{Tomography and the Magic Pentagram Game}

We would like to distinguish the cycle from the identity transformation by applying the permutation to a state. The $\ket{+}^{\otimes m}$ state is fixed by both permutations, so we introduce some $h \in H_m$ to change the state in such a way that after applying the permutation, the two possibilities may be distinguished. For example, if $h = \RZ(1)$ then we are attempting to distinguish the states $h_1 \ket{+}^{\otimes m}$ and $h_2 \ket{+}^{\otimes m}$ where $h_1 = \RZ(1)$ and $h_2 = \RZ(2)$.  Since the resulting states are both product states, it suffices to learn the state of the first two qubits---that is, are the stabilizer generators $\{ \pfont{XI}, \pfont{IY} \}$ or $\{ \pfont{YI}, \pfont{IX} \}$?

One might then be tempted to try to repeat the magic square measurements from the previous section to eventually learn this state.  Suppose, however, that the magic square measurements always reveal some element from the Pauli line $\{ \pfont{ZI}, \pfont{IZ}, \pfont{ZZ} \}$.  Since each element of the Pauli line is a non-stabilizer of both possible states, we learn nothing about our state.  On the other hand, we gain nothing from our randomization procedure since the Pauli line is fixed under conjugation by $\CZ$ and $\RZ$ gates.  Finally, every magic square game must contain at least one of \pfont{ZI}, \pfont{IZ}, or \pfont{ZZ}, ensuring that every magic square may return a result which is useless for distinguishing our two states.

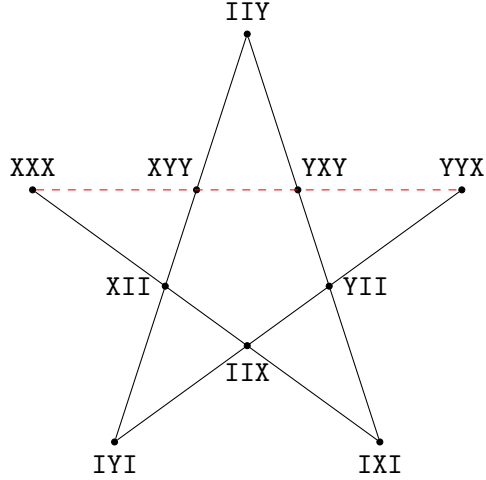
\begin{figure}
\centering
%!TEX root = ../samp_clifford.tex

\begin{tikzpicture}[dot/.style={circle, fill=black, draw, inner sep=.8pt}]
\pgfmathsetmacro{\radius}{3}

    \node[dot] at ({\radius * cos((pi/2 + 0*2*pi/5) r)}, {\radius * sin((pi/2 + 0*2*pi/5) r)}) (1) [label={\pfont{IIY}}] {};
    \node[dot] at ({\radius * cos((pi/2 + 1*2*pi/5) r)}, {\radius * sin((pi/2 + 1*2*pi/5) r)}) (2) [label={\pfont{XXX}}] {};
    \node[dot] at ({\radius * cos((pi/2 + 2*2*pi/5) r)}, {\radius * sin((pi/2 + 2*2*pi/5) r)}) (3) [label=below:{\pfont{IYI}}] {};
    \node[dot] at ({\radius * cos((pi/2 + 3*2*pi/5) r)}, {\radius * sin((pi/2 + 3*2*pi/5) r)}) (4) [label=below:{\pfont{IXI}}] {};
    \node[dot] at ({\radius * cos((pi/2 + 4*2*pi/5) r)}, {\radius * sin((pi/2 + 4*2*pi/5) r)}) (5) [label={\pfont{YYX}}] {};
    
    \node[dot] at (intersection of 1--3 and 2--5) [label={[above, xshift=-10pt]\pfont{XYY}}] {};
    \node[dot] at (intersection of 1--3 and 2--4) [label=left:{\pfont{XII}}] {};
    \node[dot] at (intersection of 1--4 and 2--5) [label={[above, xshift=10pt]\pfont{YXY}}] {};
    \node[dot] at (intersection of 1--4 and 3--5) [label=right:{\pfont{YII}}] {};
    \node[dot] at (intersection of 2--4 and 3--5) [label={[below, yshift=-4pt]\pfont{IIX}}] {};

    \draw (2) -- (4) -- (1) -- (3) -- (5);
    \draw[dashed, red] (2) -- (5);
\end{tikzpicture}
\caption{Magic pentagram.}
\label{fig:magic_pentagram}
\end{figure}

This argument generalizes, and for this reason, we switch to a different set of contextual measurements shown in \Cref{fig:magic_pentagram} known as the magic pentagram \cite{mermin_magic_square}.  Each vertex in the magic pentagram is labeled by a Pauli operator in the set 
$$
\pentagram := \{\pfont{YII}, \pfont{IYI}, \pfont{IIY}, \pfont{XII}, \pfont{IXI}, \pfont{IIX}, \pfont{YYX}, \pfont{YXY}, \pfont{XYY}, \pfont{XXX} \}.
$$
Each line consists of 4 commuting operators which multiply to the identity, except for the dashed red line which multiplies to minus identity.  One can easily verify that there is no $\pm 1$ assignment to the vertices that multiply to 1 along the solid black lines and to $-1$ along the dashed red line.  Thus, we have the following:
\begin{theorem}
\label{thm:nonstab_from_pentagram}
There is a procedure to make five measurements (on five copies) of an unknown three-qubit quantum state $\ket{\psi}$ and learn, with certainty, some Pauli string which does not stabilize $\ket{\psi}$. Furthermore, the Pauli string is in the set $\pentagram$.
\end{theorem}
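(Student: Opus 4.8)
The plan is to mirror the proof of \Cref{thm:nonstab_from_magic}, replacing the $3\times 3$ magic-square parity argument with the one for the pentagram. First I would note that each of the five lines of \Cref{fig:magic_pentagram} (the four solid black lines and the dashed red line) consists of four pairwise-commuting three-qubit Paulis whose product is $+\pfont{III}$ for a black line and $-\pfont{III}$ for the red line, and that any three of the four operators on a line are independent, so each line spans a rank-$3$ isotropic subgroup and hence defines a complete projective measurement on three qubits. Concretely, for a line $\{P_1,P_2,P_3,P_4\}$ with $P_1P_2P_3P_4 = \pm\pfont{III}$, apply a Clifford operation sending $P_1 \mapsto \pfont{ZII}$, $P_2 \mapsto \pfont{IZI}$, $P_3 \mapsto \pfont{IIZ}$ and measure all three qubits in the computational basis; the outcome for $P_4$ is inferred as $\pm 1$ times the product of the other three. (As in \Cref{lem:magic_single_only} there is a particularly clean realization using a GHZ-type basis, but we do not need it here.) Doing this on a fresh copy of $\ket{\psi}$ for each of the five lines produces a $\pm 1$ outcome for every vertex of the pentagram, and each vertex is measured exactly twice, once in each of the two lines through it.

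Next I would invoke non-contextuality. Suppose for contradiction that every $P \in \pentagram$ received the same outcome in both lines containing it; call this common value $\nu(P) \in \{+1,-1\}$. Within a single line measurement the four reported outcomes automatically multiply to the sign of the product of the four Paulis on that line (this is forced by quantum mechanics, regardless of how adversarially the classical simulator answers), so $\nu$ would satisfy all five line constraints: the $\nu$-product along each solid line is $+1$ and along the dashed line is $-1$. Multiplying the five constraints gives $(+1)^4(-1) = -1$; but since each vertex lies on exactly two lines, the same product equals $\prod_{P \in \pentagram}\nu(P)^2 = 1$, a contradiction. Hence some $P \in \pentagram$ has two disagreeing measured outcomes.

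Finally, any $P$ with $\pm P \in \mathrm{Stab}_{\ket{\psi}}$ has a deterministic outcome and therefore must measure consistently in every context, so the $P$ found above does not stabilize $\ket{\psi}$; and by construction $P \in \pentagram$. This gives exactly the claimed procedure: five measurements on five copies, certain to output a non-stabilizer in $\pentagram$. The only point needing care --- and the closest thing to an obstacle --- is verifying that each pentagram line is genuinely a valid simultaneous Pauli measurement (four commuting operators, product $\pm\pfont{III}$, generating a rank-$3$ stabilizer group) and that a faithful quantum device's outcomes on it are internally consistent, so that the \emph{only} source of contradiction is the cross-line disagreement at a shared vertex. Everything else is the magic-square counting argument transplanted to a configuration whose single "$-1$" line, unlike a column of the magic square, cannot be dodged by conjugating with elements of $\langle \CZ,\RZ\rangle_m$, which is precisely why the pentagram is the right object for the $\parityL$ reduction.
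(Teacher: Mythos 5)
Your proposal is correct and is essentially the argument the paper has in mind: the paper omits the proof as ``virtually identical'' to \Cref{thm:nonstab_from_magic}, and you reproduce exactly that contextuality argument, with the two-tables parity count of the magic square replaced by the natural pentagram version (each vertex lies on exactly two lines, so a context-independent assignment would force $(+1)^4(-1)=\prod_P \nu(P)^2=+1$, a contradiction). Your added checks that each line is a rank-$3$ set of commuting Paulis implementable as a single Clifford-plus-$Z$-basis measurement are consistent with how the paper later realizes these measurements in the $\parityL$ reduction.
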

The proof is virtually identical to that of \Cref{thm:nonstab_from_magic}, so we omit it.  

\subsection{Randomization and Self-Reducibility}

In this section, we will slightly abuse notation and let $H_3 \le H_m$ be the subgroup of $H_m$ of $\CZ$ and $\RZ$ gates on the first three qubits.  Let $H_3^{\oplus} < H_3$ be the subgroup of $H_3$ that only contain those transformations with an even number of $\RZ$ and $\CZ$ gates.  Let $S := H_3^{\oplus} \bullet \pentagram = \{ h \bullet P  : h \in H_3^{\oplus} \text{ and } P \in \pentagram \}$ represent the set of possible Pauli strings obtained by measuring the first three qubits using magic pentagram measurements (conjugated by some element in $H_3^{\oplus}$).

Suppose we have some randomized algorithm $A$ which takes $g_1, \ldots, g_n \in G_m$ as input, and outputs a Pauli string $P \in S$ such that $P$ does not stabilize $\ket{\psi} = g_1 \cdots g_n \ket{+}^{\otimes m}$.  Let algorithm $B$ be obtained from \Cref{thm:symmetrize} where $F = H_3^{\oplus}$, $H = H_m$, $G = G_m$, and $A$ is the algorithm defined above.  That is, on input $g_1, \ldots, g_m \in G_m$, Algorithm $B$ returns a Pauli $(g_1 \cdots g_n) \bullet P$, where $P \in \Paulis_m$ is drawn from a distribution of Pauli strings which do not stabilize $\ket{+}^m$.  

We would now like to use algorithm $B$ to solve the $\CNOT$ multiplication problem (see \Cref{thm:CNOT_mult_hardness}). Recall that the product of $\CNOT$ gates is either the three-cycle or the identity.  Since any Pauli $P$ which non-stabilizes $\ket{+}^{\otimes 3}$ also non-stabilizes $C_3 \ket{+}^{\otimes 3}$, the output of algorithm $B$ is not very meaningful.  On the other hand, in the following theorem, we will show how an $\cL$ machine can smuggle an element of $H_3^{\oplus}$ into the product of the $\CNOT$ gates.  Since the two permutations act differently on elements of $H_3^{\oplus}$, the $\cL$ machine will be able to extract meaningful statistics.

\begin{theorem}
\label{thm:parityL_more_randomness}
Let $g_1, \ldots, g_n \in \CNOT_m$ for $m \ge 3$ be such that $g_1 \cdots g_n = \pi$ is some permutation on the first three qubits.  Let $f \in H_3^{\oplus}$.  Define Algorithm $C$ as below.
\begin{algorithmic}
	\Function{$C$}{$f, g_1, \ldots, g_n$}
 	\State{\Return $B(g_1, \ldots, g_n, f, g_n, \ldots, g_1)$}
	\EndFunction
\end{algorithmic}
Then the output distribution of $C(f, g_1, \ldots, g_n)$ is $(\pi f \pi^{-1}) \bullet \mathcal D(g_1 H_m, \ldots, g_n H_m)$ where the distribution $\mathcal D(g_1 H_m, \ldots, g_n H_m)$ is the average of $(g_1' \cdots g_{2n}')^{-1} \bullet A(g_1', \ldots, g_{2n}')$ over all $g_1', \ldots, g_{2n}'$ such that $g_i' H_m = g_i H_m$ and $g_{n+i}' H_m= g_{n-i+1} H_m$ for all $i \in [n]$ and $g_1' \cdots g_{2n}' \in H_3^{\oplus}$.
\end{theorem}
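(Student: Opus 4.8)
\emph{Proof approach.} The plan is to present this as a bookkeeping corollary of \Cref{thm:symmetrize}, so the real content is checking that the list of group elements Algorithm $C$ feeds to $B$ has the right structure. First I would record the algebraic fact that makes the ``forward then reversed'' sandwich work: every element of $\CNOT_m$ --- each $\CNOT(i,j)$, and also $I_m$ --- is an involution in $G_m$, since applying a $\CNOT$ twice restores its target. Hence $g_n g_{n-1} \cdots g_1 = (g_1 g_2 \cdots g_n)^{-1} = \pi^{-1}$, and the product of the list $(g_1, \ldots, g_n, f, g_n, \ldots, g_1)$ passed to $B$ is $g_1 \cdots g_n \cdot f \cdot (g_1 \cdots g_n)^{-1} = \pi f \pi^{-1}$, exactly the conjugation appearing in the statement.

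Second I would verify that $\pi$ normalizes $F := H_3^{\oplus}$. Conjugating a generator $\CZ(i,j)$ or $\RZ(i)$ with $i,j \in \{1,2,3\}$ by the qubit permutation $\pi$ merely relabels it as $\CZ(\pi(i),\pi(j))$ or $\RZ(\pi(i))$, so conjugation by $\pi$ permutes the $\CZ$ and $\RZ$ generators on qubits $\{1,2,3\}$ among themselves and preserves the parity of the number of such gates in the canonical decomposition of an element of $H_3$. Therefore $\pi f \pi^{-1} \in H_3^{\oplus}$ (here I also use the hypothesis $f \in H_3^{\oplus}$), and the coset $F \cdot (\pi f \pi^{-1})$ that would otherwise appear as the leading parameter of the output distribution of \Cref{thm:symmetrize} collapses to the constant $F = H_3^{\oplus}$. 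En route one checks that $H_3^{\oplus}$ is genuinely a subgroup --- even-weight symmetric binary matrices are closed under XOR.

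With these two observations in hand, I would apply \Cref{thm:symmetrize} with $G = G_m$, $H = H_m$ (normal in $G_m$ by hypothesis), and $F = H_3^{\oplus} \le H_3 \le H_m$, so the required chain $F \le H \normalin G$ holds, to the input list $(g_1, \ldots, g_n, f, g_n, \ldots, g_1)$. Since $B$ is by construction the algorithm that \Cref{thm:symmetrize} builds from $A$ for these parameters, its output on this list is $(\pi f \pi^{-1}) \bullet \mathcal{D}\big(F \cdot \pi f \pi^{-1},\, g_1 H_m, \ldots, g_n H_m,\, f H_m,\, g_n H_m, \ldots, g_1 H_m\big)$. I would then simplify the parameters of $\mathcal{D}$: the leading one is $H_3^{\oplus}$ by the previous step; the middle one is $f H_m = H_m$, the identity coset, since $f \in H_3^{\oplus} \subseteq H_m$; and the trailing block $g_n H_m, \ldots, g_1 H_m$ is a fixed function of $g_1 H_m, \ldots, g_n H_m$. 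So $\mathcal{D}$ depends only on $g_1 H_m, \ldots, g_n H_m$, and unwinding its definition from \Cref{thm:symmetrize} --- averaging $(g_1' \cdots g_{2n}')^{-1} \bullet A(g_1', \ldots, g_{2n}')$ over all randomized lists respecting the prescribed $H_m$-cosets and with product forced into $H_3^{\oplus}$ (the randomized copy of $f$ stays in the identity coset and is absorbed into the indexing) --- gives precisely $(\pi f \pi^{-1}) \bullet \mathcal{D}(g_1 H_m, \ldots, g_n H_m)$ as claimed.

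I expect the only real subtlety to be the normalization check in the second step: it is what keeps us inside $H_3^{\oplus}$ rather than drifting into all of $H_3$, and it is the reason the statement can advertise an $f$-independent distribution $\mathcal{D}$. Everything else --- the involution identity, the coset simplifications, and the appeal to \Cref{thm:symmetrize} --- is routine once that is in place.
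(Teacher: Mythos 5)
Your proposal is correct and follows essentially the same route as the paper: invoke \Cref{thm:symmetrize} on the list $(g_1,\ldots,g_n,f,g_n,\ldots,g_1)$, use that every element of $\CNOT_m$ is an involution so the product is $\pi f\pi^{-1}$, and observe that $H_3^{\oplus}\pi f\pi^{-1}=H_3^{\oplus}$ so the resulting distribution is independent of $f$. The only difference is that you spell out details the paper leaves implicit (that $\pi$ normalizes $H_3^{\oplus}$ by relabeling the $\CZ$/$\RZ$ generators, and that the middle factor's coset $fH_m=H_m$ is absorbed), which is a harmless elaboration rather than a different argument.
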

\begin{proof}
By \Cref{thm:symmetrize}, we have that $C$ will sample from the distribution
$$
(g_1 \cdots g_n f g_n \cdots g_1) \bullet \mathcal D(H_3^{\oplus}g_1 \cdots g_n g_n \cdots g_1, g_1 H_m, \ldots, g_n H_m, g_n H_m, \ldots, g_1 H_m)
$$
where $\mathcal{D}(\cdots)$ is the average of $(g_1', \ldots g_{2n}')^{-1} \bullet A(g_1', \ldots, g_{2n}')$ over all $g_1', \ldots, g_{2n}'$ such that  $g_1' \cdots g_{2n}' \in H_3^{\oplus} g_1 \cdots g_n f g_n \cdots g_1$ and $g_i' H_m = g_i H_m$ and $g_{n+i}' H_m = g_{n-i+1} H_m$ for all $i \in \{1, \ldots, n\}$.  Observe that $g = g^{-1}$ for all $g \in \CNOT_m$, so
$$
H_3^{\oplus} g_1 \cdots g_n f g_n \cdots g_1 = H_3^{\oplus} g_1 \cdots g_n f (g_1^{-1} \cdots g_n^{-1})^{-1} = H_3^{\oplus} \pi f \pi^{-1} = H_3^{\oplus}
$$
for any $f \in H_3^{\oplus}$.  That is, for fixed $g_1, \ldots, g_n$, Algorithm $C$ samples from the same distribution $(\pi f \pi^{-1}) \bullet \mathcal D$ for some fixed distribution $\mathcal D$ \emph{regardless of the choice of $f \in H_3^{\oplus}$}.  
\end{proof}

\begin{theorem}
\label{thm:detect_permutation}
Let $A$ be a randomized algorithm which takes $g_1, \ldots, g_n \in G_m$ as input, and outputs a Pauli string $P \in S$ such that $P$ does not stabilize $g_1 \cdots g_n \ket{+}^{\otimes m}$. Then, given $g_1, \ldots, g_n$, there exists a randomized algorithm in $(\BPAC^{0})^{A}$ which computes $g_1 \cdots g_n$, promised that it is either the three-cycle on the first three qubits or the identity transformation.
\end{theorem}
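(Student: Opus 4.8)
The plan is to turn the given algorithm $A$ into an oracle we can probe at many ``points''. Combining $A$ with the randomization of \Cref{thm:symmetrize} (with $F=H_3^{\oplus}$, $H=H_m$, $G=G_m$) gives Algorithm $B$, and then \Cref{thm:parityL_more_randomness} gives Algorithm $C$ which, on input $f\in H_3^{\oplus}$ together with the fixed $g_1,\dots,g_n$, outputs a sample from $(\pi f\pi^{-1})\bullet\mathcal D$, where $\pi=g_1\cdots g_n\in\{I,C_3\}$ is the permutation we must identify and $\mathcal D$ is a distribution on Pauli strings, supported on non-stabilizers of $\ket{+}^{\otimes 3}$ inside $S=H_3^{\oplus}\bullet\pentagram$, that depends only on the cosets $g_iH_m$ and hence is \emph{fixed} once the input is fixed. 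The key observation is that $f\mapsto C_3fC_3^{-1}$ is exactly the order-three automorphism $\sigma$ of $H_3^{\oplus}$ obtained by cyclically permuting the three qubits, so querying $C$ realizes the map $f\mapsto f\bullet\mathcal D$ when $\pi=I$, and the map $f\mapsto\sigma(f)\bullet\mathcal D$ when $\pi=C_3$.

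First I would estimate the relevant distributions. Since $I\in H_3^{\oplus}$ and $\pi I\pi^{-1}=I$ regardless of $\pi$, querying $C$ with $f=I$ samples $\mathcal D$ itself; for each of the (constantly many) $f\in H_3^{\oplus}$, querying $C$ with that $f$ samples $(\pi f\pi^{-1})\bullet\mathcal D$. With polynomially many oracle calls and approximate counting (computable in $\AC^{0}$) I obtain estimates $\widehat{\mathcal D}$ and $\widehat{\mathcal D}_f$ of these distributions over the constant-size set $S$, accurate to any fixed constant in total variation with high probability. The reduction then checks the relation predicted by the hypothesis ``$\pi=I$'': if $\widehat{\mathcal D}_f$ is close to $f\bullet\widehat{\mathcal D}$ for every $f\in H_3^{\oplus}$ it outputs $\pi=I$, and otherwise it outputs $\pi=C_3$. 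Building the randomized gate lists that $C$ feeds to $A$ stays within $\AC^{0}$: the Kilian randomization over $H_m$ only requires sampling symmetric $\F_2$-matrices, XORing them, and conjugating $\CZ/\RZ$ layers by single $\CNOT$ gates (a local operation by \Cref{fig:commutation_relations_CNOT_Rz_CZ}), after which each $g_i'$ is a $\CZ/\RZ$ layer followed by at most one $\CNOT$ and is handed to the Clifford-simulation oracle as such. Hence the whole procedure lies in $(\BPAC^{0})^{A}$.

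It remains to prove soundness: when $\pi=C_3$, some $f\in H_3^{\oplus}$ must satisfy $\sigma(f)\bullet\mathcal D\neq f\bullet\mathcal D$. Since $H_3^{\oplus}$ is abelian, $f\mapsto f^{-1}\sigma(f)$ is a homomorphism whose image is the subgroup $K:=\operatorname{Im}(\mathrm{id}-\sigma)\le H_3^{\oplus}$, and $\sigma(f)\bullet\mathcal D=f\bullet\mathcal D$ for all $f$ precisely when $\mathcal D$ is invariant under conjugation by every element of $K$. A $K$-invariant distribution is constant on $K$-conjugation orbits, hence supported on a union of such orbits; but $\operatorname{supp}(\mathcal D)$ lies in the set of non-stabilizers of $\ket{+}^{\otimes 3}$ inside $S$, and a finite case check shows this set contains no complete $K$-orbit: every Pauli in $H_3^{\oplus}\bullet\pentagram$ is either itself a stabilizer of $\ket{+}^{\otimes 3}$ (for example $\pfont{XII},\pfont{IXI},\pfont{IIX},\pfont{XXX}$) or is carried to such a stabilizer by some $k\in K$ (for instance $\RZ(1)\RZ(2)\in K$ sends $\pfont{YYX}\mapsto\pfont{XXX}$ and $\pfont{YII}\mapsto\pfont{XII}$). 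Thus no achievable $\mathcal D$ is $K$-invariant; and since the achievable $\mathcal D$'s form a compact set on which the continuous function $\mathcal D\mapsto\max_{f}\lVert f\bullet\mathcal D-\sigma(f)\bullet\mathcal D\rVert_{\mathrm{TV}}$ is strictly positive, it is bounded below by a universal constant, so constant-accuracy estimates suffice and the test is correct with high probability. (If $C_3$ is a priori ambiguous between the two three-cycles, run the same test against both automorphisms $\sigma$ and $\sigma^{2}$.)

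The step I expect to be the main obstacle is exactly this soundness argument --- carrying out the finite orbit computation that shows the non-stabilizer part of $S=H_3^{\oplus}\bullet\pentagram$ contains no full $K$-orbit, so an adversarial $A$ cannot hide the permutation behind a $K$-invariant answer distribution (this is the analogue, at the level of the magic pentagram rather than the magic square, of the contextuality/self-reduction obstruction handled in the $\NC^1$ case). Everything else closely follows that template: the randomization and self-reduction machinery is already in place in \Cref{thm:symmetrize} and \Cref{thm:parityL_more_randomness}, and the remaining items --- approximate counting in $\AC^{0}$, translating gate lists into MBQC measurement patterns, and propagating estimation error --- are routine.
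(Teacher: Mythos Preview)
Your proposal is correct and shares the essential skeleton of the paper's proof: both build Algorithm $C$ from \Cref{thm:symmetrize} and \Cref{thm:parityL_more_randomness}, and both rest on the same combinatorial fact---that no distribution supported purely on non-stabilizers of $\ket{+}^{\otimes 3}$ inside $S$ can be invariant under $K=\{\sigma(f)f:f\in H_3^{\oplus}\}$, equivalently that every such non-stabilizer is carried to a stabilizer by some element of $K$. The paper records this fact explicitly as \Cref{table:pauli_support}, which is precisely the ``finite case check'' you defer; your examples $\RZ(1)\RZ(2)\bullet\pfont{YII}=\pfont{XII}$ and $\RZ(1)\RZ(2)\bullet\pfont{YYX}=\pfont{XXX}$ are two rows of that table.

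Where the two differ is in the algorithm and the quantitative analysis. The paper does not estimate whole distributions; it runs a two-phase test. First, sample a single $P$ from $\mathcal D$ by calling $C$ with $f=\pfont{III}$ (since $\pi\,\pfont{III}\,\pi^{-1}=\pfont{III}$ regardless of $\pi$). Second, look up in the table the particular $f$ for which $C_3fC_3^{-1}f\bullet P$ is a stabilizer, set $Q:=f\bullet P$, and repeatedly call $C$ with that $f$ to see whether $Q$ ever appears. If $\pi=I$ then $\mathcal D_{\pi,f}(Q)=\mathcal D(f\bullet Q)=\mathcal D(P)>0$ and $Q$ shows up within $O(\log n)$ trials; if $\pi=C_3$ then $\mathcal D_{\pi,f}(Q)=\mathcal D(C_3fC_3^{-1}f\bullet P)=0$ and $Q$ never appears. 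This gives one-sided error and in fact places the reduction in $\ZPAC^0$, which the paper notes. Your estimate-all-and-compare route is valid but yields only two-sided bounded error, and your compactness argument, while correct, is nonconstructive where the paper's test comes with the explicit gap $\mathcal D(P)$. Both suffice for the stated $(\BPAC^{0})^{A}$ conclusion; the paper's version is sharper and more direct.
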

\begin{proof}
Fix $g_1, \ldots, g_n$ such that $g_1 \cdots g_n = \pi$, and let Algorithm $C$ be obtained by composing \Cref{thm:symmetrize} and \Cref{thm:parityL_more_randomness}.  By construction, $C$ outputs a non-stabilizer Pauli $(\pi f \pi^{-1}) \bullet P \in S$ of $\pi f \pi^{-1} \ket{+}^{\otimes m}$ where $P \sim \mathcal D:= \mathcal D(g_1H_m, \ldots, g_nH_m)$.  In other words, $\mathcal D(P)$ is a probability distribution over non-stabilizers of $\ket{+}^{\otimes m}$, and $\mathcal D_{\pi,f} := (\pi f \pi^{-1}) \bullet \mathcal D$ is the conjugated probability distribution over non-stabilizers of $\pi f \pi^{-1} \ket{+}^{\otimes m}$.  The algorithm works in two steps:
\begin{enumerate}[itemsep = 0pt]
\item Query algorithm $C(\pfont{III}, g_1, \ldots, g_n)$ to sample Pauli $P$ that has non-zero support in $\mathcal D$.
\item Choose $f \in H_3^{\oplus}$ based on query results, and query $C(f, g_1, \ldots, g_n)$ to draw $O(\log n)$-many samples from $\mathcal D_{\pi, f}$.  This will reveal $\pi$ with overwhelming probability.
\end{enumerate}

Let us now show how we should choose $f$ in the second step.  In particular, we will show that there exists $f \in H_3^{\oplus}$ such that the distribution $\mathcal D_{I, f}$ is not equal to $\mathcal D_{C_3, f}$.  We will argue by contradiction:  if there is no such $f$, then the support of $\mathcal D$ is empty.  Our starting point will be the fact that $\mathcal D$ does not have any support on the stabilizer group of $\ket{+}^{\otimes 3}$, which generated by \pfont{XII}, \pfont{IXI}, and \pfont{IIX}.  That is, we know a priori that $\mathcal D(\pfont{XII}) = \mathcal D(\pfont{IXI}) = \mathcal D(\pfont{IIX}) = 0$.  

Suppose we have some $f \in H_3^{\oplus}$ which does not distinguish the two permutations.  To be concrete, let $f = \RZ \otimes \RZ \otimes \pfont{I}$.  We have $\mathcal D_{I, f}(\pfont{YII}) = \mathcal D(\pfont{XII}) = 0$ and  $\mathcal D_{C_3, f}(\pfont{YII}) = \mathcal D(\pfont{YII})$.  Since $f$ does not distinguish the two permutations, it must be that $ \mathcal D_{I,f}(\pfont{YII}) = \mathcal D_{C_3, f}(\pfont{YII})$, so
$$
0 = \mathcal D_{I,f}(\pfont{YII}) = \mathcal D_{C_3, f}(\pfont{YII}) = \mathcal D(\pfont{YII}).
$$
That is, $\mathcal D$ cannot have any support on Pauli $\pfont{YII}$. More generally, consider any Pauli $P \in S$.  If $f \in H_3^{\oplus}$ does not distinguish $\mathcal D_{I, f}$ from $\mathcal D_{C_3, f}$, then $\mathcal D_{I, f}(P) = \mathcal D_{C_3, f}(P)$ and so
$$
\mathcal D (f \bullet P) = \mathcal D_{I, f}(P) = \mathcal D_{C_3, f}(P) = \mathcal D (C_3 f C_3^{-1} \bullet P).
$$
In other words, $\mathcal D(P) = \mathcal D(C_3 f C_3^{-1} f \bullet P)$ for all $P \in S$ (recall that $f = f^{-1}$ for all $f \in H_m$).  If there is no distinguishing $f$, we can now systematically show that $\mathcal D(P) = 0$ for every $P \in S$.  More precisely, for every $P \in S$, there exists a Clifford $f \in H_3^{\oplus}$ such that $D(C_3 f C_3^{-1} f \bullet P)$ is a stabilizer of $\ket{+}^{\otimes 3}$.  We show the complete enumeration in \Cref{table:pauli_support}.  Therefore, there must be some distinguishing $f$.

To conclude, we simply observe that there must exist (possibly multiple) Pauli $P \in S$ such that $\mathcal D(P) > \frac{1}{|S|}$.  We sample such $P$ in the first step of the protocol.  Let $f \in H_3^{\oplus}$ be the Clifford from \Cref{table:pauli_support} such that $C_3 f C_3^{-1} f \bullet P$ only has Pauli $\pfont X$ terms.  Let $Q := f \bullet P$.  We claim that whether or not we see $Q$ in the second step of the protocol reveals the permutation $\pi$.  Suppose $\pi = I$, then
$$
\mathcal D_{\pi, f}(Q) = \mathcal D_{I, f}(Q) = \mathcal D(f \bullet Q) = \mathcal D(P),
$$
and after $O(\log n)$ queries to $\mathcal D_{\pi, f}$, we will see $Q$ with high probability.  On the other hand, if $\pi = C_3$, then
$$
\mathcal D_{\pi, f}(Q) = \mathcal D_{C_3, f}(Q) = \mathcal D(C_3 f C_3^{-1} \bullet Q) = \mathcal D(C_3 f C_3^{-1} f \bullet P) = 0,
$$
so we will never see $Q$ in the second step, so we conclude that $\pi = C_3$.\footnote{Notice that this gives us one-sided error.  Suppose our task is to identify whether or not the permutation is the identity.  On ``yes'' instances, we output ``yes'' with high probability, but on ``no'' instances, we always output ``no.''  This places the reduction in $\RPAC^0$.  It's not hard to see that it is also in $\co\RPAC^0$:  for any given input $g_1, \ldots, g_n \in \CNOT_m$, append the permutation $C_3^{-1}$.  Thus, ``no'' instances now multiply to the identity, and ``yes'' instances multiply to $C_3^{-1}$.  The reduction is the same (simply replacing $C_3$ with $C_3^{-1}$ in the arguments above), and so the reduction is in $\co\RPAC^0$ by inverting the answer.  Since the reduction is in both $\co\RPAC^0$ and $\RPAC^0$, it is in $\ZPAC^0$.}
\end{proof}

\begin{table}
\centering
\begin{tabular}{ l | c | c | c }
$P \in S$ & $f \in H_3^{\oplus}$ & $C_3 f C_3^{-1} f$ & $C_3 f C_3^{-1} f \bullet P$ \\ \hline
\pfont{YII} & $\RZ(1) \cdot \RZ(2) $ & $\RZ(1) \cdot \RZ(3)$ &  \pfont{XII} \\
\pfont{XZI} & $\CZ(1,3) \cdot \CZ(2,3)$ & $\CZ(1,2) \cdot \CZ(2,3)$ &  \pfont{XII} \\
\pfont{YZI} & $\CZ(1,3) \cdot \CZ(2,3) \cdot \RZ(1) \cdot \RZ(2)$ & $\CZ(1,2) \cdot \CZ(2,3) \cdot \RZ(1) \cdot \RZ(3)$ &  \pfont{XII} \\
\pfont{XZZ} & $\CZ(1,2) \cdot \CZ(2,3)$ & $\CZ(1,2) \cdot \CZ(1,3)$ &  \pfont{XII} \\
\pfont{YZZ} & $\CZ(1,2) \cdot \CZ(2,3)\cdot \RZ(1) \cdot \RZ(2)$ & $\CZ(1,2) \cdot \CZ(1,3)\cdot \RZ(1) \cdot \RZ(3)$ &  \pfont{XII} \\
\pfont{YYX} & $\RZ(2) \cdot \RZ(3)$ & $\RZ(1) \cdot \RZ(2)$ &  \pfont{XXX} \\
\end{tabular}
\caption{Operations $f \in H_3^{\oplus}$ which distinguish $\mathcal D_{I,f}$ from $\mathcal D_{C_3, f}$. We omit $P \in S$ that are equivalent up to permutation.}
\label{table:pauli_support}
\end{table}

\subsection{Initial state details}
\label{sec:initial state}

Our initial graph state $\ket{\mathcal G_{n,m}}$ will be arranged for the measurement-based computation of the product of $\CNOT$ gates $g_1, \ldots, g_n \in \CNOT_m$.  In fact, recall that due to the randomization procedure, we will actually need to compute the product $g_1 h_1 g_2 h_2 \cdots g_n h_n$ where $h_i \in H_m$.  For elements of $H_m$, we can apply the corresponding layer of $\RZ$ gates in constant depth by \Cref{thm:single_qubit_mbqc}, so we will omit these from the discussion below.

Let us focus now on the application of a single $\CNOT$ gate from qubit $i$ to $j$.  In \Cref{sec:two_qubit_mbqc}, we give a general procedure to implement any transformation in $\mathcal C_2$ on a $2 \times 16$ grid.  If we wished to apply the same construction in a circuit of $m$ qubits, then we would need to know $i$ and $j$ in advance.  That is, we want to construct some fixed graph such that there exists some measurement pattern that implements $\CNOT(i,j)$ for all possible values of $i$ and $j$.  Importantly, we would also like the graph to be embeddable in the grid.

The most straightforward way to ensure that the graph state is grid-like is to only apply local operations.  We accomplish this by swapping far-away qubits until they become adjacent.  For instance, any $\CNOT(i,j)$ gate can be applied by swapping the $i$th bit until it adjacent to bit $j$, applying the gate, and then reversing the swaps.
There are two downsides to this procedure:
\begin{itemize} [itemsep = 0pt]
\item An element of $H_m$ could consists of $\Omega(m^2)$ many $\CZ$ gates.  So if we extended this swap architecture to apply an element of $H_m$, we would needed $\Omega(m^2)$ copies, adding a nontrivial overhead to the entire procedure.
\item Which $\SWAP$ gates are applied depends on both $i$ and $j$.  This complicates the choice of measurement basis for some qubit in the measurement-based application of some $\CNOT(i,j)$ gate.  Ideally, this choice of basis is a simple local operation.
\end{itemize}

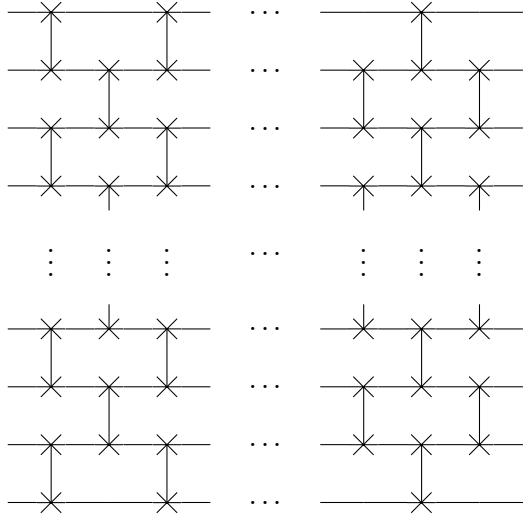
\begin{figure}
\centering
\scalebox{1}{
\begin{quantikz}[row sep=1em, column sep=1em, thin lines]
& \swap{1} & \qw  & \swap{1} & \qw & \ \ldots \ && \qw & \swap{1} & \qw & \qw \\
& \targX{} & \swap{1} & \targX{} & \qw & \ \ldots \ && \swap{1} & \targX{} & \swap{1} & \qw \\
& \swap{1} & \targX{} & \swap{1} & \qw & \ \ldots \ && \targX{} & \swap{1} & \targX{} & \qw \\
& \targX{} & \swap{1} & \targX{} & \qw & \ \ldots \ && \swap{1} & \targX{} & \swap{1} & \qw \\ [-.25cm]
&&&&&&&&& \\ [-.1cm]
& \ \vdots \ & \ \vdots \ & \ \vdots \ & & \ \ldots \ && \ \vdots \ & \ \vdots \ & \ \vdots \  \\ 
&&&&&&&&& \\ [-.25cm]
& \swap{1} & \swap{-1} & \swap{1} & \qw & \ \ldots \ && \swap{-1} & \swap{1} & \swap{-1} & \qw \\
& \targX{} & \targX{} & \targX{} & \qw & \ \ldots \ && \targX{} & \targX{} & \targX{} & \qw  \\
& \swap{1} & \swap{-1} & \swap{1} & \qw & \ \ldots \ && \swap{-1} & \swap{1} & \swap{-1} & \qw \\
& \targX{} & \qw & \targX{} & \qw & \ \ldots \ && \qw & \targX{} & \qw & \qw \\
\end{quantikz}}
\caption{Even-Odd sorting network: given any permutation, there is some subset of the SWAP gates above that implements that permutation.}
\label{fig:swap_networks}
\end{figure}

We solve both issues with the same construction.  The key idea will be to use a sorting network, which is a fixed network of SWAP gates on $m$ qubits that can be toggled to implement every permutation of those qubits.  That is, for every permutation, applying some subset of the SWAP gates in the network yields the permutation.  Remarkably, there are $O(m)$-depth sorting networks such that every SWAP gate is applied locally. 

We will use the even-odd sorting network \cite{habermann:1972_even_odd} of depth-$m$ shown in \Cref{fig:swap_networks}.  That is, on the $k$th layer, there are SWAP gates between qubits $(2i, 2i+1)$ for $k$ even, and qubits $(2i-1, 2i)$ for $k$ odd.  We now claim that the even-odd sorting network implies a $m \times O(m)$ grid implementing any transformation in $\CNOT_m$ or $H_m$.  First, we use the sorting network to reverse the order of the list of qubits, so qubit 1 becomes qubit $m$, qubit 2 becomes qubit $m-1$, and so on.  Since all swaps are local, at some point every pair of qubits must have been swapped.  At that point, we can apply any two-qubit gate using \Cref{thm:two_qubit_mbqc}.  Conveniently, our specific choice for encoding elements of $\CNOT_m$ and $H_m$ (given below) are such that the choice of measurement basis for each qubit depends only on a single bit of the input.  Once the qubits are in reverse order, we can simply reverse the order again to complete the transformation.  

In conclusion, we let $\mathcal G_{n,m}$ be the $m \times \Theta(nm)$ graph obtained by combining the above observations. This yields the following theorem:

\begin{theorem}
\label{thm:parityL_mbqc}
Let $g_1, \ldots, g_n \in \CNOT_m$ and $h_1, \ldots, h_n \in H_m$ be given.  Each element $h_i \in H_m$ is encoded by a matrix representing its canonical decomposition.  Each element $g_i \in \CNOT_m$ is encoded by a binary $m \times m$ matrix whose only non-zero entry $(i,j)$ represents the $\CNOT$ gate from qubit $i$ to qubit $j$.  There exist projections $P_1, \ldots, P_{\Theta(nm^2)}$ such that 
\begin{itemize}[itemsep = 0pt]
\item $P_i$ is either the projection $\frac{\pfont{I} + \pfont{X}}{2}$ or $\frac{\pfont{I} + \pfont{Y}}{2}$ onto the $i$th qubit,
\item $P_1 \otimes \ldots \otimes P_{\Theta(nm^2)} \otimes \pfont{I}^{\otimes m} \ket{\mathcal G_{n, m}} \propto \ket{\psi} \otimes g_1 h_1 g_2 h_2 \cdots g_n h_n \ket{+}^{\otimes m}$,
\item $P_1, \ldots, P_m$ are $X$-projections, and 
\item For all $i$, $P_i$ depends on at most a single bit of the input.
\end{itemize}
\end{theorem}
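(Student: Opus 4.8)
The plan is to obtain $\mathcal{G}_{n,m}$ and its measurement pattern by splicing together three pieces that are already established: the $2\times 16$ MBQC gadget for an arbitrary two-qubit Clifford gate (\Cref{thm:two_qubit_mbqc}), the single-qubit MBQC construction realizing a layer of $\RZ$ gates through local measurement choices (\Cref{thm:single_qubit_mbqc}), and the depth-$m$ even--odd sorting network of \Cref{fig:swap_networks}. The observation that makes this work is that \emph{reversing} the order of the $m$ wires is a fixed permutation --- we simply turn on every SWAP of the network --- and that in the course of this reversal every pair of wires becomes adjacent exactly once. Thus a single reversal pass gives, for each unordered pair $(j,k)$, one ``slot'' at which we may splice a $2\times 16$ gadget on those two momentarily-adjacent rows (realizing the SWAP itself by a fixed, constant-size crossing gadget immediately afterward). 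Pairs that cross in the same round of the network occupy disjoint rows, so all of their slots fit into the same $O(1)$ columns; one reversal pass therefore costs $O(m)$ columns and $m$ rows, i.e.\ $O(m^2)$ qubits.

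Then I would process the $2n$ factors of $g_1 h_1 g_2 h_2 \cdots g_n h_n$ in the order in which they act on $\ket{+}^{\otimes m}$, devoting one reversal pass to each. For an $h_i$ factor: for every off-diagonal entry $A^{(i)}_{jk}=1$ of the symmetric matrix encoding the canonical decomposition of $h_i$, activate the $(j,k)$ slot to perform $\CZ(j,k)$ (and leave it as the identity when $A^{(i)}_{jk}=0$), while folding the diagonal entries $A^{(i)}_{jj}$ into the single-qubit measurement choices along the wires to perform the $\RZ$ layer via \Cref{thm:single_qubit_mbqc}; since $H_m$ is abelian, the scheduling of these commuting gates within the pass is immaterial. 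For a $g_i$ factor: its $m\times m$ encoding has a unique nonzero entry $(a,b)$, so activate exactly the $(a,b)$ slot to perform $\CNOT(a,b)$ and leave every other slot at the identity. Concatenating these $2n$ passes, preceded by a column of $m$ ``input'' qubits and followed by the $m$ unmeasured output qubits, yields a graph $\mathcal{G}_{n,m}$ inside an $m\times\Theta(nm)$ grid with $\Theta(nm^2)$ measured qubits; by \Cref{thm:rbb_mbqc} the output qubits then hold $g_1 h_1 g_2 h_2 \cdots g_n h_n \ket{+}^{\otimes m}$ up to a Pauli byproduct (which we absorb, working modulo $\Paulis_m$), tensored with a stabilizer state $\ket{\psi}$ on the measured qubits.

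It remains to verify the four listed properties. Every measurement is an $X$- or $Y$-projection because \Cref{thm:rbb_mbqc} needs only $X$, $Y$, $Z$ measurements and, crucially, we build $\mathcal{G}_{n,m}$ with exactly the edges the construction requires --- so, unlike the ``full grid carved by $Z$-measurements'' alternative noted earlier, no $Z$-basis measurement ever occurs (the crossing gadgets and the gadgets of \Cref{thm:single_qubit_mbqc,thm:two_qubit_mbqc} are likewise arranged to use only $X$ and $Y$ measurements). The leftmost column of $m$ qubits is the MBQC input layer, measured in $X$ regardless of the input in order to prepare $\ket{+}^{\otimes m}$, which gives the third bullet. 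Finally, each measurement either lies on an identity wire or inside a crossing gadget, where its basis is fixed and depends on no input bit; or it lies in the body of the gadget for pair $(j,k)$ inside an $h_i$ pass, where the entire gadget's pattern is a function solely of the one bit $A^{(i)}_{jk}$ (we only ever choose between the fixed pattern for $\CZ(j,k)$ and the fixed pattern for the identity); or in the body of the gadget for ordered pair $(a,b)$ inside a $g_i$ pass, a function solely of the $(a,b)$ entry of $g_i$'s encoding; or on a wire where an $\RZ(j)$ may be applied, a function solely of $A^{(i)}_{jj}$. In every case $P_i$ depends on at most one input bit.

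The main obstacle is precisely that last bullet: pinning down the gadgets of \Cref{thm:single_qubit_mbqc,thm:two_qubit_mbqc} carefully enough that ``$\CZ(j,k)$ or identity'', ``$\CNOT(a,b)$ or identity'', and ``$\RZ(j)$ or identity'' really are choices between two \emph{fixed} $X/Y$ measurement patterns toggled by a single bit, while simultaneously keeping the whole object planar and inside an $m\times\Theta(nm)$ grid. The column counting, the correctness of the MBQC composition, and the byproduct-Pauli absorption are all routine once this gadget-level bookkeeping is in place; I would relegate the explicit gadget diagrams to the appendix (as is done for the $2\times 16$ construction) and present here only the sorting-network skeleton and the slot-activation scheme above.
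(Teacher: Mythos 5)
Your proposal is correct and follows essentially the same route as the paper: a depth-$m$ even--odd reversal network brings every pair of wires adjacent exactly once, the $2\times 16$ gadgets of \Cref{thm:two_qubit_mbqc} are toggled at those slots by the single relevant bit of the encoding of $g_i$ or of the $\CZ$-part of $h_i$, the $\RZ$ layers are handled via \Cref{thm:single_qubit_mbqc}, and concatenating the passes gives the $m\times\Theta(nm)$ graph with $\Theta(nm^2)$ measured qubits. The only cosmetic difference is that you invoke the Pauli-byproduct form of \Cref{thm:rbb_mbqc}, whereas composing the projection-form statements of \Cref{thm:single_qubit_mbqc,thm:two_qubit_mbqc} directly (via \Cref{lem:mbqc_composition}) yields the stated proportionality with no byproduct to absorb.
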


\subsection{Main theorem}

We now have all the ingredients to prove the main theorem.

\begin{reptheorem}{thm:parityL_main_result}
Let $\coracle$ be the rewind oracle for Wide Cluster Clifford Simulation (\Cref{prob:parityL}). Then 
$$\parityL \subseteq (\BPAC^{0})^{\coracle}.$$  
\end{reptheorem}
\begin{proof}
Our goal is to use the rewind oracle to determine the product $g_1 \cdots g_n$, modulo Pauli operations, given unitaries $g_1, \ldots, g_n \in \CNOT_m$.  By  \Cref{thm:CNOT_mult_hardness}, we will use that it is $\parityL$-hard to determine this product promised that it is the three-cycle or the identity transformation.

First, we construct Algorithm $A$ from the rewind oracle. Algorithm $A$ applies the oracle to $g_1', \ldots, g_n' \in G_m$ in the first round, then measures the five lines of the pentagram (all of them, by rewinding) in the second round. The first round measurements are specified by \Cref{thm:parityL_mbqc}. Recall that the measurements in the second round may require some entangling operations on the first three qubits (not simply $X$ and $Y$-basis measurments).  However, we can simulate these gates by yet more measurement-based computation using \Cref{thm:two_qubit_mbqc}.  Since the depth of these computations is constant, an $\AC^0$ verifier can apply the appropriate recovery operation to the oracle's answer in the second round.  From these measurements, we identify a non-stabilizer $P \in \pentagram$ of the state by \Cref{thm:nonstab_from_pentagram}. 

From Algorithm $A$, we construct Algorithm $B$ (\Cref{thm:symmetrize}), and then Algorithm $C$ (\Cref{thm:parityL_more_randomness}).  We must now check that an $\AC^0$ circuit can perform these randomization steps.  Algorithm $B$ queries Algorithm $A$ with elements $h_1 g h_2$ for $h_1,h_2 \in H_m$ and $g \in \CNOT_m$.  However, since \Cref{thm:parityL_mbqc} requires that each input be specified by a local input encoding, we must show that we can calculate the encoding of the product $h_1 g h_2$ given the individual encodings of $h_1$, $g$, and $h_2$.  However, since $H_m$ is abelian and conjugation by $\CNOT$ gates is locally computable (see \Cref{fig:commutation_relations_CNOT_Rz_CZ}), preparing the final encoding is done by $\AC^{0}$ circuits with randomness.

Finally, we use Algorithm $C$ to decide if $g_1 \cdots g_n$ is either $C_3$ or $I$, using \Cref{thm:detect_permutation}, which completes the proof.
\end{proof}

%!TEX root = ../samp_clifford.tex

\section{Open Problems}

There are a few ways in which one could hope to improve our result---either by weakening the quantum circuit or strengthening the classical one.  First, one could weaken the quantum circuit by allowing for noise.  For example, Bravyi et al.\ \cite{bgkt:2019} give a separation between $\QNC^0$ circuits with local stochastic noise and $\NC^0$ circuits.  It is possible (or even likely) that a similar approach could work for our interactive model. Second, one could allow the classical circuit to err.  In \Cref{thm:nc1_error}, we show that a classical device which errs with probability less than $2/75$ must still be solving $\NC^1$-hard problems.  However, we have no such error guarantees for the $\parityL$-hardness result.  What can be said about the maximum allowable error rate in both cases?

We also ask whether or not interactivity is necessary for our results.  Assuming \Cref{conj:nc1_vs_qnc0}, we would require some non-blackbox feature to replace the role of rewinding in our reduction.  On the other hand, the conjecture does not rule out a separation between $\QNC^0$ and $\AC^0[p]$.  A natural question is whether or not such a separation exists.  It seems unlikely that $\MOD_3$ gates, say, help in solving the Parity Halving Problem of Bene Watts et al.\ \cite{bwkst:2019}.
%!TEX root = ../samp_clifford.tex

\section*{Acknowledgments}

The authors would like to thank Scott Aaronson, Alex Lombardi, and Daochen Wang for helpful discussions.

\bibliographystyle{plain}
\phantomsection\addcontentsline{toc}{section}{References} %Adds References to TOC
\bibliography{bibliography}

\begin{thebibliography}{10}

\bibitem{aaronson:2014_equivalence}
Scott Aaronson.
\newblock The equivalence of sampling and searching.
\newblock {\em Theory of Computing Systems}, 55(2):281--298, 2014.

\bibitem{aaronsonarkhipov:2013}
Scott Aaronson and Alex Arkhipov.
\newblock The computational complexity of linear optics.
\newblock In {\em Proceedings of the 43rd annual ACM symposium on Theory of
  computing}, pages 333--342. ACM, 2011.

\bibitem{ag:2004}
Scott Aaronson and Daniel Gottesman.
\newblock Improved simulation of stabilizer circuits.
\newblock {\em Physical Review A}, 70(5):052328, 2004.

\bibitem{AB84}
Miklos Ajtai and Michael Ben-Or.
\newblock A theorem on probabilistic constant depth computations.
\newblock In {\em Proceedings of the 16th Annual ACM Symposium on Theory of
  Computing}, STOC '84, pages 471--474, 1984.

\bibitem{backens:2014_zx}
Miriam Backens.
\newblock The {ZX}-calculus is complete for stabilizer quantum mechanics.
\newblock {\em New Journal of Physics}, 16(9):093021, 2014.

\bibitem{BarringtonTherien}
David A.~Mix Barrington and Denis Th{\'e}rien.
\newblock Finite monoids and the fine structure of {$\NC^1$}.
\newblock {\em J. ACM}, 35(4):941--952, October 1988.

\bibitem{bwkst:2019}
Adam Bene~Watts, Robin Kothari, Luke Schaeffer, and Avishay Tal.
\newblock Exponential separation between shallow quantum circuits and unbounded
  fan-in shallow classical circuits.
\newblock In {\em Proceedings of the 51st Annual ACM SIGACT Symposium on Theory
  of Computing}, pages 515--526. ACM, 2019.

\bibitem{bg}
Charles~H. Bennett and John Gill.
\newblock Relative to a random oracle {$A$}, {$\P^{A}\neq \NP^{A}\neq
  \coNP^{A}$} with probability 1.
\newblock {\em SIAM Journal on Computing}, 10(1):96--113, 1981.

\bibitem{boixo:2018_rcs}
Sergio Boixo, Sergei~V Isakov, Vadim~N Smelyanskiy, Ryan Babbush, Nan Ding,
  Zhang Jiang, Michael~J Bremner, John~M Martinis, and Hartmut Neven.
\newblock Characterizing quantum supremacy in near-term devices.
\newblock {\em Nature Physics}, 14(6):595, 2018.

\bibitem{bgk:2018}
Sergey Bravyi, David Gosset, and Robert K\"onig.
\newblock Quantum advantage with shallow circuits.
\newblock {\em Science}, 362(6412):308--311, 2018.

\bibitem{bgkt:2019}
Sergey {Bravyi}, David {Gosset}, Robert {K\"onig}, and Marco {Tomamichel}.
\newblock {Quantum advantage with noisy shallow circuits in 3D}.
\newblock {\em arXiv e-prints}, page arXiv:1904.01502, Apr 2019.

\bibitem{bjs:2010_iqp}
Michael~J Bremner, Richard Jozsa, and Dan~J Shepherd.
\newblock Classical simulation of commuting quantum computations implies
  collapse of the polynomial hierarchy.
\newblock {\em Proceedings of the Royal Society A: Mathematical, Physical and
  Engineering Sciences}, 467(2126):459--472, 2010.

\bibitem{cobham:1966}
Alan Cobham.
\newblock The recognition problem for the set of perfect squares.
\newblock In {\em 7th Annual Symposium on Switching and Automata Theory (swat
  1966)}, pages 78--87. IEEE, 1966.

\bibitem{cd:2011_zx}
Bob Coecke and Ross Duncan.
\newblock Interacting quantum observables: categorical algebra and
  diagrammatics.
\newblock {\em New Journal of Physics}, 13(4):043016, 2011.

\bibitem{coudronstarkvidick:2018}
Matthew {Coudron}, Jalex {Stark}, and Thomas {Vidick}.
\newblock {Trading locality for time: certifiable randomness from low-depth
  circuits}.
\newblock {\em arXiv e-prints}, page arXiv:1810.04233, Oct 2018.

\bibitem{damm:1990}
Carsten Damm.
\newblock Problems complete for {$\parityL$}.
\newblock In {\em International Meeting of Young Computer Scientists}, pages
  130--137. Springer, 1990.

\bibitem{duncan:2010}
Ross Duncan and Simon Perdrix.
\newblock Rewriting measurement-based quantum computations with generalised
  flow.
\newblock In {\em International Colloquium on Automata, Languages, and
  Programming}, pages 285--296. Springer, 2010.

\bibitem{symmetricInSymplectic}
Roger~H. Dye.
\newblock Symmetric groups as maximal subgroups of orthogonal and symplectic
  groups over the field of two elements.
\newblock {\em Journal of the London Mathematical Society}, s2-20(2):227--237,
  1979.

\bibitem{gottesman:1998_stabilizers}
Daniel Gottesman.
\newblock Theory of fault-tolerant quantum computation.
\newblock {\em Physical Review A}, 57(1):127, 1998.

\bibitem{habermann:1972_even_odd}
Nico Habermann.
\newblock Parallel neighbor-sort (or the glory of the induction principle).
\newblock {\em Technical Report}, 1972.

\bibitem{hrv:2000_parityL}
Ulrich Hertrampf, Steffen Reith, and Heribert Vollmer.
\newblock A note on closure properties of logspace {MOD} classes.
\newblock {\em Information Processing Letters}, 75(3):91--93, 2000.

\bibitem{hoyerspalek:2005}
Peter H{\o}yer and Robert {\v{S}}palek.
\newblock Quantum fan-out is powerful.
\newblock {\em Theory of computing}, 1(1):81--103, 2005.

\bibitem{jv:2014_clifford}
Richard Jozsa and Maarten Van~den Nest.
\newblock Classical simulation complexity of extended {C}lifford circuits.
\newblock {\em Quantum Information \& Computation}, 14(7\&8):633--648, 2014.

\bibitem{kerenidisprakash:2017}
Iordanis Kerenidis and Anupam Prakash.
\newblock Quantum recommendation systems.
\newblock In {\em 8th Innovations in Theoretical Computer Science Conference
  (ITCS 2017)}. Schloss Dagstuhl-Leibniz-Zentrum f\"ur Informatik, 2017.

\bibitem{Kilian}
Joe Kilian.
\newblock Founding crytpography on oblivious transfer.
\newblock In {\em Proceedings of the 20th Annual ACM Symposium on Theory of
  Computing}, STOC '88, pages 20--31, New York, NY, USA, 1988. ACM.

\bibitem{lange:2000}
Klaus-J{\"o}rn Lange, Pierre McKenzie, and Alain Tapp.
\newblock Reversible space equals deterministic space.
\newblock {\em Journal of Computer and System Sciences}, 60(2):354--367, 2000.

\bibitem{legall:2019}
Fran{\c{c}}ois Le~Gall.
\newblock Average-case quantum advantage with shallow circuits.
\newblock In {\em 34th Computational Complexity Conference (CCC 2019)}. Schloss
  Dagstuhl-Leibniz-Zentrum fuer Informatik, 2019.

\bibitem{mermin_magic_square}
N.~David Mermin.
\newblock Simple unified form for the major no-hidden-variables theorems.
\newblock {\em Phys. Rev. Lett.}, 65:3373--3376, Dec 1990.

\bibitem{montanaro:2007}
Ashley Montanaro.
\newblock Learning stabilizer states by bell sampling.
\newblock {\em Proceedings of the Royal Society A}, 463(2088), 2007.

\bibitem{MooreNilsson:2002}
Cristopher Moore and Martin Nilsson.
\newblock Parallel quantum computation and quantum codes.
\newblock {\em SIAM J. Comput.}, 31(3):799--815, March 2002.

\bibitem{peres_magic_square}
Asher Peres.
\newblock Incompatible results of quantum measurements.
\newblock {\em Physics Letters A}, 151(3):107 -- 108, 1990.

\bibitem{rb:2001_1wqc}
Robert Raussendorf and Hans~J Briegel.
\newblock A one-way quantum computer.
\newblock {\em Physical Review Letters}, 86(22):5188, 2001.

\bibitem{rbb:2003_mbc}
Robert Raussendorf, Daniel~E Browne, and Hans~J Briegel.
\newblock Measurement-based quantum computation on cluster states.
\newblock {\em Physical review A}, 68(2):022312, 2003.

\bibitem{razborov:1987}
Alexander~A Razborov.
\newblock Lower bounds on the size of bounded depth circuits over a complete
  basis with logical addition.
\newblock {\em Mathematical Notes}, 41(4):333--338, 1987.

\bibitem{bremnershepherd:2009_interactive}
Dan Shepherd and Michael~J Bremner.
\newblock Temporally unstructured quantum computation.
\newblock {\em Proceedings of the Royal Society A: Mathematical, Physical and
  Engineering Sciences}, 465(2105):1413--1439, 2009.

\bibitem{Sho97}
Peter~W. Shor.
\newblock Polynomial-time algorithms for prime factorization and discrete
  logarithms on a quantum computer.
\newblock {\em SIAM Journal on Computing}, 26(5):1484--1509, 1997.

\bibitem{smolensky:1987}
Roman Smolensky.
\newblock Algebraic methods in the theory of lower bounds for {B}oolean circuit
  complexity.
\newblock In {\em Proceedings of the nineteenth annual ACM symposium on Theory
  of computing}, pages 77--82. ACM, 1987.

\bibitem{tt:2016}
Yasuhiro Takahashi and Seiichiro Tani.
\newblock Collapse of the hierarchy of constant-depth exact quantum circuits.
\newblock {\em Computational Complexity}, 25(4):849--881, 2016.

\bibitem{tang:2019}
Ewin Tang.
\newblock A quantum-inspired classical algorithm for recommendation systems.
\newblock In {\em Proceedings of the 51st Annual ACM SIGACT Symposium on Theory
  of Computing}, pages 217--228. ACM, 2019.

\end{thebibliography}

\begin{appendices}
\addtocontents{toc}{\protect\setcounter{tocdepth}{2}}
\makeatletter
\addtocontents{toc}{%
  \begingroup
  \let\protect\l@section\protect\l@subsection
  \let\protect\l@subsection\protect\l@subsubsection
}
\makeatother

%!TEX root = ../samp_clifford.tex

\section{Clifford Group}
\label{app:clifford}

In this appendix, we give some basic background on the Clifford group.  This includes of the tableau representation of a Clifford operation, which is used as the basis for efficient ($\parityL$) simulations of Clifford circuits as well as the final randomization step (\Cref{thm:nc1_self_reduce}) as part of the $\NC_1$-hardness result.

\subsection{Pauli Group}

First, recall the \emph{Pauli matrices}, a set of four $2 \times 2$ unitary matrices:
\begin{align*}
\pfont{I} &= \begin{pmatrix} 1 & 0 \\ 0 & 1 \end{pmatrix},&
\pfont{X} &= \begin{pmatrix} 0 & 1 \\ 1 & 0 \end{pmatrix},& 
\pfont{Y} &= \begin{pmatrix} 0 & -i \\ i & 0 \end{pmatrix},&
\pfont{Z} &= \begin{pmatrix} 1 & 0 \\ 0 & -1 \end{pmatrix}.
\end{align*}
Since the Pauli matrices satisfy the relations 
\begin{align*}
\pfont{X} \pfont{Y} &= i \pfont{Z},  & \pfont{Y} \pfont{Z} &= i \pfont{X}, & \pfont{Z} \pfont{X} &= i \pfont{Y}, \\
\pfont{Y} \pfont{X} &= -i \pfont{Z}, & \pfont{Z}\pfont{Y} &= -i \pfont{X}, & \pfont{X} \pfont{Z} &= -i \pfont{Y}, \\
& & \pfont{X}^2 = \pfont{Y}^2 &= \pfont{Z}^2 = \pfont{I} & &
\end{align*}
and $\pfont{I}$ is an identity element, the set $\Paulis_1 := \{ \pm 1, \pm i \} \times \{ \pfont{I}, \pfont{X}, \pfont{Y}, \pfont{Z} \}$ is a group under multiplication. This is the \emph{one-qubit Pauli group}, and it generalizes to the \emph{$m$-qubit Pauli group} $\Paulis_m := \{ \pm 1, \pm i \} \times \{ \pfont{I}, \pfont{X}, \pfont{Y}, \pfont{Z} \}^{\otimes m}$. We call the $\{ \pm 1, \pm i \}$ component the \emph{phase}, and the $\{ \pfont{I}, \pfont{X}, \pfont{Y}, \pfont{Z}\}^{\otimes m}$ component the \emph{Pauli string}. Let us name the group of signs $\mathcal{Z}_m := \{ \pm 1, \pm i \} \times \pfont{I}^{\otimes m}$, and note that $\mathcal{Z}_m$ is a normal subgroup of $\mathcal{P}_m$.  This means the quotient $\PStr{m}$ is well-defined.  Each element of $\PStr{m}$ is a coset $\{+ P, - P, +i P , -iP \}$ for some $P \in \{ \pfont{I}, \pfont{X}, \pfont{Y}, \pfont{Z}\}^{\otimes m}$, but we identify each such element with $P$, its \emph{positive} representative.

A useful property of the Pauli group is that it is a basis for all matrices.
\begin{fact}
Any matrix $A \in \mathbb C^{2^m \times 2^m}$ can be written as a complex linear combination of $\{ \pfont{I}, \pfont{X}, \pfont{Y}, \pfont{Z} \}^{\otimes m}$. 
\end{fact}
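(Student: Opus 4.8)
The plan is to establish the stronger statement that the $4^m$ Pauli strings $\{\pfont{I},\pfont{X},\pfont{Y},\pfont{Z}\}^{\otimes m}$ form a \emph{basis} of the complex vector space $\mathbb C^{2^m \times 2^m}$, which has dimension $4^m$. Since the number of Pauli strings equals the dimension of the ambient space, it suffices to prove they are linearly independent; spanning (and hence the claimed existence of a linear combination) then follows automatically.

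For linear independence I would use the Hilbert--Schmidt (Frobenius) inner product $\langle A, B \rangle := \mathrm{tr}(A^{\dag} B)$ on $\mathbb C^{2^m \times 2^m}$. A direct computation on the four single-qubit matrices shows $\mathrm{tr}(P^{\dag} Q) = 2\,\delta_{P,Q}$ for $P, Q \in \{\pfont{I},\pfont{X},\pfont{Y},\pfont{Z}\}$: each is Hermitian with square $\pfont{I}$, so $\mathrm{tr}(P^\dag P) = \mathrm{tr}(\pfont I) = 2$, while any product of two distinct single-qubit Paulis is $\pm i$ times the third Pauli, which is traceless. Using multiplicativity of the trace over tensor products, $\mathrm{tr}\bigl((P_1 \otimes \cdots \otimes P_m)^{\dag}(Q_1 \otimes \cdots \otimes Q_m)\bigr) = \prod_{k=1}^{m} \mathrm{tr}(P_k^{\dag} Q_k) = 2^m \prod_{k} \delta_{P_k, Q_k}$, so distinct Pauli strings are orthogonal. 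An orthogonal family of nonzero vectors is linearly independent, which completes the argument.

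As a byproduct this yields the explicit expansion: for any $A \in \mathbb C^{2^m \times 2^m}$,
$$
A = \frac{1}{2^m} \sum_{P} \mathrm{tr}(P^{\dag} A)\, P,
$$
the sum ranging over all $4^m$ Pauli strings $P \in \{\pfont{I},\pfont{X},\pfont{Y},\pfont{Z}\}^{\otimes m}$. There is no genuine obstacle here; the only computational content is the elementary single-qubit orthogonality check, i.e.\ verifying that $\pfont{X},\pfont{Y},\pfont{Z}$ are traceless and that products of distinct single-qubit Paulis are scalar multiples of traceless matrices --- both immediate from the multiplication table of the Pauli matrices.
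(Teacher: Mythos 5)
Your proof is correct. The paper states this as a bare \emph{Fact} with no proof at all (it is treated as standard background in the appendix on the Pauli group), so there is no argument of the authors' to compare against; your dimension-count plus Hilbert--Schmidt orthogonality argument is the standard way to establish it, and every step checks out, including the explicit expansion $A = \tfrac{1}{2^m}\sum_P \mathrm{tr}(P^{\dag}A)\,P$. One tiny wording quibble: when you say a product of two distinct single-qubit Paulis is ``$\pm i$ times the third Pauli,'' that only covers the case where both are non-identity; if one factor is $\pfont{I}$ the product is just the other (traceless) Pauli. Your closing sentence already phrases it correctly as ``scalar multiples of traceless matrices,'' so this is a matter of phrasing, not a gap.
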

We introduce the notation $X^{(b_1, \ldots, b_m)} := \pfont{X}^{b_1} \otimes \cdots \otimes \pfont{X}^{b_m}$ for a bit vector $(b_1, \ldots, b_m) \in \F_2^{m}$, and similarly for $\pfont{Y}^{b}$ and $\pfont{Z}^b$. Another useful fact is that there are subsets of $2n$ Pauli elements which generate the whole group up to sign. 
\begin{fact}
Any $P \in \mathcal{P}_m$ can be written in the form $P := \alpha \pfont{X}^{a} \pfont{Z}^{b}$ where $\alpha \in \{ \pm 1, \pm i \}$ and $a, b \in \F_2^{m}$. Since $\pfont{X}^{e_1}, \ldots, \pfont{X}^{e_m}$ generates all $\pfont{X}^{a}$ and $\pfont{Z}^{e_1}, \ldots, \pfont{Z}^{e_m}$ generates all $\pfont{Z}^b$, together they generate all of $\mathcal{P}_m$ up to phase. 
\end{fact}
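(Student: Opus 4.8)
The plan is to reduce the claim to the single-qubit case and then lift it to $m$ qubits using the mixed-product property of the tensor product. First I would observe that on one qubit we have $\pfont{I} = \pfont{X}^0 \pfont{Z}^0$, $\pfont{X} = \pfont{X}^1 \pfont{Z}^0$, $\pfont{Z} = \pfont{X}^0 \pfont{Z}^1$, and, using the relation $\pfont{X}\pfont{Z} = -i\,\pfont{Y}$ recalled above (so that $\pfont{Y} = i\,\pfont{X}\pfont{Z}$), also $\pfont{Y} = i\,\pfont{X}^1 \pfont{Z}^1$. Hence every $Q \in \{\pfont{I}, \pfont{X}, \pfont{Y}, \pfont{Z}\}$ can be written as $\beta\,\pfont{X}^{a} \pfont{Z}^{b}$ with $\beta \in \{1, i\}$ and $a, b \in \F_2$.

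Next, given an arbitrary $P \in \Paulis_m$, write $P = \gamma\,(Q_1 \otimes \cdots \otimes Q_m)$ with $\gamma \in \{\pm 1, \pm i\}$ and each $Q_j \in \{\pfont{I}, \pfont{X}, \pfont{Y}, \pfont{Z}\}$. Expanding each factor as $Q_j = \beta_j\,\pfont{X}^{a_j}\pfont{Z}^{b_j}$ and pulling the scalars to the front gives
$$
P = \gamma\Bigl(\prod_{j=1}^{m} \beta_j\Bigr)\,\bigl(\pfont{X}^{a_1}\pfont{Z}^{b_1}\bigr) \otimes \cdots \otimes \bigl(\pfont{X}^{a_m}\pfont{Z}^{b_m}\bigr).
$$
By the mixed-product property $(AC) \otimes (BD) = (A \otimes B)(C \otimes D)$, the tensor factor equals $\bigl(\pfont{X}^{a_1} \otimes \cdots \otimes \pfont{X}^{a_m}\bigr)\bigl(\pfont{Z}^{b_1} \otimes \cdots \otimes \pfont{Z}^{b_m}\bigr) = \pfont{X}^{a}\pfont{Z}^{b}$ where $a = (a_1, \ldots, a_m)$ and $b = (b_1, \ldots, b_m) \in \F_2^{m}$. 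Since each $\beta_j \in \{1, i\}$, the scalar $\alpha := \gamma\prod_j \beta_j$ is a power of $i$ times $\pm 1$, hence lies in $\{\pm 1, \pm i\}$, which yields the normal form $P = \alpha\,\pfont{X}^{a}\pfont{Z}^{b}$.

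For the generation statement, I would note that $\pfont{X}^{e_1}, \ldots, \pfont{X}^{e_m}$ pairwise commute (they act on disjoint tensor factors) and $\pfont{X}^{a} = \prod_{j : a_j = 1}\pfont{X}^{e_j}$, so these $m$ operators generate $\{\pfont{X}^{a} : a \in \F_2^{m}\}$; symmetrically, $\pfont{Z}^{e_1}, \ldots, \pfont{Z}^{e_m}$ generate $\{\pfont{Z}^{b} : b \in \F_2^{m}\}$. Combined with the normal form just established, the $2m$ operators $\pfont{X}^{e_1}, \ldots, \pfont{X}^{e_m}, \pfont{Z}^{e_1}, \ldots, \pfont{Z}^{e_m}$ generate every $\pfont{X}^{a}\pfont{Z}^{b}$, and therefore every element of $\Paulis_m$ up to an overall phase in $\{\pm 1, \pm i\}$. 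There is no genuine obstacle here; the only point that needs a moment's care is bookkeeping the phase factor $i$ contributed by each $\pfont{Y} = i\,\pfont{X}\pfont{Z}$ and checking that the product of all such factors stays inside $\{\pm 1, \pm i\}$, which it does since it is always a power of $i$.
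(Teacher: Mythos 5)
Your proof is correct. The paper states this as a \emph{Fact} with no accompanying proof, so there is nothing to compare against beyond the implicit standard argument: your reduction to the single-qubit normal form $\pfont{Y} = i\,\pfont{X}\pfont{Z}$, the lift via the mixed-product property, and the phase bookkeeping in $\{\pm 1, \pm i\}$ are exactly the details the paper takes for granted, and they are all handled soundly.
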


\subsection{Clifford Group}
The \emph{$m$-qubit Clifford group}, $\mathcal{C}_m$, is the set of $m$-qubit unitaries (under multiplication) normalizing the $m$-qubit Pauli group,
$$
\mathcal{C}_m := \{ U \in \mathrm{U}(2^{m}) : U \Paulis_m U^{\dag} = \Paulis_m \}.
$$
That is, a unitary $U \in \mathrm{U}(2^m)$ is \emph{Clifford} if for any $P \in \Paulis_m$, conjugation by $U$ gives an element of $\Paulis_m$. By construction, $\Paulis_m$ is a normal subgroup of $\mathcal{C}_m$, so 
$$
\mathcal{Z}_m \normalin \Paulis_m \normalin \Clifford_m
$$
Since conjugation of a Pauli by a Clifford operation is so common, we define the notation $\bullet \colon \Clifford_m \times \Paulis_m \to \Paulis_m$ where $U \bullet P := U P U^{\dag}$ for any $U \in \Clifford_m$ and $P \in \Paulis_m$.  

A \emph{Clifford circuit} is any quantum circuit built from a basis of Clifford unitaries as gates. An important theorem is that $\CNOT$, Hadamard ($H = \frac{1}{\sqrt{2}}(\begin{smallmatrix} 1 & 1 \\ 1 & -1 \end{smallmatrix})$), and Phase ($\RZ = \RZ(\pi/4) = (\begin{smallmatrix} 1 & 0 \\ 0 & i \end{smallmatrix})$) gates suffice to generate all Clifford unitaries, so this is the most common basis. A \emph{Clifford state} or \emph{stabilizer state} is any quantum state of the form $U \ket{0}^{\otimes m}$, where $U \in \mathrm{U}(2^m)$ is Clifford. A standard way to define a Clifford state is by its \emph{stabilizer group}:
$$
\mathrm{Stab}_{\ket{\psi}} := \{ P \in \Paulis_m : P \ket{\psi} = \ket{\psi} \}.
$$ 
Clifford states exhibit long-range entanglement, and become universal for quantum computation when augmented with non-Clifford gate. Nevertheless, Clifford circuits are efficiently classically simulable, in fact, in the complexity class $\parityL$ \cite{ag:2004}. We describe the essentials of this simulation in the next subsection. 

\subsection{Clifford Tableaux}

One way of proving the simulation result is through the \emph{tableau} representation of Clifford operations.  Any unitary can be defined (up to phase) by how it conjugates density matrices, i.e., by the set $\{(\rho, U \rho U^{\dag}) : \rho \succeq 0 \}$. Since the Pauli group forms a basis for all matrices, this can be reduced to the set $\{(P, U P U^{\dag}) : P \in \Paulis_m \}$. In fact, since $\pfont{X}^{e_1}, \ldots, \pfont{X}^{e_m}, \pfont{Z}^{e_1}, \ldots, \pfont{Z}^{e_m}$ generate $\Paulis_m$, and conjugation preserves products (i.e., $U P U^{\dag} U Q U^{\dag} = U PQ U^{\dag}$) it suffices to write down $U \pfont{X}^{e_i} U^{\dag}$ and $U \pfont{Z}^{e_i} U^{\dag}$ for all $i$. When $U$ is a Clifford unitary, we also get that $U \pfont{X}^{e_i} U^{\dag}$ and $U \pfont{Z}^{e_i} U^{\dag}$ are in $\Paulis_m$. For example, $\RZ \pfont{X} \RZ^{\dag} = \pfont{Y}$ and $\RZ \pfont{Z} \RZ^{\dag} = \pfont{Z}$, from which we can derive that 
$$\RZ \pfont{Y} \RZ^{\dag} = i \RZ \pfont{X} \RZ^{\dag} \RZ \pfont{Z} \RZ^{\dag} = i \pfont{Y} \pfont{Z} = -\pfont{X}.$$
Clearly $\RZ \pfont{I} \RZ^{\dag} = \pfont{I}$, and thus we can derive $\RZ \rho \RZ^{\dag}$ for any $\rho$ by linearity. Similarly, $H \pfont{X} H^{\dag} = \pfont{Z}, H \pfont{Z} H^{\dag} = \pfont{X}$ and 
\begin{align*}
\CNOT (\pfont{XI}) \CNOT^{\dag} &= \pfont{XX}, & \CNOT (\pfont{IX}) \CNOT^{\dag} &= \pfont{IX}, \\
\CNOT (\pfont{ZI}) \CNOT^{\dag} &= \pfont{ZI}, & \CNOT (\pfont{IZ}) \CNOT^{\dag} &= \pfont{ZZ}.
\end{align*}

In general, a Pauli operator $P = \alpha \pfont{X}^{a} \pfont{Z}^b \in \Paulis_m$ can be represented with two $m$-bit vectors $a, b \in \mathbb F_2^{m}$ and a pair of bits for $\alpha$, and any Clifford operation is defined by $2m$ Pauli operators $U \pfont{X}^{e_1} U^{\dag}$, \ldots, $U \pfont{X}^{e_m} U^{\dag}$, $U \pfont{Z}^{e_1} U^{\dag}$, \ldots, $U \pfont{Z}^{e_m} U^{\dag}$, the entire operation can be described by a matrix of $2n$ rows with $2m + 2$ bits per row. The phase information can be reduced to one bit per row (instead of two) to give a tableau, but since we will not need that part of the tableau in this paper, we will skip it and focus on the remaining $2m \times 2m$ binary matrix.\footnote{One can check that for all $X^a$ and $Z^b$, the phase of $U X^a U^\dag$ and $U Z^b U^\dag$ is either $+1$ or $-1$ for all $U \in \Clifford_m$.  We formally justify the sign issue in the next section, and in particular \Cref{lem:pauli_mod}.} We divide the matrix into four blocks, 
$$
\begin{bmatrix}
A & B \\
C & D 
\end{bmatrix}
$$
where $A, B, C, D \in \mathbb F_2^{m \times m}$.  That is, the $i$th row of $A$ records the $X$ component of the Pauli string $U \pfont{X}^{e_i} U^{\dag}$, and the $i$th row of $B$ represents the $Z$ component of the same Pauli string. Similarly, the $i$th row of $C$ and $D$ represent the $X$ and $Z$ components of the Pauli string $U \pfont{Z}^{e_i} U^{\dag}$. Then we have the following facts.
\begin{fact}
The tableau $[\begin{smallmatrix} A & B \\ C & D \end{smallmatrix}]$ corresponds to a Clifford operation if and only if it is \emph{symplectic}. The matrix is symplectic if and only if $A D^{T} + B C^{T} = I$ and both $AB^{T}$ and $CD^{T}$ are symmetric. 
\end{fact}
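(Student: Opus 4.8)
The plan is to work through the dictionary between Pauli strings modulo phase and vectors in $\F_2^{2m}$, under which conjugation by a Clifford becomes an invertible $\F_2$-linear map (the tableau) and the commutator of two Paulis becomes a symplectic bilinear form. Concretely, identify $\pfont{X}^a\pfont{Z}^b$ with the row vector $(a\mid b)\in\F_2^{2m}$; by \Cref{lem:pauli_mod} a Clifford $U$ is determined modulo $\Paulis_m$ by the images of the $2m$ generators, i.e.\ by the rows of the tableau $\mathcal M:=\bigl[\begin{smallmatrix} A & B\\ C & D\end{smallmatrix}\bigr]$. Because conjugation preserves products and the Pauli string (mod phase) of a product of Paulis just adds exponents mod $2$, the map $(a\mid b)\mapsto(a\mid b)\mathcal M$ is $\F_2$-linear, and it is invertible since $U^{\dag}$ realizes the inverse, so $\mathcal M\in\GL(2m,\F_2)$. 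Moreover $\pfont{X}^a\pfont{Z}^b$ and $\pfont{X}^c\pfont{Z}^d$ commute iff $a\cdot d+b\cdot c=0$, i.e.\ iff $(a\mid b)\,J\,(c\mid d)^{T}=0$ with $J=\bigl[\begin{smallmatrix} 0 & I\\ I & 0\end{smallmatrix}\bigr]$; since conjugation preserves commutators ($[U\bullet P,U\bullet Q]=U\bullet[P,Q]$), $(u\mathcal M)J(v\mathcal M)^{T}=uJv^{T}$ for all $u,v$, which forces $\mathcal M J\mathcal M^{T}=J$. So every Clifford tableau is symplectic.

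For the converse I would use that $\CNOT$, $H$, and $\RZ$ generate $\Clifford_m$ up to phase, and that their tableaux (as computed in \Cref{app:clifford}) are elementary symplectic matrices: $H$ on qubit $i$ swaps the $\pfont{X}$- and $\pfont{Z}$-rows for qubit $i$; $\RZ$ on qubit $i$ adds the $\pfont{X}$-row of qubit $i$ into its $\pfont{Z}$-row; and $\CNOT(i,j)$ adds $\pfont{X}$-row $i$ into $\pfont{X}$-row $j$ while adding $\pfont{Z}$-row $j$ into $\pfont{Z}$-row $i$. A standard symplectic Gaussian-elimination argument shows these generate all of $\mathrm{Sp}(2m,\F_2)$ --- the key point being that symplectic non-degeneracy always supplies, for a nonzero vector, a partner pairing to $1$ with it, so one can keep pivoting until reaching the identity. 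Since composing Cliffords multiplies their tableaux, every symplectic $\mathcal M$ is the tableau of some $U\in\Clifford_m$, which gives the reverse implication.

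Finally, the block form is obtained by expanding $\mathcal M J\mathcal M^{T}=J$:
\[
\begin{pmatrix} A & B\\ C & D\end{pmatrix}\begin{pmatrix} 0 & I\\ I & 0\end{pmatrix}\begin{pmatrix} A^{T} & C^{T}\\ B^{T} & D^{T}\end{pmatrix}=\begin{pmatrix} AB^{T}+BA^{T} & AD^{T}+BC^{T}\\ CB^{T}+DA^{T} & CD^{T}+DC^{T}\end{pmatrix},
\]
and equating with $\bigl[\begin{smallmatrix} 0 & I\\ I & 0\end{smallmatrix}\bigr]$ yields $AD^{T}+BC^{T}=I$ together with (over $\F_2$, where $M+M^{T}=0$ is the same as $M=M^{T}$) the symmetry of $AB^{T}$ and of $CD^{T}$; the lower-left block is the transpose of the upper-right one, hence automatic. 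I expect the main obstacle to be the surjectivity claim in the middle step --- that \emph{every} symplectic matrix comes from a Clifford --- which rests on the generation statement for $\mathrm{Sp}(2m,\F_2)$; the rest is routine, provided one is careful that the tableau acts on the \emph{left} of row vectors (so the symplectic identity reads $\mathcal M J\mathcal M^{T}=J$, not $\mathcal M^{T} J\mathcal M=J$) and keeps straight which block records the $\pfont{X}$- versus $\pfont{Z}$-component.
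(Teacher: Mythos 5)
Your proposal is correct. Note that the paper does not prove this statement at all---it is recorded as a standard \emph{Fact} in \Cref{app:clifford} (it goes back to the tableau formalism of Aaronson and Gottesman)---so there is no in-paper proof to compare against; your argument is exactly the standard one. The forward direction (commutation of Paulis modulo phase is the symplectic form $(a\mid b)J(c\mid d)^T$ with $J=\bigl[\begin{smallmatrix}0&I\\I&0\end{smallmatrix}\bigr]$, and conjugation preserves commutators, forcing $\mathcal M J\mathcal M^T=J$) and the block expansion yielding $AD^T+BC^T=I$ with $AB^T$, $CD^T$ symmetric (the lower-left condition being the transpose of the upper-right) are both complete and correct, and you are right that with the row-vector convention of \Cref{fact:tableau_mult} the identity is $\mathcal M J\mathcal M^T=J$. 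The only piece left at sketch level is the surjectivity step: that the tableaux of $H$, $\RZ$, and $\CNOT$ generate all of $\mathrm{Sp}(2m,\F_2)$. This is the genuinely nontrivial ingredient, but your symplectic Gaussian-elimination outline is the standard route (equivalently one can invoke generation of the symplectic group by transvections, or the Aaronson--Gottesman canonical-form decomposition of tableaux into $\CNOT$, $H$, and phase layers); writing it out carefully would complete the proof. The left/right row-operation phrasing for the elementary tableaux should be pinned down to one convention when doing so, but that is a bookkeeping matter, not a gap.
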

\begin{fact}
\label{fact:tableau_mult}
Let $U \in \Clifford_m$ be a Clifford unitary with tableau $[\begin{smallmatrix} A & B \\ C & D \end{smallmatrix}]$. Then $U (\pfont{X}^{r} \pfont{Z}^{s}) U^{\dag} = \alpha \pfont{X}^{u} \pfont{Z}^{v}$ for some $\alpha \in \{ \pm 1, \pm i \}$ if and only if 
$$
\begin{pmatrix}
r & s \\
\end{pmatrix}
\begin{pmatrix}
A & B \\
C & D
\end{pmatrix}
=
\begin{pmatrix}
u & v
\end{pmatrix}.
$$
It follows that, ignoring phase, the tableau for a unitary $VU$ is the standard matrix product of the tableau for $U$ and the tableau for $V$. 
\end{fact}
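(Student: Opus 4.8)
The plan is to reduce everything to two elementary facts about the Pauli group: conjugation by a fixed unitary is a group homomorphism (since $U(PQ)U^\dag = (UPU^\dag)(UQU^\dag)$), and multiplying two Pauli strings $\pfont{X}^a\pfont{Z}^b$ and $\pfont{X}^{a'}\pfont{Z}^{b'}$ yields $\pm\,\pfont{X}^{a+a'}\pfont{Z}^{b+b'}$, i.e. the $\pfont{X}$- and $\pfont{Z}$-exponent vectors add modulo $2$ and only the phase changes. Granting these, the formula for $U(\pfont{X}^r\pfont{Z}^s)U^\dag$ is pure bookkeeping and the composition statement follows by iterating it.

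First I would write $\pfont{X}^r\pfont{Z}^s = \prod_{i\,:\,r_i=1}\pfont{X}^{e_i}\cdot\prod_{i\,:\,s_i=1}\pfont{Z}^{e_i}$, using that the $\pfont{X}^{e_i}$ commute among themselves and generate all $\pfont{X}^a$ (and likewise for $\pfont{Z}$), as noted earlier in the appendix. Since conjugation by $U$ is multiplicative,
$$U(\pfont{X}^r\pfont{Z}^s)U^\dag = \prod_{i\,:\,r_i=1}\bigl(U\pfont{X}^{e_i}U^\dag\bigr)\cdot\prod_{i\,:\,s_i=1}\bigl(U\pfont{Z}^{e_i}U^\dag\bigr).$$
By definition of the tableau, $U\pfont{X}^{e_i}U^\dag = \beta_i\,\pfont{X}^{A_i}\pfont{Z}^{B_i}$ with $A_i,B_i$ the $i$th rows of $A,B$ and $\beta_i\in\{\pm1,\pm i\}$, and similarly $U\pfont{Z}^{e_i}U^\dag = \gamma_i\,\pfont{X}^{C_i}\pfont{Z}^{D_i}$. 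Multiplying all these Pauli strings, the phases collect into a single $\alpha\in\{\pm1,\pm i\}$ while the $\pfont{X}$-exponents sum modulo $2$ to $\sum_{i:r_i=1}A_i+\sum_{i:s_i=1}C_i = rA+sC$ and the $\pfont{Z}$-exponents to $rB+sD$. Hence $u=rA+sC$ and $v=rB+sD$, i.e. $(u\ v)=(r\ s)[\begin{smallmatrix}A&B\\C&D\end{smallmatrix}]$. Since $U$ is Clifford, the left-hand side genuinely lies in $\Paulis_m$, and its positive-representative Pauli string is fully determined by $(u,v)$; this gives the ``if and only if''.

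For the composition claim, observe that the $i$th row of the tableau of a Clifford $W$ is exactly the exponent vector of $W\pfont{X}^{e_i}W^\dag$ (for the $\pfont{X}$-rows; similarly for $\pfont{Z}^{e_i}$), and running the computation above with $s=0$, $r=e_i$ shows that conjugating by a second Clifford $V$ sends the exponent vector $x$ of any Pauli string to $xT_V$, where $T_V$ is the tableau of $V$. Therefore the exponent vector of $(VU)\pfont{X}^{e_i}(VU)^\dag = V\bigl(U\pfont{X}^{e_i}U^\dag\bigr)V^\dag$ is $(e_iT_U)T_V = e_i(T_UT_V)$, and likewise for $\pfont{Z}^{e_i}$; reading off rows, the tableau of $VU$ is the ordinary $\F_2$-matrix product $T_UT_V$, ignoring phase.

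The only genuinely delicate point is phase tracking: one must be sure that the stray phases $\beta_i,\gamma_i$ together with the phases picked up when reordering and multiplying the Pauli strings never obstruct the product from lying in $\Paulis_m$. This is automatic, however, because $U$ being Clifford forces $U(\pfont{X}^r\pfont{Z}^s)U^\dag\in\Paulis_m$ from the outset, so it suffices to track exponent vectors over $\F_2$ and absorb everything else into $\alpha$. (That in fact $\beta_i,\gamma_i\in\{\pm1\}$ — which is what allows the tableau to store only one sign bit per row — follows from \Cref{lem:pauli_mod}, but is not needed for the statement at hand.)
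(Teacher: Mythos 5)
Your proof is correct, and it is exactly the argument the paper leaves implicit: the paper states this as a Fact without proof, but its surrounding discussion (the $\pfont{X}^{e_i},\pfont{Z}^{e_i}$ generate $\Paulis_m$ up to phase, and conjugation preserves products) is precisely the decomposition-into-generators bookkeeping you carry out, with exponent vectors adding over $\F_2$ and phases absorbed into $\alpha$. The composition claim via $(VU)P(VU)^{\dag}=V(UPU^{\dag})V^{\dag}$ and rows $e_iT_UT_V$ is likewise the intended standard argument, so nothing further is needed.
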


Since matrix multiplication over $\F_2$ is in $\parityL$, we can use the above fact to simulate a sequence of Clifford gates.  Measurement statistics can also be calculated in $\parityL$ (see \Cref{thm:multiqubit_measurement}).

\subsection{Clifford modulo Pauli, and Pauli modulo Phase}

We have already discussed in the detail the Clifford operations modulo Paulis and Paulis modulo phase in the background (\Cref{sec:background_clifford}).  The purpose of this section is to prove \Cref{lem:pauli_mod}, which we reword slightly below:

\begin{replemma}{lem:pauli_mod}
Define the homomorphism $\phi_C \colon \Paulis_m \to \Paulis_m$ where $\phi_C(Q) := C \bullet Q$ for $C \in \Clifford_m$ and $Q \in \Paulis_m$.
Two Clifford operations $C_1, C_2 \in \Clifford_m$ are equivalent modulo Paulis if and only if their action on $\Paulis_m / \mathcal{Z}_m$, i.e., $\phi_{C_1}(Q) \equiv \phi_{C_2}(Q) \pmod{\mathcal{Z}_m}$ for all $Q \in \Paulis_m$.
\end{replemma}
\begin{proof}
Let $C \in \Clifford_m$ be a Clifford operation. First, it is clear that the map $\phi_C \colon \Paulis_m \to \Paulis_m$ such that $\phi_C(Q) := C \bullet Q$ is a homomorphism. Since $\mathcal{Z}_m$ commutes with $C$, we have $\phi_C(\mathcal{Z}_m) = \mathcal{Z}_m$. Hence, modding out by $\mathcal{Z}_m$ gives $\tilde{\phi}_C \colon \Paulis_m / \mathcal{Z}_m \to \Paulis_m / \mathcal{Z}_m$. The formal claim is that $C_1$ and $C_2$ are equivalent modulo $\Paulis_m$ if and only if $\tilde{\phi}_{C_1} = \tilde{\phi}_{C_2}$. This reduces to checking that $C = C_1 C_2^{-1}$ is equivalent to $\pfont{I} \cdots \pfont{I}$ (i.e., $C$ is in $\Paulis_m$) if and only if $\tilde{\phi}_{C} = \tilde{\phi}_{\pfont{I} \cdots \pfont{I}}$ (i.e., if $\tilde{\phi}_C$ is the identity permutation).

It is a fact that $PQP^{\dag} = \pm Q$ for any $P, Q \in \Paulis_m$. Hence, if $C \in \Paulis_m$ then $\phi_C(Q) = \pm Q$, so $\tilde{\phi}_C$ is the identity permutation. This means there are at least $4^m$ Clifford operations $C$ (i.e., the Pauli operations) for which $\tilde{\phi}_C$ is the identity. In the tableau representation, these operations all have the same $2m \times 2m$ matrix, and only differ in the phase bits. Since we know $2m$ phase bits suffice, there are at most $2^{2m} = 4^{m}$ such Clifford operations, so they must be exactly the Pauli operations. 
\end{proof}

%!TEX root = ../samp_clifford.tex

\section{\texorpdfstring{$\parityL$}{Parity L}-Complete Problems}
\label{sec:parityL_complete_problems}

The purpose of this section is to show that the $\CNOT$ multiplication problem is $\parityL$-hard.  We are particularly interested in the following variant:

\begin{problem}[$\CNOT$ Multiplication with Cycle Promise (\promisecnotmult)]
Given gates $g_1, \ldots, g_n \in \CNOT_m$, determine if $g_1 \cdots g_n$ is equal to the three-cycle on the first three bits or the identity transformation, promised that one is the case.  Each gate is represented by an $m \times m$ binary matrix such that the only non-zero entry $(i,j)$ denotes the $\CNOT$ gate from bit $i$ to bit $j$.  We take $m$ and $n$ polynomially related.
\end{problem}

The reduction will go through the following two problems, both of which were already known to be $\parityL$-complete (e.g., see Damm \cite{damm:1990}). We include this material for both completeness and to verify the efficiency of the reductions.

\begin{problem}[Layered DAG Path Parity]
A \emph{layered DAG} is a directed acyclic graph where the vertices are divided into an ordered sequence of \emph{layers} such that there are only edges from each layer to the next (no loops, no backwards edges, and no skipping layers).
 
An instance of $\ldagpar$ consists of a binary matrix describing a layered DAG with $n+1$ layers having $m$ nodes each where $m$ and $n$ are polynomially related. The goal is to compute the parity of the number of paths from some source node to some target node, where the source is in the first layer and the target is in the last layer. 
\end{problem}

\begin{problem}[$\CNOT$ Multiplication (\cnotmult)]
Given gates $g_1, \ldots, g_n \in \CNOT_m$, compute the top right entry of the binary matrix $g_1 \cdots g_n$ where $m$ and $n$ are polynomially related.
\end{problem}

\subsection{Reductions}

\begin{lemma}
$\parityL \subseteq (\NC^0)^{\ldagpar}$ for $\DLOGTIME$-uniform $\NC^0$ circuits.
\end{lemma}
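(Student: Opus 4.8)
The plan is to realize a $\parityL$ computation as counting, modulo $2$, the source-to-sink paths in the time-layered configuration graph of a non-deterministic log-space machine, and then to observe that writing down that graph from the input is so local (each output bit depends on at most one input bit) that it is computed by $\DLOGTIME$-uniform $\NC^{0}$ circuits making a single oracle call.

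First I would fix, once and for all, a non-deterministic Turing machine $M$ using $O(\log n)$ work space witnessing $L \in \parityL$: for $x \in L$ the number of accepting branches of $M(x)$ is odd, and for $x \notin L$ it is even. Standard normalizations: adjoin an $O(\log n)$-bit step counter so every branch runs for exactly $T = T(n) = n^{c}$ steps, with $M$ entering a canonical \emph{accept-sink} (resp.\ \emph{reject-sink}) configuration once it would otherwise halt and merely advancing the counter thereafter; also arrange that the initial configuration and the accept-sink configuration are unique. A configuration of $M(x)$ --- state, work-tape contents, work-head position, input-head position, step counter --- is encoded by $s = O(\log n)$ bits, so with $m = m(n) = 2^{s} = \poly(n)$ there are at most $m$ configurations per step, unused codes being treated as isolated vertices. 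Draw an edge from configuration $u$ at step $t$ to configuration $v$ at step $t+1$ exactly when $v$ is a legal $M$-successor of $u$; this is a layered DAG $\mathcal G_x$ with $T+1$ layers of $m$ nodes each, and we place the initial configuration at index $0$ of layer $0$ and the accept-sink at index $0$ of layer $T$. Since the number of accepting branches of $M(x)$ equals the number of source-to-target paths in $\mathcal G_x$, we get $x \in L$ iff the $\ldagpar$ oracle returns $1$ on $\mathcal G_x$. The reduction outputs $\mathcal G_x$ (as its bit matrix of potential edges) and forwards the single oracle answer.

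Next I would establish locality. An output bit is indexed by a triple $(t,u,v)$ with $t \in \{0,\dots,T-1\}$ and $u,v \in \{0,\dots,m-1\}$ --- data that live in the \emph{address} of the output gate, not in the input --- and records whether the edge $(t,u)\to(t+1,v)$ is present. Whether $v$ is a legal successor of $u$ is decided by the (fixed, finite) transition relation of $M$ from the finite-control data of $u$ and $v$ together with a \emph{single} bit of $x$, namely the input symbol $x_i$ at the input-head position $i$ recorded in $u$ (and if that head sits on an endmarker, the bit does not depend on $x$ at all). Hence every output bit is one of $0,\,1,\,x_i,\,\lnot x_i$ of a single input variable, so the circuit has $O(1)$ depth and $O(1)$ fan-in: it is $\NC^{0}$, each gate being a constant or a possibly-negated wire from one input bit. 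For $\DLOGTIME$-uniformity, note the total size is $N = (T+1)m^{2} = \poly(n)$, so $\DLOGTIME$ means time $O(\log n)$ in the direct-access-input model: given a gate address, decode $t,u,v$, read off from the code $u$ the $O(\log n)$-bit input-head field $i$ and the finite-control fields, evaluate the transition relation of $M$ over the constantly many possible current input symbols to decide which of $\{0,1,x_i,\lnot x_i\}$ the gate computes (and, in the last two cases, output the pointer $i$) --- all arithmetic and table lookup on $O(\log n)$-bit quantities. The connection and gate-type languages therefore lie in $\DLOGTIME$.

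The conceptual content is routine; the only point needing care is the bookkeeping that makes the configuration graph a genuine equal-layer DAG with unique source and sink (handled by the step counter and the two sink configurations) and that keeps configuration decoding --- in particular extracting the input-head position --- within $\DLOGTIME$, which is immediate once the encoding places that position in a fixed contiguous bit-field.
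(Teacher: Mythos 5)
Your proposal is correct and follows essentially the same route as the paper: build the time-layered configuration graph of the nondeterministic log-space machine, note that each potential edge depends on at most a single input bit (hence $\NC^0$ with $\DLOGTIME$-uniform gate/connection languages), and reduce acceptance-parity to source-to-sink path parity with a unique source and accepting sink. You simply spell out the normalization details (step counter, padding to exactly $T$ steps, canonical sinks) that the paper leaves as an exercise.
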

\begin{proof}
Let $A \in \parityL$.  Consider the non-deterministic log-space machine $M$ for the language $A$. The main idea is to construct a layered DAG where there is a node for each configuration of $M$ (where the \emph{configuration} includes the contents of the work tape, the position of all tape heads, and the internal state) at each time step. There is a connection from a node in one layer to the next if $M$ could transition between the corresponding configurations in one time step. We leave it as an exercise to check that, in general, paths in the DAG correspond to computation paths.

All that remains is to check a few details:
\begin{itemize} [itemsep=0pt]
	\item Connectivity in the DAG is determined by the two configurations and a single bit of the input (so there is an $\NC^0$ circuit reducing an instance of $A$ to an instance of $\ldagpar$).  Furthermore, the compatibility of the two configurations, the location of the input bit, and its effect can be calculated in $\DLOGTIME$.  That is, the $\NC^0$ circuit is $\DLOGTIME$-uniform.
	\item We assume there is a single accepting configuration, so the number of accepting computation paths is exactly the number of DAG paths between a single source and target.
\end{itemize}
\end{proof}

\begin{lemma}
\label{lem:cnot_decomp}
Let $A = \{a_{i,j}\}$ be an $n \times n$ upper triangular binary matrix with ones along the diagonal. 
%$$
%A = \begin{pmatrix}
%1 & a_1 & a_2  & \cdots & a_n \\
%0 & 1 & a_{n+1} & \cdots & a_{2n-1} \\
%\vdots & \vdots  & \ddots & \ddots & \vdots \\
%0 & 0  & \cdots & 1 & a_{\binom{n}{2}} \\
%0 & 0 & \cdots & 0 & 1
%\end{pmatrix}.
%$$
Then,
$$
A = \prod_{i=n-1}^{1} \prod_{j = i+1}^n \CNOT(j, i)^{a_{i,j} }.
$$
(assume $\prod_{i=1}^n x_i = x_1 x_2 \ldots x_n$ and $\prod_{i=n}^1 x_i = x_n x_{n-1} \ldots x_1$ for non-commutative variables $x_i$.)
\end{lemma}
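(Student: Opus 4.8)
The plan is to rephrase everything as a matrix identity over $\F_2$. Acting on a column vector of bits, the gate $\CNOT(j,i)$ (control $j$, target $i$) sends $x_i \mapsto x_i + x_j$ and fixes every other coordinate, so as a linear map it equals $I + E_{i,j}$, where $E_{p,q}$ denotes the matrix with a single $1$ in position $(p,q)$. Since $i \neq j$ we have $E_{i,j}^2 = 0$, hence over $\F_2$ we get $\CNOT(j,i)^{a_{i,j}} = I + a_{i,j} E_{i,j}$, and for $j > i$ this $E_{i,j}$ is strictly above the diagonal. So the lemma is equivalent to showing
$$
\prod_{i=n-1}^{1} \prod_{j=i+1}^{n} \bigl( I + a_{i,j} E_{i,j} \bigr) \;=\; I + \sum_{1 \le i < j \le n} a_{i,j} E_{i,j},
$$
since the right-hand side is exactly $A$.

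The approach I would take is to expand the $\binom n2$-fold product on the left into its $2^{\binom n2}$ terms and show that every term involving two or more $E$-factors vanishes. Products of elementary matrices obey $E_{p,q} E_{q,s} = E_{p,s}$ and $E_{p,q} E_{q',s} = 0$ when $q \neq q'$, so a monomial $E_{i_1,j_1} E_{i_2,j_2} \cdots E_{i_r,j_r}$ (with the factors kept in the left-to-right order in which they occur in the product) is nonzero only if $j_1 = i_2$, $j_2 = i_3$, and so on. It therefore suffices to show that whenever $E_{i_1,j_1}$ stands to the left of $E_{i_2,j_2}$ in our product one has $j_1 \neq i_2$. By the ordering of the double product (outer index decreasing, inner index increasing), ``$(i_1,j_1)$ before $(i_2,j_2)$'' means either $i_1 > i_2$, or $i_1 = i_2$ and $j_1 < j_2$; in either case $i_1 \ge i_2$, hence $j_1 \ge i_1 + 1 > i_2$, so $j_1 \neq i_2$ as required. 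Thus the only surviving terms are the identity and the $\binom n2$ single-factor terms $a_{i,j} E_{i,j}$, which is the displayed identity.

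A short induction on $n$ is an equally good route: write $A = (\begin{smallmatrix} 1 & v^{T} \\ 0 & A' \end{smallmatrix})$; the inner block for $i = 1$ is $\prod_{j=2}^{n}(I + a_{1,j} E_{1,j}) = (\begin{smallmatrix} 1 & v^{T} \\ 0 & I \end{smallmatrix})$ because those $E_{1,j}$ mutually annihilate, the blocks $i = n-1, \dots, 2$ act as the identity on coordinate $1$ and assemble to $\mathrm{diag}(1, A')$ by the induction hypothesis, and multiplying these two pieces in the stated order — blocks $i \ge 2$ first, then the $i = 1$ block — recovers $A$. The argument is routine; the only point needing care is fixing the conventions, namely the direction in which $\CNOT$ acts on bit vectors and the left-to-right order of evaluation, and then checking that the particular ordering of the double product is exactly the one that rules out any chaining of elementary matrices — I expect that bookkeeping to be the sole (mild) obstacle.
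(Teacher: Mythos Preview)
Your proposal is correct and essentially matches the paper's approach: the paper's proof is just one sentence, noting that the matrix for $\CNOT(j,i)$ is $I + E_{i,j}$ and calling the rest a ``straightforward calculation,'' which is exactly the computation you carry out in detail (your expansion argument and your induction both work, and your check that the specific left-to-right ordering prevents any chain $E_{i_1,j_1}E_{i_2,j_2}$ from surviving is the only nontrivial point).
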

\begin{proof}
The proof follows from a straightforward calculation using the fact that the matrix for $\CNOT(j,i)$ is the identity matrix except that the $(i,j)$th entry is 1.
\end{proof}

\begin{lemma}
\label{lem:cnotmult_reduction}
$\ldagpar \subseteq (\NC^0)^{\cnotmult}$ for $\DLOGTIME$-uniform $\NC^0$ circuits. 
\end{lemma}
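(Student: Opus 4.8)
The plan is to reduce an $\ldagpar$ instance to a $\cnotmult$ instance by packing all of the layer-to-layer adjacency data into a single upper-unitriangular matrix over $\F_2$, whose \emph{inverse} has the desired path-parity sitting in its top-right entry, and then observing that such a matrix and its inverse are products of $\CNOT$ gates that can be read off one entry at a time. Concretely, an $\ldagpar$ instance is given by adjacency matrices $A_1,\dots,A_n\in\F_2^{m\times m}$, where $(A_k)_{ab}=1$ iff there is an edge from vertex $a$ in layer $k-1$ to vertex $b$ in layer $k$; we may take the source to be the first vertex of layer $0$ and the target the last vertex of layer $n$ (the form produced by the reduction from $\parityL$). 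Then the number of source-to-target paths, taken mod $2$, is exactly $(A_1A_2\cdots A_n)_{1,m}$ over $\F_2$.

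Order the $N:=(n+1)m$ vertices by layer (layer $0$ first, then layer $1$, \dots, then layer $n$), with the natural order inside each layer, and let $E\in\F_2^{N\times N}$ be the resulting adjacency matrix of the DAG. Since edges go only from one layer to the next, $E$ is strictly upper triangular, $E^{n+1}=0$, and the only nonzero block of $E^{n}$ is the (layer $0$, layer $n$) block, which equals $A_1\cdots A_n$; all $E^{j}$ with $j<n$ have a zero (layer $0$, layer $n$) block. Over $\F_2$ we have $(I+E)^{-1}=I+E+E^{2}+\cdots+E^{n}$, so the top-right entry of $(I+E)^{-1}$ — i.e.\ its (first vertex of layer $0$, last vertex of layer $n$) entry — is precisely $(A_1\cdots A_n)_{1,m}$, the answer.

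It remains to express $(I+E)^{-1}$ as a product of $\CNOT$ gates. The matrix $I+E$ is upper triangular with ones on the diagonal, so by \Cref{lem:cnot_decomp} it factors as $I+E=h_1h_2\cdots h_L$ with $L=\binom{N}{2}$, where each $h_\ell\in\CNOT_N$ is either the identity or a single gate $\CNOT(j,i)$ whose presence is governed by the one bit $(I+E)_{ij}$. Because $\CNOT$ gates (and the identity) are self-inverse, $(I+E)^{-1}=h_L h_{L-1}\cdots h_1$, the same list reversed. Hence the output of the reduction is the $\cnotmult$ instance consisting of the reversed gate list $(h_L,\dots,h_1)$ on $N$ wires, whose top-right entry is the required path parity. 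Note $N$ and $L$ are polynomial in $n$ and $m$, and since $n,m$ are polynomially related in $\ldagpar$ the instance has polynomially related parameters.

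Finally, this reduction is in $\NC^{0}$ with $\DLOGTIME$ uniformity. For $i<j$ the bit $(I+E)_{ij}$ is the constant $0$ unless $i$ and $j$ lie in consecutive layers, in which case it is a single input bit (an entry of some $A_k$); consequently each output bit — an entry of some $h_\ell$ — is either constant or copies a single input bit, so the reduction has bounded fan-in. The remaining bookkeeping — given an output index, recover the gate slot $(i,j)$ by unranking the double product of \Cref{lem:cnot_decomp}, determine which layers and vertices $i,j$ name, and hence which input bit controls that gate — is elementary arithmetic on indices and is computable in $\DLOGTIME$. The only genuinely delicate point is precisely this bounded-fan-in requirement: it forces us to avoid any reindexing of the DAG that would depend on a binary-encoded source/target position (which would make an output bit read $\Theta(\log m)$ input bits), which is why we work with canonical source and target vertices; everything else — the identity $(I+E)^{-1}=\sum_{j\le n}E^{j}$, the triviality of the low-degree blocks, and the reversal trick — is routine.
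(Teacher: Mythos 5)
Your proposal is correct and is essentially the paper's own proof: your matrix $I+E$ (vertices ordered by layer) is exactly the block bidiagonal matrix $\mathbf{A}$ used there, and you likewise invoke \Cref{lem:cnot_decomp}, reverse the self-inverse gate list to realize $\mathbf{A}^{-1}$, and argue single-bit dependence plus $\DLOGTIME$ index bookkeeping. The only difference is cosmetic: you obtain the top-right entry via the geometric series $(I+E)^{-1}=\sum_{j\le n}E^{j}$ instead of the paper's explicit block formula for the inverse.
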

\begin{proof}
The transitions between a pair of layers in a layered DAG can be expressed as an adjacency matrix. Thus, an instance of $\ldagpar$ is equivalent to binary matrices $A_1, \ldots, A_n \in \F_2^{m \times m}$, where the number of paths from a source $s_i$ in the first layer to a target $t_j$ in the last layer is the $(i,j)$th entry of the product $A_1 \cdots A_n$ over $\F_2$. 

Consider the matrix $\mathbf{A} \in \F_2^{m(n+1) \times m(n+1)}$ given below:
$$
\mathbf{A} = \begin{pmatrix}
I & A_1 & 0 & 0 & \cdots & 0 & 0 \\
0 & I & A_2 & 0 & \cdots & 0 & 0 \\
0 & 0 & I & A_3 & \cdots & 0 & 0 \\
\vdots & \vdots & & \ddots & \ddots & \vdots & \vdots \\
0 & 0 & 0 & 0 & \cdots & I & A_{n} \\
0 & 0 & 0 & 0 & \cdots & 0 & I
\end{pmatrix}.
$$
The inverse, $\mathbf{A}^{-1}$, is upper triangular, and we leave it as an exercise to show that for $i \leq j$, the $(i,j)$th block of $\mathbf{A}^{-1}$ is $(-1)^{j-i} A_{i} \cdots A_{j-1}$. In particular, the upper right block is (ignoring sign, since we are working in $\F_2$) the product $A_1 \cdots A_n$. In other words, we can assume without loss of generality that the top right entry of $\mathbf A^{-1}$ reports the parity of the number of paths from the source to the sink in the $\ldagpar$ problem.  Given the decomposition of $\mathbf A$ into $\CNOT$ gates by \Cref{lem:cnot_decomp}, we can reverse the $\CNOT$ gates to construct the circuit for the inverse of $\mathbf A$.

We can construct the sequence of $\CNOT$ gates in advance (see \Cref{lem:cnot_decomp}), and each $\CNOT$ is included or omitted (i.e., replaced with the identity) based on whether there is an edge between a corresponding pair of nodes in the layered DAG. Thus, each $\CNOT$ depends on a single input bit, so there is an $\NC^0$ circuit reducing an instance of $\ldagpar$ to $\cnotmult$. Moreover, the sequence and corresponding input bits can be determined in $\DLOGTIME$.
\end{proof}

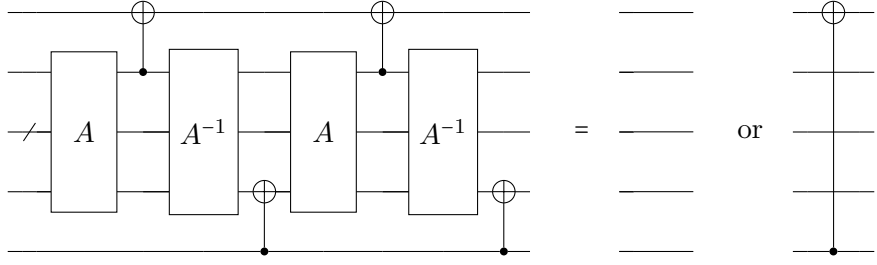
\begin{figure}
\centering
\begin{quantikz}[row sep=0em, column sep=.5em, thin lines]
&\qw &\qw & \qw & \targ{} & \qw & \qw & \qw & \targ{} & \qw & \qw & \qw &&&& \hphantomgate{wide}\qw &&&& \qw & \targ{} & \qw & \qw \\
&\qw &\qw & \gate[wires=3]{\,\; A \,\;} & \ctrl{-1} & \gate[wires=3]{A^{-1}} & \qw & \gate[wires=3]{\,\; A \,\;} & \ctrl{-1} & \gate[wires=3]{A^{-1}} & \qw & \qw &&&& \hphantomgate{wide}\qw &&&& \qw & \qw & \qw & \qw \\
&\qw &\qw\qwbundle{} & \qw & \qw & \qw {}&\qw &\qw &\qw &\qw &\qw &\qw  &&\hspace{.2cm} =\hspace{.2cm} && \hphantomgate{wide}\qw &&\hspace{.2cm}\text{or}\hspace{.2cm} && \qw&\qw&\qw & \qw\\
&\qw&\qw & \qw & \qw & \qw & \targ{} & \qw & \qw & \qw & \targ{} & \qw &&&& \hphantomgate{wide}\qw &&&& \qw & \qw & \qw & \qw \\
&\qw &\qw & \qw & \qw & \qw & \ctrl{-1} & \qw & \qw & \qw & \ctrl{-1} & \qw &&&& \hphantomgate{wide}\qw  &&&& \qw & \ctrl{-4} & \qw & \qw
\end{quantikz}
\caption{Construction of $\CNOT$ if top right entry of $A$ is equal to 1.}  
\label{fig:cnot_to_swap_construction}
\end{figure}

\begin{theorem}
\label{thm:CNOT_mult_hardness}
$\cnotmult \subseteq (\NC^0)^\promisecnotmult$ for $\DLOGTIME$-uniform $\NC^0$ circuits. 
\end{theorem}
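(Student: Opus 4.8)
The plan is to give an $\NC^0$ (in fact $\DLOGTIME$‑uniform $\NC^0$) Turing reduction from $\cnotmult$ — compute the top‑right entry $A_{1,m}$ of $A := g_1\cdots g_n$ for $g_1,\dots,g_n\in\CNOT_m$ — to $\promisecnotmult$, the problem of deciding whether a product of CNOT gates is the $3$‑cycle $C_3$ on the first three bits or the identity. The heart of the construction is the gadget of \Cref{fig:cnot_to_swap_construction}. I will work on $m+O(1)$ qubits: an $m$‑qubit \emph{register} together with a constant number of \emph{output} qubits, and I will abbreviate by ``block $A$'' (resp. ``block $A^{-1}$'') the subcircuit that applies the list $g_1,\dots,g_n$ to the register in the appropriate order (resp. in reverse) — crucially, since each $g_i$ is its own inverse, both blocks are just relabeled copies of the input gate list, which is what will keep the reduction in $\NC^0$.

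For two chosen output qubits $p\neq q$ the gadget is: block $A$; $\CNOT$ from the top register qubit into $p$; block $A^{-1}$; $\CNOT$ from $q$ into the bottom register qubit; block $A$; $\CNOT$ from the top register qubit into $p$ again; block $A^{-1}$; $\CNOT$ from $q$ into the bottom register qubit again. I would verify by tracing these eight stages as $\F_2$‑affine updates that the register is exactly restored at the end, that $q$ is unchanged, and that $p\mapsto p\oplus A_{1,m}\,q$; the point of the trace is that running $A$, reading off the top coordinate, undoing $A$, planting $q$ into the bottom coordinate, and rerunning $A$ causes the matrix entry $(Ae_m)_1 = A_{1,m}$ to appear in $p$. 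Consequently the gadget equals $\CNOT(q\to p)$ if $A_{1,m}=1$ and is the identity on all its qubits if $A_{1,m}=0$.

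A single CNOT is not a legal $\promisecnotmult$ instance, and one cannot hope to produce $C_3$ by conjugating a single gadget, since $C_3$ is not conjugate to a transvection: $C_3-I$ has rank $2$ over $\F_2$ while any conjugate of a CNOT has rank‑$\le 1$ difference from the identity. So I would instead fix once and for all a constant‑length decomposition $C_3 = E_1\cdots E_r$ into CNOT gates on the first three bits (e.g. write $C_3$ as a product of two transpositions, each three CNOTs, giving $r=6$), instantiate one gadget per $E_i$ on the \emph{same} register and the \emph{same} three output qubits, and concatenate. Because every gadget reads the same bit $A_{1,m}$, the concatenation realizes $E_1\cdots E_r = C_3$ on the output qubits with the register fixed when $A_{1,m}=1$, and the identity on all qubits when $A_{1,m}=0$; it is therefore a valid $\promisecnotmult$ instance whose answer is precisely $A_{1,m}$, completing the reduction (and, combined with the earlier lemmas, yielding $\parityL$‑hardness of $\promisecnotmult$).

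It then remains to check that the map producing this $\promisecnotmult$ instance is a $\DLOGTIME$‑uniform $\NC^0$ circuit: the output sequence has length $\Theta(n)$, and each of its gates is either a fixed CNOT among a constant set of qubits (determined by the gadget template and the fixed decomposition of $C_3$) or a copy of some input gate $g_t$ with both qubit indices shifted by a constant, so every output bit is a constant or a single input bit, and the bookkeeping (which copy, which stage, which shifted gate) is elementary index arithmetic. I expect the main obstacle to be the gadget verification itself — keeping the $\F_2$ bookkeeping straight, in particular which physical qubit plays the role of which index of $A$ and confirming the register returns exactly to its input so that chaining gadgets is sound — together with the small but essential realization that $C_3$ must be assembled from several commonly‑controlled CNOTs rather than produced in one shot; the uniformity check and the existence of a CNOT decomposition of $C_3$ are routine.
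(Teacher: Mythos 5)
Your proof is correct and takes essentially the same approach as the paper: the eight-stage gadget you trace is exactly the circuit of \Cref{fig:cnot_to_swap_construction}, and the paper likewise builds the conditional three-cycle out of several copies of this conditional-$\CNOT$ gadget (copies of the input list, its reversal, and a few fixed $\CNOT$s), with the same uniformity bookkeeping. The only difference is presentational: the paper derives the gadget via the intermediate matrices $B$ and $B^{T}$ (first extracting the first row of $A$, then its last column) for the upper-triangular matrices coming from \Cref{lem:cnotmult_reduction}, whereas you verify the gadget directly by an $\F_2$ trace valid for an arbitrary product of $\CNOT$s.
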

\begin{proof}
Let $A = \{a_{i,j}\}$ be the matrix from the reduction in \Cref{lem:cnotmult_reduction} whose top right entry is $\parityL$-hard to compute.  We now give a procedure to create a new matrix from $A$ and additional $\CNOT$ gates, which isolates the effect of the top right entry---namely, if the top right entry is 1, the product of the gates involved is the 3-cycle on the first three bits; otherwise, the product is the identity gate.  We show the entire construction in \Cref{fig:cnot_to_swap_construction}.  

For the analysis, we break the construction into a simpler units. First, we extract information about the first row of $A$ using the following construction:
\begin{center}
\begin{quantikz}[row sep=.5em, column sep=.5em, thin lines]
& \qw & \targ{} & \qw & \targ{} & \qw \\
& \gate[wires=2]{\,\; A \,\;} & \ctrl{-1} & \gate[wires=2]{A^{-1}} & \ctrl{-1} & \qw \\
& \qw & \qw & \qw & \qw & \qw
\end{quantikz}
\end{center}
Call this matrix $B$.  We can determine $B$ by considering its effect on each standard basis vector $e_i \in \{0,1\}^{m+1}$ (where $e_i$ is the binary vector whose only non-zero entry is at location $i$). We have
\begin{center}
\scalebox{.8}{$
\begin{matrix}
(1,0, \ldots, 0) & \xrightarrow{A} & (1,0, \ldots, 0) & \xrightarrow{\CNOT} & (1,0, \ldots, 0) & \xrightarrow{A^{-1}} & (1,0, \ldots, 0) & \xrightarrow{\CNOT} & (1,0, \ldots, 0) \\
(0,1,0 \ldots, 0) & \xrightarrow{A} & (0,1,0, \ldots, 0) & \xrightarrow{\CNOT} & (1,1, \ldots, 0) & \xrightarrow{A^{-1}} & (1,1, \ldots, 0) & \xrightarrow{\CNOT} & (0,1, \ldots, 0) \\
e_{i+1} & \xrightarrow{A} & (0, a_{1,i}, \ldots, a_{m,i}) & \xrightarrow{\CNOT} & (a_{1,i}, a_{2,i} \ldots, a_{m,i}) & \xrightarrow{A^{-1}} & (a_{1,i},1,0, \ldots, 0) & \xrightarrow{\CNOT} & (a_{1,i},1,0, \ldots, 0) 
\end{matrix}$}
\end{center}
where the last row is for $i > 2$.  That is, $B$ is the identity matrix except the first row of $A$ is now (more or less) the first row of $B$.  It is now clear that if we repeat the construction except with the matrix $B^T$, we can extract the information from the last column:
\begin{center}
\begin{quantikz}[row sep=.5em, column sep=.5em, thin lines]
& \gate[wires=2]{\, B^T \,} & \qw & \gate[wires=2]{B^{-T}} & \qw & \qw \\
& \qw & \ctrl{1} & \qw & \ctrl{1} & \qw \\
& \qw & \targ{} & \qw  & \targ{} & \qw 
\end{quantikz}
\end{center}

Notice that this will produce a $\CNOT$ gate if the top right entry of $A$ is 1; otherwise, it is the identity.  From here, we can simply use this $\CNOT$ gate to construct the 3-cycle on the first three qubits.  The entire construction of the $\CNOT$ gate is shown in \Cref{fig:cnot_to_swap_construction}, where some $\CNOT$ gates have been eliminated by noticing that the construction of the circuit $B$ is the same if you apply the $\CNOT$ gate first or last.

Clearly the new $\CNOT$ circuit is composed of copies of the old $\CNOT$ circuit, its inverse (i.e., the reversed sequence of $\CNOT$ gates), and a handful of new $\CNOT$ gates. It follows that there is a $\DLOGTIME$-uniform $\NC^{0}$ circuit for the reduction.
\end{proof}

%!TEX root = ../samp_clifford.tex

\section{\texorpdfstring{$\parityL$}{Parity-L} Upper Bound}
\label{sec:upper_bounds}

One of the main results of this paper is that $\parityL$ is required for Wide Cluster Clifford Simulation (\Cref{prob:parityL}).  In this section, we prove a matching upper bound, showing that new problems involving non-Clifford gates will be required to go beyond $\parityL$-hardness.  There are two results that are almost what we want.  A theorem of Jozsa and Van den Nest shows that sampling Clifford circuits and calculating marginal probabilities is in \emph{polynomial time}, rather than $\parityL$ \cite{jv:2014_clifford}.  Second, the $\parityL$ procedure of Aaronson and Gottesman for simulating Clifford circuits only suffices for measurement on a single qubit \cite{ag:2004}.  Since our protocol involves sampling from many qubits, we must extend those arguments. 

\begin{problem}[Multi-qubit Clifford Measurement]
Given $n$ generators for an $n$-qubit stabilizer state $\ket{\psi}$ and an integer $m \le n$, output a measurement of the first $m$ qubits of $\ket{\psi}$ in the $Z$-basis and a corresponding set of generators for the measured state. 
\end{problem}
\begin{problem}[Postselected Multi-qubit Clifford Measurement]
Given $n$ generators for an $n$-qubit stabilizer state $\ket{\psi}$ and $p \in \{0,1\}^m$, output $n$ generators for the projection of $\ket{p}\bra{p} \otimes I^{m-n}$ onto $\ket {\psi}$, or report that projection is empty.  
\end{problem}

Let us define some useful notation for the following theorem. Recall that we can write any Pauli string $g \in \mathcal P_n$ as a $(2n +1)$-length binary vector: the Pauli-$\pfont{X}$ components, the Pauli-$\pfont{Z}$ components, and the sign bit.  For this reason, let us define the characteristic function for the $\pfont{X}$ Paulis of $g$ as $g^{(x)}$ where $g_{j}^{(x)} = 1$ if the $j$th Pauli is $\pfont{X}$ or $\pfont{Y}$, and 0 otherwise.  Similarly, define the characteristic function for the $\pfont{Z}$ Paulis of $g$ as $g^{(z)}$ where $g_{j}^{(z)} = 1$ if the $j$th Pauli is $\pfont{Z}$ or $\pfont{Y}$.  Let $\pfont{Z}_j \in \mathcal P_n$ be the Pauli operator which applies Pauli-$\pfont{Z}$ to the $j$th qubit.

\begin{theorem}
\label{thm:multiqubit_measurement}
The Multi-qubit Clifford Measurement Problem and the Postselected Multi-qubit Clifford Measurement Problem are in relational $\parityL$.
\end{theorem}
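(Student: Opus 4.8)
The plan is to give a deterministic $\parityL$ procedure that simulates, one qubit at a time, the standard Aaronson--Gottesman stabilizer measurement update, and then to control the cost of the $m$ adaptive updates by the standard fact that $\parityL$ is closed under $\parityL$ oracle queries. The key structural point is that both problems are \emph{relational}: we may return \emph{any} valid outcome/generator pair, so there is no need to sample (which would be problematic for a class with no built-in randomness), and we are free to resolve every ``free'' measurement to the bit $0$ (or, in the postselected problem, to the prescribed bit $p_j$). I would encode each input generator as a triple $(x_i, z_i, \epsilon_i) \in \F_2^{n} \times \F_2^{n} \times \F_2$ with $S_i = (-1)^{\epsilon_i}\pfont{X}^{x_i}\pfont{Z}^{z_i}$; no $\pm i$ phases ever arise because the $S_i$ pairwise commute, so every Pauli product we form is Hermitian.

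The single-qubit $\pfont{Z}$-measurement update on qubit $j$, starting from a current set of $n$ generators, splits into two cases. If some generator has an $\pfont{X}$- or $\pfont{Y}$-component on qubit $j$ (equivalently, anticommutes with $\pfont{Z}_j$), pick the first such $S_a$, replace every other anticommuting generator $S_b$ by the product $S_b S_a$, replace $S_a$ itself by $+\pfont{Z}_j$, and report outcome $0$: the new $\pfont{X}$- and $\pfont{Z}$-parts are $\F_2$-linear (row XORs) and the sign updates are evaluations of an explicit quadratic form over $\F_2$ in the generator data (the symplectic cross terms $z_b \cdot x_a$ plus the $\epsilon$'s). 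If no generator anticommutes with $\pfont{Z}_j$, then $\pm \pfont{Z}_j$ lies in the stabilizer group: solve the linear system $\sum_i t_i (x_i, z_i) = (0, e_j)$ over $\F_2$ for $t$, compute the sign of $\prod_{i : t_i = 1} S_i$ (again one quadratic-form evaluation), report the corresponding forced outcome, and leave the generators unchanged. Each case is a constant number of $\F_2$ linear-algebra primitives (row operations, rank, solving $Ax=b$, and one quadratic sign form) on $\poly(n)$-size data; since iterated matrix multiplication over $\F_2$ is in $\parityL$ (as already used in the tableau discussion) and Gaussian elimination reduces to it, and since an $\F_2$-quadratic form is a $\parityL$-computable function of its inputs, the whole single-qubit update is a $\parityL$-computable function of the current generator set.

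Iterating the update for $j = 1, \dots, m$ produces the Multi-qubit Clifford Measurement output: the $m$ reported bits $p$ together with the $n$ generators of the final state (each update preserves the number of generators). The Postselected version is identical except that at qubit $j$ we project onto $\ket{p_j}$ rather than $\ket{0}$: if qubit $j$ is free we force the replacement generator to be $(-1)^{p_j}\pfont{Z}_j$, and if it is forced we check that the forced value equals $p_j$, halting with ``empty'' otherwise (the projection vanishes exactly then). Since we must compose $m = \poly(n)$ adaptive rounds, each only $\parityL$-computable from the previous round's output, the overall procedure lies in $\parityL^{\parityL} = \parityL$. Invoking this closure property --- together with the observation that the relational formulation removes any need for randomness --- is the one step I expect to require care; everything else is the routine verification that each primitive is $\F_2$ linear algebra or an $\F_2$ quadratic form, and the easy fact that the stabilizer measurement update is correct.
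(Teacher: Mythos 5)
There is a genuine gap, and it is exactly at the step you flag as ``the one step requiring care.'' The closure property you invoke ($\parityL^{\parityL} = \parityL$, or $\cL$ with a $\parityL$ oracle equals $\parityL$) covers a \emph{logspace} base machine making oracle queries; that base machine cannot store the $\Theta(n^2)$-bit intermediate generator sets produced after each of your $m = \poly(n)$ adaptive single-qubit updates, and it cannot recompute them on demand without nesting $m$ levels of recomputation, which costs $\Omega(m \log n)$ space. So ``$m$ adaptive rounds, each a $\parityL$-computable function of the previous round's output, hence in $\parityL$'' is not a valid inference: the same argument would show that iterating $\poly(n)$ many logspace-computable configuration updates stays in $\cL$, i.e.\ $\P \subseteq \cL$. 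The adaptivity is the real obstacle here --- which generator serves as the pivot at step $j$, and whether qubit $j$ is free or forced, depends on the evolving tableau, so the $m$-fold composition is not a fixed iterated product that $\parityL$ can evaluate. This is precisely why the paper states that the Aaronson--Gottesman procedure ``only suffices for measurement on a single qubit'' and that the arguments must be extended.

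The paper's proof removes the adaptivity entirely and expresses the whole multi-qubit measurement as non-adaptive $\F_2$ linear algebra on the \emph{original} generators, orchestrated by an $\cL$ machine with a $\parityL$ (rank / linear-system) oracle. Concretely: the qubits with random outcomes are identified globally via the rank of the $X$-block $G^{(x)}_{n,m}$; a basis of generators is selected by the rank comparisons $\rank(G^{(x)}_{i,m}) > \rank(G^{(x)}_{i-1,m})$; the pivot columns are found by testing whether $f_j$ lies in the row space of $B^{(x)}_{r,j}$; the post-measurement generators are obtained in one shot by solving linear systems such as $xB^{(x)}_{r,j} = f_j$; and the deterministic outcomes are read off by solving $xG^{(z)} = e_j$, $xG^{(x)} = 0$ and computing the sign of $\prod_k x_k g_k$, where the sign is accumulated qubit-by-qubit in $\cL$ because a full generator cannot be held in memory. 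Your per-step update and your relational resolution of the free outcomes (fixing $0$, or $p_j$ with an emptiness check, so no randomness is needed) are both fine; to complete the proof you would need to replace the iterated adaptive composition with this kind of one-shot characterization, at which point you have essentially reconstructed the paper's argument.
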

\begin{proof}
We will use the fact that a problem is in $\parityL$ if it is computable by an $\cL$ machine with access to a $\parityL$ oracle \cite{hrv:2000_parityL}.  Let us assume that our initial state has generators $G = \{g_1, g_2, \ldots, g_n\} \subseteq \mathcal P_n$.   To keep track of the characteristic Pauli-$\pfont{X}$ and Pauli-$\pfont{Z}$ vectors of the generators, we define the matrix $G^{(x)}_{s,t} = \{g^{(x)}_{i,j}\}$ for $i \in [s]$ and $j \in [t]$, and $G^{(z)}_{s,t}$ similarly.  We divide any multi-qubit measurement into two steps: measurement on a subset of the qubits where the outcomes will be random, and then measurement on the remaining qubits where the outcome is deterministic.  

\vspace{1em}\noindent\textit{Step 1: Measurements which return a random outcome}

\noindent When measuring multiple qubits at a time, we first identify a maximal set of qubits that yield a random outcome when measured.\footnote{It is worth noting here that this choice is not unique.  Consider the state $\frac{\ket{00} + \ket{11}}{\sqrt{2}}$.  Measuring either the first or second qubit results in a random outcome, but whichever qubit is measured second will have the same outcome as the first.} To do this, let us first recall the procedure for measuring a single qubit \cite{ag:2004, gottesman:1998_stabilizers}.  If we measure qubit $j$ and there exists a generator $g_i$ such that $g_{i,j}^{(x)} = 1$, then the measurement outcome is random.  We can construct a new set of generators as follows:
\begin{itemize} [itemsep = 0pt]
\item For all $k \neq i$ such that $g_{k,j}^{(x)} = 1$, replace $g_k$ with $g_k g_i$,
\item Replace $g_i$ with $\sigma_j \pfont{Z}_j$ where $\sigma_j$ is a random sign in $\{-1,+1\}$.
\end{itemize}

Thus, the positions of the nondeterministic measurements are a function of the matrix $G^{(x)}_{n,m}$.  In fact, the number of random measurements is equal to the rank of $G^{(x)}_{n,m}$ since the single-qubit measurement procedure described above row reduces on this submatrix.  We can break the row reduction into two $\parityL$ steps: choosing a basis, and selecting new generators.

First, we select a basis $B = \{b_1, \ldots, b_r\} \subseteq G$ such that $|B| = \rank(G^{(x)}_{n,m})$ and the row space of $B^{(x)}_{r,m}$ is equal to the row space of $G^{(x)}_{n,m}$, where the matrix $B^{(x)}_{s,t}$ is once again defined as $\{b^{(x)}_{i,j}\}$ for $i \in [s]$ and $j \in [t]$.  We select this basis in the following way:  
$$
g_i \in B \iff \rank(G^{(x)}_{i,m}) > \rank(G^{(x)}_{i-1,m})
$$
Notice that we can construct this set with $\cL$ machine with access to an oracle for rank (which is in $\parityL$). 

The next step is to compute a new set of generators such that their characteristic Pauli-$X$ vectors are in row echelon form.  First, define vectors $f_j = (0,\ldots, 0, 1)$ of length $j$.  Notice that there are exactly $r$ different $f_j$ such that $f_j$ is in the row space of $B^{(x)}_{r,j}$.  This comes directly from the row echelon form of $B^{(x)}_{r,m}$.  Once again, we can find these $r$ vectors in $\parityL$ by comparing the rank of $B^{(x)}_{r,j}$ to the rank of $f_j$ appended to $B^{(x)}_{r,j}$.  Let $F = \{f_j : f_j \in \operatorname{rowsp}(B^{(x)}_{r,j}) \}$ be this set.

For each $f_j \in F$, solve the linear equation $x B^{(x)}_{r,j} = f_j$ where $x \in \{0,1\}^{1 \times r}$ is unknown, and add $\prod_{k=1}^r x_k b_k$ to the new list of generators.  For each generator $g_i \in G \backslash B$ not in the basis, solve the linear equation $x B^{(x)}_{r,m} = (g^{(x)}_{i,1}, \ldots, g^{(x)}_{i,m})$, and add $g_i \prod_{k=1}^r x_k b_k$ to the new list of generators.  Let us return to the issue of computing these products in $\parityL$ later.

The final stage of the random measurements procedure depends on whether or not there is postselection. For the Multi-qubit Clifford Measurement Problem, we replace each generator corresponding to a vector $f_j \in F$ by a new generator $\sigma_j \pfont{Z}_j$ where $\sigma_j$ is a random sign in $\{-1,+1\}$.  For the Postselected Multi-qubit Clifford Measurement Problem, we perform the same replacement, but set $\sigma_j = (-1)^{p_j}$ where $p_j$ was the intended value of the postselected bit.

\vspace{1em}\noindent\textit{Step 2: Measurements with deterministic outcome}

\noindent Now that the random measurement outcomes are fixed, we must compute the outcomes of the deterministic measurements. Since these measurements do not affect the state, we already know what the final generators for the state are (except in the postselected case, where the projection may be empty). Let $G$ be the set of generators for the state after measurement in the first step.  Supposing the outcome on qubit $j$ is deterministic, it must be that some product of the generators in $G$ yields $\pfont{Z}_j$ or $-\pfont{Z}_j$.  We only need to report which is true for each $j \in [m]$.
 
Let $e_j$ to be the all-zeros vector of length $n$ which has a $1$ at position $j$.  Solve the equations $x G^{(z)} = e_j$ and $x G^{(x)} = (0, \ldots, 0)$ for $x \in \{0,1\}^{1 \times n}$, and let $\sigma_j = \operatorname{sgn}( \prod_{k=1}^n x_k g_k)$.  For the Multi-qubit Clifford Measurement Problem, output $\sigma_j$. For the Postselected Multi-qubit Clifford Measurement Problem, report that the projection is empty if $\sigma_j \neq (-1)^{p_j}$.

All that remains to complete the theorem is to show that we can compute products of generators in $\parityL$.  The first thing to note is that we cannot keep an entire generator $g \in \mathcal P_n$ in memory.  Each time we need the $j$th Pauli of $g$, we recompute it from scratch.  Notice that calculating the product of polynomially many elements of $\mathcal P_1$ can performed in $\cL$.  Since we are computing each Pauli locally, we must also keep track of a local sign bit in $\{\pm 1, \pm i\}$.   To compute the sign of a generator (which will necessarily be either $+1$ or $-1$), we compute the product of each of the local sign bits, which once again is an operation in $\cL$.  This completes the proof.
\end{proof}

%!TEX root = ../samp_clifford.tex

\section{ZX-calculus}
\label{app:zx}

The ZX-calculus is a graphical language for quantum computing. It consists of a small set of generators (\Cref{tab:zx_generators}), which are graphical descriptions of states, operations, isometries, and projections, and a set of local replacement rules (\Cref{tab:zx_rules}). We only describe those generators and rules that we will use throughout this paper, which are not complete for quantum computation.

\begin{table}[ht!]
\centering
\begin{tabular}{| l | c l c | l |} \hline
Type  & \hspace{.5em} & Generator & \hspace{.5em}  & Meaning \\ \hline
state  && \parbox[c][2em]{0pt}{\begin{tikzpicture}
	\begin{pgfonlayer}{nodelayer}
		\node [style=none] (1) at (1, 0) {};
		\node [style=Z phase dot] (2) at (0, 0) {$\alpha$};
	\end{pgfonlayer}
	\begin{pgfonlayer}{edgelayer}
		\draw (2) to (1.center);
	\end{pgfonlayer}
\end{tikzpicture}} && $\ket{0} + e^{i \alpha} \ket{1}$ \\ \hline
unitary && \parbox[c][2em]{0pt}{\begin{tikzpicture}
	\begin{pgfonlayer}{nodelayer}
		\node [style=none] (1) at (0.75, 0) {};
		\node [style=Z phase dot] (2) at (0, 0) {$\alpha$};
		\node [style=none] (3) at (-0.75, 0) {};
	\end{pgfonlayer}
	\begin{pgfonlayer}{edgelayer}
		\draw (2) to (1.center);
		\draw (2) to (3.center);
	\end{pgfonlayer}
\end{tikzpicture}} && $\ket{0}\bra{0} + e^{i \alpha} \ket{1}\bra{1}$ \\ \hline
unitary && \parbox[c][2em]{0pt}{\begin{tikzpicture}
	\begin{pgfonlayer}{nodelayer}
		\node [style=none] (1) at (0.75, 0) {};
		\node [style=none] (3) at (-0.75, 0) {};
		\node [style=hadamard] (4) at (0, 0) {};
	\end{pgfonlayer}
	\begin{pgfonlayer}{edgelayer}
		\draw (3.center) to (4);
		\draw (4) to (1.center);
	\end{pgfonlayer}
\end{tikzpicture}} && $\ket{+}\bra{0} + \ket{-}\bra{1}$ \\ \hline
projection && \parbox[c][2em]{0pt}{\input{figures/z_projection.tikz}} && $\bra{0} + e^{i \alpha} \bra{1}$ \\ \hline
spider && \hspace{-15pt}\parbox[c][5em]{0pt}{\begin{tikzpicture}
	\begin{pgfonlayer}{nodelayer}
		\node [style=none] (1) at (-0.75, 0.75) {};
		\node [style=Z phase dot] (2) at (0, 0) {$\alpha$};
		\node [style=none] (3) at (-0.75, 0.5) {};
		\node [style=none] (4) at (-0.75, -0.75) {};
		\node [style=none] (5) at (0.75, -0.75) {};
		\node [style=none] (6) at (0.75, 0.75) {};
		\node [style=none] (7) at (0.75, 0.5) {};
		\node [style=none] (8) at (-0.5, 0) {$\vdots$};
		\node [style=none] (9) at (0.5, 0) {$\vdots$};
		\node [style=none] (10) at (1, 0.75) {};
		\node [style=none] (11) at (1, -0.75) {};
		\node [style=none] (12) at (-1, 0.75) {};
		\node [style=none] (13) at (-1, -0.75) {};
		\node [style=none] (14) at (1.25, 0) {$m$};
		\node [style=none] (16) at (-1.25, 0) {$n$};
	\end{pgfonlayer}
	\begin{pgfonlayer}{edgelayer}
		\draw [bend right, looseness=1.25] (2) to (1.center);
		\draw [bend right] (2) to (3.center);
		\draw [bend left, looseness=0.75] (2) to (4.center);
		\draw [bend right] (2) to (5.center);
		\draw [bend left] (2) to (7.center);
		\draw [bend left, looseness=1.25] (2) to (6.center);
		\draw [style=brace edge] (10.center) to (11.center);
		\draw [style=brace edge] (13.center) to (12.center);
	\end{pgfonlayer}
\end{tikzpicture}} && $\ket{0}^{\otimes m} \bra{0}^{\otimes n}  + e^{i \alpha} \ket{1}^{\otimes m} \bra{1}^{\otimes m} $ \\ \hline
\end{tabular}
\caption{Generators for the ZX-calculus. For each green generator listed above for the $Z$-basis, there is an analogous red generator for the $X$-basis. By convention, a solid green/red circle implies that $\alpha$ equals $0$.}
\label{tab:zx_generators}
\end{table}

\begin{table}[ht!]
\centering
\begin{tabular}{| l | c l c |} \hline
Rule name  & \hspace{.5em} & Rule \hspace{11em} & \hspace{.5em}  \\ \hline
Spider fusion  && \parbox[c][5em]{0pt}{\begin{tikzpicture}
	\begin{pgfonlayer}{nodelayer}
		\node [style=none] (1) at (0.5, 0.75) {};
		\node [style=Z phase dot] (2) at (1.25, 0) {$\;\alpha + \beta\;$};
		\node [style=none] (3) at (0.5, 0.5) {};
		\node [style=none] (4) at (0.5, -0.75) {};
		\node [style=none] (5) at (2, -0.75) {};
		\node [style=none] (6) at (2, 0.75) {};
		\node [style=none] (7) at (2, 0.5) {};
		\node [style=none] (8) at (0.6, 0) {\scriptsize$\vdots$};
		\node [style=none] (9) at (1.9, 0) {\scriptsize$\vdots$};
		\node [style=none] (17) at (-3, 0.75) {};
		\node [style=Z phase dot] (18) at (-2.25, 0.25) {$\alpha$};
		\node [style=none] (19) at (-3, 0.6) {};
		\node [style=none] (20) at (-3, 0) {};
		\node [style=none] (21) at (-0.75, 0) {};
		\node [style=none] (22) at (-0.75, 0.75) {};
		\node [style=none] (23) at (-0.75, 0.6) {};
		\node [style=none] (24) at (-2.75, 0.4) {\scriptsize$\vdots$};
		\node [style=none] (25) at (-1.25, 0.35) {\scriptsize$\vdots$};
		\node [style=none] (32) at (-0.1, 0) {$=$};
		\node [style=none] (33) at (-3, -0.25) {};
		\node [style=Z phase dot] (34) at (-1.75, -0.5) {$\beta$};
		\node [style=none] (35) at (-3, -0.4) {};
		\node [style=none] (36) at (-3, -0.75) {};
		\node [style=none] (37) at (-0.75, -0.75) {};
		\node [style=none] (38) at (-0.75, -0.25) {};
		\node [style=none] (39) at (-0.75, -0.4) {};
		\node [style=none] (40) at (-2.5, -0.47) {\scriptsize$\vdots$};
		\node [style=none] (41) at (-1.18, -0.48) {\scriptsize$\vdots$};
	\end{pgfonlayer}
	\begin{pgfonlayer}{edgelayer}
		\draw [bend right, looseness=1.25] (2) to (1.center);
		\draw [bend right] (2) to (3.center);
		\draw [bend left, looseness=0.75] (2) to (4.center);
		\draw [bend right] (2) to (5.center);
		\draw [bend left] (2) to (7.center);
		\draw [bend left, looseness=1.25] (2) to (6.center);
		\draw [bend right, looseness=1.25] (18) to (17.center);
		\draw [bend right] (18) to (19.center);
		\draw [bend left, looseness=0.75] (18) to (20.center);
		\draw [bend right=15] (18) to (21.center);
		\draw [bend left=15] (18) to (23.center);
		\draw [bend left=15, looseness=1.25] (18) to (22.center);
		\draw [bend right, looseness=0.75] (34) to (33.center);
		\draw [bend right, looseness=0.75] (34) to (35.center);
		\draw [bend left, looseness=0.75] (34) to (36.center);
		\draw [bend right, looseness=0.75] (34) to (37.center);
		\draw [bend left=15, looseness=1.25] (34) to (39.center);
		\draw [bend left] (34) to (38.center);
		\draw (18) to (34);
	\end{pgfonlayer}
\end{tikzpicture}} & \\ \hline
Identity rule  && \parbox[c][2em]{0pt}{\begin{tikzpicture}
	\begin{pgfonlayer}{nodelayer}
		\node [style=none] (1) at (-0.75, 0) {};
		\node [style=none] (3) at (-2.25, 0) {};
		\node [style=Z dot] (4) at (-1.5, 0) {};
		\node [style=none] (6) at (2.25, 0) {};
		\node [style=none] (7) at (0.75, 0) {};
		\node [style=none] (11) at (0, 0) {$=$};
	\end{pgfonlayer}
	\begin{pgfonlayer}{edgelayer}
		\draw (3.center) to (4);
		\draw (4) to (1.center);
		\draw (7.center) to (6.center);
	\end{pgfonlayer}
\end{tikzpicture}} & \\ \hline
Color change  && \parbox[c][5em]{0pt}{\begin{tikzpicture}
	\begin{pgfonlayer}{nodelayer}
		\node [style=none] (1) at (-2, 0.75) {};
		\node [style=Z phase dot] (2) at (-1.25, 0) {$\alpha$};
		\node [style=none] (4) at (-2, -0.75) {};
		\node [style=none] (5) at (-0.5, -0.75) {};
		\node [style=none] (6) at (-0.5, 0.75) {};
		\node [style=none] (8) at (-1.9, 0.1) {\scriptsize$\vdots$};
		\node [style=none] (9) at (-0.6, 0.1) {\scriptsize$\vdots$};
		\node [style=none] (32) at (.1, 0) {$=$};
		\node [style=hadamard] (33) at (-1.5, 0.5) {};
		\node [style=hadamard] (34) at (-1.5, -0.5) {};
		\node [style=hadamard] (35) at (-1, 0.5) {};
		\node [style=hadamard] (36) at (-1, -0.5) {};
		\node [style=none] (37) at (0.65, 0.75) {};
		\node [style=X phase dot] (38) at (1.4, 0) {$\alpha$};
		\node [style=none] (39) at (0.65, -0.75) {};
		\node [style=none] (40) at (2.15, -0.75) {};
		\node [style=none] (41) at (2.15, 0.75) {};
		\node [style=none] (42) at (0.75, 0.1) {\scriptsize$\vdots$};
		\node [style=none] (43) at (2.05, 0.1) {\scriptsize$\vdots$};
	\end{pgfonlayer}
	\begin{pgfonlayer}{edgelayer}
		\draw [bend left] (2) to (6.center);
		\draw [bend left] (1.center) to (2);
		\draw [bend left] (2) to (4.center);
		\draw [bend right] (2) to (5.center);
		\draw [bend left] (38) to (41.center);
		\draw [bend left] (37.center) to (38);
		\draw [bend left] (38) to (39.center);
		\draw [bend right] (38) to (40.center);
	\end{pgfonlayer}
\end{tikzpicture}} & \\ \hline
$\pi$-copy  && \parbox[c][5em]{0pt}{\begin{tikzpicture}
	\begin{pgfonlayer}{nodelayer}
		\node [style=Z phase dot] (2) at (-1.25, 0) {$\alpha$};
		\node [style=none] (9) at (-0.6, -0.05) {\scriptsize$\vdots$};
		\node [style=none] (32) at (0.1, 0) {$=$};
		\node [style=X phase dot] (44) at (-1.75, 0) {$\pi$};
		\node [style=none] (45) at (-2.25, 0) {};
		\node [style=none] (46) at (-0.5, 0.75) {};
		\node [style=none] (47) at (-0.5, 0.25) {};
		\node [style=none] (48) at (-0.5, -0.5) {};
		\node [style=Z phase dot] (49) at (1.1, 0) {$-\alpha$};
		\node [style=none] (50) at (2.25, -0.05) {\scriptsize$\vdots$};
		\node [style=none] (52) at (0.6, 0) {};
		\node [style=none] (53) at (2.35, 0.75) {};
		\node [style=none] (54) at (2.35, 0.25) {};
		\node [style=none] (55) at (2.35, -0.5) {};
		\node [style=X phase dot] (56) at (1.75, 0.75) {$\pi$};
		\node [style=X phase dot] (57) at (1.75, 0.25) {$\pi$};
		\node [style=X phase dot] (58) at (1.75, -0.5) {$\pi$};
	\end{pgfonlayer}
	\begin{pgfonlayer}{edgelayer}
		\draw (45.center) to (44);
		\draw (44) to (2);
		\draw [bend left] (2) to (46.center);
		\draw [bend left=15] (2) to (47.center);
		\draw [bend right, looseness=0.75] (2) to (48.center);
		\draw (52.center) to (49);
		\draw [bend left=45, looseness=0.75] (49) to (56);
		\draw [bend left] (49) to (57);
		\draw [bend right] (49) to (58);
		\draw (56) to (53.center);
		\draw (57) to (54.center);
		\draw (58) to (55.center);
	\end{pgfonlayer}
\end{tikzpicture}} & \\ \hline
\end{tabular}
\caption{Rules for the ZX-calculus. In every rule, red and green nodes can be exchanged.}
\label{tab:zx_rules}
\end{table}

We will primarily use the ZX-calculus to determine the affect of the $X$ and $Y$ measurements on graph states.\footnote{Using the ZX-calculus to analyze measurement-based computation is not new. For instance, see references \cite{cd:2011_zx} and \cite{duncan:2010}.} First, one can verify that the following diagram represents a $\CSIGN$ gate:
\begin{center}
\begin{tikzpicture}
	\begin{pgfonlayer}{nodelayer}
		\node [style=none] (1) at (-0.5, 0) {};
		\node [style=none] (2) at (0.5, 0) {};
		\node [style=none] (4) at (-0.5, -1) {};
		\node [style=none] (6) at (0.5, -1) {};
		\node [style=Z dot] (7) at (0, 0) {};
		\node [style=Z dot] (8) at (0, -1) {};
		\node [style=hadamard] (9) at (0, -0.5) {};
	\end{pgfonlayer}
	\begin{pgfonlayer}{edgelayer}
		\draw (1.center) to (7);
		\draw (7) to (2.center);
		\draw (7) to (9);
		\draw (9) to (8);
		\draw (8) to (4.center);
		\draw (8) to (6.center);
	\end{pgfonlayer}
\end{tikzpicture}
\end{center}
Using this fact and the spider fusion rule from \Cref{tab:zx_rules}, it is easy to prove the following lemma:
\begin{lemma}
\label{lem:zx_graph_state}
Every graph state $\ket{G}$ for $G = (V,E)$ is represented by a $|V|$-qubit ZX-calculus diagram where two qubits in the diagram are connected by edge with a Hadamard gate if and only if they share an edge in the graph $G$.
\end{lemma}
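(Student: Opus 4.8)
The plan is to translate the standard preparation circuit for $\ket{G}$ into the ZX-calculus and then simplify it using spider fusion. Recall from \Cref{sec:mbqc} that $\ket{G}$ is obtained from $\ket{0}^{\otimes |V|}$ by applying a Hadamard to every qubit and then a $\CSIGN$ gate across every edge of $G$. In the ZX-calculus, the state $H\ket{0} = \ket{+}$ (up to a nonzero scalar) is a single green $Z$-spider with phase $0$ and one output leg, so $\ket{+}^{\otimes |V|}$ is drawn as $|V|$ disjoint such spiders, one per vertex. The $\CSIGN$ gate between qubits $u$ and $v$ is, as verified in the diagram just above the lemma, a green spider on wire $u$ and a green spider on wire $v$ (each with phase $0$) joined by a single edge carrying a Hadamard box.

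Next I would compose these pieces: stack the $\CSIGN$ diagrams for the edges (in any order, since they commute) after the $|V|$ input spiders. Along each wire $v$ we then have the input spider followed by one $\CSIGN$-spider for each edge incident to $v$, all connected consecutively by plain wires. Applying the spider fusion rule from \Cref{tab:zx_rules} repeatedly merges all of these into a single green spider on wire $v$ whose phase is the sum of the constituent phases—hence $0$—and which keeps its one output leg together with one Hadamard edge for each edge of $G$ at $v$. Doing this at every vertex yields a diagram with exactly one phase-$0$ green spider per vertex of $G$, a single output leg per spider, and a Hadamard edge between the spiders for $u$ and $v$ precisely when $(u,v) \in E$. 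Since every rewrite used (the $\CSIGN$ identity and spider fusion) is sound, the resulting diagram still represents $\ket{G}$, which is the claim.

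The main point requiring care is bookkeeping of global scalars: the fragment of the ZX-calculus recorded in \Cref{tab:zx_generators,tab:zx_rules} is not scalar-complete, so I would either carry the argument out up to a nonzero scalar throughout (harmless here, since the diagram only ever feeds into further ZX reasoning about $X$- and $Y$-measurements), or observe that the unnormalized conventions of \Cref{tab:zx_generators} already make $\ket{0}+\ket{1}$, the two-legged green spider, and the $\CSIGN$ diagram hold on the nose, so the composed diagram is exactly $\sum_{x}\prod_{(u,v)\in E}(-1)^{x_u x_v}\ket{x}$ with no hidden factor. A secondary and purely routine check is that the $\CSIGN$ diagram drawn above the lemma indeed evaluates to $\ket{00}\bra{00}+\ket{01}\bra{01}+\ket{10}\bra{10}-\ket{11}\bra{11}$, which follows by expanding the two green spiders and the Hadamard box in the computational basis.
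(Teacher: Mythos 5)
Your proof is correct and is essentially the paper's intended argument: the paper only sketches it, noting that the green--Hadamard--green diagram implements $\CSIGN$ and that spider fusion then collapses the preparation circuit $\CSIGN^{E}\, H^{\otimes V}\ket{0}^{\otimes V}$ into one phase-$0$ spider per vertex with Hadamard edges along $E$, exactly as you do. One small caveat: since the Hadamard generator in \Cref{tab:zx_generators} is normalized while the spiders are not, the composed diagram equals $\ket{G}$ only up to a factor $2^{-|E|/2}$, so your first option (working up to a nonzero scalar, as the paper does with $\propto$) is the right one, while the ``no hidden factor'' alternative is not quite accurate.
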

For instance, the triangle graph on three vertices is represented by the diagram
\begin{center}
\begin{tikzpicture}
	\begin{pgfonlayer}{nodelayer}
		\node [style=Z dot] (1) at (0, 0) {};
		\node [style=Z dot] (2) at (1, 0) {};
		\node [style=Z dot] (3) at (2, 0) {};
		\node [style=hadamard] (4) at (0.5, 0) {};
		\node [style=hadamard] (5) at (1.5, 0) {};
		\node [style=hadamard] (6) at (1, -0.5) {};
		\node [style=none] (7) at (0, 0.5) {};
		\node [style=none] (8) at (1, 0.5) {};
		\node [style=none] (9) at (2, 0.5) {};
	\end{pgfonlayer}
	\begin{pgfonlayer}{edgelayer}
		\draw (7.center) to (1);
		\draw (8.center) to (2);
		\draw (9.center) to (3);
		\draw (1) to (4);
		\draw (4) to (2);
		\draw (2) to (5);
		\draw (5) to (3);
		\draw [bend right=15] (1) to (6);
		\draw [bend right=15, looseness=0.75] (6) to (3);
	\end{pgfonlayer}
\end{tikzpicture}
\end{center}

We now want to show the effect of $X$ and $Y$ measurements on the graph.  In the subsequent discussion, we will actually refer to \emph{projections}, rather than measurements.  Let us briefly justify this choice.  First, recall that Pauli measurement on a single qubit for Pauli $P \in \Paulis_1$ projects the state onto either the $+1$ or $-1$ eigenspace of $P$.  The key observation is that the net result of the measurement can always be described by applying a Pauli operator, projecting the qubit onto the $+1$ eigenspace, and applying another Pauli operator.  There are two cases:  if the Pauli measurement would have resulted in a projection into the $+1$ eigenspace, then do nothing; if it would have projected into the $-1$ eigenspace, then apply Pauli $Q$ such that $Q \neq P$, project into the $+1$ eigenspace, and apply Pauli $Q$ again.  This is justified by the following equation:
$$
\frac{\pfont{I} - P}{2} = Q \( \frac{\pfont{I} + P}{2} \) Q.
$$
To be clear, in the following $ZX$-calculus diagrams we will make no explicit mention of the Pauli $Q$ operation mentioned above, which is required to accurately describe the state if the measurement projects the state onto the $-1$ eigenspace.  However, using the $\pi$-copy rule from \Cref{tab:zx_rules}, these phases can easily\footnote{On the other hand, such calculations will not be ``easy'' for sufficiently weak machines, which in some sense is the basis for the hardness results of sections \ref{sec:nc1} and \ref{sec:parityL}.} be pushed to the end of the circuit.  These phases determine the Pauli correction operation which appears in \Cref{thm:rbb_mbqc} (and, in fact, all measurement-based computation schemes).

With this out of the way, let us discuss measurements in the context of the ZX-calculus.  To measure a qubit in the $X$-basis, we will apply the green projection generator in \Cref{tab:zx_generators} to the corresponding qubit.  To measure in the $Y$-basis, we first apply a $\pi/2$-phase gate to the corresponding qubit, and then project onto the $X$-basis.  Since
\begin{center}
\begin{tikzpicture}
	\begin{pgfonlayer}{nodelayer}
		\node [style=Z dot] (0) at (1, 0) {};
		\node [style=Z phase dot] (1) at (0.25, 0) {$\frac{\pi}{2}$};
		\node [style=none] (2) at (-0.5, 0) {};
		\node [style=none] (4) at (1.75, 0) {$=$};
		\node [style=none] (6) at (2.5, 0) {};
		\node [style=Z phase dot] (7) at (3.5, 0) {$\frac{\pi}{2}$};
	\end{pgfonlayer}
	\begin{pgfonlayer}{edgelayer}
		\draw (2.center) to (1);
		\draw (1) to (0);
		\draw (6.center) to (7);
	\end{pgfonlayer}
\end{tikzpicture}
\end{center}
by the spider fusion rule, we get that each measurement corresponds to a $Z$-projection with $\alpha = 0$ for $X$ measurements, and a $Z$-projection with $\alpha = \pi/2$ for $Y$ measurements. 

\section{Measurement-Based Quantum Computation Gadgets}
\label{app:mbqc_gadgets}

The goal of this appendix is to construct explicit graph states and show how measuring them can \emph{efficiently} implement desired operations. That is, we describe \emph{measurement-based computation gadgets} (or simply \emph{measurement gadgets}) which implement various Clifford operations under appropriate measurements. 

\begin{definition}
	A \emph{measurement-based computation gadget} (or just \emph{measurement gadget} in this paper) is a graph state $\ket{G}$ with $m$ distinguished \emph{input} qubits and $m$ distinguished \emph{output} qubits. To use the gadget, we connect the graph state $\ket{G}$ to an \emph{input register} (not to be confused with the input qubits) $\ket{\psi}$ and an \emph{output register} (also not to be confused with the output qubits), initially in state $\ket{+}^{\otimes m}$, with $\CSIGN$ gates. Specifically, starting with state $\ket{\psi} \otimes \ket{G} \otimes \ket{+}^{\otimes m}$, we apply $\CSIGN$ gates from the $i$th qubit of the input state to the $i$th input qubit, and from the $i$th output qubit to the $i$th qubit of the output register. We measure the qubits of $\ket{\psi}$ in the $X$-basis and each qubit of $\ket{G}$ in the $X$ or $Y$ basis. The output register contains some state $P U \ket{\psi}$, where $U$ is a unitary depending on the choice of measurement bases, and $P$ is a Pauli operation depending on the measurement outcomes (and bases). 
\end{definition}

Functionally, a measurement gadget implements one of a family of unitaries, depending on the choice of measurement bases. Naturally, we expect to be able to compose measurement gadgets as one would compose gates.  For instance, if we put two measurement gadgets beside each other (the disjoint union of the graphs), this gives a gadget which implements $U_1 \otimes U_2$ for any unitaries $U_1, U_2$ implemented by the original gadgets. Similarly, we can compose measurement gadgets sequentially (assuming they have the same number of input/output qubits) as follows, by a procedure of Raussendorf, Browne, and Briegel.

\begin{lemma}[Raussendorf, Browne, and Briegel \cite{rbb:2003_mbc}]
	\label{lem:mbqc_composition}
	Let $\mathcal G_1$ and $\mathcal G_2$ be two measurement gadgets. Let $\mathcal G$ be the gadget with $\mathcal G_1$, followed by a layer of $m$ \emph{buffer qubits}, followed by $\mathcal G_2$, and connect the $i$th buffer qubit to the $i$th output qubit of $\mathcal G_1$ and the $i$th input qubit of $\mathcal G_2$. The input qubits of $\mathcal{G}_1$ and the output qubits of $\mathcal{G}_2$ become the input and output qubits of $\mathcal{G}$. 
	
	For any unitary $U_1$ implemented by $\mathcal{G}_1$ with measurements $M_1$, and any unitary $U_2$ implemented by $\mathcal{G}_2$ with measurements $M_2$, we can implement $U_2 U_1$ with $\mathcal{G}$ by measuring $M_1$ on the vertices from $\mathcal{G}_1$, $M_2$ on the vertices from $\mathcal{G}_2$, and $X$ on the new vertices. 
\end{lemma}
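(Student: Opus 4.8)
The plan is to unwind the composed gadget $\mathcal{G}$ into its two constituent gadget computations, using only that $\CSIGN$ gates commute with one another and that operations on disjoint sets of qubits commute, so that measurements may be reordered freely. First I would write out the full circuit for \emph{using} $\mathcal{G}$: prepare $\ket{\psi}\otimes\ket{\mathcal{G}}\otimes\ket{+}^{\otimes m}$ (the last register being the output register), apply $\CSIGN$ from the $i$th qubit of $\ket{\psi}$ to the $i$th input qubit of $\mathcal{G}_1$, apply $\CSIGN$ from the $i$th output qubit of $\mathcal{G}_2$ to the $i$th qubit of the output register, and then measure $X$ on the qubits of $\ket{\psi}$, $M_1$ on the vertices of $\mathcal{G}_1$, $X$ on the buffer qubits, and $M_2$ on the vertices of $\mathcal{G}_2$. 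Since $\ket{\mathcal{G}}$ is built from $\ket{0}$'s by a Hadamard on each vertex followed by $\CSIGN$ along each edge, and $\mathcal{G}$ is $\mathcal{G}_1$, a layer of buffer vertices, and $\mathcal{G}_2$ together with the edges joining buffer qubit $i$ to output qubit $i$ of $\mathcal{G}_1$ and to input qubit $i$ of $\mathcal{G}_2$, I may replace the preparation of $\ket{\mathcal{G}}$ by preparing $\ket{\mathcal{G}_1}\otimes\ket{+}^{\otimes m}\otimes\ket{\mathcal{G}_2}$ and then applying those connecting $\CSIGN$ gates.

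The crux is a regrouping argument. Collect into a first group every gate and measurement acting only on the qubits of $\ket{\psi}$, the vertices of $\mathcal{G}_1$, and the buffer: the $\CSIGN$'s from $\ket{\psi}$ into $\mathcal{G}_1$, the internal edges of $\mathcal{G}_1$, the $\CSIGN$'s from $\mathcal{G}_1$'s output qubits into the buffer, the $X$-measurement of $\ket{\psi}$, and the $M_1$-measurement of $\mathcal{G}_1$'s vertices. Everything not in this group acts only on the buffer, the vertices of $\mathcal{G}_2$, and the output register; the qubits shared between the two groups are exactly the buffer qubits, which are not measured in the first group, so the two groups commute and I may perform the first group in its entirety first. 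That first group is precisely the prescription for running gadget $\mathcal{G}_1$ on input state $\ket{\psi}$ with measurement pattern $M_1$, using the buffer qubits (which begin in $\ket{+}^{\otimes m}$ and are $\CSIGN$'d to $\mathcal{G}_1$'s output qubits) as the output register; by hypothesis it leaves the buffer in the state $P_1 U_1\ket{\psi}$ for some $P_1\in\Paulis_m$ determined by the outcomes. The remaining circuit --- buffer in state $P_1 U_1\ket{\psi}$, $\ket{\mathcal{G}_2}$, output register in $\ket{+}^{\otimes m}$, with $\CSIGN$'s from the buffer into $\mathcal{G}_2$'s input qubits, the internal edges of $\mathcal{G}_2$, $\CSIGN$'s from $\mathcal{G}_2$'s output qubits into the output register, and measurements $X$ on the buffer and $M_2$ on $\mathcal{G}_2$'s vertices --- is exactly the prescription for running gadget $\mathcal{G}_2$ on the input state $P_1 U_1\ket{\psi}$ with pattern $M_2$, so by hypothesis the output register ends in $P_2 U_2\,(P_1 U_1\ket{\psi})$ for some $P_2\in\Paulis_m$.

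It remains only to tidy the Pauli correction: since the gadgets use only $X$ and $Y$ single-qubit measurements, $U_2$ is Clifford, so $U_2 P_1 U_2^{\dag}=U_2\bullet P_1$ is a Pauli operation, and $P_2 U_2 P_1 U_1\ket{\psi}=\big(P_2\,(U_2\bullet P_1)\big)U_2 U_1\ket{\psi}=P\,U_2 U_1\ket{\psi}$ with $P:=P_2\,(U_2\bullet P_1)\in\Paulis_m$ depending only on the measurement outcomes and bases. This is exactly the asserted behaviour of $\mathcal{G}$. The only genuine obstacle is the bookkeeping in the regrouping step: one must carefully assign the $\mathcal{G}_1$-output-to-buffer edges to the ``$\mathcal{G}_1$ phase'' (where they play the role of $\mathcal{G}_1$'s link to its output register) and the buffer-to-$\mathcal{G}_2$-input edges to the ``$\mathcal{G}_2$ phase'' (where they play the role of $\mathcal{G}_2$'s link to its input register), and then verify that, apart from the still-unmeasured buffer qubits, the two phases act on disjoint sets of qubits so that the reordering is valid.
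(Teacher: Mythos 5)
Your proof is correct, and it shares the paper's key step: the $\CSIGN$ gates touching the buffer, $\mathcal G_2$, and the output register act on qubits disjoint from everything measured in the $\mathcal G_1$ phase, so they can be deferred until after $\mathcal G_1$ has executed. Where you diverge is in how you finish. The paper's sketch then invokes a short ZX-calculus computation to show that $X$-measuring each buffer qubit turns the chain (output qubit of $\mathcal G_1$)--(buffer)--(input qubit of $\mathcal G_2$) into a wire, i.e., teleports the intermediate state across; you avoid any such identity by observing that, after your regrouping, the buffer qubits literally serve first as $\mathcal G_1$'s output register and then as $\mathcal G_2$'s input register (whose $X$-measurement is already built into the gadget prescription), so both halves are direct instances of the gadget hypotheses. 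This is arguably the more definition-driven route, and you additionally do the Pauli bookkeeping $P_2U_2P_1U_1=\bigl(P_2\,(U_2\bullet P_1)\bigr)U_2U_1$, justified because $X/Y$-pattern gadgets implement Clifford unitaries, which the paper's sketch leaves implicit. One small imprecision: the claim that ``the two groups commute'' is not literally true, since a group-1 $\CSIGN$ onto a buffer qubit does not commute with that qubit's $X$-measurement in group 2; what your reordering actually needs is only that the group-2 unitaries commute with the group-1 measurements (disjoint supports), the buffer measurements already coming after all $\CSIGN$s in the original ordering. Your closing sentence shows you understand exactly this, so it is a matter of phrasing rather than a gap.
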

\begin{proof}[Sketch]
	This appears to be trivial: the output bits of $\mathcal G_1$ are literally the same as the input bits for $\mathcal G_2$. However, when we execute $\mathcal G$, we apply the $\CSIGN$ gates from the buffer qubits to the input of $\mathcal G_2$ at the beginning rather than after $\mathcal G_1$ has executed. As discussed previously, the $\CSIGN$ gates commute with measurement in $\mathcal G_1$, so we can imagine they occur after gadget $\mathcal G_1$ has executed and placed an intermediate quantum state in the buffer qubits.

From the ZX-calculus, it's easy to see that the $X$ measurements on the buffer qubits essentially create a wire between the two gadgets.  Consider a single output qubit, followed by a buffer qubit, followed by an input qubit:
	\begin{center} \begin{tikzpicture}
	\begin{pgfonlayer}{nodelayer}
		\node [style=Z dot] (0) at (0, 0) {};
		\node [style=none] (3) at (0, 0.5) {};
		\node [style=none] (4) at (-1, -0.5) {output};
		\node [style=none] (5) at (1, -0.5) {input};
		\node [style=none] (6) at (-1, 0.5) {};
		\node [style=none] (7) at (1, 0.5) {};
		\node [style=none] (8) at (-1.75, 0.25) {};
		\node [style=none] (9) at (1.75, 0.25) {};
		\node [style=Z dot] (10) at (-1, 0) {};
		\node [style=Z dot] (11) at (1, 0) {};
		\node [style=hadamard] (12) at (0.5, 0) {};
		\node [style=hadamard] (13) at (-0.5, 0) {};
		\node [style=none] (14) at (-1.75, -0.25) {};
		\node [style=none] (15) at (1.75, -0.25) {};
		\node [style=none] (16) at (-1.55, 0.05) {\scriptsize$\vdots$};
		\node [style=none] (17) at (1.55, 0.05) {\scriptsize$\vdots$};
	\end{pgfonlayer}
	\begin{pgfonlayer}{edgelayer}
		\draw (3.center) to (0);
		\draw (6.center) to (10);
		\draw (10) to (13);
		\draw (13) to (0);
		\draw (0) to (12);
		\draw (12) to (11);
		\draw (11) to (7.center);
		\draw [bend left=15, looseness=0.75] (11) to (9.center);
		\draw [bend left=15] (8.center) to (10);
		\draw [bend right=15] (14.center) to (10);
		\draw [bend right=345] (15.center) to (11);
	\end{pgfonlayer}
\end{tikzpicture} \end{center}
	Measuring the new vertex in the $X$ basis teleports the output qubit of the first gadget to the input qubit of the next.
	\begin{center} \begin{tikzpicture}
	\begin{pgfonlayer}{nodelayer}
		\node [style=Z dot] (0) at (0, 0) {};
		\node [style=none] (4) at (-1, -0.5) {output};
		\node [style=none] (5) at (1, -0.5) {input};
		\node [style=none] (6) at (-1, 0.5) {};
		\node [style=none] (7) at (1, 0.5) {};
		\node [style=none] (8) at (-1.75, 0.25) {};
		\node [style=none] (9) at (1.75, 0.25) {};
		\node [style=Z dot] (10) at (-1, 0) {};
		\node [style=Z dot] (11) at (1, 0) {};
		\node [style=hadamard] (12) at (0.5, 0) {};
		\node [style=hadamard] (13) at (-0.5, 0) {};
		\node [style=none] (14) at (-1.75, -0.25) {};
		\node [style=none] (15) at (1.75, -0.25) {};
		\node [style=none] (16) at (-1.55, 0.05) {\scriptsize$\vdots$};
		\node [style=none] (17) at (1.55, 0.05) {\scriptsize$\vdots$};
		\node [style=Z dot] (18) at (0, 0.5) {};
		\node [style=none] (19) at (2.25, 0) {$=$};
		\node [style=none] (21) at (3.5, -0.5) {output};
		\node [style=none] (22) at (5.5, -0.5) {input};
		\node [style=none] (23) at (3.5, 0.5) {};
		\node [style=none] (24) at (5.5, 0.5) {};
		\node [style=none] (25) at (2.75, 0.25) {};
		\node [style=none] (26) at (6.25, 0.25) {};
		\node [style=Z dot] (27) at (3.5, 0) {};
		\node [style=Z dot] (28) at (5.5, 0) {};
		\node [style=none] (31) at (2.75, -0.25) {};
		\node [style=none] (32) at (6.25, -0.25) {};
		\node [style=none] (33) at (2.95, 0.05) {\scriptsize$\vdots$};
		\node [style=none] (34) at (6.05, 0.05) {\scriptsize$\vdots$};
	\end{pgfonlayer}
	\begin{pgfonlayer}{edgelayer}
		\draw (6.center) to (10);
		\draw (10) to (13);
		\draw (13) to (0);
		\draw (0) to (12);
		\draw (12) to (11);
		\draw (11) to (7.center);
		\draw [bend left=15, looseness=0.75] (11) to (9.center);
		\draw [bend left=15] (8.center) to (10);
		\draw [bend right=15] (14.center) to (10);
		\draw [bend right=345] (15.center) to (11);
		\draw (18) to (0);
		\draw (23.center) to (27);
		\draw (28) to (24.center);
		\draw [bend left=15, looseness=0.75] (28) to (26.center);
		\draw [bend left=15] (25.center) to (27);
		\draw [bend right=15] (31.center) to (27);
		\draw [bend right=345] (32.center) to (28);
		\draw (27) to (28);
	\end{pgfonlayer}
\end{tikzpicture} \end{center}
\end{proof}

\subsection{Single-qubit Clifford}
This section will be devoted to proving the following theorem:
\begin{theorem}
	\label{thm:single_qubit_mbqc}
	Let $G$ be the line graph on $3n+2$ vertices.  Given single-qubit Clifford gates $g_1, \ldots, g_n \in \mathcal C^1$, there exist projections $P_1, \ldots, P_{3n+1}$ such that 
	\begin{itemize}[itemsep = 0pt]
		\item $P_i$ is either the projection $\frac{\pfont{I} + \pfont{X}}{2}$ or $\frac{\pfont{I} + \pfont{Y}}{2}$ onto the $i$th qubit,
		\item $P_1 \otimes \ldots \otimes P_{3n+1} \otimes \pfont{I} \ket{G} \propto \ket{\psi} \otimes \prod_{i=1}^n g_i \ket{+}$, 
		\item $P_1$ is an $X$ projection, and 
		\item For $i \in [n]$, $P_{3i-1}$, $P_{3i}$, and $P_{3i+1}$ depend only on $g_i \in \mathcal C_1$ and whether or not $i$ is equal to $n$.
	\end{itemize}
\end{theorem}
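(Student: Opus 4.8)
The plan is to analyze the measurement pattern with the ZX-calculus (\Cref{lem:zx_graph_state} together with the rules of \Cref{tab:zx_rules}) and then to reduce the construction to a small group-theoretic computation. Recall first that $\Clifford_1/\Paulis_1 \cong \mathrm{S}_3$, acting as the permutations of the three Pauli axes $\{\pfont{X},\pfont{Y},\pfont{Z}\}$. By \Cref{lem:zx_graph_state}, the line graph state $\ket{G}$ on $3n+2$ vertices is the ZX diagram of $3n+2$ green spiders joined consecutively by Hadamard edges, each carrying one free leg to be measured.

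The first step is the elementary measurement-based move, which I would verify by a direct ZX computation (spider fusion, the identity rule, colour change, and $\pi$-copy): projecting an interior chain vertex onto the $\pfont{X}$ basis teleports the logical qubit onto the next vertex while applying the gate $H$, and projecting onto the $\pfont{Y}$ basis applies $H\RZ$; in either case the measurement outcome contributes a Pauli byproduct which, by repeated use of the $\pi$-copy rule, is pushed to the right and accumulated on the unmeasured qubit $3n+2$ (this is \Cref{thm:rbb_mbqc} specialized to a line). The same computation shows that projecting vertex $1$ onto the $\pfont{X}$ basis initializes the logical qubit in a single fixed state, which is exactly why $P_1$ must be an $\pfont{X}$- rather than a $\pfont{Y}$-projection. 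Since every $P_k$ with $k \le 3n+1$ is a rank-one projector, the output of $P_1 \otimes \cdots \otimes P_{3n+1}\otimes \pfont{I}$ applied to $\ket{G}$ automatically factors as $\ket{\psi}\otimes\ket{\mathrm{out}}$, where $\ket{\psi}$ is the fixed tensor product of the states onto which $P_1,\dots,P_{3n+1}$ project (each a $+1$-eigenstate of $\pfont{X}$ or of $\pfont{Y}$) and $\ket{\mathrm{out}}$ is the teleported logical state on qubit $3n+2$.

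The combinatorial heart is then a finite check in $\Clifford_1/\Paulis_1\cong\mathrm{S}_3$. There $a := H$ is the transposition $(\pfont{X}\,\pfont{Z})$ and $b := H\RZ$ is the $3$-cycle $(\pfont{X}\,\pfont{Y}\,\pfont{Z})$, and $\langle a,b\rangle = \mathrm{S}_3$. The claim to verify is that every element of $\mathrm{S}_3$ is a product of \emph{exactly three} elements of $\{a,b\}$: enumerating the eight length-three words gives $bbb = e$, $bba = (\pfont{X}\,\pfont{Y})$, $aaa = bab = (\pfont{X}\,\pfont{Z})$, $abb = (\pfont{Y}\,\pfont{Z})$, $aab = baa = (\pfont{X}\,\pfont{Y}\,\pfont{Z})$, and $aba = (\pfont{X}\,\pfont{Z}\,\pfont{Y})$, which exhausts $\mathrm{S}_3$. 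Hence, given $g_1,\dots,g_n$, I assign to block $i$ (the three vertices indexed $3i-1,3i,3i+1$) a length-three word in $\{a,b\}$ equal, as an element of $\mathrm{S}_3$, to $g_i$ --- except that the block at the boundary must absorb one additional Hadamard coming from the initialization/read-out convention, which in the paper's ordering is the $i=n$ block, so that word equals $g_nH$ instead. This word specifies each of $P_{3i-1},P_{3i},P_{3i+1}$ as an $\pfont{X}$- or $\pfont{Y}$-projection depending only on $g_i$ (and, for $i=n$, only on $g_n$). On a line the blocks chain directly with no buffer qubits, so composing the $3n$ measurements applies, modulo Paulis, $g_1\cdots g_{n-1}(g_nH)$ to the initialized state; tracking the single Hadamard correction shows qubit $3n+2$ ends in $\prod_{i=1}^n g_i\ket{+}$ up to a Pauli byproduct, which together with \Cref{thm:rbb_mbqc} is exactly the claim.

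The step I expect to be the main obstacle is the endpoint bookkeeping: pinning down the initialization and read-out conventions precisely enough that the logical output is $\prod_i g_i\ket{+}$ rather than the same state on a different initial vector or with a spurious trailing Hadamard, and checking that the necessary correction really is a single Hadamard that can be folded into one boundary block alone --- if it depended on the other $g_j$'s, the locality of the measurement pattern (the fourth bullet) would fail. The ZX-calculus makes the propagation of the Pauli byproducts routine, so essentially all of the care is concentrated in the two boundary vertices.
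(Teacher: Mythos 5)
Your proposal is correct and takes essentially the same route as the paper's proof: a ZX-calculus analysis of the line in which each $X$ (resp.\ $Y$) measurement contributes $H$ (resp.\ $H\RZ$) modulo Paulis, three consecutive measurements per gate, and a single leftover Hadamard absorbed into the boundary ($i=n$) block, which is exactly how the paper accounts for the ``whether $i=n$'' caveat. The only cosmetic difference is that you certify surjectivity of length-three words in $\{H, H\RZ\}$ onto $\Clifford_1/\Paulis_1 \cong \mathrm{S}_3$ by direct enumeration, whereas the paper invokes the standard quarter-angle Euler decomposition $\RZ(\theta_3)\RX(\theta_2)\RZ(\theta_1)$ of single-qubit Cliffords (both proofs, like the paper's ``$\smash{\stackrel{\Paulis_1}{=}}$'', are up to Pauli corrections).
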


It is a standard result that every single-qubit Clifford operation can be decomposed into $\RX(\theta_3) \RZ(\theta_2) \RX(\theta_1)$ for some $\theta_i \in \{0,\pi/2, \pi, 3\pi/2\}$.  Three single-qubit $X$ and $Y$ measurements on the line suffice to mimic such a decomposition using measurement-based computation (see e.g., \cite{backens:2014_zx}).  We show this relationship in \Cref{fig:single_qubit_mbqc}.

\begin{figure}[ht!]
	\begin{center}
		\begin{quantikz}[row sep=1em, column sep=1em, thin lines]
			\lstick{$\ket{\psi^{\operatorname{in}}}$} & \ctrl{1} & \gate{H} & \qw & \meter{} \\
			\lstick[wires =4] {$\ket{G}$} & \control{} & \gate{H} & \gate{\RZ(\theta_1)} &  \meter{}  \\
			& \qw & \gate{H} &  \gate{\RZ(\theta_2)} &  \meter{}  \\
			& \qw & \gate{H} &  \gate{\RZ(\theta_3)} &  \meter{}  \\
			& \qw & \qw & \qw & \qw \rstick{$\ket{\psi^{\operatorname{out}}}$}
		\end{quantikz}
		$\stackrel{\Paulis_1}{=}$ \begin{quantikz}[row sep=1em, column sep=1em, thin lines]
			\lstick{$\ket{\psi^{\operatorname{in}}}$} & \gate{\RX(\theta_1)} & \gate{\RZ(\theta_2)} & \gate{\RX(\theta_3)} & \qw \rstick{$\ket{\psi^{\operatorname{out}}}$}
		\end{quantikz}
	\end{center}
	\caption{Generating any single-qubit Clifford by measuring graph state for line graph $G$ on 4 vertices.  The choice of basis in the middle three qubits, determines the gate applied to input $\ket{\psi^{\operatorname{in}}}$ up to a Pauli correction.}
	\label{fig:single_qubit_mbqc}
\end{figure}
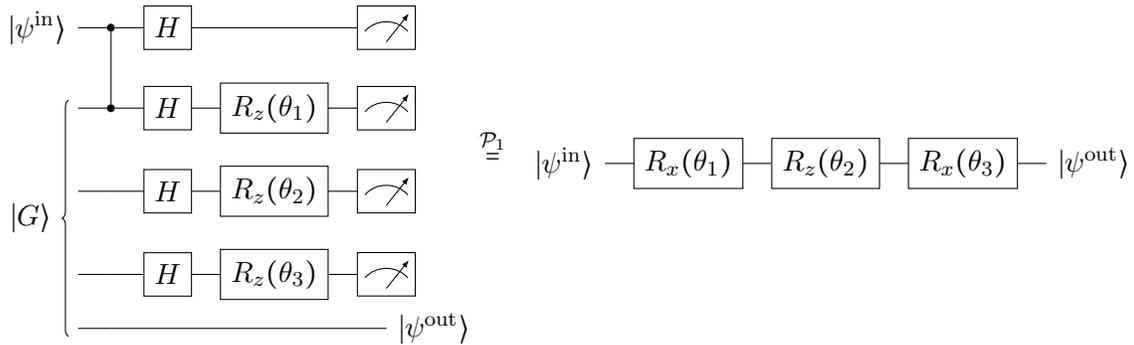

Let us prove this statement using the ZX-calculus.  By \Cref{lem:zx_graph_state}, we can represent the 5-qubit line graph state as the following diagram in the ZX-calculus:
\begin{center}
	\begin{tikzpicture}
	\begin{pgfonlayer}{nodelayer}
		\node [style=Z dot] (0) at (-2, 0) {};
		\node [style=hadamard] (1) at (-1.5, 0) {};
		\node [style=Z dot] (2) at (-1, 0) {};
		\node [style=Z dot] (3) at (0, 0) {};
		\node [style=Z dot] (4) at (1, 0) {};
		\node [style=Z dot] (5) at (2, 0) {};
		\node [style=hadamard] (6) at (-0.5, 0) {};
		\node [style=hadamard] (7) at (0.5, 0) {};
		\node [style=hadamard] (8) at (1.5, 0) {};
		\node [style=none] (9) at (-2, 0.5) {};
		\node [style=none] (10) at (-1, 0.5) {};
		\node [style=none] (11) at (0, 0.5) {};
		\node [style=none] (12) at (1, 0.5) {};
		\node [style=none] (13) at (2, 0.5) {};
	\end{pgfonlayer}
	\begin{pgfonlayer}{edgelayer}
		\draw (0) to (1);
		\draw (1) to (2);
		\draw (2) to (6);
		\draw (6) to (3);
		\draw (3) to (7);
		\draw (7) to (4);
		\draw (4) to (8);
		\draw (8) to (5);
		\draw (0) to (9.center);
		\draw (2) to (10.center);
		\draw (3) to (11.center);
		\draw (4) to (12.center);
		\draw (5) to (13.center);
	\end{pgfonlayer}
\end{tikzpicture}
\end{center}
Recall that the generator for measurement is \begin{tikzpicture}
	\begin{pgfonlayer}{nodelayer}
		\node [style=none] (1) at (-1, 0) {};
		\node [style=Z phase dot] (2) at (0, 0) {$\alpha$};
	\end{pgfonlayer}
	\begin{pgfonlayer}{edgelayer}
		\draw (2) to (1.center);
	\end{pgfonlayer}
\end{tikzpicture}
, where $\alpha = 0$ for $X$ measurement and $\alpha = \pi/2$ for $Y$ measurement. Using this fact and the color changing rule from \Cref{tab:zx_rules}, we get
\begin{center}
	\begin{tikzpicture}
	\begin{pgfonlayer}{nodelayer}
		\node [style=Z dot] (0) at (-2, 0) {};
		\node [style=hadamard] (1) at (-1.5, 0) {};
		\node [style=Z dot] (2) at (-1, 0) {};
		\node [style=Z dot] (3) at (0, 0) {};
		\node [style=Z dot] (4) at (1, 0) {};
		\node [style=Z dot] (5) at (2, 0) {};
		\node [style=hadamard] (6) at (-0.5, 0) {};
		\node [style=hadamard] (7) at (0.5, 0) {};
		\node [style=hadamard] (8) at (1.5, 0) {};
		\node [style=none] (9) at (-2, 0.5) {};
		\node [style=none] (13) at (2, 0.5) {};
		\node [style=none] (15) at (2.5, 0) {$=$};
		\node [style=Z phase dot] (16) at (-1, 0.5) {$\theta_1$};
		\node [style=Z phase dot] (17) at (0, 0.5) {$\theta_2$};
		\node [style=Z phase dot] (18) at (1, 0.5) {$\theta_3$};
		\node [style=hadamard] (20) at (3.5, 0) {};
		\node [style=hadamard] (25) at (4.5, 0) {};
		\node [style=hadamard] (26) at (5.5, 0) {};
		\node [style=hadamard] (27) at (6.5, 0) {};
		\node [style=Z phase dot] (30) at (4, 0) {$\theta_1$};
		\node [style=Z phase dot] (31) at (5, 0) {$\theta_2$};
		\node [style=Z phase dot] (32) at (6, 0) {$\theta_3$};
		\node [style=none] (33) at (7.5, 0) {$=$};
		\node [style=none] (41) at (3, 0) {};
		\node [style=none] (42) at (7, 0) {};
		\node [style=hadamard] (43) at (8.5, 0) {};
		\node [style=hadamard] (46) at (11, 0) {};
		\node [style=Z phase dot] (47) at (9, 0) {$\theta_1$};
		\node [style=X phase dot] (48) at (9.75, 0) {$\theta_2$};
		\node [style=Z phase dot] (49) at (10.5, 0) {$\theta_3$};
		\node [style=none] (50) at (8, 0) {};
		\node [style=none] (51) at (11.5, 0) {};
	\end{pgfonlayer}
	\begin{pgfonlayer}{edgelayer}
		\draw (0) to (1);
		\draw (1) to (2);
		\draw (2) to (6);
		\draw (6) to (3);
		\draw (3) to (7);
		\draw (7) to (4);
		\draw (4) to (8);
		\draw (8) to (5);
		\draw (0) to (9.center);
		\draw (5) to (13.center);
		\draw (16) to (2);
		\draw (17) to (3);
		\draw (18) to (4);
		\draw (20) to (30);
		\draw (30) to (25);
		\draw (25) to (31);
		\draw (31) to (26);
		\draw (26) to (32);
		\draw (32) to (27);
		\draw (41.center) to (20);
		\draw (27) to (42.center);
		\draw (43) to (47);
		\draw (49) to (46);
		\draw (50.center) to (43);
		\draw (46) to (51.center);
		\draw (47) to (48);
		\draw (48) to (49);
	\end{pgfonlayer}
\end{tikzpicture}.
\end{center}

That is, the line decomposes into the gate $H \RZ(\theta_3) \RX(\theta_2) \RZ(\theta_1) H$.  To apply more than one single-qubit gate, we simply create groups of three qubits. For example, the following diagram shows the application of gates $g_1 = \RZ(\alpha_3) \RX(\alpha_2) \RZ(\alpha_1) H$, $g_2 = \RZ(\beta_3) \RX(\beta_2) \RZ(\beta_1) H$, and $g_3 = H \RZ(\gamma_3) \RX(\gamma_2) \RZ(\gamma_1) H$ using a chain of length 11. 
\begin{center}
	\begin{tikzpicture}
	\begin{pgfonlayer}{nodelayer}
		\node [style=Z dot] (0) at (-2, -0.25) {};
		\node [style=hadamard] (1) at (-1.5, -0.25) {};
		\node [style=Z dot] (2) at (-1, -0.25) {};
		\node [style=Z dot] (3) at (0, -0.25) {};
		\node [style=Z dot] (4) at (1, -0.25) {};
		\node [style=hadamard] (6) at (-0.5, -0.25) {};
		\node [style=hadamard] (7) at (0.5, -0.25) {};
		\node [style=hadamard] (8) at (1.5, -0.25) {};
		\node [style=none] (9) at (-2, 0.25) {};
		\node [style=none] (15) at (-2.75, -1) {$=$};
		\node [style=Z phase dot] (16) at (-1, 0.25) {$\alpha_1$};
		\node [style=Z phase dot] (17) at (0, 0.25) {$\alpha_2$};
		\node [style=Z phase dot] (18) at (1, 0.25) {$\alpha_3$};
		\node [style=Z dot] (21) at (2, -0.25) {};
		\node [style=Z dot] (22) at (3, -0.25) {};
		\node [style=Z dot] (23) at (4, -0.25) {};
		\node [style=hadamard] (25) at (2.5, -0.25) {};
		\node [style=hadamard] (26) at (3.5, -0.25) {};
		\node [style=hadamard] (27) at (4.5, -0.25) {};
		\node [style=Z phase dot] (30) at (2, 0.25) {$\beta_1$};
		\node [style=Z phase dot] (31) at (3, 0.25) {$\beta_2$};
		\node [style=Z phase dot] (32) at (4, 0.25) {$\beta_3$};
		\node [style=Z dot] (35) at (5, -0.25) {};
		\node [style=Z dot] (36) at (6, -0.25) {};
		\node [style=Z dot] (37) at (7, -0.25) {};
		\node [style=Z dot] (38) at (8, -0.25) {};
		\node [style=hadamard] (39) at (5.5, -0.25) {};
		\node [style=hadamard] (40) at (6.5, -0.25) {};
		\node [style=hadamard] (41) at (7.5, -0.25) {};
		\node [style=none] (43) at (8, 0.25) {};
		\node [style=Z phase dot] (44) at (5, 0.25) {$\gamma_1$};
		\node [style=Z phase dot] (45) at (6, 0.25) {$\gamma_2$};
		\node [style=Z phase dot] (46) at (7, 0.25) {$\gamma_3$};
		\node [style=hadamard] (48) at (-1.5, -1) {};
		\node [style=hadamard] (52) at (-0.5, -1) {};
		\node [style=hadamard] (53) at (0.5, -1) {};
		\node [style=hadamard] (54) at (1.5, -1) {};
		\node [style=none] (55) at (-2, -1) {};
		\node [style=Z phase dot] (56) at (-1, -1) {$\alpha_1$};
		\node [style=Z phase dot] (57) at (0, -1) {$\alpha_2$};
		\node [style=Z phase dot] (58) at (1, -1) {$\alpha_3$};
		\node [style=hadamard] (62) at (2.5, -1) {};
		\node [style=hadamard] (63) at (3.5, -1) {};
		\node [style=hadamard] (64) at (4.5, -1) {};
		\node [style=Z phase dot] (65) at (2, -1) {$\beta_1$};
		\node [style=Z phase dot] (66) at (3, -1) {$\beta_2$};
		\node [style=Z phase dot] (67) at (4, -1) {$\beta_3$};
		\node [style=hadamard] (72) at (5.5, -1) {};
		\node [style=hadamard] (73) at (6.5, -1) {};
		\node [style=hadamard] (74) at (7.5, -1) {};
		\node [style=none] (75) at (8, -1) {};
		\node [style=Z phase dot] (76) at (5, -1) {$\gamma_1$};
		\node [style=Z phase dot] (77) at (6, -1) {$\gamma_2$};
		\node [style=Z phase dot] (78) at (7, -1) {$\gamma_3$};
		\node [style=none] (79) at (-2.75, -1.75) {$=$};
		\node [style=none] (105) at (-1.5, -2.25) {};
		\node [style=none] (106) at (1.25, -2.25) {};
		\node [style=none] (107) at (1.5, -2.25) {};
		\node [style=none] (108) at (4.25, -2.25) {};
		\node [style=none] (109) at (4.5, -2.25) {};
		\node [style=none] (110) at (7.5, -2.25) {};
		\node [style=none] (111) at (0, -2.5) {$g_1$};
		\node [style=none] (112) at (3, -2.5) {$g_2$};
		\node [style=none] (113) at (6, -2.5) {$g_3$};
		\node [style=hadamard] (114) at (-1.5, -1.75) {};
		\node [style=hadamard] (117) at (1.5, -1.75) {};
		\node [style=none] (118) at (-2, -1.75) {};
		\node [style=Z phase dot] (119) at (-1, -1.75) {$\alpha_1$};
		\node [style=X phase dot] (120) at (0, -1.75) {$\alpha_2$};
		\node [style=Z phase dot] (121) at (1, -1.75) {$\alpha_3$};
		\node [style=hadamard] (124) at (4.5, -1.75) {};
		\node [style=Z phase dot] (125) at (2, -1.75) {$\beta_1$};
		\node [style=X phase dot] (126) at (3, -1.75) {$\beta_2$};
		\node [style=Z phase dot] (127) at (4, -1.75) {$\beta_3$};
		\node [style=hadamard] (130) at (7.5, -1.75) {};
		\node [style=none] (131) at (8, -1.75) {};
		\node [style=Z phase dot] (132) at (5, -1.75) {$\gamma_1$};
		\node [style=X phase dot] (133) at (6, -1.75) {$\gamma_2$};
		\node [style=Z phase dot] (134) at (7, -1.75) {$\gamma_3$};
	\end{pgfonlayer}
	\begin{pgfonlayer}{edgelayer}
		\draw (0) to (1);
		\draw (1) to (2);
		\draw (2) to (6);
		\draw (6) to (3);
		\draw (3) to (7);
		\draw (7) to (4);
		\draw (4) to (8);
		\draw (0) to (9.center);
		\draw (16) to (2);
		\draw (17) to (3);
		\draw (18) to (4);
		\draw (21) to (25);
		\draw (25) to (22);
		\draw (22) to (26);
		\draw (26) to (23);
		\draw (23) to (27);
		\draw (30) to (21);
		\draw (31) to (22);
		\draw (32) to (23);
		\draw (35) to (39);
		\draw (39) to (36);
		\draw (36) to (40);
		\draw (40) to (37);
		\draw (37) to (41);
		\draw (41) to (38);
		\draw (38) to (43.center);
		\draw (44) to (35);
		\draw (45) to (36);
		\draw (46) to (37);
		\draw (8) to (21);
		\draw (35) to (27);
		\draw (55.center) to (48);
		\draw (48) to (56);
		\draw (56) to (52);
		\draw (52) to (57);
		\draw (57) to (53);
		\draw (53) to (58);
		\draw (58) to (54);
		\draw (54) to (65);
		\draw (65) to (62);
		\draw (62) to (66);
		\draw (66) to (63);
		\draw (63) to (67);
		\draw (67) to (64);
		\draw (64) to (76);
		\draw (76) to (72);
		\draw (72) to (77);
		\draw (77) to (73);
		\draw (73) to (78);
		\draw (78) to (74);
		\draw (74) to (75.center);
		\draw [style=brace edge] (110.center) to (109.center);
		\draw [style=brace edge] (108.center) to (107.center);
		\draw [style=brace edge] (106.center) to (105.center);
		\draw (118.center) to (114);
		\draw (114) to (119);
		\draw (121) to (117);
		\draw (117) to (125);
		\draw (127) to (124);
		\draw (124) to (132);
		\draw (134) to (130);
		\draw (130) to (131.center);
		\draw (119) to (120);
		\draw (120) to (121);
		\draw (125) to (126);
		\draw (126) to (127);
		\draw (132) to (133);
		\draw (133) to (134);
	\end{pgfonlayer}
\end{tikzpicture}.
\end{center}
The generalization is clear, so this concludes the proof of \Cref{thm:single_qubit_mbqc}.

\subsection{Two-qubit Clifford}
\label{sec:two_qubit_mbqc}

The analysis for the two-qubit case will be very similar to the single-qubit case, except that we will obviously require a more sophisticated initial graph state.  The two-qubit analogue to the line graph will be concatenations of the \scalebox{.3}{
\begin{tikzpicture}[dot/.style={circle, fill=black, draw, inner sep=1pt}]
    \node[dot] (3) [right of=1] {};
    \node[dot] (4) [below of=3] {};
    \node[dot] (5) [right of=3] {};
    \node[dot] (6) [below of=5] {};

    \path
    (3) edge (4)
    	edge (5)
    (4) edge (6);
\end{tikzpicture}
} subgraph.  More precisely, we define the graph $\mathcal H_n$ to be the graph on $16n+4$ vertices, which has $4n$ instances of that subgraph along with 2 input qubits and 2 output qubits:
\begin{center}
	\begin{tikzpicture}
	\begin{pgfonlayer}{nodelayer}
		\node [style=vertex] (0) at (0, 0) {};
		\node [style=vertex] (1) at (0, -1) {};
		\node [style=vertex] (2) at (1, 0) {};
		\node [style=vertex] (3) at (1, -1) {};
		\node [style=vertex] (4) at (2, 0) {};
		\node [style=vertex] (5) at (2, -1) {};
		\node [style=none] (6) at (-2, -0.5) {$\mathcal H_n$};
		\node [style=none] (7) at (-1, -0.5) {$=$};
		\node [style=vertex] (8) at (3, 0) {};
		\node [style=vertex] (9) at (3, -1) {};
		\node [style=vertex] (10) at (4, 0) {};
		\node [style=vertex] (11) at (4, -1) {};
		\node [style=none] (12) at (5, -0.5) {$\ldots$};
		\node [style=vertex] (13) at (6, 0) {};
		\node [style=vertex] (14) at (6, -1) {};
		\node [style=vertex] (15) at (7, 0) {};
		\node [style=vertex] (16) at (7, -1) {};
		\node [style=none] (17) at (1, -1.5) {};
		\node [style=none] (18) at (7, -1.5) {};
		\node [style=none] (19) at (4, -1.75) {$4n$};
		\node [style=vertex] (20) at (8, 0) {};
		\node [style=vertex] (21) at (8, -1) {};
	\end{pgfonlayer}
	\begin{pgfonlayer}{edgelayer}
		\draw (0) to (2);
		\draw (2) to (4);
		\draw (2) to (3);
		\draw (3) to (1);
		\draw (3) to (5);
		\draw (4) to (8);
		\draw (5) to (9);
		\draw (8) to (9);
		\draw (8) to (10);
		\draw (9) to (11);
		\draw (15) to (13);
		\draw (13) to (14);
		\draw (14) to (16);
		\draw [style=brace edge] (18.center) to (17.center);
		\draw (15) to (20);
		\draw (16) to (21);
	\end{pgfonlayer}
\end{tikzpicture}
\end{center}
Using measurement-based computation, we will be able to apply one 2-qubit gate using the graph state $\ket{\mathcal H_1}$, two 2-qubit gates using the graph state $\ket{\mathcal H_2}$, and so on.  We formalize this in the theorem below.  For convenience, we number the qubits of $\mathcal H_n$ from top to bottom, and then from left to right.  
\begin{theorem}
	\label{thm:two_qubit_mbqc}
	Given two-qubit Clifford gates $g_1, \ldots, g_n \in \mathcal C_2$, there exist projections $P_1, \ldots, P_{16n+2}$ such that 
	\begin{itemize}[itemsep = 0pt]
		\item $P_i$ is either the projection $\frac{\pfont{I} + \pfont{X}}{2}$ or $\frac{\pfont{I} + \pfont{Y}}{2}$ onto the $i$th qubit,
		\item $P_1 \otimes \ldots \otimes P_{16n+2} \otimes \pfont{II} \ket{\mathcal H_n} \propto \ket{\psi} \otimes \prod_{i=1}^n g_i \ket{++}$,
		\item $P_1$ and $P_2$ are $X$ projections, and 
		\item For $i \in [n]$, $P_{16i -13}, \ldots, P_{16i+2}$ depend only on $g_i \in \mathcal C_2$ and whether or not $i$ is equal to $n$.
	\end{itemize}
\end{theorem}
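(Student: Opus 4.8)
The plan is to run the argument used for the single-qubit gadget in \Cref{thm:single_qubit_mbqc}, replacing the line graph by the two-row brickwork $\mathcal{H}_n$ and replacing the three-angle single-qubit normal form by a brickwork normal form for $\Clifford_2/\Paulis_2$. Recall that $\Clifford_2/\Paulis_2 \cong \mathrm{S}_6$ is generated by $H\otimes\pfont{I}$, $\pfont{I}\otimes H$, $\RZ\otimes\pfont{I}$, $\pfont{I}\otimes\RZ$, and $\CZ$, and that the Aaronson--Gottesman decomposition writes any element as a constant number of alternating single-qubit layers and $\CZ$ layers. First I would establish the single-gate gadget: the block of $\mathcal{H}_1$ consisting of four copies of the cross subgraph implements any $g\in\Clifford_2/\Paulis_2$, modulo Pauli, under an appropriate pattern of $X$ and $Y$ measurements.

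Using \Cref{lem:zx_graph_state} to draw $\ket{\mathcal{H}_1}$ in the ZX-calculus, an $X$-measurement on an interior qubit inserts a $Z$-spider of phase $0$ on its wire and a $Y$-measurement inserts one of phase $\pi/2$ (exactly as in the single-qubit proof); the color-change rule turns the vertical edge of each cross into a $\CSIGN$ flanked by Hadamards, and spider fusion collapses the horizontal chains of each row into the single-qubit rotation slots of \Cref{thm:single_qubit_mbqc}. The resulting circuit is precisely the brickwork normal form referenced above, so choosing the measurement angles according to that normal form realizes $g$; the $\pi$-copy rule tracks the Pauli correction, which is a function of the outcomes. The two leftmost qubits, $X$-measured, play the usual MBQC role of initializing the logical input, so the computation runs on $\ket{++}$, matching the claim that $P_1,P_2$ are $X$-projections.

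Composition of the $n$ blocks is \Cref{lem:mbqc_composition} applied $n-1$ times, with the shared columns between consecutive blocks playing the role of buffer qubits; this yields $P_1\otimes\cdots\otimes P_{16n+2}\otimes\pfont{II}\,\ket{\mathcal{H}_n}\propto\ket{\psi}\otimes g_1\cdots g_n\ket{++}$ modulo Pauli. For the locality property, in the normal-form circuit the $X$/$Y$ choice for each of the $16$ qubits of block $i$ depends only on the normal-form data of $g_i$, together with one bit distinguishing $i=n$ (the last block drops the trailing Hadamard layers so that the output register holds $g_1\cdots g_n\ket{++}$ rather than $H^{\otimes 2}g_1\cdots g_n\ket{++}$), mirroring the single-qubit case. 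I expect the only real obstacle to be the combinatorial verification in the single-gate step: confirming that four cross blocks carry enough single-qubit rotation slots and $\CSIGN$ layers to cover all of $\mathrm{S}_6$ (i.e.\ pinning a generating normal form to the exact four-cross geometry) and that the ZX simplification of the brickwork leaves no non-Pauli Hadamard debris; everything else is a direct transcription of the single-qubit argument.
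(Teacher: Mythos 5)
Your overall route is the same as the paper's: reduce each four-qubit cross of $\mathcal H_n$ in the ZX-calculus to a programmable block, push the leftover $H\otimes H$ forward, and reprogram the final block so the output register holds $g_1\cdots g_n\ket{++}$ up to Pauli. However, the step you defer as a "combinatorial verification" is not a loose end but the entire quantitative content of the theorem, so as written there is a genuine gap. Each cross contributes exactly one $\CXX=(H\otimes H)\CZ(H\otimes H)$ and one $\RZ(\gamma)\RX(\alpha)\otimes\RZ(\delta)\RX(\beta)$ slot, with the available phases only $0$ and $\pi/2$ ($\pi$ and $3\pi/2$ arise only as Pauli byproducts of the measurement outcomes), and the theorem's count $16n+2$ — which \Cref{prob:nc1} is built around — requires that \emph{exactly four} such crosses realize every element of $\Clifford_2$. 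The Aaronson--Gottesman-style layering you invoke only gives some unspecified constant number of alternating single-qubit and entangling layers, which would prove an "$O(1)$ qubits per gate" variant but not this statement. The paper closes the gap by a finite brute-force check: every two-qubit Clifford operation equals $\prod_{i=1}^4\bigl([\RZ(\gamma_i)\RX(\alpha_i)\otimes\RZ(\delta_i)\RX(\beta_i)]\,\CXX\bigr)$ with $\alpha_i,\beta_i,\gamma_i,\delta_i\in\{0,\pi/2,\pi,3\pi/2\}$; some such verification (by enumeration over the $720$ elements of $\Clifford_2/\Paulis_2$, say) must appear in your proof, and your argument should also note why the $\pi$-multiples are free, i.e.\ absorbed into the Pauli correction.

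Two smaller inaccuracies: in $\mathcal H_n$ consecutive crosses are joined directly by edges, with no intermediate buffer column, so the composition is carried out by the same ZX computation extended across blocks rather than by \Cref{lem:mbqc_composition}; applying that lemma literally would insert extra buffer qubits and spoil the $16n+2$ count. And the trailing $H\otimes H$ on the last block cannot be "dropped" — it is forced by the graph and measurement pattern — but is compensated by programming the final block's four-layer product to implement $(H\otimes H)g_n$ instead of $g_n$, which is precisely why the last block's measurement bases depend on the extra bit recording whether $i=n$.
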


Let's start by using the ZX-calculus to represent the graph state with a single  subgraph: 
\begin{center}
	\begin{tikzpicture}
	\begin{pgfonlayer}{nodelayer}
		\node [style=Z dot] (0) at (0, 0) {};
		\node [style=Z dot] (1) at (0, -1) {};
		\node [style=Z dot] (2) at (1, 0) {};
		\node [style=Z dot] (3) at (1, -1) {};
		\node [style=Z dot] (4) at (2, 0) {};
		\node [style=Z dot] (5) at (2, -1) {};
		\node [style=none] (8) at (0, 0.5) {};
		\node [style=none] (9) at (1, 0.5) {};
		\node [style=none] (10) at (2, 0.5) {};
		\node [style=none] (16) at (0, -1.5) {};
		\node [style=none] (17) at (1, -1.5) {};
		\node [style=none] (20) at (2, -1.5) {};
		\node [style=hadamard] (24) at (0.5, 0) {};
		\node [style=hadamard] (25) at (0.5, -1) {};
		\node [style=hadamard] (26) at (1.5, -1) {};
		\node [style=hadamard] (27) at (1.5, 0) {};
		\node [style=hadamard] (29) at (1, -0.5) {};
		\node [style=Z dot] (30) at (3, 0) {};
		\node [style=Z dot] (31) at (3, -1) {};
		\node [style=hadamard] (32) at (2.5, 0) {};
		\node [style=hadamard] (33) at (2.5, -1) {};
		\node [style=none] (34) at (3, 0.5) {};
		\node [style=none] (35) at (3, -1.5) {};
	\end{pgfonlayer}
	\begin{pgfonlayer}{edgelayer}
		\draw (8.center) to (0);
		\draw (0) to (24);
		\draw (24) to (2);
		\draw (2) to (9.center);
		\draw (2) to (27);
		\draw (27) to (4);
		\draw (4) to (10.center);
		\draw (2) to (29);
		\draw (29) to (3);
		\draw (3) to (25);
		\draw (25) to (1);
		\draw (1) to (16.center);
		\draw (3) to (17.center);
		\draw (3) to (26);
		\draw (26) to (5);
		\draw (5) to (20.center);
		\draw (4) to (32);
		\draw (32) to (30);
		\draw (5) to (33);
		\draw (33) to (31);
		\draw (34.center) to (30);
		\draw (31) to (35.center);
	\end{pgfonlayer}
\end{tikzpicture}
\end{center}
Using the same ideas as in the previous section, we get the following after measurements:
\begin{center}
	\begin{tikzpicture}
	\begin{pgfonlayer}{nodelayer}
		\node [style=Z dot] (0) at (0.25, 0) {};
		\node [style=Z dot] (1) at (0.25, -1) {};
		\node [style=Z dot] (2) at (1.25, 0) {};
		\node [style=Z dot] (3) at (1.25, -1) {};
		\node [style=Z dot] (4) at (2.25, 0) {};
		\node [style=Z dot] (5) at (2.25, -1) {};
		\node [style=none] (8) at (0.25, 0.5) {};
		\node [style=none] (16) at (0.25, -1.5) {};
		\node [style=hadamard] (24) at (0.75, 0) {};
		\node [style=hadamard] (25) at (0.75, -1) {};
		\node [style=hadamard] (26) at (1.75, -1) {};
		\node [style=hadamard] (27) at (1.75, 0) {};
		\node [style=hadamard] (29) at (1.25, -0.5) {};
		\node [style=Z phase dot] (50) at (1.25, 0.5) {$\alpha$};
		\node [style=Z phase dot] (51) at (1.25, -1.5) {$\beta$};
		\node [style=hadamard] (79) at (2.75, 0) {};
		\node [style=hadamard] (80) at (2.75, -1) {};
		\node [style=none] (81) at (3.25, 0.5) {};
		\node [style=none] (82) at (3.25, -1.5) {};
		\node [style=Z dot] (83) at (3.25, 0) {};
		\node [style=Z dot] (84) at (3.25, -1) {};
		\node [style=Z phase dot] (85) at (2.25, 0.5) {$\gamma$};
		\node [style=Z phase dot] (86) at (2.25, -1.5) {$\delta$};
		\node [style=none] (93) at (4.25, 0) {};
		\node [style=none] (94) at (4.25, -1) {};
		\node [style=hadamard] (95) at (4.75, 0) {};
		\node [style=hadamard] (96) at (4.75, -1) {};
		\node [style=hadamard] (97) at (5.75, -1) {};
		\node [style=hadamard] (98) at (5.75, 0) {};
		\node [style=hadamard] (99) at (5.25, -0.5) {};
		\node [style=Z phase dot] (102) at (5.25, 0) {$\alpha$};
		\node [style=Z phase dot] (103) at (5.25, -1) {$\beta$};
		\node [style=hadamard] (104) at (6.75, 0) {};
		\node [style=hadamard] (105) at (6.75, -1) {};
		\node [style=none] (106) at (7.25, 0) {};
		\node [style=none] (107) at (7.25, -1) {};
		\node [style=Z phase dot] (110) at (6.25, 0) {$\gamma$};
		\node [style=Z phase dot] (111) at (6.25, -1) {$\delta$};
		\node [style=none] (112) at (3.75, -0.5) {$=$};
		\node [style=none] (113) at (8.25, 0) {};
		\node [style=none] (114) at (8.25, -1) {};
		\node [style=X phase dot] (120) at (8.75, 0) {$\alpha$};
		\node [style=X phase dot] (121) at (8.75, -1) {$\beta$};
		\node [style=hadamard] (122) at (9.75, 0) {};
		\node [style=hadamard] (123) at (9.75, -1) {};
		\node [style=none] (124) at (10.25, 0) {};
		\node [style=none] (125) at (10.25, -1) {};
		\node [style=Z phase dot] (126) at (9.25, 0) {$\gamma$};
		\node [style=Z phase dot] (127) at (9.25, -1) {$\delta$};
		\node [style=none] (128) at (7.75, -0.5) {$=$};
		\node [style=hadamard] (129) at (8.75, -0.5) {};
		\node [style=none] (130) at (11.25, 0) {};
		\node [style=none] (131) at (11.25, -1) {};
		\node [style=X phase dot] (132) at (12.25, 0) {$\alpha$};
		\node [style=X phase dot] (133) at (12.25, -1) {$\beta$};
		\node [style=hadamard] (134) at (13.25, 0) {};
		\node [style=hadamard] (135) at (13.25, -1) {};
		\node [style=none] (136) at (13.75, 0) {};
		\node [style=none] (137) at (13.75, -1) {};
		\node [style=Z phase dot] (138) at (12.75, 0) {$\gamma$};
		\node [style=Z phase dot] (139) at (12.75, -1) {$\delta$};
		\node [style=none] (140) at (10.75, -0.5) {$=$};
		\node [style=hadamard] (141) at (11.75, -0.5) {};
		\node [style=X dot] (142) at (11.75, 0) {};
		\node [style=X dot] (143) at (11.75, -1) {};
	\end{pgfonlayer}
	\begin{pgfonlayer}{edgelayer}
		\draw (8.center) to (0);
		\draw (0) to (24);
		\draw (24) to (2);
		\draw (2) to (27);
		\draw (27) to (4);
		\draw (2) to (29);
		\draw (29) to (3);
		\draw (3) to (25);
		\draw (25) to (1);
		\draw (1) to (16.center);
		\draw (3) to (26);
		\draw (26) to (5);
		\draw (50) to (2);
		\draw (3) to (51);
		\draw (4) to (79);
		\draw (5) to (80);
		\draw (80) to (84);
		\draw (79) to (83);
		\draw (85) to (4);
		\draw (5) to (86);
		\draw (83) to (81.center);
		\draw (84) to (82.center);
		\draw (93.center) to (95);
		\draw (95) to (102);
		\draw (102) to (98);
		\draw (98) to (110);
		\draw (110) to (104);
		\draw (104) to (106.center);
		\draw (102) to (99);
		\draw (99) to (103);
		\draw (103) to (96);
		\draw (96) to (94.center);
		\draw (103) to (97);
		\draw (97) to (111);
		\draw (111) to (105);
		\draw (105) to (107.center);
		\draw (126) to (122);
		\draw (122) to (124.center);
		\draw (127) to (123);
		\draw (123) to (125.center);
		\draw (113.center) to (120);
		\draw (120) to (129);
		\draw (129) to (121);
		\draw (121) to (114.center);
		\draw (120) to (126);
		\draw [in=180, out=0] (121) to (127);
		\draw (138) to (134);
		\draw (134) to (136.center);
		\draw (139) to (135);
		\draw (135) to (137.center);
		\draw (132) to (138);
		\draw (133) to (139);
		\draw (130.center) to (142);
		\draw (142) to (132);
		\draw (142) to (141);
		\draw (141) to (143);
		\draw (143) to (131.center);
		\draw (143) to (133);
	\end{pgfonlayer}
\end{tikzpicture}
\end{center}
In other words, the measurements induce a Clifford circuit with $\CXX := (H \otimes H) CZ (H \otimes H)$ followed by $\RZ(\gamma) \RX(\alpha) \otimes \RZ(\delta) \RX(\beta)$ followed by $H \otimes H$.  Notice that the extra $H \otimes H$ operation can be pushed into the next copy of the  gadget for larger states.  So, for example, the state $\ket{\mathcal H_1}$ after measurement is the following: 
\begin{center}
	\begin{tikzpicture}
	\begin{pgfonlayer}{nodelayer}
		\node [style=Z dot] (0) at (0.25, 0) {};
		\node [style=Z dot] (1) at (0.25, -1) {};
		\node [style=Z dot] (2) at (1.25, 0) {};
		\node [style=Z dot] (3) at (1.25, -1) {};
		\node [style=Z dot] (4) at (2.25, 0) {};
		\node [style=Z dot] (5) at (2.25, -1) {};
		\node [style=none] (8) at (0.25, 0.5) {};
		\node [style=none] (16) at (0.25, -1.5) {};
		\node [style=hadamard] (24) at (0.75, 0) {};
		\node [style=hadamard] (25) at (0.75, -1) {};
		\node [style=hadamard] (26) at (1.75, -1) {};
		\node [style=hadamard] (27) at (1.75, 0) {};
		\node [style=hadamard] (29) at (1.25, -0.5) {};
		\node [style=Z phase dot] (50) at (1.25, 0.5) {$\alpha_1$};
		\node [style=Z phase dot] (51) at (1.25, -1.5) {$\beta_1$};
		\node [style=hadamard] (79) at (2.75, 0) {};
		\node [style=hadamard] (80) at (2.75, -1) {};
		\node [style=Z phase dot] (85) at (2.25, 0.5) {$\gamma_1$};
		\node [style=Z phase dot] (86) at (2.25, -1.5) {$\delta_1$};
		\node [style=Z dot] (140) at (3.25, 0) {};
		\node [style=Z dot] (141) at (3.25, -1) {};
		\node [style=Z dot] (142) at (4.25, 0) {};
		\node [style=Z dot] (143) at (4.25, -1) {};
		\node [style=hadamard] (148) at (3.75, -1) {};
		\node [style=hadamard] (149) at (3.75, 0) {};
		\node [style=hadamard] (150) at (3.25, -0.5) {};
		\node [style=Z phase dot] (151) at (3.25, 0.5) {$\alpha_2$};
		\node [style=Z phase dot] (152) at (3.25, -1.5) {$\beta_2$};
		\node [style=Z phase dot] (159) at (4.25, 0.5) {$\gamma_2$};
		\node [style=Z phase dot] (160) at (4.25, -1.5) {$\delta_2$};
		\node [style=Z dot] (163) at (5.25, 0) {};
		\node [style=Z dot] (164) at (5.25, -1) {};
		\node [style=Z dot] (165) at (6.25, 0) {};
		\node [style=Z dot] (166) at (6.25, -1) {};
		\node [style=hadamard] (169) at (4.75, 0) {};
		\node [style=hadamard] (170) at (4.75, -1) {};
		\node [style=hadamard] (171) at (5.75, -1) {};
		\node [style=hadamard] (172) at (5.75, 0) {};
		\node [style=hadamard] (173) at (5.25, -0.5) {};
		\node [style=Z phase dot] (174) at (5.25, 0.5) {$\alpha_3$};
		\node [style=Z phase dot] (175) at (5.25, -1.5) {$\beta_3$};
		\node [style=Z phase dot] (182) at (6.25, 0.5) {$\gamma_3$};
		\node [style=Z phase dot] (183) at (6.25, -1.5) {$\delta_3$};
		\node [style=Z dot] (186) at (7.25, 0) {};
		\node [style=Z dot] (187) at (7.25, -1) {};
		\node [style=Z dot] (188) at (8.25, 0) {};
		\node [style=Z dot] (189) at (8.25, -1) {};
		\node [style=hadamard] (192) at (6.75, 0) {};
		\node [style=hadamard] (193) at (6.75, -1) {};
		\node [style=hadamard] (194) at (7.75, -1) {};
		\node [style=hadamard] (195) at (7.75, 0) {};
		\node [style=hadamard] (196) at (7.25, -0.5) {};
		\node [style=Z phase dot] (197) at (7.25, 0.5) {$\alpha_4$};
		\node [style=Z phase dot] (198) at (7.25, -1.5) {$\beta_4$};
		\node [style=hadamard] (199) at (8.75, 0) {};
		\node [style=hadamard] (200) at (8.75, -1) {};
		\node [style=none] (201) at (9.25, 0.5) {};
		\node [style=none] (202) at (9.25, -1.5) {};
		\node [style=Z dot] (203) at (9.25, 0) {};
		\node [style=Z dot] (204) at (9.25, -1) {};
		\node [style=Z phase dot] (205) at (8.25, 0.5) {$\gamma_4$};
		\node [style=Z phase dot] (206) at (8.25, -1.5) {$\delta_4$};
		\node [style=none] (207) at (9.5, -0.5) {$=$};
		\node [style=hadamard] (220) at (10, -0.5) {};
		\node [style=X phase dot] (221) at (10.5, 0) {$\alpha_1$};
		\node [style=X phase dot] (222) at (10.5, -1) {$\beta_1$};
		\node [style=Z phase dot] (225) at (11, 0) {$\gamma_1$};
		\node [style=Z phase dot] (226) at (11, -1) {$\delta_1$};
		\node [style=hadamard] (233) at (11.5, -0.5) {};
		\node [style=X phase dot] (234) at (12, 0) {$\alpha_2$};
		\node [style=X phase dot] (235) at (12, -1) {$\beta_2$};
		\node [style=Z phase dot] (236) at (12.5, 0) {$\gamma_2$};
		\node [style=Z phase dot] (237) at (12.5, -1) {$\delta_2$};
		\node [style=hadamard] (246) at (13, -0.5) {};
		\node [style=X phase dot] (247) at (13.5, 0) {$\alpha_3$};
		\node [style=X phase dot] (248) at (13.5, -1) {$\beta_3$};
		\node [style=Z phase dot] (249) at (14, 0) {$\gamma_3$};
		\node [style=Z phase dot] (250) at (14, -1) {$\delta_3$};
		\node [style=hadamard] (259) at (14.5, -0.5) {};
		\node [style=X phase dot] (260) at (15, 0) {$\alpha_4$};
		\node [style=X phase dot] (261) at (15, -1) {$\beta_4$};
		\node [style=hadamard] (262) at (16, 0) {};
		\node [style=hadamard] (263) at (16, -1) {};
		\node [style=none] (264) at (16.25, 0) {};
		\node [style=none] (265) at (16.25, -1) {};
		\node [style=Z phase dot] (268) at (15.5, 0) {$\gamma_4$};
		\node [style=Z phase dot] (269) at (15.5, -1) {$\delta_4$};
		\node [style=none] (270) at (9.75, 0) {};
		\node [style=none] (271) at (9.75, -1) {};
		\node [style=X dot] (272) at (10, 0) {};
		\node [style=X dot] (273) at (10, -1) {};
		\node [style=X dot] (274) at (11.5, 0) {};
		\node [style=X dot] (275) at (11.5, -1) {};
		\node [style=X dot] (276) at (13, 0) {};
		\node [style=X dot] (277) at (13, -1) {};
		\node [style=X dot] (278) at (14.5, 0) {};
		\node [style=X dot] (279) at (14.5, -1) {};
	\end{pgfonlayer}
	\begin{pgfonlayer}{edgelayer}
		\draw (8.center) to (0);
		\draw (0) to (24);
		\draw (24) to (2);
		\draw (2) to (27);
		\draw (27) to (4);
		\draw (2) to (29);
		\draw (29) to (3);
		\draw (3) to (25);
		\draw (25) to (1);
		\draw (1) to (16.center);
		\draw (3) to (26);
		\draw (26) to (5);
		\draw (50) to (2);
		\draw (3) to (51);
		\draw (4) to (79);
		\draw (5) to (80);
		\draw (85) to (4);
		\draw (5) to (86);
		\draw (140) to (149);
		\draw (149) to (142);
		\draw (140) to (150);
		\draw (150) to (141);
		\draw (141) to (148);
		\draw (148) to (143);
		\draw (151) to (140);
		\draw (141) to (152);
		\draw (159) to (142);
		\draw (143) to (160);
		\draw (169) to (163);
		\draw (163) to (172);
		\draw (172) to (165);
		\draw (163) to (173);
		\draw (173) to (164);
		\draw (164) to (170);
		\draw (164) to (171);
		\draw (171) to (166);
		\draw (174) to (163);
		\draw (164) to (175);
		\draw (182) to (165);
		\draw (166) to (183);
		\draw (192) to (186);
		\draw (186) to (195);
		\draw (195) to (188);
		\draw (186) to (196);
		\draw (196) to (187);
		\draw (187) to (193);
		\draw (187) to (194);
		\draw (194) to (189);
		\draw (197) to (186);
		\draw (187) to (198);
		\draw (188) to (199);
		\draw (189) to (200);
		\draw (200) to (204);
		\draw (199) to (203);
		\draw (205) to (188);
		\draw (189) to (206);
		\draw (203) to (201.center);
		\draw (204) to (202.center);
		\draw (140) to (79);
		\draw (141) to (80);
		\draw (169) to (142);
		\draw (170) to (143);
		\draw (166) to (193);
		\draw (165) to (192);
		\draw (270.center) to (272);
		\draw (272) to (220);
		\draw (220) to (273);
		\draw (273) to (271.center);
		\draw (272) to (221);
		\draw (221) to (225);
		\draw (225) to (274);
		\draw (274) to (233);
		\draw (274) to (234);
		\draw (234) to (236);
		\draw (236) to (276);
		\draw (276) to (247);
		\draw (247) to (249);
		\draw (249) to (278);
		\draw (278) to (260);
		\draw (260) to (268);
		\draw (268) to (262);
		\draw (262) to (264.center);
		\draw (278) to (259);
		\draw (276) to (246);
		\draw (273) to (222);
		\draw (222) to (226);
		\draw (226) to (275);
		\draw (275) to (233);
		\draw (275) to (235);
		\draw (235) to (237);
		\draw (237) to (277);
		\draw (277) to (246);
		\draw (277) to (248);
		\draw (248) to (250);
		\draw (250) to (279);
		\draw (279) to (259);
		\draw (279) to (261);
		\draw (261) to (269);
		\draw (269) to (263);
		\draw (263) to (265.center);
	\end{pgfonlayer}
\end{tikzpicture}
\end{center}
This is example is particularly relevant given the following fact, which can be proved through computational brute force.

\begin{fact}
	Every two-qubit Clifford operation can be decomposed as 
	$$
	\prod_{i=1}^4 \( [\RZ(\gamma_i) \RX(\alpha_i) \otimes \RZ(\delta_i) \RX(\beta_i)] \CXX \).
	$$
	for $\alpha_i, \beta_i, \gamma_i, \delta_i \in \{0, \pi/2, \pi, 3\pi/2\}$.
\end{fact}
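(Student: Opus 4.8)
The plan is to reduce the claim to a finite search and verify it exhaustively, after using the group structure of $\Clifford_2$ to keep the search small. First I would observe that each \emph{block} $b(\alpha,\beta,\gamma,\delta) := [\RZ(\gamma)\RX(\alpha)\otimes\RZ(\delta)\RX(\beta)]\,\CXX$ with $\alpha,\beta,\gamma,\delta\in\{0,\pi/2,\pi,3\pi/2\}$ is a two-qubit Clifford operation: $\RZ(0)=\RX(0)=\pfont I$, $\RZ(\pi)=\pfont Z$ and $\RX(\pi)=\pfont X$ are Pauli, $\RZ(\pi/2)$ and $\RX(\pi/2)$ are standard Clifford generators, and $\CXX=(H\otimes H)\,\CZ\,(H\otimes H)$ is Clifford. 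Hence the map $\Phi$ sending $(\alpha_i,\beta_i,\gamma_i,\delta_i)_{i=1}^{4}$ to $\prod_{i=1}^4 b(\alpha_i,\beta_i,\gamma_i,\delta_i)$ has image in $\Clifford_2$, and the Fact is exactly the assertion that $\Phi$ is surjective. Since measurement-based computation realizes Clifford operations only up to Pauli corrections, it suffices for our application to prove surjectivity onto $\Clifford_2/\Paulis_2\cong\mathrm{S}_6$ (order $720$); the stronger ``modulo global phase'' statement onto the order-$11520$ group follows from the same computation run one level up.

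Next I would set $B$ to be the image of a single block modulo Paulis. Modulo Paulis, $\RZ(\gamma)\in\{\pfont I,\RZ(\pi/2)\}$ and $\RX(\alpha)\in\{\pfont I,\RX(\pi/2)\}$, so each single-wire factor $\RZ(\gamma)\RX(\alpha)$ ranges over at most four elements of $\Clifford_1/\Paulis_1\cong\mathrm{S}_3$, giving $|B|\le 16$; the elements of $B$ are read off directly from the tableau action on $\pfont{XI},\pfont{IX},\pfont{ZI},\pfont{IZ}$. Since $\pfont I\in B$, the sets $B\subseteq B^2\subseteq B^4$ are nested subsets of the finite group $\Clifford_2/\Paulis_2$, and $\Phi$ is surjective precisely when $B^4 := \{b_1b_2b_3b_4 : b_i\in B\}$ is all of $\Clifford_2/\Paulis_2$. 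I would compute $B$, then $B^2$ (at most $|B|^2$ tableau multiplications), and verify that $B^2\cdot B^2 = \Clifford_2/\Paulis_2$; this is a check involving at most $720^2$ products inside a group of order $720$, so it is immediate to automate, and recording for each $g$ a representation $g=(b_1b_2)(b_3b_4)$ gives the explicit decomposition. One may also short-circuit the last step: if $|B^2|>360$ then $B^2\cdot B^2=\Clifford_2/\Paulis_2$ automatically, since a product of two subsets of a finite group each of size exceeding half the order is the whole group.

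I expect the only real obstacle to be purely logistical: keeping the enumeration genuinely small. The naive sweep over all sixteen angles touches $4^{16}\approx 4.3\times 10^9$ tuples, which is an uncomfortable amount of ``brute force,'' so the reduction to computing $B$, then $B^2$, then testing $B^2\cdot B^2$ inside a group of order $720$ (or $11520$ for the modulo-phase version) is what makes the verification trivial. A more conceptual alternative would be to identify $c := [\CXX]$ and the subgroup $K:=(\Clifford_1/\Paulis_1)\times(\Clifford_1/\Paulis_1)\cong\mathrm{S}_3\times\mathrm{S}_3$ of $\Clifford_2/\Paulis_2\cong\mathrm{S}_6$, note that a single block lies in the coset $Kc$, and prove $(Kc)^4=\mathrm{S}_6$ by analyzing the double cosets $K\sigma K$; I believe this succeeds but it requires tracking the double-coset decomposition carefully enough to rule out a shortfall, so I would keep the computational check as the primary argument.
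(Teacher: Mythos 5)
Your proposal is correct and takes essentially the same route as the paper, which simply asserts that the Fact ``can be proved through computational brute force'': your contribution is just a sensible organization of that brute force (compute the block set $B$ modulo Paulis, then $B^2$, then check $B^2\cdot B^2$ exhausts the order-$720$ group $\Clifford_2/\Paulis_2$, with the same enumeration available one level up for the order-$11520$ statement), and your auxiliary observations (each block is Clifford, $|B|\le 16$, and the subset-product shortcut when $|B^2|>360$) are all sound.
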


Therefore, if we have some large $\ket{\mathcal H_n}$ state, we can measure 16 consecutive qubits to implement any two-qubit gate.  The exception is the final 16 qubits which will have an extra $H \otimes H$ applied to the end.  However, instead of setting the $\alpha_i, \beta_i, \gamma_i, \delta_i$ parameters to generate some gate $g$, we can set them to generate $Hg$.  Combining the above observations completes the proof of \Cref{thm:two_qubit_mbqc}.

%!TEX root = ../samp_clifford.tex

\section{Sampling vs. Relation Problems}
\label{app:samp_vs_search}

The purpose of this appendix is to discuss the difference between sampling and relation problems as they pertain to this work and related literature.\footnote{For more precise definitions of what we mean by the various models of computation, we refer to the reader to the \Cref{sec:models_of_computation}.}  An obvious first observation is that any model of computation which can solve a sampling problem can also solve the analogous relation problem.  

The reverse direction is not nearly as obvious.  On the one hand, Aaronson shows that search problems can often be constructed such that they are just as hard as as sampling problems \cite{aaronson:2014_equivalence}.  In particular, $\Rel\BPP = \Rel\BQP$ if and only if $\Samp\BPP = \Samp\BQP$.  On the other hand, this does \emph{not} mean that the ability to solve a relation problem immediately grants the power to solve the analogous search problem.  For example, consider the following quantum circuit:  with probability $1/2$, the circuit outputs the all-zeroes string, and with probability $1/2$, the circuit samples according to one of the many quantum supremacy proposals \cite{aaronsonarkhipov:2013, boixo:2018_rcs, bjs:2010_iqp}.  Clearly, the relation problem associated with the output distribution is in $\Rel\P$ (or even $\Rel\NC^0$), but the associated sampling problem is hard for an efficient classical machine unless the polynomial hierarchy collapses.

In the first section below, we show that such a straightforward equivalence \emph{does} exist for the controlled-Clifford sampling problems in this paper and related literature (e.g., \cite{bgk:2018, bgkt:2019}) in the non-interactive setting.  In the second section, we will discuss the difference when interaction is allowed.  In particular, we will show that forcing the classical simulator to \emph{sample} in the second round of our protocol (i.e., $\CliffSim$) makes the proofs of the main results dramatically simpler.

\subsection{Non-interactive setting}
Consider the standard problem of outputting $X$ and $Y$ measurements for some graph state.  We will usually think of such measurements as first applying an $H$ or $H \RZ$ gate (corresponding to $X$ and $Y$ measurements, respectively), and then measuring in the computational basis.  In order to talk about the valid measurement outcomes, we define the support of a quantum state $\ket{\psi} = \sum_{x \in \{0,1\}^n} \alpha_x \ket{x}$ as
$$
\supp(\ket{\psi}) := \{ x : \alpha_x \neq 0 \}.
$$
We would like to show that for Clifford $\ket{\psi}$, sampling from $\supp(\ket{\psi})$ is equivalent to outputting any element of $\supp(\ket{\psi})$ under suitably efficient reductions.  We give the following simple reduction:

\begin{lemma}
\label{lem:search_to_samp}
Let $U \in \Clifford_n$ be a Clifford unitary with tableau $[\begin{smallmatrix} A & B \\ C & D \end{smallmatrix}]$.  Furthermore, let $z \in \supp(U\ket{0}^{\otimes n})$ be any element in the support, and let $r \in \{0,1\}^n$ be a uniformly random $n$-bit string.  Then, $r C \oplus z \in \supp(U\ket{0}^{\otimes n})$ is a uniformly random element of the support.
\end{lemma}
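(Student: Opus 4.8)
\textbf{Proof proposal for \Cref{lem:search_to_samp}.}

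The plan is to understand the structure of $\supp(U\ket{0}^{\otimes n})$ as an affine subspace of $\F_2^n$, and then observe that the claimed operation is precisely ``translate by a random element of the associated linear subspace.'' First I would recall that for a stabilizer state $\ket{\psi} = U\ket{0}^{\otimes n}$, the support is a coset of a linear subspace $W \le \F_2^n$: indeed, $\ket{\psi}$ is stabilized by $U\bullet Z^{e_i}$ for each $i$, and these stabilizers, once we discard their $X$-components, determine which computational basis strings can appear. Concretely, writing the $i$th row of the tableau block pair $(C,D)$ as the Pauli $U \bullet Z^{e_i} = \pm X^{C_i} Z^{D_i}$, the string $x$ lies in $\supp(\ket{\psi})$ only if, for every generator with $C_i = 0$ (a pure-$Z$ stabilizer), the parity condition $D_i \cdot x = (\text{sign bit})$ holds; and conversely the support is exactly the set of $x$ satisfying all such parity constraints. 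The key linear-algebra fact I would invoke is that flipping $x$ by any vector in the row space of $C$ preserves membership in $\supp(\ket{\psi})$: if $v = rC$ for some $r \in \F_2^n$, then for any pure-$Z$ generator index $i$ (i.e. $C_i = 0$), we have $v \cdot D_i^{\mathsf T}$-type pairing vanishing because the symplectic/commutation relations between the generators force $C D^{\mathsf T}$ to be symmetric and, more to the point, force the rows of $C$ to be ``orthogonal'' (in the symplectic sense) to the pure-$Z$ rows — this is exactly the content of \Cref{fact:tableau_mult} together with the symplectic condition $AD^{\mathsf T} + BC^{\mathsf T} = I$ stated in the appendix.

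The cleanest way to run the argument: the row space $\operatorname{rowsp}(C)$ has dimension equal to $n - \dim W$ where $W = \supp(\ket\psi) - z$ is the linear part of the support (this is a standard stabilizer-formalism fact — the number of ``random'' measurement outcomes in the $Z$-basis equals $\rank$ of the $X$-part of the stabilizer tableau, cf.\ the discussion around \Cref{thm:multiqubit_measurement}). Actually, I expect the right statement is $\operatorname{rowsp}(C) = W$ itself, not merely a subspace of the same dimension: the pure-$Z$ generators span the orthogonal complement of $W$, the full set of $Z$-components $\{D_i\}$ spans a space whose restriction picks out $W^\perp$, and the $X$-components $\{C_i\}$ of the full generator set span exactly $W$. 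Granting this, the map $r \mapsto rC$ is a surjection $\F_2^n \twoheadrightarrow W$, so for uniformly random $r$, $rC$ is uniform on $W$ (every fiber has the same size $2^{\dim\ker}$). Therefore $rC \oplus z$ is uniform on $z + W = \supp(U\ket{0}^{\otimes n})$, which is the claim. I would present this as: (1) identify $\supp(U\ket0^{\otimes n}) = z + W$ with $W$ a subspace; (2) show $W = \operatorname{rowsp}(C)$ using the symplectic structure of the tableau; (3) conclude uniformity of $rC$ on $W$, hence of $rC\oplus z$ on the support.

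The main obstacle is step (2): pinning down precisely why $\operatorname{rowsp}(C)$ equals the linear part $W$ of the support (rather than just having the correct dimension, or being contained in it). This requires care with the sign/phase bits — one must check that adding a vector from $\operatorname{rowsp}(C)$ genuinely fixes \emph{all} the defining parity constraints of the support, including getting the signs right, not just the homogeneous parts. I would handle this by working with the generators $U \bullet Z^{e_i}$ directly: a string $x \in \supp(\ket\psi)$ satisfies $\langle x | (U\bullet Z^{e_i}) | x\rangle \ne 0$ structural compatibility, and then use $\langle x \oplus v | (U \bullet Z^{e_i}) |x\oplus v\rangle = (-1)^{v \cdot D_i} \langle x|(U\bullet Z^{e_i})|x\rangle$ when $C_i = 0$, combined with the fact that for $v \in \operatorname{rowsp}(C)$ the inner product $v \cdot D_i = 0$ whenever $C_i = 0$ — this last orthogonality being forced by $A D^{\mathsf T} + B C^{\mathsf T} = I$ and $CD^{\mathsf T}$ symmetric. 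Once this orthogonality is established, the rest is routine counting. An alternative, possibly slicker route that avoids sign bookkeeping entirely: note $X^v$ for $v = rC$ is, up to phase, equal to $U (Z^r) U^\dagger \cdot (\text{something with no }X)$ by \Cref{fact:tableau_mult}, so $X^v \ket\psi = X^v U\ket{0}^{\otimes n}$ differs from $U Z^r \ket0^{\otimes n} = \pm U\ket0^{\otimes n}$ only by a phase, hence $X^v$ permutes the support; averaging over $r$ gives the uniform distribution on the support. I would likely go with this operator-theoretic version as the primary proof since it sidesteps explicit parity-constraint manipulation.
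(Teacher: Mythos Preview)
Your proposal is correct, and the ``alternative, possibly slicker route'' you describe at the end is precisely the paper's proof: conjugate $\pfont{Z}^r$ through $U$ to get $\alpha\,\pfont{X}^{rC}\pfont{Z}^{rD}$ (via \Cref{fact:tableau_mult}), observe that applying this to $U\ket{0}^{\otimes n}$ returns the same state up to phase, and conclude that $rC$ shifts the support into itself. For uniformity the paper does exactly the dimension count you sketch---$\rank(C)=\log_2|\supp(U\ket{0}^{\otimes n})|$, citing the measurement discussion in \Cref{sec:upper_bounds}---so that $r\mapsto rC\oplus z$ surjects onto the support and is uniform there. The paper simply skips your parity-constraint detour and goes straight to the operator argument; otherwise the two proofs coincide.
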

\begin{proof}
Let us first start with the trivial observation that 
$$
z \in \supp(U\ket{0}^{\otimes n}) \iff z \oplus a \in \supp(\alpha \pfont{X}^a \pfont{Z}^b U\ket{0}^{\otimes n})
$$
for any strings $a, b \in \{0,1\}^n$ and phase $\alpha \in \{\pm 1, \pm i\}$.  In fact, this will give us a simple way to move between elements of the support.  Since $U$ is Clifford, for any $c \in \{0,1\}^n$, we have $U \pfont Z^c = (U \pfont Z^c U^\dag) U = \alpha \pfont X^a \pfont Z^b U$ for strings $a, b \in \{0,1\}^n$.  Therefore,
\begin{align*}
z \in \supp(U\ket{0}^{\otimes n}) &\iff z \in \supp(U \pfont Z^c \ket{0}^{\otimes n}) \\
&\iff z \in \supp(\alpha \pfont{X}^a \pfont{Z}^b U \ket{0}^{\otimes n}) \\
&\iff z \oplus a \in \supp(U\ket{0}^{\otimes n})
\end{align*}
where the first equivalence comes from the fact that $\pfont Z^c$ is a stabilizer of $\ket{0}^{\otimes n}$ for any $c \in \{0,1\}^n$.  Using the above observation and \Cref{fact:tableau_mult}, we have that for any $r \in \{0,1\}^n$, $rC \oplus z \in \supp(U\ket{0}^{\otimes n})$.  

Let us now turn to uniformity.  Recall that a single-qubit measurement on a Clifford state is random if there is some generator of the state has an $\pfont X$ or $\pfont Y$ Pauli at that location.  We note that measurements with random outcomes effectively perform Gaussian elimination on the matrix $C$ (see \Cref{sec:upper_bounds} for more details on measurement).  Therefore, we get that $\rank(C) = \log_2 |\supp(U\ket{0}^{\otimes n})|$, and the image of $rC \oplus z$ over all $r$ equals $\supp(U\ket{0}^{\otimes n})$.  Since output of any affine function is uniform over its image given a uniform input, we get that $rC \oplus z$ is uniform over $\supp(U\ket{0}^{\otimes n})$ for uniform $r$.
\end{proof}

So what is the Clifford tableau corresponding to $X$ and $Y$ measurements on a graph state?  Let $A \in \{0,1\}^{n \times n}$ be the adjacency matrix for some simple undirected graph on $n$ vertices, and let $b \in \{0,1\}^n$ specify the measurement basis for each qubit (0 for $X$ measurement, and 1 for $Y$ measurement).  Note that $A$ and $b$ are exactly the inputs to the Hidden Linear Function ($\HLF$) problem of Bravyi, Gosset, and K\"onig \cite{bgk:2018}, and an answer to the measurement problem is equivalent to an answer to the $\HLF$ problem.

Starting from the all-zeros state, the exact sequence of gates is the following:  Hadamard on all qubits, $\CSIGN$ between adjacent qubits, $\RZ$ on $Y$-basis qubits, and finally another layer of Hadamard gates.  Using the tableau manipulations of Aaronson and Gottesman \cite{ag:2004}, we can track the tableau (ignoring signs):
$$
\begin{bmatrix} I_n & 0 \\ 0 & I_n \end{bmatrix} \xrightarrow{H} 
\begin{bmatrix} 0 & I_n \\ I_n & 0 \end{bmatrix} \xrightarrow{\CSIGN}
\begin{bmatrix} 0 & I_n \\ I_n & A \end{bmatrix} \xrightarrow{\RZ}
\begin{bmatrix} 0 & I_n \\ I_n & A + B \end{bmatrix} \xrightarrow{H}
\begin{bmatrix} I_n & 0 \\ A+B & I_n \end{bmatrix}
$$
where $B = \diag(b)$.

We're now ready to show that the relation version of $\HLF$ is equivalent to the sampling version.  For clarity, we will denote these problems as $\Samp\HLF$ and $\Rel\HLF$ for the sampling and relation problems, respectively.  Previously, we have used $\HLF$ to refer to $\Rel\HLF$.

\begin{theorem}
\label{thm:sampHLF_equals_relHLF}
$\Samp\HLF \in (\Samp\NC^0)^{\Rel\HLF}$.
\end{theorem}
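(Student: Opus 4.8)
The plan is to realize the forward direction with a single oracle call followed by a local randomization, using Lemma~\ref{lem:search_to_samp} to convert one arbitrary element of the support into a uniformly random one. Recall that an $\HLF$ instance is a graph adjacency matrix $A$ (a subgraph of a $2$D grid) together with a basis vector $b$, and let $U \in \Clifford_n$ be the associated Clifford circuit: Hadamard on all qubits, $\CSIGN$ along the edges of $A$, $\RZ$ on the $Y$-basis qubits, and a final layer of Hadamards. The legal $\HLF$ answers are exactly the elements of $\supp(U\ket{0}^{\otimes n})$, and $\Samp\HLF$ asks for a sample from the computational-basis measurement distribution of $U\ket{0}^{\otimes n}$, which for a stabilizer state is the uniform distribution on that support. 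By the (sign-free) tableau evolution recorded just before the theorem, $U$ has tableau $[\begin{smallmatrix} I_n & 0 \\ A+B & I_n \end{smallmatrix}]$ with $B = \diag(b)$, so its lower-left block is $C = A+B$.

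The reduction is then: on input $(A,b)$, query the $\Rel\HLF$ oracle on the very same $(A,b)$ to obtain some $z \in \supp(U\ket{0}^{\otimes n})$; independently draw $r \in \{0,1\}^n$ uniformly; and output $z \oplus r(A+B)$. By Lemma~\ref{lem:search_to_samp} applied with $C = A+B$, for any fixed $z$ in the support and uniform $r$ the string $z \oplus rC$ is uniformly distributed over $\supp(U\ket{0}^{\otimes n})$; since the oracle only sees $(A,b)$ and is therefore independent of $r$, averaging over whatever (possibly randomized) element $z$ the oracle returns keeps the output exactly uniform on the support. Hence the reduction solves $\Samp\HLF$.

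The one thing to verify carefully is that this is a $\Samp\NC^0$ reduction. The oracle is queried on the identity function of the input, which is free, and the reduction has no rounds of interaction with itself. For the post-processing, the $j$-th output bit is $z_j \oplus \bigoplus_i r_i (A+B)_{ij}$, and column $j$ of $A+B$ has at most $\deg_G(j)+1 = O(1)$ nonzero entries because $G$ is a bounded-degree grid subgraph (equivalently, a constant-depth network of $\CSIGN$ gates touches each qubit $O(1)$ times). So each output bit is an XOR of $O(1)$ bits among $z$ and $r$, hence computable by a $\DLOGTIME$-uniform $\NC^0$ circuit with access to random bits. This locality is the crux and the expected main obstacle: over arbitrary graphs $r(A+B)$ would be an unbounded parity and the reduction would escape $\NC^0$ (indeed $\AC^0$), so the argument genuinely relies on the grid structure of $\HLF$. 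Combining the above gives $\Samp\HLF \in (\Samp\NC^0)^{\Rel\HLF}$, and since the converse inclusion is immediate, the sampling and relation versions of $\HLF$ are equivalent under $\Samp\NC^0$ reductions.
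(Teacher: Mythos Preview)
Your proof is correct and follows essentially the same approach as the paper: call the $\Rel\HLF$ oracle once on the same instance, then XOR the result with $r(A+\diag(b))$ for uniform $r$, invoking Lemma~\ref{lem:search_to_samp} and using the sparsity of the 2D-grid adjacency matrix to argue each output bit depends on $O(1)$ bits. The paper's proof is identical in structure and justification.
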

\begin{proof}
Let some instance of the $\Samp\HLF$ problem be specified by matrix $A \in \{0,1\}^{n \times n}$ and vector $b \in \{0,1\}^n$.  Let $z \in \{0,1\}^n$ be any answer to the $\Rel\HLF$ problem.  By \Cref{lem:search_to_samp}, we have that $r (A \oplus \diag(b)) \oplus z$ solves the $\Samp\HLF$ problem for uniform $r \in \{0,1\}$.  All that remains to show is that an $\NC^0$ circuit can compute this affine function.  Generally, such a computation requires a parity gate (i.e., an $\NC^0[2]$ circuit rather than a $\NC^0$ circuit).  However, we can use that for the $\HLF$ problem, $A$ is sparse since it encodes a graph which is embedded in a 2D grid.  That is, multiplying $r$ by a column of $(A \oplus \diag(b))$ is only a function of constantly many (predetermined) entries of $r$ and the matrix.  Since we can get $z$ from a single call to the $\Rel\HLF$ oracle, the entire circuit can be constructed in constant depth with with bounded fan-in gates.
\end{proof}

\subsection{2-round interactive setting}

We would like to prove a similar reduction from \emph{interactive} sampling tasks to \emph{interactive} relation tasks, since there is a tantalizing possibility of simplifying our proof. Specifically, there are easy measurements to distinguish $\ket{00}$ from $\ket{++}$ if we can assume the outcome is a \emph{sample} rather than a deterministic, adversarial choice. That is, if we assume the rewind oracle produces independent samples in the second round, every time we rewind, then we use it as follows. 
\begin{observation}
\label{obs:simple_nc1_proof}
Let $\coracle$ be the sampling rewind oracle for the $2$-round Clifford simulation problem (\Cref{prob:nc1}). Then $$\NC^{1} \subseteq (\BPAC^{0})^{\coracle}.$$  
\end{observation}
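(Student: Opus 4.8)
The plan is to keep the first round of the proof of \Cref{thm:nc1_main_result} exactly as is and to discard its entire second-round analysis---the contextuality-based extraction of a non-stabilizer and the Kilian-style self-reduction---replacing it by a single fixed two-qubit measurement, read off a constant number of times. All of that machinery exists only to defeat an \emph{adversarial} prover; once $\coracle$ returns genuine samples in the second round, the output state can be identified by an elementary statistical test.

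In detail: given an $\NC^1$-hard problem, reduce it (this is exactly \Cref{cor:nc1_hardness}) to a sequence $U_1,\dots,U_n\in\Clifford_2/\Paulis_2$ of even Cliffords promised to multiply to $\pfont{II}$ or $H\otimes H$, so that $U_1\cdots U_n\ket{++}$ is $\ket{++}$ or $\ket{00}$ modulo $\Paulis_2$, and pose the associated round-1 MBQC challenge of \Cref{prob:nc1} (\Cref{thm:two_qubit_mbqc}). Whatever first-round string $\coracle$ commits to, the two output qubits are left in $\ket{\varphi}=P\ket{++}$ or $\ket{\varphi}=P\ket{00}$ for a Pauli $P$ that is fixed by the committed outcomes but, as in the main proof, as hard for us to compute as the original problem. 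Recall from \Cref{fact:pauli_lines} that $\ell_1:=\{\pfont{XI},\pfont{IX},\pfont{XX}\}$ is a Pauli line every element of which stabilizes $\ket{++}$ while none stabilizes $\ket{00}$, and symmetrically $\ell_2:=\{\pfont{ZI},\pfont{IZ},\pfont{ZZ}\}$ with the two states interchanged; since conjugation by a Pauli only flips signs, the same holds for $\ket{\varphi}$.

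For the second round, choose the round-2 bases so as to realize ``apply the identity gadget, then measure both output qubits in the $X$ basis'' (this measures $\ell_1$; to measure $\ell_2$ instead, have the gadget apply $H\otimes H$), exactly as the magic-square rows and columns are realized in the proof of \Cref{thm:nc1_main_result}, with the MBQC byproduct pushed to the end. Each call to $\coracle$ returns the constantly many round-2 outcomes; write $g$ for the gadget bits and $(a_1,a_2)$ for the two output bits, so that by the usual byproduct analysis $(a_1,a_2)$ is the logical $(\pfont{XI},\pfont{IX})$-outcome of $\ket{\varphi}$ with each coordinate multiplied by a $\pm 1$ sign that is some fixed (unknown) function of $g$. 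Consequently, if $\ket{\varphi}\equiv\ket{++}$ then the $\ell_1$-outcome is frozen and $(a_1,a_2)$ is a deterministic function of $g$, whereas if $\ket{\varphi}\equiv\ket{00}$ the $\ell_1$-outcome is i.i.d.\ uniform and independent of $g$. Rewind $O(1)$ times: by a birthday bound over the constant-size range of $g$, with constant probability two of the calls agree on $g$; for $\ket{00}$ those two then disagree on $(a_1,a_2)$ with probability $\tfrac{3}{4}$, while for $\ket{++}$ they can never disagree. Thus seeing such a disagreement certifies $\ket{00}$, the symmetric $\ell_2$-test certifies $\ket{++}$, and a constant number of repetitions of both tests outputs the correct verdict with probability $\ge\tfrac{1}{2}$ (and ``do not know'' otherwise), never erring; since forming the round-1 pattern, making $O(1)$ oracle calls, and scanning a constant-size table for a collision are all in $\AC^0$, this places the reduction in $(\ZPAC^0)^{\coracle}\subseteq(\BPAC^0)^{\coracle}$.

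The only point requiring care is the MBQC byproduct bookkeeping in round 2, which is verbatim from the main proof; the obstacle there---that an adversarial, merely relational oracle can collapse every such distinction to a single worst-case answer, forcing the detour through quantum contextuality and Kilian randomization---does not arise here, because a sampling oracle cannot conceal a genuine functional dependence of the visible output bits on the visible gadget bits.
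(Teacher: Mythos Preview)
Your proof is correct and follows the same high-level idea as the paper: once the oracle genuinely samples, distinguishing $\ket{++}$ from $\ket{00}$ (modulo Paulis) is just a matter of detecting whether a fixed Pauli measurement is deterministic or uniformly random. The paper's own proof is very terse---it simply says $|\supp(P\ket{00})|=1$ while $|\supp(Q\ket{++})|>1$---and you have filled in the round-2 MBQC bookkeeping that the paper leaves implicit.

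One simplification you missed: the birthday-collision step is unnecessary. The round-2 gadget has only a constant number of qubits, so its Pauli byproduct is a \emph{known} function of the constantly many gadget bits $g$, computable by a constant-size (hence $\AC^0$) circuit. This is exactly the observation the paper uses in the main proofs (``since the depth of these computations is constant, an $\AC^0$ verifier can apply the appropriate recovery operation''). So rather than waiting for two calls to collide on $g$, you can simply correct $(a_1,a_2)$ by the byproduct of $g$ on each call, obtaining the true $(\pfont{XI},\pfont{IX})$-outcome of $\ket{\varphi}$ up to the \emph{round-1} byproduct only; two such corrected samples then agree with probability $1$ for $\ket{++}$ and $1/4$ for $\ket{00}$, with no collision-finding needed. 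Your description of the sign function as ``unknown'' is what led you to the workaround---it is unknown only in the sense that the round-1 byproduct is hard, but the round-2 byproduct is trivial. That said, your argument as written is valid: the gadget-bit range is a constant, the birthday bound gives a collision in $O(1)$ samples, and the subsequent comparison is sound.
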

\begin{proof}
Let $U_1, \ldots, U_n \in \Clifford_2 / \Paulis_2$ be given as input.  By \Cref{cor:nc1_hardness}, it is $\NC_1$-hard to determine if $U_1 \cdots U_n$ is either the identity or $H \otimes H$ (modulo Paulis), promised that one is the case.  In round 1, let the measurements correspond to measurement-based computation of these unitaries.  That is, the state on the remaining qubits is either $P \ket{00}$ or $Q \ket{++}$ for some Paulis $P, Q \in \Paulis_2$, and distinguishing between them solves an $\NC^1$-hard problem.  However, if the classical machine can sample from the conditional distribution of measurement outcomes for these states, then it can easily distinguish between them since $|\supp(P \ket{00})| =1$ and $|\supp(Q \ket{++})| > 1$.
\end{proof}

In light of this, we adapt \Cref{thm:sampHLF_equals_relHLF} to the interactive case directly in the following lemma. 
\begin{lemma}
\label{lem:multiround_rel_to_samp}
Consider a $k$-round interactive Clifford simulation task $T$ that consists of measuring qubits of a graph state in the $X$ or $Y$ basis. Let $\oracle_{rel}$ be an oracle implementing the relation version of task $T$ (i.e., it suffices to output any possible outcome in each round). We construct an oracle $\oracle_{samp}$ which solves a sampling version of task $T$. Namely, the distribution over transcripts of the entire interaction is identical to the distribution over interactions for a quantum device.
\end{lemma}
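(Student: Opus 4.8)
\textbf{Proof plan for \Cref{lem:multiround_rel_to_samp}.}

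The plan is to generalize the single-round reduction of \Cref{lem:search_to_samp} and \Cref{thm:sampHLF_equals_relHLF} to multiple rounds by applying the "re-randomize the support" trick once per round. First I would set up notation: after the first $i-1$ rounds of the interactive protocol, the quantum device has collapsed the graph state to some stabilizer state $\ket{\psi_i}$ (depending on the earlier measurement bases and outcomes), and in round $i$ the challenger picks a set of local Clifford gates and asks for $Z$-basis (equivalently $X/Y$-basis) outcomes on a specified set of qubits. Conditioned on the transcript so far, the correct output distribution for round $i$ is the uniform distribution over $\supp(\rho_i)$, where $\rho_i$ is the reduced state on the measured qubits after applying the round-$i$ gates — this is a basic fact about stabilizer states, where the support of any stabilizer state is an affine subspace of $\F_2^m$ and measurement in the computational basis is uniform over that subspace.

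The core step is the round-by-round simulation of $\oracle_{samp}$ using $\oracle_{rel}$. I would have $\oracle_{samp}$ maintain (implicitly, via its queries to $\oracle_{rel}$) a running transcript. In round $i$, it forwards the challenger's gates to $\oracle_{rel}$ to obtain \emph{some} valid outcome string $z_i$, then draws a fresh uniformly random seed $r_i$ and outputs $z_i \oplus (\text{something affine in } r_i)$, exactly as in \Cref{lem:search_to_samp}: the affine correction is built from the relevant block of the Clifford tableau describing how the round-$i$ state is prepared from $\ket{0}^{\otimes N}$ (Hadamards, $\CSIGN$s along the graph edges, $\RZ$'s for the $Y$ measurements, and whatever Clifford gates the challenger applied in rounds $1,\dots,i$). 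The key observations are: (a) the image of this affine map over all $r_i$ is precisely $\supp(\rho_i)$ — because the rank of the relevant tableau block equals $\log_2|\supp(\rho_i)|$, which follows from the Gaussian-elimination picture of stabilizer measurement, exactly as in \Cref{lem:search_to_samp}; and (b) a uniform seed pushed through an affine map is uniform over the image. Hence round $i$'s output is uniform over $\supp(\rho_i)$ conditioned on the transcript, which is exactly the quantum device's conditional distribution. Composing over all $k$ rounds, the joint distribution over transcripts matches the quantum one. One subtlety: the guarantee on $\oracle_{rel}$ is only that it returns valid outputs when queried consistently with its own earlier answers, so $\oracle_{samp}$ must feed back to $\oracle_{rel}$ the same $z_i$ (not the randomized output) it received, keeping $\oracle_{rel}$'s internal transcript self-consistent; the randomization happens only on the externally reported string.

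The last step is to check that the reduction itself is cheap — it should live in the same low-depth/low-space class so that the lemma is useful. As in \Cref{thm:sampHLF_equals_relHLF}, computing the affine correction is a matrix–vector product over $\F_2$, which in general needs parity gates; but because the graph state is embedded in a bounded-degree 2D grid and the challenger's gates are local, each output bit depends on only constantly many entries of the tableau and the seed, so an $\NC^0$ (or $\AC^0$) circuit with randomness suffices, and only one $\oracle_{rel}$ call per round is made. The main obstacle I anticipate is bookkeeping the tableau correctly across rounds: one must verify that the "affine correction" block one extracts really does have image equal to $\supp(\rho_i)$ even after the earlier rounds have partially collapsed the state and after the challenger's round-$i$ Cliffords — i.e., that the stabilizer-measurement/Gaussian-elimination argument of \Cref{lem:search_to_samp} still goes through when the "initial state" is an arbitrary stabilizer state rather than $\ket{0}^{\otimes n}$. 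This is true (apply a Clifford taking $\ket{0}^{\otimes n}$ to the current state and track tableaux), but it is the place where care is needed; everything else is a direct lift of the single-round argument.
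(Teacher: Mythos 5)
Your high-level plan (lift the affine re-randomization of \Cref{lem:search_to_samp} to the interactive setting) is the right idea, but the specific mechanism---drawing a \emph{fresh} seed $r_i$ in each round and masking that round's relation-oracle answer $z_i$---does not give the correct joint distribution, and can even produce transcripts that are impossible for the quantum device. The problem is a coset mismatch. The full outcome support is an affine coset $S = z \oplus \mathrm{rowspace}(A \oplus \diag(b))$, and the conditional support of the round-$i$ outcomes is a translate of a fixed linear space whose \emph{offset} depends on the outcomes \emph{reported} in earlier rounds. But the offset $z_i$ you obtain from $\oracle_{rel}$ is consistent only with the oracle's \emph{internal} (unmasked) transcript---as you yourself note, you must feed $z_i$, not the masked string, back into $\oracle_{rel}$. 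Hence your round-$i$ output is uniform over the coset consistent with $z_1,\dots,z_{i-1}$, which in general differs from the coset consistent with the reported masked outcomes; with independent seeds in different rounds, the reported pair of round-1 and round-2 strings need not extend to any element of $S$ at all, so the reported transcript can have probability zero for a quantum device. Repairing the offset would require computing how the reported and internal transcripts differ inside the row space, i.e., stabilizer-update linear algebra of $\parityL$ flavor, which defeats the requirement that the wrapper be a cheap ($\NC^0$/$\AC^0$) randomized reduction.

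The paper's proof makes one small but essential change: a \emph{single} random string $r \in \{0,1\}^n$ is drawn before the protocol starts, and whenever qubit $i$ is measured (in whatever round) the reported bit is $z_i \oplus rA_i \oplus r_i b_i$. Because every round's mask is the restriction of the same row-space element $r(A \oplus \diag(b))$, the concatenation of all reported bits is $z \oplus r(A \oplus \diag(b))$, which by \Cref{lem:search_to_samp} is uniform over $S$; revealing the coordinates round by round then automatically gives the correct conditional distribution in every round, with no per-round tableau bookkeeping needed. So the fix is to replace your independent per-round seeds with this shared, precomputed mask; the rest of your proposal (feeding unmasked answers back to $\oracle_{rel}$, the locality argument for low depth, one oracle call per round) is sound.
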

\begin{proof}
The strategy is simple---we use the algorithm from \Cref{thm:sampHLF_equals_relHLF} but wait until the appropriate round to process any particular qubit.  Suppose the initial graph state is specified by matrix $A \in \{0,1\}^{n \times n}$.  Before the protocol starts, generate a random $r \in \{0,1\}^n$.  Then, for every query to the relation oracle $\oracle_{rel}$ on qubits in the subset $I \subseteq [n]$, return the outcomes $rA_i \oplus r_ib_i \oplus z_i$ for all $i \in I$, where $z_i$ is the output of the relation oracle and $b_i$ is 0 for measurements in the $X$ basis and 1 for measurements in the $Y$ basis.  

Notice that after all rounds of interaction, this procedure samples from the uniform distribution over outcomes for some fixed $A$ and $b$ by \Cref{lem:search_to_samp}, and so the transcript for that interaction is uniform over transcripts with those measurement bases.
\end{proof}

It is worth emphasizing that the sampling oracle $\oracle_{samp}$ is insufficient to simplify our proof as is done in \Cref{obs:simple_nc1_proof}, since that observation requires the \emph{rewind oracle} to sample from the correct distribution. Of course, any classical implementation of $\oracle_{samp}$ still gives a classical rewind oracle $\coracle_{samp}$, but rewinding and running the round again (with the same input) may \emph{not} produce an independent output. Consider the construction of the sampling oracle $\oracle_{samp}$ in \Cref{lem:multiround_rel_to_samp}.  Since it does not use any additional randomness in the second round, its output is a deterministic function of the $\oracle_{rel}$ output (which may be deterministic) and its internal state. When we construct the rewind oracle, this manifests as the second round outputs being non-independent.  

In some sense, we \emph{can} satisfactorily randomize an oracle (i.e., go from relational simulation to sampling) for the $2$-round Clifford simulation task, but the construction goes through our results (rather than simplifying them), and it is not black box (since it requires using the rewind oracle).  To summarize, we use \Cref{lem:multiround_rel_to_samp} to randomize the first round outputs, and then use the power of the rewind oracle to completely learn the state of the remaining qubits. Since we have a complete classical description of the state and it is only constant size,\footnote{Notice that this step is not completely general.  For the $\parityL$ result, it is critical that the state $U_1 \cdots U_n \ket{+}^{\otimes m}$ is promised to be relatively simple.} we can simulate whichever measurements we need in the second round (shifting the result as required by \Cref{lem:multiround_rel_to_samp}).

\addtocontents{toc}{\endgroup}
\end{appendices}

\end{document}